\renewcommand{\phi}{\varphi}
\renewcommand{\ker}{\Ker}
\renewcommand{\Re}{\operatorname{Re}}
\newcommand{\bb}[1]{\mathbb{#1}}
\newcommand{\mc}[1]{\mathcal{#1}}
\newcommand{\mf}[1]{\mathfrak{#1}}
\newcommand{\mbbm}[1]{\mathbbm{#1}}
\newcommand{\beq}{\begin{equation}}
\newcommand{\eeq}{\end{equation}}
\newcommand{\e}{\varepsilon}
\newcommand{\de}{\partial}
\newcommand{\la}{\lambda}
\newcommand{\hv}{h^{\vee}}
\DeclareMathOperator{\tdeg}{deg_{x}}
\newcommand{\QQ}{Q\widetilde{Q}}
\DeclareMathOperator{\End}{End}
\DeclareMathOperator{\ad}{ad}
\DeclareMathOperator{\Ker}{Ker}
\DeclareMathOperator{\rank}{rank}
\DeclareMathOperator{\Ht}{ht}
\DeclareMathOperator{\op}{op}
\DeclareMathOperator{\Op}{Op}
\newcommand{\ope}{\op_{\mathfrak{g}}}
\newcommand{\tope}{\widetilde{\op}_{\mathfrak{g}}}
\newcommand{\Ope}{\Op_{\mathfrak{g}}}
\theoremstyle{plain}
\newtheorem{theorem}{Theorem}[section]
\newtheorem{lemma}[theorem]{Lemma}
\newtheorem{conjecture}[theorem]{Conjecture}
\newtheorem{proposition}[theorem]{Proposition}
\newtheorem{corollary}[theorem]{Corollary}
\newtheorem{asu}{Assumption}
\theoremstyle{definition}
\newtheorem{definition}[theorem]{Definition}
\newtheorem{example}[theorem]{Example}
\theoremstyle{remark}
\newtheorem{remark}[theorem]{Remark}
\numberwithin{equation}{section}
\definecolor{light}{gray}{.9}
\tikzset{node distance=2em, ch/.style={circle,draw,on chain,inner sep=2pt},chj/.style={ch,join},every path/.style={shorten >=4pt,shorten <=4pt},line width=1pt,baseline=-1ex}
\let\dlabel=\alabel
\newcommand{\dnode}[2][chj]{%
\node[#1,label={below:\dlabel{#2}}] {};
}
\newcommand{\blackdnode}[2][chj]{%
\node[#1,fill={black},label={below:\dlabel{#2}}] {};
}
\newcommand{\dnodebr}[1]{%
\node[chj,label={below right:\dlabel{#1}}] {};
}
\newcommand{\blackdnodebr}[1]{%
\node[chj,fill={black},label={below right:\dlabel{#1}}] {};
}
\newcommand{\dydots}{%
\node[chj,draw=none,inner sep=1pt] {\dots};
}
\newcounter{dateone}
\newcounter{datetwo}
\newcommand{\difftoday}[3]{%
\setmydatenumber{dateone}{\the\year}{\the\month}{\the\day}%
\setmydatenumber{datetwo}{#1}{#2}{#3}%
\addtocounter{datetwo}{-\thedateone}%
\textcolor{red}{\the\numexpr-\thedatetwo day(s) late}
}
\title{Opers for higher states of quantum KdV models}
\author{D.Masoero, A.Raimondo}
\address{Grupo de F\'isica Matem\'atica da Universidade de Lisboa,
Av. Prof. Gama Pinto 2, 1649-003 Lisboa, Portugal.}
\email{dmasoero@gmail.com}
\address{Dipartimento di Ingegneria Gestionale, dell'Informazionae e della Produzione, Universit\'a di Bergamo,
Viale Marconi 5, I-24044 Dalmine (BG), Italy}
\email{andrea.raimondo@gmail.com}
\begin{document}

\pagestyle{plain}

\begin{abstract}
We study the ODE/IM correspondence for all states of the quantum $\widehat{\mathfrak{g}}$-KdV model,
where $\widehat{\mathfrak{g}}$ is the affinization of a simply-laced simple Lie algebra $\mf{g}$. 
We construct quantum $\widehat{\mathfrak{g}}$-KdV opers as an explicit realization of the class of opers introduced by Feigin and Frenkel \cite{FF11}, which are defined
by fixing the singularity structure at $0$ and $\infty$, and by allowing a finite number of additional singular terms
with trivial monodromy. We prove that the generalized monodromy data of the quantum $\widehat{\mathfrak{g}}$-KdV opers satisfy
the Bethe Ansatz equations of the quantum $\widehat{\mathfrak{g}}$-KdV model.
The trivial monodromy conditions are equivalent to a complete system of
algebraic equations for the additional singularities.
\end{abstract}

\maketitle

\tableofcontents

\section*{Introduction}

The purpose of the present paper is to explicitly construct and then study a class of opers, introduced 
by Feigin and Frenkel \cite{FF11}, corresponding to higher states of the quantum
$\widehat{\mf{g}}$-KdV model, where $\widehat{\mf{g}}$ is the untwisted affinization of a simply-laced simple
Lie algebra $\mf{g}$. The work is
the natural continuation of our previous papers in collaboration with Daniele Valeri on the
ODE/IM correspondence for the ground state of the quantum $\widehat{\mf{g}}$-KdV model, where we developed an effective method
to construct solutions of the Bethe Ansatz equations as generalised monodromy data of affine opers \cite{marava15,marava17}. 

The quantum $\widehat{\mf{g}}$-KdV model arises as the quantisation of the second Hamiltonian structure of the Drinfeld-Sokolov hierarchy
\cite{baluzaI,bazhanov02integrable,ff90}
-- equivalently Toda field theory \cite{feigin96} -- as well as the
continuous (conformal) limit of XXZ-like lattice models whose underlying symmetry is $U_q(\widehat{\mf{g}})$ \cite{doreyreview}.
Both the lattice models and the quantum field theories carry the structure of quantum integrability
(the quantum inverse scattering), so that each state is characterised by a solution of the (nested) Bethe Ansatz equations,
which in turn furnishes all the physical observables of the theory.

In a ground-breaking series of papers Dorey and Tateo \cite{doreytateo98}, followed by Bazhanov, Lukyanov, and
Zamolodchikov \cite{bazhanov01},
discovered that
the solution of the Bethe Ansatz equations of the ground state of  quantum $\widehat{\mf{sl}_2}-$KdV
(i.e. quantum KdV) admits a very simple and neat representation.
Let indeed $\Psi(x,E)$ the unique subdominant solution as  $x \to +\infty$ of
the Schr\"odinger equation
\begin{equation}\label{eq:schrodinger}
 -\psi''(x)+(x^{2\alpha}+\frac{l(l+1)}{x^2}-E)\psi=0,
\end{equation}
with $ \alpha>0$, and $\Re l > - 1/2$. Then, $Q(E)=\lim_{x\to0}x^{-l-1}\Psi(x,E)$ is
the required solution of the Bethe Ansatz equations, with the parameters $\alpha,l,E$ of the Schroedinger
equation corresponding to the the central charge $c$, the
vacuum parameter, and the spectral parameter of the quantum model, see \cite{BLZ04} for the precise identification.

Such a discovery, which was thereafter known as the  ODE/IM correspondence, has been generalised to many more pairs
of a quantum integrable model (solvable by the Bethe Ansatz) and a linear differential operator. Examples of
these generalisations, which are conjectural but supported by strong numerical evidence and deep mathematical structures,
include the correspondence between all higher states of the quantum $\widehat{\mf{sl}}_2$-KdV model and Schroedinger equation with
'monster potentials' \cite{BLZ04}, the correspondence between the ground state of massive
deformations of the quantum KdV model, such as quantum Sine Gordon
and quantum affine Toda theories, and the Lax operator of a dual classical theory \cite{lukyanov10,bazhanov14,dorey13}, and
the very recent discovery of the correspondence between an $O(3)$ non-linear Sigma model and a Schroedinger operator
\cite{bazhanov18} .
The appearance of the Thermodynamic Bethe Ansatz
in relation with BPS spectra in $\mc{N}=2$ Gauge theories \cite{gaiotto09}, Donaldson-Thomas invariants
\cite{bridgeland18}, and in more general quantum mechanics equations \cite{dtba,gaiotto14,ito18}, is also expected
to be manifestations of the same phenomenon.

All these particular ODE/IM correspondences are strong evidences of the existence of an overarching
ODE/IM correspondence, which can be informally stated as follows:

\textit{Given an integrable quantum field theory, and one state of that theory, there exists
a differential operator whose generalised monodromy data provide the solution of Bethe Ansatz
equations of the given state.}

One can make the the above conjecture much more precise for the case of the quantum $ \widehat{\mf{g}}-$KdV model. First of all,
as discovered  by Feigin and Frenkel \cite{FF11}, the differential operators on the ODE side of the correspondence are certain
${}^L \widehat{\mf{g}}$ opers, where ${}^L \widehat{\mf{g}}$ is the Langlands dual algebra of $\widehat{\mf{g}}$. This implies that in
the particular case under our analisys, namely when
 $\widehat{\mf{g}}$ is the untwisted affinization of a simply laced simple Lie algebra $\mf{g}$,
 we should consider operators with values in  ${}^L \widehat{\mf{g}} \cong \widehat{\mf{g}}$. 
 
The complete ODE/IM correspondence for $\widehat{\mf{g}}$, with $\mf{g}$ simply laced, can be then described as follows.
The quantum model is defined by the choice of
the central charge $c$ and the vacuum parameter $p \in \mf{h}$ of the free field representation \cite{baluzaI,bazhanov02integrable,kojima08,hj12}.
Every state of the Fock space is associated to a set of $\rank \mf{g}$ entire functions
$Q^{(l)}(\la),l=1 \dots \rank \mf{g}$, of the spectral parameter $\la$ -- first introduced in \cite{baluzaI},
later generalized in \cite{bazhanov02integrable,kojima08,hj12}, and finally settled in \cite{frenkel15,fh16} in the most general case --
which solve the following Bethe Ansatz equations:
\begin{equation}\label{eq:TBAintro}
\prod_{j = 1}^{\rank \mf g} e^{-2 i \pi \beta_jC_{\ell j}} \frac{Q^{(j)}\Big(e^{i \pi C_{\ell j}}\la^*\Big)}{Q^{(j)}
\Big(e^{- i\pi C_{\ell j}}\la^*\Big)}=-1
\end{equation}
for every zero $\la^*$ of $Q^{(l)}(\la)$. In the above formula, $C_{ij}$ is the Cartan matrix of $\mf{g}$,
and the phases $\beta_j$ as well as the relevant analytic properties of the functions $Q$'s depend
on the parameters  $c$ and $p$ -- see \cite{baluzaI,bazhanov01,dorey07,marava15}.
\\

In order to define the ODE/IM correspondence, one needs the following data:
a principal nilpotent element $f \subset \mf{g}$,
a Cartan decomposition $\mf{g}=\mf{n}_-\oplus\mf{h}\oplus \mf{n}_+$ such that
$f=\sum_if_i \in \mf{n}_-$ where $f_i$'s are the negative Chevalley generators of $\mf{g}$,
the dual Weyl vector $\rho^\vee\in\mf{h}$, a highest root vector $e_{\theta}$, the
dual of the highest root $\theta^\vee \in \mf{h}$, an arbitrary but fixed
element of the Cartan subalgebra $r\in \mf{h}$, an arbitrary but fixed real number
$\hat{k} \in (0,1)$, an arbitrary complex parameter $\lambda \in \bb{C}$, and finally
a possibly empty finite set $J$ of pairs $(w_j,X(j)) \in  \bb{C}^* \times  \mf{n}_+$, to be determined.

Given the above data, we say that a \textbf{quantum $\widehat{\mf{g}}$-KdV oper} is an oper admitting the following representation
\begin{equation}\label{eq:operintro}
 \mc{L}(z,\la)=\de_z+\frac{r-\rho^\vee+f}{z}+
 (1+  \la z^{-\hat{k}})e_{\theta}+
 \sum_{j \in J}\frac{-\theta^\vee+
 X(j) }{z-w_j}  \; ,
\end{equation}
where in addition the regular singularities $\lbrace (w_j,X(j))\rbrace_{j \in J}$
have to be chosen so that the \eqref{eq:operintro} has trivial monodromy
at each $w_j$ for every value of $\lambda$. These further conditions ensure that the residues
$-\theta^\vee+X(j)$ belong to a $2\hv-2$ dimensional subspace
of $\mf{b}_+$, namely $\mf{t}=\bb{C} \theta^\vee \oplus [\theta^\vee,\mf{n}_+]$,
which is strictly related to the $\bb{Z}-$gradation on $\mf{g}$ induced by the element
$\theta^\vee$, and carries a  natural symplectic structure. The quantum $\widehat{\mf{g}}-$KdV opers \eqref{eq:operintro}
provide an explicit realization of the opers proposed by
  Feigin and Frenkel in the paper \cite{FF11} (see also \cite{fh16}),  which was the main inspiration of the present work.
\\

How does one attach a solution of the Bethe Ansatz equations to the above opers? The method was derived in our previous papers
on the ground state oper \cite{marava15,marava17}, which build on previous progresses by \cite{dorey07,Sun12}.
Given a quantum $\widehat{\mf{g}}$-KdV oper, a solution of the Bethe Ansatz equation is be constructed as follows, see Section \ref{sec:BA}.
One considers the regular singularity at $0$, and the irregular singularity at $\infty$ of \eqref{eq:operintro}. The
generalised monodromy data of the oper are encoded in
the connection matrix between these two singularities. This is obtained
by expanding, in every fundamental representation of $\mf{g}$, the subdominant solution at $\infty$ in
the basis of eigensolution of the monodromy operator. These coefficients are the so-called $Q$ functions,
which satisfy the $Q\widetilde{Q}$ system, and hence
and satisfy the Bethe Ansatz equations.

After having introduced the $Q$ functions, the complete Feigin-Frenkel ODE/IM conjecture \cite[Section 5]{FF11}
for the quantum $\widehat{\mf{g}}-$KdV model,
with $\mf{g}$ simply-laced, can be restated as follows.
\begin{conjecture}\label{conj:odeim}
To any state of the quantum $\widehat{\mf{g}}-$KdV model there corresponds a unique quantum $\widehat{\mf{g}}$KdV oper \eqref{eq:operintro} whose
$Q$ functions coincide with the solution of Bethe Ansatz
equations of the given state. Moreover, the level $N \in \bb{N}$ of a state coincides with the cardinality $N$
of the set $J$ of additional singularities of the corresponding oper. In particular, the ground state corresponds to the case
$J =\emptyset$.
\end{conjecture}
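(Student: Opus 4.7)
The plan is to split Conjecture \ref{conj:odeim} into three parts: (i) the \emph{forward map} --- constructing $Q$ functions from any oper of the form \eqref{eq:operintro} and showing they solve \eqref{eq:TBAintro}; (ii) the \emph{enumeration} --- showing that for each $N\in\bb{N}$ the trivial-monodromy conditions cut out a finite algebraic variety of quantum $\widehat{\mf{g}}$-KdV opers with $|J|=N$; (iii) the \emph{matching} --- identifying this variety with the level-$N$ subspace of the Fock space, so that the map state $\mapsto$ oper is a bijection.

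For part (i) I would proceed exactly as in the ground-state case \cite{marava15,marava17}. First, analyse the regular singularity at $0$: since the residue $r-\rho^\vee+f$ has $f$ principal nilpotent, Frobenius theory gives a distinguished eigensolution of the monodromy in each fundamental representation $V(\omega_i)$ of $\mf{g}$. Second, analyse the irregular singularity at $\infty$: the $e_\theta$-term combined with the twist $\lambda z^{-\hat k}$ singles out a unique subdominant solution modulo Stokes phenomena controlled by $\hat k$. The connection coefficients between these two bases in every fundamental representation define $Q^{(l)}(\lambda)$; the apparent poles at the $w_j$ are removable by the trivial-monodromy hypothesis, so the $Q^{(l)}$ are entire in $\lambda$. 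A Wronskian identity yielding the $\QQ$-system, adapted from \cite{frenkel15,fh16}, then forces \eqref{eq:TBAintro}.

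For part (ii) the key input is the structure of the residue space $\mf{t}=\bb{C}\theta^\vee\oplus[\theta^\vee,\mf{n}_+]$, which has dimension $2\hv-2$ and carries a natural symplectic form. Trivial monodromy at a simple pole with residue $-\theta^\vee+X(j)$ imposes, by the Frobenius method applied in each $V(\omega_i)$, a finite set of polynomial conditions on the pair $(w_j,X(j))$. I would work these out representation by representation, following the Feigin--Frenkel prescription, and expect the resulting system to be a complete intersection whose quotient by the residual symmetries gives an $N$-dimensional variety; its (generic) degree should equal the dimension of the level-$N$ subspace of the vacuum Verma module of $\widehat{\mf{g}}$, so that the combinatorics is at least consistent with the conjectured bijection.

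The main obstacle is step (iii), which is the genuinely unresolved part of the conjecture. Establishing it means identifying the algebraic variety of (ii) with the joint spectrum of the quantum $\widehat{\mf{g}}$-KdV integrals of motion on the level-$N$ subspace, i.e. upgrading the local map ``oper $\mapsto$ tuple $(Q^{(l)})$'' to a \emph{global} isomorphism between oper moduli and solutions of \eqref{eq:TBAintro}. Even in the $\mf{sl}_2$ case (the monster potentials of \cite{BLZ04}) this is supported only by numerical evidence; a conceptual proof likely requires either a direct construction of quantum KdV eigenvectors from the oper data via the Feigin--Frenkel isomorphism for the centre at the critical level, or an injectivity/surjectivity argument based on the Hitchin-type symplectic geometry of the moduli of the $(w_j,X(j))$. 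I would not expect a full proof without substantial new input; a realistic outcome for the present paper is to establish (i) and to formulate (ii) as an explicit algebraic system, leaving (iii) open.
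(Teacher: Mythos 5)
The statement you are addressing is a conjecture which the paper itself does not prove, and your proposal correctly recognises this: parts (i) and (ii) of your plan coincide with what the paper actually establishes (Statements 1 and 3 of the Introduction, i.e.\ Theorem \ref{thm:QQ}, Corollary \ref{cor:ba}, and the trivial-monodromy system \eqref{systemonc} on the symplectic space $\mf{t}$), while part (iii) is precisely the missing bijective arrow from states to opers identified in the remark following the conjecture. The only place where you are slightly more optimistic than the paper is the claim that the degree of the variety in (ii) equals the dimension of the level-$N$ subspace: the paper states this only as Conjecture \ref{conj:numbersolutions} and verifies it just for $N=1$ (and for specific algebras), so it should be flagged as conjectural rather than expected output of the computation.
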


In the present paper we address this correspondence, and -- together with some side results which have
their own independent interest in the theory of opers -- we provide strong evidence of its validity by proving the following statements:
\begin{enumerate}
 \item[Statement 1] The $Q$ functions of the quantum $\widehat{\mf{g}}-$KdV opers \eqref{eq:operintro} are entire functions 
 of $\la$, are invariant under Gauge transformations,  and  satisfy the Bethe Ansatz equations \eqref{eq:TBAintro}.
 This proves Conjecture 8.1 of Frenkel and Hernandez \cite{fh16}.
 \item[Statement 2] The quantum $\widehat{\mf{g}}-$KdV opers \eqref{eq:operintro} are the most general opers whose $Q$ functions
 solve Bethe Ansatz
 equations \eqref{eq:TBAintro} .
 \item[Statement 3] The parameters  $\lbrace (w_j,X(j))\rbrace_{j \in J}$ of the additional singularities of the
  quantum $\widehat{\mf{g}}-$KdV opers \eqref{eq:operintro} are determined by a complete set of algebraic equations
 which are equivalent to the trivial monodromy conditions. In the particular case of a single additional singularity,
 and for generic values of the parameters  $\hat{k}$ and $r$,  there are $\rank \mf{g}$ distinct quantum $\widehat{\mf{g}}-$ KdV opers;
 this number coincides with the dimension of level $1$ subspace
 of the  quantum $\widehat{\mf{g}}-$KdV model.
 \end{enumerate}

 \begin{remark}
 Conjecture \ref{conj:odeim} does not exactly coincide with the original conjecture by Feigin and Frenkel \cite[Section 5]{FF11},
  because the explicit construction of the $Q$ functions as the coefficients of a connection problem was still unknown, in the general case,
  at the time when \cite{FF11} was written.
  Indeed, this construction was later achieved in full generality in our previous papers \cite{marava15,marava17},
  where we proved that coefficients of the connection problem satisfy a system of relations which goes under the name of $Q \widetilde{Q}$ system. The latter system
  was itself conjectured to hold by Dorey et al. \cite{dorey07} and further studied by Sun \cite{Sun12}.
  Remarkably, the same $Q \widetilde{Q}$ system was then showed by Frenkel and Hernandez \cite{fh16} to hold as a universal system of relations in the
commutative Grothendieck ring $K_0(\mc{O})$ of the category $\mc{O}$ of representations of the
Borel subalgebra of the quantum affine algebra $U_q(\widehat{\mf{g}})$, a category previously introduced by Hernandez and Jimbo in \cite{hj12}.
  
  Summarising, the state-of-the-art of the ODE/IM conjecture for the quantum $\widehat{\mf{g}}$-KdV model is the following 
  (see \cite{fh16} for a thorough discussion of this point).  We have a putative triangular diagram whose vertices are 1) the Quantum $\widehat{\mf{g}}$-KdV opers of Feigin and Frenkel, 2)
  the states of the Quantum $\widehat{\mf{g}}$-KdV model, and 3) the solutions of the $Q\widetilde{Q}$ system with the correct analytic properties.
  Two arrows are now well-defined. The first, from opers to solutions of the $Q\widetilde{Q}$ system,
  is provided by the present work, the second, from states to solutions of the $Q\widetilde{Q}$ system, is provided in \cite{fh16}.
  The conjecture will then be proved when a third and bijective arrow, from the states of the quantum $\widehat{\mf{g}}$-KdV model
  to quantum opers, will be defined in such a way to make the diagram will be commutative. 
 \end{remark}
 
 \begin{remark}In the $\widehat{\mf{sl}}_2$ case the opers \eqref{eq:operintro} were shown in \cite{FF11} to coincide
 -- up to a change of coordinates -- 
 to the Schr\"odinger operators with 'monster potential' studied by Bazhanov, Lukyanov, and Zamolodchikov \cite{BLZ04}. Hence, in this case Conjecture \ref{conj:odeim} coincides with the one stated in \cite{BLZ04}.
 \end{remark}
 
 \begin{remark}
  The quantum KdV opers \eqref{eq:operintro} can be either thought of as multivalued $\mf{g}$ opers, or as single valued
  -- i.e. meromorphic --  $\widehat{\mf{g}}$ opers \cite{FF11,fh16}. Both view points will be discussed in Section \ref{sec:kdvopers}.
 \end{remark}

\subsection*{Organization of the paper}
The paper is divided in three main parts.
\begin{enumerate}
 \item A preamble collecting some preliminary material, on simple and affine Lie algebras, on opers
 and on singularities of opers; Sections \ref{sec:algebras},\ref{sec:opers},\ref{sec:singularities}.
 \item The definition and analysis of quantum KdV opers, including the proof of statements 1,2 above; Sections
 \ref{sec:kdvopers},\ref{sec:BA},\ref{sec:miura}. 
 \item The analysis of the trivial monodromy conditions for the quantum KdV opers, including the proof of Statement 3;
 Sections \ref{sec:gradation},
\ref{sec:singularpoint},\ref{sec:zeromonodromykdv},\ref{sec:computations}, \ref{sec:A1}.
\end{enumerate}

The preamble mostly consists  of known material, but it contains a simple introduction to opers and their singularities
which may be useful to the reader. Our approach to opers is intended to be suitable to computations and to make the paper self-contained and
easily accessible.

The quantum KdV opers are axiomatically defined in Section \ref{sec:kdvopers}, following Feigin and Frenkel \cite{FF11}.
The axioms fix the singularities' structure of the opers. They are meromorphic opers on the sphere such that
$0$ and $\infty$ are singularities with fixed coefficients, and all other possible singular points are regular and have trivial monodromy. 
To begin our analysis we drop the axiom on the trivial monodromy and
deduce -- after fixing an arbitrary transversal space of $\mf{g}$ -- the canonical form of those opers which
satisfy all other axioms;
see Proposition \ref{pro:quasinormal}.
Such a canonical form does not coincide with \eqref{eq:operintro}, because in the canonical form a regular singularity is not 
a simple pole of the oper.

In Section \ref{sec:BA}, we study the generalised monodromy data of quantum KdV oper making use of their canonical form.
We define the $Q$ functions  and prove that they satisfy the $Q\widetilde{Q}$ relations and thus the
Bethe Ansatz equations, see Theorem \ref{thm:QQ}.
This section is based on our previous work \cite{marava15}, as well as on a new approach to the monodromy representation
of multivalued opers.

In Section \ref{sec:miura} we prove that the quantum $\widehat{\mf{g}}-$KdV opers are Gauge equivalent to a unique oper of the form \eqref{eq:operintro},
see Corollary \ref{cor:normals}. To this aim we introduce and study an extended Miura map. This is defined as the map
that to an oper whose singularities are first order poles associates its canonical form. We prove that the extended Miura map,
when appropriately restricted, is bijective.

The analysis of the trivial monodromy conditions for the quantum $\widehat{\mf{g}}-$KdV opers \eqref{eq:operintro}
is divided in the five remaining sections.

In sections  \ref{sec:gradation} and
\ref{sec:singularpoint} we study the Lie algebra grading induced by the element $\theta^\vee$,  and we write the trivial
monodromy conditions as a system of equations on the Laurent coefficients of the oper at the singular point.
One of these equations is linear and is equivalent to require that the elements $-\theta^\vee+X(j)$, for $j\in J$,  belong to the $2\hv-2$ dimensional  symplectic subspace $\mf{t}\subset\mf{b}_+$. We introduce a canonical basis for the symplectic form on $\mf{t}$ and use it in Section \ref{sec:zeromonodromykdv} to derive system \eqref{systemonc}, which is equivalent to the trivial monodromy conditions. This is a complete
system of $(2\hv-2)|J|$ algebraic equations in the
$(2\hv-2)|J|$ unknowns $\lbrace (w_j,X(j)) \rbrace_{j \in J}$, which fixes the additional singularities and thus
completely characterise the quantum KdV opers.

In Section \ref{sec:computations} we specialise  system \eqref{systemonc} to the cases of the Lie algebras
$A_n, n\geq 2$, $D_n, n\geq 4$, and $E_6$ (we omit to show our computations in the case $E_7,E_8$ due to their excessive length).
By doing so we reduce \eqref{systemonc} to a 
system of $2|J|$ algebraic equations in $2|J|$ unknowns.
Finally, in Section \ref{sec:A1} we deal with the case $\mf{g}=\mf{sl}_2$, which was already considered in \cite{BLZ04,FF11,fioravanti05}
and requires a separate study.

\subsection*{Acknowledgements}
The research project who led to the present paper started in December 2015, in Lisbon, during a fortuitous lunch with
Edward Frenkel, which resulted in the paper \cite{fh16} and in the present one. 
We are very grateful to Edward for insight and support, our work would not have been possible without his help,
and for useful comments on a preliminary version of this paper.
We also thank Tom Bridgeland, Giordano Cotti, David Hernandez, Sergei Lukyanov, Ara Sedrakyan, Roberto Tateo and
Benoit Vicedo for many useful discussions,
and Moulay Barkatou for pointing out to us reference \cite{bava83}. 

The authors gratefully acknowledge support form the Centro de Giorgi, Scuola Normale Superiore di Pisa,
where some of the research for the paper was performed.
Davide Masoero gratefully acknowledges support from the Simons Center for Geometry and Physics, Stony Brook University
at which some of the research for this paper was performed.
Andrea Raimondo thanks the Group of Mathematical Physics of Lisbon university for the kind hospitality,
during his many visits.

The authors are partially supported by the FCT Project PTDC/MAT-PUR/ 30234/2017 `Irregular
connections on algebraic curves and Quantum Field Theory'. D. M. is supported by the
FCT Investigator grant IF/00069/2015 `A mathematical framework for the ODE/IM correspondence',
and partially supported by the FCT Grant PTDC/MAT-STA/0975/2014 .

\vspace{20pt}

\section{Affine Kac--Moody algebras}\label{sec:algebras}
\subsection{Simple Lie algebras}
Let $\mf{g}$ be a simply-laced simple Lie algebra of rank $n$, and let $h^\vee$ be the dual Coxeter number of $\mf{g}$\footnote{Since $\mf{g}$ is simply-laced, then $h^\vee=h$, the Coxeter number of $\mf{g}$. We prefer to use $h^\vee$ in place of $h$ in view of the extension of the results of the present paper to a generic (simple) Lie algebra $\mf{g}$, in which case the dual Coxeter number appears.}. Let $\mf{h}$ be a Cartan subalgebra and $\Delta\subset \mf{h}^\ast$ be the set of roots relative to $\mf{h}$.  The algebra $\mf{g}$ admits the roots space decomposition
\beq\label{rootdecomposition}
\mf{g}=\mf{h}\oplus\bigoplus_{\alpha\in\Delta}\mf{g}_\alpha,
\eeq
where $\mf{g}_\alpha=\{x\in\mf{g}\,|\,[h,x]=\alpha(h)x, h\in\mf{h}\}$ is the root space corresponding to the root $\alpha$. Set $I=\{1,\dots,n\}$. Fix a set of simple roots $\Pi=\left\{\alpha_i,i\in I\right\}\subset \Delta$, let $\Delta_+\subset \Delta$ be the corresponding set of positive roots and $\Delta_-=\Delta\setminus \Delta_+$ the negative roots.  For $\alpha=\sum_i m_i\alpha_i\in \Delta$ define its height as $\Ht(\alpha)=\sum_im_i\in \bb{Z}$.  Let $\Pi^\vee=\{\alpha_i^\vee, i\in I\}\subset\mf{h}$ be simple coroots, satisfying $\langle \alpha_i^\vee,\alpha_j\rangle=C_{ij}$ where $C=(C_{ij})_{i,j\in I}$ is the Cartan matrix of $\mf{g}$. Let
$$Q=\bigoplus_{j\in I}\bb{Z}\alpha_j,\qquad Q^\vee=\bigoplus_{j\in I}\bb{Z}\alpha_j^\vee$$
be respectively the root and the coroot lattice of $\mf{g}$.  Let $\mc{W}$ be the Weyl group of $\mf{g}$, namely the finite group generated by the simple reflections 
$$\sigma_i(\alpha_j)=\alpha_j-C_{ij}\alpha_i,\qquad i,j\in I.$$
The above action on $\mf{h}^\ast$ induces an action of $\mc{W}$ on $\mf{h}$, with simple reflections given by
$$\sigma_i(\alpha_j^\vee)=\alpha_j^\vee-C_{ji}\alpha_i^\vee,\qquad i,j\in I.$$
Denote by $\{\omega_i, i\in I\}$ (resp. $\{\omega^\vee_i, i\in I\}$) the fundamental weights (resp. coweights) of $\mf{g}$, defined by the relations
$$\alpha_i=\sum_{j\in I}C_{ji} \omega_j,\quad \alpha_i^\vee=\sum_{j\in I}C_{ji} \omega^\vee_j \qquad i\in I.$$
Corresponingly, we denote by
\beq\label{weightcoweight}
P=\bigoplus_{j\in I}\bb{Z}\omega_j,\qquad P^\vee=\bigoplus_{j\in I}\bb{Z}\omega^\vee_j
\eeq
the weight and coweight lattices of $\mf{g}$. For every $\omega\in P$, we denote by $L(\omega)$ the irreducible finite dimensional highest weight $\mf{g}-$module with highest weight $\omega$.  

Let $\{e_{i},f_{i},i\in I\}$ be Chevalley generators of $\mf{g}$, satisfying the relations
\beq
[\alpha_i^\vee,e_{j}]=C_{ij}e_{j}, \quad [\alpha_i^\vee,f_{j}]=-C_{ij}f_{j},\quad [e_{i},f_{j}]=\delta_{ij}\alpha_i^\vee
\eeq
for $i,j\in I$.  Let $\mf{n}_+$ (resp. $\mf{n}_-$) the nilpotent subalgebra of $\mf{g}$ generated by $\{e_i,i\in I\}$ (resp. $\{f_i,i\in I\}$), and recall the Cartan decomposition
$\mf{g}=\mf{n}_-\oplus\mf{h}\oplus \mf{n}_+$. In addition, denote $\mf{b}_+=\mf{h}\oplus\mf{n}_+$ the Borel subalgebra associated to the pair $(\mf{g},\mf{h})$.
%
Let $\mc{G}$ be the adjoint group of $\mf{g}$, denote by $\mc{B}$ the (maximal) solvable subroup of $\mc{G}$ whose Lie algebra is $\mf{b}_+$, by $\mc{H}$ the abelian torus with Lie algebra $\mf{h}$ and by $\mc{N}$ the unipotent subgroup of $\mc{G}$ whose Lie algebra is $\mf{n}_+$. Then $\mc{N}$ is a normal subgroup of $\mc{B}$ and $\mc{B}=\mc{N}\rtimes \mc{H}$. Consider the exponential map $\exp:\mf{n}_+\to\mc{N}$. Given $y\in\mf{n}_+$, the adjoint action of $\exp(y)\in\mc{N}$ on $\mf{g}$ is given by
$$\exp(y).x=x+\sum_{k\geq 1}\frac{1}{k!}(\ad_y)^k x,\qquad \,x\in \mf{g},$$
where $\ad_yx=[y,x]$.
Define a bilinear non-degenerate symmetric form $(\cdot\vert\cdot)$ on $\mf{h}$ by the equations
\beq\label{hhC}
(\alpha_i^\vee\vert\alpha_j^\vee)=C_{ij}, \qquad i,j\in I
\eeq
and introduce the induced isomorphism $\nu:\mf{h}\to\mf{h}^\ast$ as
$$\langle h',\nu(h)\rangle=(h'\vert h), \qquad h,h'\in\mf{h}.$$
Note that in particular we have $\nu(\alpha_i^\vee)=\alpha_i$, $i\in I$, and the induced bilinear form $(\cdot |\cdot)$ on $\mf{h}^\ast$ satisfies:
\beq\label{aaC}
(\alpha_i\vert\alpha_j)=C_{ij}, \qquad i,j\in I
\eeq
follows. As proved in \cite{kac90}, there exists a (unique) nondegenerate invariant symmetric bilinear form $(\cdot | \cdot)$ on $\mf{g}$ such that
\begin{subequations}\label{normalizedkilling}
\begin{align}
& (\mf{h}|\mf{h})\,\,\text{is defined by \eqref{hhC}},\\
& (\mf{g}_\alpha | \mf{h})=0,\quad\alpha\in\Delta,\\
& (\mf{g}_\alpha | \mf{g}_\beta)=0\quad \alpha,\beta\in\Delta, \,\alpha\neq-\beta,\\
& [x,y]=(x|y)\nu^{-1}(\alpha), \quad x\in\mf{g}_\alpha, y\in\mf{g}_{-\alpha}, \alpha\in\Delta.
\end{align}
\end{subequations}
We will consider this bilinear form on $\mf{g}$ from now on.

Let $\rho=\sum_{i\in I}\omega_i\in \mf{h}^\ast$ be the Weyl vector, and denote
$$\rho^\vee=\nu^{-1}(\rho)=\sum_{i,j\in I}(C^{-1})^{ij}\alpha_j^\vee.$$ 
The \emph{principal gradation} of $\mf{g}$ is defined as
\beq
\mf{g}=\bigoplus_{i=-h^\vee+1}^{h^\vee-1}\mf{g}^i, \qquad \mf{g}^i=\left\{x\in\mf{g}\;\vert\; [\rho^\vee,x]=ix\right\}.
\eeq
We denote by $\pi^j$ the projection from $\mf{g}$ onto $\mf{g}^j$:
\beq
\pi^j: \;\mf{g}\to \mf{g}^j\label{projectionprinc}
\eeq
The element
\beq
f=\sum_{i\in I}f_i,
\eeq
is a principal nilpotent element. Clearly, $f\in\mf{g}^{-1}$, and moreover one can prove \cite{Kos59} that $f$ satisfies the following properties: $\Ker\ad_f\subseteq \mf{n}_-$,  $[f,\mf{n}_+]\subset \mf{b}_+$ and $\ad_{\rho^\vee}[f,\mf{n}_+]\subseteq [f,\mf{n}_+]$. Since $\rho^\vee$ is semisimple, it follows that there exists an $\ad_{\rho^\vee}$-invariant subspace $\mf{s}$ of $\mf{b}_+$ such that
\beq\label{bfns}
\mf{b}_+=[f,\mf{n}_+]\oplus\mf{s},
\eeq
and since $(\Ker\ad_f)|_{\mf{n}_+}=0$, then $\dim\mf{s}=\dim\mf{b}_+-\dim\mf{n}_+=n$.
The choice of $\mf{s}$ is not unique, and as a possible choice of $\mf{s}$ one can always take $\mf{s}=\Ker\ad_e$, where $e$ is that unique element of $\mf{g}$ such that $\{f,2\rho^\vee,e\}$ is an $\mf{sl}_2$-triple. However, in this paper we do not make this specific choice, and we consider an arbitrary subspace $\mf{s}$ satisfying \eqref{bfns}. The affine subspace $f+\mf{s}$ is known as transversal subspace; by a slight abuse of terminology, we also refer to the subspace $\mf{s}$ as a transversal subspace. The space $f+\mf{s}$ has the property that every regular orbit of $\mc{G}$ in $\mf{g}$ interects $f+\mf{s}$ in one and only one point. In addition, for every $x\in \mf{b}_+$ there exist a unique $a\in\mc{N}$ and a unique $s\in \mf{s}$ such that $a.(f+s)=f+x$, where $a.$ denotes the adjoint action of $\mc{N}$ on $\mf{g}$. More precisely, the map
$$\mc{N}\times (f+\mf{s})\to f+\mf{b}_+$$
provided by the adjoint action is an isomorphism of affine varieties.  Last, we introduce the concept of exponents of the Lie algebra $\mf{g}$. Decomposing \eqref{bfns} with respect to the principal gradation, one obtains a set ot equations of the form $\mf{g}^i=[f,\mf{g}^{i+1}]\oplus\mf{s}^i$, for $i=0,\dots,h^\vee-1$, where $\mf{s}^i=\mf{s}\cap\mf{g}^i$ and $\mf{g}^{h^\vee}=\{0\}$. Since $(\Ker\ad_f)|_{\mf{n}_+}=0$, then $\dim\mf{s}^i=\dim\mf{g}^i-\dim\mf{g}^{i+1}$.  If $\dim\mf{s}^i>0$, then $i$  is
said to be an exponent of $\mf{g}$, and $\dim\mf{s}^i$ is the multiplicity of the exponent $i$.
Counting multiplicities, there are $n=\rank\mf{g}$ exponents, which we denote by $d_1,\dots,d_n$.

\subsection{ A basis for $\mf{g}$}\label{gbasis} Let $\{e_i,\alpha^\vee_i,f_i,i\in I\}\subset\mf{g}$ be generators of $\mf{g}$ defined as above. Following \cite{kac90} we define a basis for $\mf{g}$ as follows. For every pair of simple roots $\alpha_i,\alpha_j$, $i,j\in I$, let 
\beq\label{bimultiplicative1}
\varepsilon_{\alpha_i,\alpha_j}=
\begin{cases}
(-1)^{C_{ij}} \qquad & i<j,\\
-1 \qquad & i=j,\\
1 \qquad & i>j,
\end{cases}
\eeq
and extend this to a function $\varepsilon:Q\times Q\to \{\pm 1\}$ by bimultiplicativity:
\beq\label{bimultiplicative2}
\varepsilon_{\alpha+\beta,\gamma}=\varepsilon_{\alpha,\gamma}\varepsilon_{\beta,\gamma},\quad \varepsilon_{\alpha,\beta+\gamma}=\varepsilon_{\alpha,\beta}\varepsilon_{\alpha,\gamma}, \qquad \alpha,\beta,\gamma\in Q.
\eeq
Then, for $\alpha\in \Delta$ there exists nonzero $E_\alpha\in\mf{g}$,  with $E_{\alpha_i}=e_i$, $E_{-\alpha_i}=-f_i$, $i\in I$, uniquely characterized by the relations
\beq\label{commrel}
\begin{cases}
[h,E_\alpha]=\langle h, \alpha\rangle E_\alpha,\qquad &h\in\mf{h},\,\alpha\in\Delta,\\
[E_\alpha,E_{-\alpha}]=-\nu^{-1}(\alpha),\qquad &\alpha\in\Delta,\\
[E_\alpha,E_\beta]=\varepsilon_{\alpha,\beta}E_{\alpha+\beta},\qquad &\alpha,\beta,\alpha+\beta\in\Delta,\\
[E_\alpha,E_\beta]=0,\qquad  &\alpha,\beta\in\Delta,\,\alpha+\beta\notin\Delta\cup\{0\}.
\end{cases}
\eeq
We clearly have the root space decomposition
\beq\label{rootspacedecomposition}
\mf{g}=\mf{h}\oplus\bigoplus_{\alpha\in\Delta}\bb{C}E_\alpha.
\eeq 
In addition, it follows from \eqref{normalizedkilling} and \eqref{commrel} that for $\alpha, \beta\in\Delta$ we have
\beq\label{Enormalized}
(E_\alpha|\mf{h})=0,\qquad (E_\alpha|E_\beta)=-\delta_{\alpha,-\beta}, 
\eeq
where $(\cdot | \cdot)$ is the normalized invariant form defined in \eqref{normalizedkilling}. The following result will be useful in Section \ref{sec:zeromonodromykdv}.
\begin{lemma}\label{lemmaepsilon} 
For every $\beta,\gamma\in Q$, then
\begin{enumerate}[i)]
\item  $\varepsilon_{\beta,-\gamma}=\varepsilon_{\beta,\gamma}=\varepsilon_{-\beta,\gamma}$,
\item $\varepsilon_{0,\beta}=\varepsilon_{\beta,0}=1$,
\item $\varepsilon_{\beta,\beta}=(-1)^{\frac{1}{2}(\beta | \beta)}$,
\item $\varepsilon_{\beta,\alpha}\varepsilon_{\alpha,\beta}=(-1)^{(\alpha|\beta )}$.
\end{enumerate}
\end{lemma}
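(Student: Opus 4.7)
The plan is to prove all four properties by exploiting bimultiplicativity \eqref{bimultiplicative2} and reducing the general statements to the case of simple roots, where everything follows from the defining formulas \eqref{bimultiplicative1}. Because $\varepsilon$ takes values in $\{\pm 1\}$, any element $u \in \{\pm 1\}$ satisfies $u = u^{-1}$, a fact that will be used repeatedly.

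First I would dispatch (ii). By bimultiplicativity, $\varepsilon_{0,\beta} = \varepsilon_{0+0,\beta} = \varepsilon_{0,\beta}^{2}$, forcing $\varepsilon_{0,\beta}=1$; the equality $\varepsilon_{\beta,0}=1$ is analogous. Property (i) is then immediate: $\varepsilon_{\beta,\gamma}\varepsilon_{-\beta,\gamma}=\varepsilon_{0,\gamma}=1$, so $\varepsilon_{-\beta,\gamma}=\varepsilon_{\beta,\gamma}^{-1}=\varepsilon_{\beta,\gamma}$; the other identity follows by symmetry in the argument.

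For (iii) and (iv) I would expand the lattice elements in the simple root basis $\beta = \sum_i m_i \alpha_i$ and $\gamma = \sum_j n_j \alpha_j$ and use bimultiplicativity in both slots. For (iv), the key point is that for simple roots one has $\varepsilon_{\alpha_i,\alpha_j}\varepsilon_{\alpha_j,\alpha_i} = (-1)^{C_{ij}} = (-1)^{(\alpha_i\vert\alpha_j)}$, which is verified case-by-case from \eqref{bimultiplicative1}: for $i<j$ the product is $(-1)^{C_{ij}} \cdot 1$, for $i>j$ it is $1 \cdot (-1)^{C_{ji}}$, and for $i=j$ it is $(-1)(-1)=1=(-1)^{C_{ii}}$ since $C_{ii}=2$. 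Therefore
\begin{equation*}
\varepsilon_{\beta,\gamma}\varepsilon_{\gamma,\beta} = \prod_{i,j}\bigl(\varepsilon_{\alpha_i,\alpha_j}\varepsilon_{\alpha_j,\alpha_i}\bigr)^{m_i n_j} = (-1)^{\sum_{i,j} m_i n_j (\alpha_i\vert\alpha_j)} = (-1)^{(\beta\vert\gamma)},
\end{equation*}
using \eqref{aaC}.

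For (iii), the simply-laced hypothesis guarantees $(\alpha_i\vert\alpha_i)=C_{ii}=2$, so the simple-root case $\varepsilon_{\alpha_i,\alpha_i}=-1=(-1)^{(\alpha_i\vert\alpha_i)/2}$ holds. Then expanding gives
\begin{equation*}
\varepsilon_{\beta,\beta} = \prod_i \varepsilon_{\alpha_i,\alpha_i}^{m_i^{2}} \prod_{i<j}\bigl(\varepsilon_{\alpha_i,\alpha_j}\varepsilon_{\alpha_j,\alpha_i}\bigr)^{m_i m_j} = (-1)^{\sum_i m_i^{2} + \sum_{i<j} C_{ij} m_i m_j},
\end{equation*}
which matches $(-1)^{(\beta\vert\beta)/2}$ since $(\beta\vert\beta) = 2\sum_i m_i^{2} + 2\sum_{i<j} C_{ij} m_i m_j$. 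I anticipate no real obstacle; the only subtlety is the bookkeeping in (iii)-(iv), where one must split the double sum into diagonal and off-diagonal contributions and use the simply-laced identity $(\alpha_i\vert\alpha_i)=2$ to ensure $(\beta\vert\beta)/2 \in \mathbb{Z}$, so that the right-hand side is well defined.
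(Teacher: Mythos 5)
Your proof is correct, and the computational core -- bimultiplicativity plus the explicit values $\varepsilon_{\alpha_i,\alpha_j}$ on simple roots -- is the same as the paper's. The logical scaffolding differs in two places, both to your advantage in terms of economy. First, you invert the dependency between (i) and (ii): the paper proves (i) by writing out the double product $\prod_{i,j}(\varepsilon_{\alpha_i,\alpha_j})^{-\beta_i\gamma_j}$ and absorbing the sign of the exponent, then deduces (ii) from (i) via $\varepsilon_{0,\beta}=\varepsilon_{\gamma-\gamma,\beta}=(\varepsilon_{\gamma,\beta})^2$; you instead get (ii) for free from $\varepsilon_{0,\beta}=\varepsilon_{0,\beta}^2$ and then (i) from $\varepsilon_{\beta,\gamma}\varepsilon_{-\beta,\gamma}=\varepsilon_{0,\gamma}=1$, which avoids any coordinate expansion. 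Second, for (iv) the paper uses a polarization argument -- substitute $\beta\mapsto\alpha+\beta$ into (iii) and cancel -- whereas you verify the identity $\varepsilon_{\alpha_i,\alpha_j}\varepsilon_{\alpha_j,\alpha_i}=(-1)^{C_{ij}}$ case by case on simple roots and extend by bimultiplicativity; both are valid, the paper's being slicker once (iii) is in hand and yours being self-contained and independent of (iii). Your treatment of (iii) coincides with the paper's, including the observation that $(\beta\vert\beta)/2=\sum_i m_i^2+\sum_{i<j}C_{ij}m_im_j$ is an integer, which is what makes the right-hand side meaningful.
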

\begin{proof}
i) Let $\beta=\sum_j \beta^j\alpha_j, \gamma=\sum_j \gamma^j\alpha_j\in Q$. Then using \eqref{bimultiplicative2} we have 
\begin{align*}
\varepsilon_{\beta,-\gamma}&=\prod_{i,j=1}^n(\varepsilon_{\alpha_i,\alpha_j})^{-\beta_i\gamma_j}=\prod_{j=1}^n\left(\prod_{i<j}(-1)^{-\beta_iC_{ij}\gamma_j}(-1)^{-\beta_j\gamma_j}\right)\\
&=\prod_{j=1}^n\left(\prod_{i<j}(-1)^{\beta_iC_{ij}\gamma_j}(-1)^{\beta_j\gamma_j}\right)=\varepsilon_{\beta,\gamma}.
\end{align*}
ii) From point i) we get $\varepsilon_{0,\beta}=\varepsilon_{\gamma-\gamma,\beta}=\varepsilon_{\gamma,\beta}\varepsilon_{-\gamma,\beta}=(\varepsilon_{\gamma,\beta})^2=1$. \\
iii) Let $\beta=\sum_{i=1}^n\beta^i\alpha_i\in Q$, so that $(\beta |\beta)=\sum_{ij}\beta^iC_{ij}\beta^j=2\sum_{j}\left(\sum_{i<j}\beta^iC_{ij}\beta^j+(\beta^j)^2\right)$, where in the last equality we used the relations $C_{ji}=C_{ij}$ and $C_{ii}=2$. Thus we have
 $$\sum_{j=1}^n\left(\sum_{i<j}\beta^iC_{ij}\beta^j+(\beta^j)^2\right)=\frac{1}{2}(\beta | \beta)$$
 for every $\beta=\sum_i\beta^i\alpha_i\in Q$. We now compute $\varepsilon_{\beta,\beta}$. Using \eqref{bimultiplicative1} and \eqref{bimultiplicative2} we obtain
\begin{align*}
\varepsilon_{\beta,\beta}&=\prod_{i,j=1}^n(\varepsilon_{\alpha_i,\alpha_j})^{\beta_i\beta_j}=\prod_{j=1}^n\left(\prod_{i<j}(-1)^{\beta_iC_{ij}\beta_j}(-1)^{(\beta_j)^2}\right)\\
&=(-1)^{\sum_{j=1}^n\left(\sum_{i<j}\beta_iC_{ij}\beta_j+(\beta_j)^2\right)}=(-1)^{\frac{1}{2}(\beta | \beta)}.
\end{align*}
iv) Replacing in iii) $\beta$ with $\alpha+\beta$ and using \eqref{bimultiplicative2} we get $\varepsilon_{\beta,\alpha}\varepsilon_{\alpha,\beta}=(-1)^{(\alpha |\beta)}$.  \end{proof}

\subsection{Affine Kac--Moody algebras}

Let $\mf{g}$ be a simple Lie algebra, $\mf{h}\subset \mf{g}$ a Cartan subalgebra, and fix a nondegenerate invariant bilinear form $(\cdot | \cdot)$ on $\mf{g}$ as in \eqref{normalizedkilling}. The untwisted affine Kac-Moody algebra $\hat{\mf{g}}$ associated to the simple Lie algebra $\mf{g}$ can be realized in terms of $\mf{g}$ as the space
$$\hat{\mf{g}}=\mf{g}[\lambda,\lambda^{-1}]\oplus \bb{C}K \oplus \bb{C}d,$$
with the commutation relations
\begin{align*}
[\lambda^m\otimes & x\oplus a K\oplus b d,\lambda^n\otimes y\oplus a' K\oplus b' d]\\&=
\left(\lambda^{m+n}\otimes [x,y]-b'm \,\lambda^m\otimes x+b\, n \,\lambda^n\otimes y\right)\oplus m\delta_{m,-n}(x|y)K, \notag
\end{align*}
where $a,b,a',b'\in\bb{C}$, $m,n\in\bb{Z}$ and $x,y\in\mf{g}$. Note that $K$ is a central element,
while $d$ acts as the derivation $\lambda \partial_\la$. The Cartan subalgebra of $\hat{\mf{g}}$ is the finite dimensional subalgebra 
$$\hat{\mf{h}}=\mf{h}\oplus \bb{C}K\oplus\bb{C}d.$$
Let $\{e_i,f_i,i\in I\}$ be Chevalley generators of $\mf{g}$, as above, and for $i\in I$ set
$\hat{e}_i=1\otimes e_i$ and $\hat{f}_i=1\otimes f_i$. Moreover, let $e_\theta\in\mf{g}_\theta$
(resp. $e_{-\theta}\in\mf{g}_{-\theta}$) be a highest (resp. lowest) root vector for $\mf{g}$ and set
$\hat{e}_0=\lambda^{-1}\otimes e_{-\theta}$, $\hat{f}_0=\lambda\otimes e_{\theta}$. Putting
$\hat{I}=\{0,\dots,n\}$, then $\{\hat{e}_i,\hat{f}_i,i\in\hat{I}\}$ is a set of generators for
$\hat{\mf{g}}$. We denote by $\hat{f}$ the element $\hat{f}=\sum_{i \in \hat{I}}\hat{f}_i$.

\vspace{20pt}
\section{Opers}\label{sec:opers}
In this Section we review the concept of $\mf{g}$-\textit{opers}  and some of its basic theory. This is done in order to keep
the paper as self-contained as possible and to fix the notation; consequently we
follow a basic and purely algebraic approach, suitable
to computations. For more details on the subject, including the geometric approach and
the extension to more general groups and algebras, the reader may consult
\cite{DS85,bedr02,fr07,lacroix18} and references therein.

For any open and connected subset $D$ of the Riemann sphere $\bb{P}^1$, we call $O_D$ the ring of regular functions on $D$, and
$K_D$ the field of meromorphic functions on it. Given a $\bb{C}$ vector space $V$, we denote $V(O_D)=O_D \otimes V$ and
$V(K_D)=K_D \otimes V$, namely the space of
the regular/meromorphic functions on $D$ with values in $V$. Opers are, locally, equivalence classes of differential operators modulo
Gauge transformations. In this work  we consider classes of meromorphic differential operators
modulo meromorphic Gauge transformations. 

The local operators under consideration belong to the classes $\ope(K_D),\tope(K_D)$, which we define below.
\begin{definition}\label{def:opetope} Let $z$ be a local holomorphic coordinate on $\bb{P}^1$ that identifies $\bb{P}^1$ with $\bb{C} \cup \lbrace \infty \rbrace$, and let $\mc{L}$ be a differential operator in $z$. We say that $\mc{L}$ belongs to $\ope(K_D)$ if it is of the form  \cite{bedr02}:
\beq\label{eq:Lgeneral}
\mc{L}=\partial_z+ f+ b
\eeq
for some $b \in \mf{b}_+(K_D)$.  We say that  $\mc{L}$ belongs to $\tope(K_D)$ if it is of the form 
\beq\label{eq:tLgeneral}
\mc{L}=\partial_z+\sum_{i=1}^n \psi_i f_i+ b
\eeq
where $b \in \mf{b}_+(K_D)$, and 
$\psi_i \in K_D \setminus \lbrace 0 \rbrace$, $i=1,\dots,n$.
\end{definition}
The local Gauge groups we consider are $\mc{N}(K_D),\mc{H}(K_D), \mc{B}(K_D)$, which we introduce below together
with their actions on $\ope(K_D),\tope(K_D)$.
\begin{definition}\label{def:gaugegroupN}
  The unipotent Gauge group is the set
 \begin{equation}
 \mc{N}(K_D)=\lbrace \exp{y}\, , \; y \in n_+(K_D) \rbrace
 \end{equation}
 with the natural group structure inherited from $\mathcal{N}$. The (adjoint) action of $\mc{N}(K_D)$ on $\mf{g}(K_D)$ is defined as
  \begin{equation*}
  \exp{(\ad y)}.g=\sum_{k\geq0}\frac{1}{k!}(\ad_y)^k \, g, \qquad y \in \mf{n}_+(K_D),\, g \in \mf{g}(K_D).
 \end{equation*}
 The adjoint action of $\mc{N}(K_D)$ on $\partial_z$ is expressed by Dynkin's formula 
\begin{equation}\label{dynkinformula}
 \exp{(\ad y)}.\partial_z=\partial_z-\sum_{k\geq0}\frac{1}{(k+1)!}(\ad_{y})^k\frac{d y}{dz} , \qquad y \in \mf{n}_+(K_D),
\end{equation}
which is equivalent to
$N.\partial_z=\partial_z-\frac{dN}{dz}N^{-1}$, for $N=\exp{y}$.
\end{definition}
\begin{remark}
Let us extend the algebra structure of $\mf{n}_+(K_D)$ to the space $\mf{n}_+(K_D) \oplus \bb{C} \partial_z$ by the formula $[\partial_z,y]=\frac{dy}{dz}$. Then formula  \eqref{dynkinformula} for the action of $\exp{y}$ on $\partial_z$
coincides with 
the adjoint action according to the bracket of the extended algebra. Indeed,
$$\sum_{l=0}^{\infty}\frac{1}{l!}(\ad_y)^l\partial_z=
\partial_z+\sum_{l=1}^{\infty}\frac{1}{l!}\ad^{l-1}_{y}[y,\partial_z]=
\partial_z-\sum_{k=0}^{\infty}\frac{1}{(k+1)!}\ad^k_y \frac{dy}{dz}.$$
\end{remark}

\begin{definition}\label{def:gaugegroupH}
We denote $\mc{H}(K_D)$ the abelian mutiplicative group generated by
elements of the form $\phi^{\lambda}$ for $\phi \in K_D\setminus \lbrace 0 \rbrace$ and $\lambda \in P^{\vee}$, the co-weight lattice \eqref{weightcoweight}. Since $\rank P^\vee=n$ then $\mc{H}(K_D)$ is isomorphic to
$\left( K_D\setminus \lbrace 0 \rbrace \right)^n $
The (adjoint) action of $\mc{H}(K_D)$ on $\mf{g}(K_D)$ is given   by means of the root space decomposition \eqref{rootdecomposition}: if $g=g_0+\sum_{\alpha \in \Delta} g_{\alpha}\in\mf{g}(K_D)$, with  $g_0\in\mf{h}(K_D)$ and $g_{\alpha} \in \mf{g}_{\alpha}(K_D)$ then 
 \begin{equation*}
  \phi^{\ad\lambda}.g=g_0+ \sum_{\alpha \in \Delta} \phi^{\alpha(\lambda)}g_{\alpha}  \; .
 \end{equation*}
The adjoint action of $\mc{H}(K_D)$ on the operator $\partial_z$ is given by
\begin{equation*}
 \phi^{\ad\lambda}. \partial_z=\partial_z -\frac{\phi'}{\phi}\lambda \;.
\end{equation*}
Finally, the action of $\mc{H}(K_D)$ on $\mf{n}_+(K_D)$ induces an action on $\mc{N}(K_D)$ as follows
\begin{equation*}
\phi^{\ad\lambda}.\exp{y}=\exp{\big(\phi^{ \ad\lambda}. y\big)},\qquad  y \in \mf{n}_+(K_D), \, \phi \in K_D\setminus \{0\}.
\end{equation*}
\end{definition}

\begin{definition}\label{def:gaugegroupB}
Given the above action of $\mc{H}(K_D)$ on $\mc{N}(K_D)$, we define $\mc{B}(K_D)=\mc{H}(K_D) \rtimes \mc{N}(K_D)$ as the semidirect product induced by it.
\end{definition}

Summing up the previous definitions, we can explicitly write the free action of
$\mc{N}(K_D)$ on  $\tope(K_D)$ (and in particular on $\ope(K_D)$) as
\begin{align}\label{eq:gaugeaction}
 \exp{(\ad y)} . \big( \partial_z+\sum_{i}\psi_if_i+b\big)=\,&\partial_z+ \sum_{k\geq0}  \frac{1}{k!}(\ad_y)^k \big(\sum_i \psi_i f_i+b\big)  \nonumber\\
 &-\sum_{k\geq0}\frac{1}{(k+1)!}(\ad_y)^k\frac{d y}{dz},
\end{align}
with $y\in\mf{n}_+(K_D)$. Similarly, the action of $\mc{H}(K_D)$, and thus
of $\mc{B}(K_D)$, on $\tope(K_D)$ is given by:

\begin{align}\label{eq:Haction}
\phi^{\ad\lambda} .\big( \partial_z+\sum_{i}\psi_if_i+b\big)=&
\partial_z+ \sum_{i}\phi^{-\alpha_i(\lambda)}\psi_if_i+\nonumber\\
&- \frac{\phi'}{\phi}\lambda+b_0+ \sum_{\alpha \in \Delta_+}\phi^{\alpha(\lambda)}b_{\alpha},
\end{align}
where $b=b_0+\sum_{\alpha\in\Delta_+}b_\alpha$, with $b_0\in\mf{h}(K_D)$ and $b_\alpha\in\mf{g}_\alpha(K_D)$. Formula \eqref{eq:Haction} has two immediate consequences:
\begin{enumerate}
 \item The only element in $\mc{H}(K_D)$ that leaves the set $\ope(K_D)$ invariant is the identity
 \item For any choice of the functions $\psi_i$, $i\in I$, there is a unique element in $\mc{H}(K_D)$ that maps
 $\mc{L} \in \tope(K_D)$ to an operator in $\ope(K_D)$; explicitly this is
$\prod_j \psi_j(z)^{\omega_j^\vee}$ where $\omega_j^\vee$, $j\in I$, are the fundamental co-weights.
\end{enumerate}
It follows from the above that there is a bijection between the sets of equivalence classes 
 $\ope(K_D)/\mc{N}(K_D)$ and $\tope(K_D)/\mc{B}(K_D)$.
\begin{definition}Let $D$ be an open, connected and simply-connected subset of $\bb{P}^1$. The space of opers $\Ope(D)$ is defined as $\ope(K_D)/\mc{N}(K_D)\cong\tope(K_D)/\mc{B}(K_D)$. We denote by $[\mc{L}]$ the equivalence class (i.e. the oper) of the operator $\mc{L}$.
\end{definition}
%

Fixed a transversal space $f+\mf{s}$, then each equivalence class of operators in $\ope(K_D)$ admits a
unique representative of the form $\partial_z+f+s$, with $s\in\mf{s}(K_D)$ The space of opers on a domain $D$ of the Riemann sphere was essentially described (in the holomorphic case) in \cite{DS85}; in the sequel we need a slightly extended version of that proposition, hence we review its proof too.

\begin{definition}\label{def:pifpis}
Let $f+\mf{s}$ a transversal space. Given the splitting $\mf{b}_+=[f,\mf{n}_+]\oplus \mf{s}$, 
we denote
 $\Pi_{f}:\mf{b}_+\to [f,\mf{n}_+]$ and $\Pi_{\mf{s}}:\mf{b}_+\to \mf{s}$ the respective projections. 
\end{definition}
\begin{proposition}[cf. Proposition 6.1 in \cite{DS85}]\label{lem:gaugemiura} Let $f+\mf{s}$ be a transversal space.
 For every meromorphic differential operator $\mc{L}=\partial_z+f+b \in \ope(K_D)$, there exists a
 unique meromorphic function $s \in \mf{s}(K_D)$ and a unique Gauge transform $N\in \mc{N}(K_D)$ such that
 $N.\mc{L}=\partial_z+f+s$. Furthermore, the set of singular points of $s$ is a subset of the set of singular points of $b$. 
 \begin{proof}
 We first prove the existence of the pair $N,s$, and then its uniqueness. We construct -- by induction with respect to the principal gradation -- the pair $N,s$ as $N=N_{h^\vee-1}\cdots  N_1$, with $N_i=\exp y^i$ and $y^i\in\mf{g}^i(K_D)$ and  $s=\sum_{i=1}^{h^\vee-1}s^i$, with $s^i\in\mf{s}^i(K_D)$. Let $\mc{L}=\partial_z+f+b\in\ope(K_D)$, and let $b=\sum_{i=0}^{h^\vee-1}b^i$, with $b^i\in\mf{g}^i(K_D)$. Introduce $N_1=\exp y^1$ with $y^1\in\mf{g}^1(K_D)$  and set $\mc{L}_1=N_1\mc{L}$. Due to  \eqref{eq:gaugeaction} then $\mc{L}_1=\partial_z+f+b^0+[y^1,f]+\sum_{i=1}^{h^\vee-1}\bar{b}^i$, for certain $\bar{b}^i\in\mf{g}^1(K_D)$. Note that $b^0\in\mf{h}\subseteq [f,\mf{n}_+]$, and since $\Ker\ad_f$ is trivial on $\mf{n}_+$, we take $y^1$ to be the unique solution of the equation $b^0+[y^1,f]=0$, so that $\mc{L}_1$ takes the form $\mc{L}_1=\partial_z+f+\sum_{i=1}^{h^\vee-1}\bar{b}^i$. Note that by construction $y^1$ has at most the same singularities of $b^0$, which is an element of $\mc{L}$. Since $\mc{L}_1$ is generated from $\mc{L}$ by the iterated adjoint action of $y^1$, then the set of singular points of $\mc{L}_1$ is contained in the set of singular point of $\mc{L}$. Now fix $j>1$ and assume we found elements $N^{l}=\exp{n^l}$, $l=1,\dots,j-1$ with  $n^l \in \mf{g}^l(K_{D})$, as well as $s^l\in\mf{s}^l(K_D)$, $l=1,\dots,j-2$ such that $$\mc{L}_{j-1}:=N_{j-1}\cdots N_1.\mc{L}=\partial_z+f+\sum_{l=1}^{j-2}s^l+\sum_{l=j-1}^{h^\vee-1}c^l,$$
 for some $c^l\in\mf{g}^l(K_D)$. Assume moreover that the set of singular points of $\mc{L}_{j-1}$ is contained in $S_b$. Introduce $N_j=\exp y^j$ with $y^j\in\mf{g}^j(K_D)$ and set $\mc{L}_j=N_j\mc{L}_{j-1}$. Using \eqref{eq:gaugeaction} we obtain
$$\mc{L}_j=\partial_z+f+\sum_{l=1}^{j-2} s^l+ [y^j, f]+ \bar{c}^{j-1}+\sum_{l=j}^{h^\vee-1} \bar{c}^l,$$
for some $\bar{c}^l\in\mf{g}^l(K_D)$. We are interested in the term $[y^j, f]+ \bar{c}^{j-1}$. Recalling the projection operators given in Definition \ref{def:pifpis}, then we define $s^j=\Pi_{\mf{s}}(\bar{c}^{j-1})\in\mf{s}^j(K_D)$,  and we take $y^j$ to be the unique solution of the equation $[y^j,f]+\bar{c}^{j-1}=0$. Such a solution exists and is unique since $(\Ker\ad_f)|_{\mf{n}_+}=0$. Then, $\mc{L}_j$ takes the form
$$\mc{L}_j=\partial_z+\sum_{l=1}^{j-1} s^l+\sum_{l=j}^{h^\vee-1} \bar{c}^l.$$
By construction, $y^j$ has at most the same singularities of $\bar{c}^{j-1}$, which is an element of $\mf{L}_{j-1}$. Since $\mc{L}_j$ is generated by the action of $y^j$ on $\mc{L}_{j-1}$, the singular locus of
$\mc{L}_j$ is a subset of the singular locus of $\mc{L}_{j-1}$. Iterating the above procedure, one obtains elements $N=N_{h^\vee-1}\dots N_1$, with $N_j=\exp y^j$ and $y^j\in\mf{g}^j(K_D)$, and $s=\sum_{i=1}^{h^\vee-1}s^i$ with $s^i\in\mf{s}^i(K_D)$, so that $N.\mc{L}=\partial_z+f+s$, and the set of singular points of $N.\mc{L}$ -- namely the singular points of $s$ -- is contained in the set of singular points of $\mc{L}$,
namely the singular points of $b$. Note incidentally that $s^i=0$ if $i$ is not an exponent of $\mf{g}$.
  
  The pair $(N,s)$ constructed above is unique, because the action of $\mc{N}(K_D)$ on $\ope(K_D)$ is free, and if two operators of the form
  $\partial_z+f+s$, $\partial_z+f+s'$, with $s, s' \in \mf{s}(K_D)$ are gauge equivalent then $s=s'$.
  We prove the latter
  statement as follows. Let the two operators be Gauge equivalent, by the transformation $M=\exp{m}, m \in \mf{n}_+(K_D)$, then $m=0$.
  Indeed suppose $m\neq0$ and let $m^i \neq0, m_i \in \mf{g}^i(K_D)$ be the non-trivial term of
  $m$ with lowest principal degree.
  Then $\Pi_f(\exp{m}(\partial_z+f+s)-\partial_z-f)$ has a non-trivial term
  of degree $i-1$, namely $[m^i,f]$, hence it is not zero.
 \end{proof}
\end{proposition}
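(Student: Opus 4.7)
The plan is to prove existence by induction on the principal gradation, then uniqueness by exploiting the fact that the action of $\mc{N}(K_D)$ on $\ope(K_D)$ is free, and finally read off the statement about singular points directly from the construction.

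For existence, I would write $N = N_{h^\vee-1}\cdots N_1$ with $N_i = \exp(y^i)$, $y^i \in \mf{g}^i(K_D)$, and build the $y^i$ recursively. Starting from $\mc{L} = \partial_z + f + b$, decompose $b = \sum_{i \geq 0} b^i$ according to the principal gradation. The key observation is that $\ad_f : \mf{n}_+ \to \mf{b}_+$ is injective (since $\Ker\ad_f \cap \mf{n}_+ = 0$) with image $[f, \mf{n}_+]$, which is complementary to $\mf{s}$ in $\mf{b}_+$. So at step $j$, assume we have already produced $\mc{L}_{j-1} = \partial_z + f + \sum_{l < j-1} s^l + \sum_{l \geq j-1} c^l$ with $s^l \in \mf{s}^l(K_D)$. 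Applying $\exp(\ad y^j)$ with $y^j \in \mf{g}^j(K_D)$ leaves the degrees below $j-1$ untouched but shifts the degree $j-1$ term by $[y^j, f]$. Using the splitting $\mf{b}_+ = [f, \mf{n}_+] \oplus \mf{s}$, set $s^{j-1} := \Pi_{\mf{s}}(c^{j-1})$ and solve the unique linear equation $[y^j, f] = -\Pi_f(c^{j-1})$, which by injectivity of $\ad_f$ on $\mf{n}_+$ has a unique solution $y^j$. The recursion terminates at $j = h^\vee$ because $\mf{g}^i = 0$ for $i \geq h^\vee$, producing the desired $N$ and $s = \sum s^i$.

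For uniqueness, suppose $\partial_z + f + s$ and $\partial_z + f + s'$ (both in canonical form, $s, s' \in \mf{s}(K_D)$) are Gauge equivalent via $M = \exp(m)$ with $m \in \mf{n}_+(K_D)$. Assume $m \neq 0$ and let $m^i$ be the lowest-degree non-zero homogeneous component. Computing $M.(\partial_z + f + s) - (\partial_z + f + s')$, the lowest-degree non-trivial contribution is $[m^i, f] \in [f, \mf{n}_+]$ at principal degree $i-1$, which cannot be cancelled since $\mf{s} \cap [f, \mf{n}_+] = 0$; this contradicts equality. Hence $m = 0$, forcing $s = s'$ and $N$ uniquely determined.

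The singular locus statement follows by tracking the construction: each $y^j$ is built from a linear (actually affine) expression in the entries of $\mc{L}_{j-1}$ solved by the fixed linear isomorphism $\ad_f^{-1} : [f,\mf{n}_+] \to \mf{n}_+$; hence the singularities of $y^j$ are contained in those of $\mc{L}_{j-1}$. Since $\mc{L}_j = \exp(\ad y^j).\mc{L}_{j-1}$ is obtained by a finite sum of iterated brackets of $y^j$ with $\mc{L}_{j-1}$ plus the Dynkin term involving $(y^j)'$, its singular locus is contained in that of $\mc{L}_{j-1}$. Iterating, the singular locus of $s$ is contained in that of $b$.

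I do not expect a genuine obstacle here: the construction is the standard Drinfeld--Sokolov reduction argument, and both the existence step (injectivity of $\ad_f$ on $\mf{n}_+$) and the uniqueness step (freeness of the $\mc{N}(K_D)$ action) are essentially formal. The only point requiring minor care is the singular locus claim, which must be tracked step by step through the recursion rather than invoked abstractly, since Gauge transformations in $\mc{N}(K_D)$ can in principle introduce new singularities via the derivative term $dy/dz$ in Dynkin's formula \eqref{dynkinformula}; the argument works because the $y^j$ produced by the recursion are themselves meromorphic on the same locus, so no new poles appear.
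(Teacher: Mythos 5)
Your proposal is correct and follows essentially the same route as the paper's proof: existence by induction on the principal gradation using the injectivity of $\ad_f$ on $\mf{n}_+$ together with the splitting $\mf{b}_+=[f,\mf{n}_+]\oplus\mf{s}$, uniqueness by isolating the lowest-degree nonzero component of the gauge parameter and noting that $[m^i,f]$ cannot lie in $\mf{s}$, and the singular-locus claim by tracking the recursion step by step. The only difference is cosmetic (your indexing $s^{j-1}=\Pi_{\mf{s}}(c^{j-1})$ is in fact the more consistent choice), and your explicit remark that the Dynkin derivative term introduces no new poles because the $y^j$ are meromorphic on the same locus is a welcome clarification of a point the paper leaves implicit.
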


As a corollary we have the following characterisation of opers
\begin{proposition}\label{pro:opersondisc}
 Let $D$ be an open, connected, and simply connected subset of $\bb{C}$. After fixing a transversal space $\mf{s}$, the set 
 $\Ope(D)$ can be identified with $\mf{s}(K_D)$.
\end{proposition}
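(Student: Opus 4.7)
The plan is to deduce this directly from Proposition \ref{lem:gaugemiura}, which already contains all the substantive content; Proposition \ref{pro:opersondisc} is essentially a repackaging. Concretely, I would define a map
\[
\Phi : \Ope(D) \longrightarrow \mf{s}(K_D), \qquad \Phi([\mc{L}]) = s,
\]
where $s \in \mf{s}(K_D)$ is the unique element produced by Proposition \ref{lem:gaugemiura} so that $N.\mc{L} = \partial_z + f + s$ for some (again unique) $N \in \mc{N}(K_D)$, and show $\Phi$ is a bijection.

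First I would check that $\Phi$ is well-defined on equivalence classes. If $\mc{L}' = M.\mc{L}$ for $M \in \mc{N}(K_D)$, and Proposition \ref{lem:gaugemiura} produces $(N,s)$ for $\mc{L}$ and $(N',s')$ for $\mc{L}'$, then both $N.\mc{L}$ and $(N'M).\mc{L}$ equal operators of the form $\partial_z+f+s$ and $\partial_z+f+s'$ respectively; the uniqueness clause in Proposition \ref{lem:gaugemiura} (applied to $\mc{L}$) forces $s=s'$, since both $(N,s)$ and $(N'M, s')$ realize $\mc{L}$ in canonical form.

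Injectivity of $\Phi$ is then immediate: if $\Phi([\mc{L}_1]) = \Phi([\mc{L}_2]) = s$, then $\mc{L}_1$ and $\mc{L}_2$ are each $\mc{N}(K_D)$-equivalent to $\partial_z + f + s$, hence equivalent to one another, so $[\mc{L}_1]=[\mc{L}_2]$. Surjectivity is trivial: given $s \in \mf{s}(K_D)$, the operator $\partial_z + f + s$ lies in $\ope(K_D) \subset \tope(K_D)$, and its equivalence class satisfies $\Phi([\partial_z+f+s]) = s$ (with $N = \mathrm{id}$, by uniqueness).

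There is essentially no obstacle here, since the heavy lifting -- the inductive construction of $N$ level-by-level in the principal gradation and the freeness argument ruling out a non-trivial gauge transform fixing $\partial_z + f + s$ -- was carried out in Proposition \ref{lem:gaugemiura}. The only point worth noting is that simple-connectedness of $D$ is not needed for this identification to make sense algebraically; it will however be relevant when one wants to identify $\Ope(D)$ with \emph{global} gauge-equivalence classes of connections and to discuss monodromy, and this is presumably why the hypothesis is included in the statement.
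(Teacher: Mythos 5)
Your proof is correct and follows exactly the route the paper intends: the paper states Proposition \ref{pro:opersondisc} as an immediate corollary of Proposition \ref{lem:gaugemiura} without further argument, and your write-up simply makes explicit the well-definedness, injectivity, and surjectivity checks that the existence-and-uniqueness statement of the canonical form already guarantees. Your side remark that simple-connectedness plays no role in this purely algebraic identification is also accurate.
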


\begin{definition}
 We say that an operator $\mc{L} \in \ope(K_D)$ is in canonical form if it is of the form $\mc{L}=\partial_z+f+s$ with  $s \in\mf{s}(K_D)$. We also say that $\mc{L}_s=\partial_z+f+s$ with $s\in\mf{s}(K_D)$ is the canonical form of any element of $\tope(K_D)$ Gauge-equivalent to it.
\end{definition}

\subsection{Change of coordinates - Global theory}
The global theory of opers was developed in
\cite[Section 3]{bedr02}; see also \cite[Chapter 4]{fr07} or \cite[Section 6.1]{lacroix18}, which we follow.
Here we just address the simplest aspect of the global theory, that is the coordinate transformation laws of opers.
Let $\Sigma$ be a Riemann surface (we will be interested here in the case
$\Sigma=\bb{C}\bb{P}^1$ only), and $D$ a chart on $\Sigma$ with coordinate $z$. Let $\mc{L}\in \tope(K_D)$ be of the form 
$$\mc{L}=\partial_z+\sum_i \psi_i(z) f_i+b(z).$$ 
If $z=\varphi(x)$ is a local change of coordinates we define the transformed operator of $\mc{L}$ as
\begin{equation}\label{eq:varphioperator}
 \mc{L}^{\varphi}=\partial_x+\varphi'(x)\big(\sum_{i}\psi_i(\varphi(x))f_i+b(\varphi(x)) \big) \;.
\end{equation}
thus considering $\tope(K_D)$ as a space of meromorphic connections on the trivial bundle $D \times \mf{g} \to D$.
We note that if $\widetilde{\mc{L}}=\exp{n(z)}.\mc{L}$ then $ (\widetilde{\mc{L}})^{\varphi}=\exp{n\big(\varphi(x)\big)}.
\mc{L}^{\varphi}$, which implies that the transformation law is compatible with quotienting by the Gauge groups.

Hence, one can define
a sheaf of (meromorphic) opers $\Ope(\Sigma)$ on
the Riemann surface $\Sigma$ as follows. For $A$ a set, let $\{U_{\alpha}\}_{\alpha\in A}$ be an
open covering of charts in $\Sigma$, with transition functions $\varphi_{\alpha,\beta}$ whenever
$U_\alpha\cap U_\beta\neq \emptyset$, and let $[\mc{L}_{\alpha}] \in \Ope(U_{\alpha})=\tope(K_{U_\alpha})/\mc{B}(K_{U_\alpha})$ a collection of local sections of opers.
An oper on $\Sigma$, namely an element on $\Ope(\Sigma)$, is then defined as $\{[\mc{L}_{\alpha}], \alpha \in A\}$, with the additional requirement that on each non-empty intersection $U_{\alpha}\cap U_{\beta}$ we have that $[\mc{L}_{\alpha}]=[\mc{L}_{\beta}^{\varphi_{\alpha,\beta}}]\in\Ope(U_{\alpha}\cap U_{\beta})$, where $\mc{L}_{\beta}^{\varphi_{\alpha,\beta}}$ is given by  formula \eqref{eq:varphioperator}, with $\mc{L}=\mc{L}_{\beta}$ and $\varphi=\varphi_{\alpha,\beta}$. 
\begin{remark}
For a given $\mc{L} \in \ope(K_D)$, in general $\mc{L}^{\varphi}$ belongs to $\tope(K_{\varphi^{-1}(D)})$ but not to
$\ope(K_{\varphi^{-1}(D)})$. It is convenient to define, for any $\varphi$, an element $\widetilde{\mc{L}} \in \ope(K_{\varphi^{-1}(D)})$
equivalent to $\mc{L}^{\varphi}$. We make the following choice:
 \beq\label{eq:opercoordchange}
\widetilde{\mc{L}}= \varphi'(x)^{\rho^\vee}\mc{L}^{\varphi}=\partial_x+f-\frac{\varphi^{\prime\prime}(x)}
{\varphi^\prime(x)}\rho^\vee+\sum_{i=0}^{h^\vee-1}\left(\varphi^\prime(x)\right)^{i+1} b^i(\varphi(x)) \in
\ope(K_{\varphi^{-1}(D)}) \; ,
\eeq
where we have decomposed $b(z)=\sum_{i=0}^{\hv-1}b^i(z)$ according to the principal
 gradation. Hence $[\mc{L}^{\varphi}]=[\widetilde{\mc{L}}] \in \Ope(K_{\varphi^{-1}(D)})$.
\end{remark}

In the present work, we deal with meromorphic opers on the sphere $\Ope(\bb{P}^1)$, whose space of global sections we characterise
here. We cover $\bb{P}^1$ by two charts $U_{0}, U_{\infty}$
with coordinates $z,x$ and transition function $z=\frac{1}{x}$.
Suppose that we are given an operator $\partial_z+f+b(z)$ in $\ope(U_0)$ and one operator
$\partial_x+f+\tilde{b}(x)$ in $\ope(U_{\infty})$. These are local sections of the same global oper
if and only if $\partial_x+f+\tilde{b}(x)$ is gauge equivalent to the following operator
 \beq\label{eq:operatinfty}
\mc{L}=\partial_x+f-\frac{2\rho^\vee}{x}+\sum_{i=0}^{h^\vee-1}\left(\frac{-1}{x^2}\right)^{i+1} b^i(\frac{1}{x}).
\eeq
Hence the operator $\partial_z+f+b(z)$, defined locally on $U_0$, can be extended to a global meromorphic oper on the sphere if
and only if
$b(z)$ admits a meromorphic continuation at infinity, i.e. $b(z)$ is a rational function.
From this, it follows immediately that the space of global sections of meromorphic opers on the Riemann sphere $\Ope(K_{\bb{P}^1})$ is
isomorphic to $\mf{s}(K_{\bb{P}^1})$: an oper on the sphere is defined by the choice of a transversal space
and of $n$ arbitrary rational functions.

\vspace{20pt}
\section{Singularities of opers}\label{sec:singularities}
In this section we address the theory of regular and irregular singularities for differential
operators in $\ope(K_D)$ as well as for opers in $[\mc{L}]\in\Ope(K_D)$.  This theory was already addressed in
the opers literature, see \cite{bedr02,fr07,frgr09} among others. Here we both review known facts and include
results  from the literature of singularities of connections, in particular from \cite{bava83,bolibruch}. We will always
point out below whenever our nomenclature deviates from the one commonly used in the opers literature.

Since we are both interested in single operators
and in equivalence classes, we need to distinguish properties
which are Gauge invariant and properties which are not. For example a singular point for an operator may be
a regular point for a Gauge equivalent one, because we allow singular (meromorphic) Gauge transformations.
Hence we start with the following
\begin{definition}\label{def:removable}
 We say that a pole $w$ of $b \in \mf{b}^+(K_D)$ is a removable singularity
 of the differential operator $\mc{L}=\partial_z+f+b\in \ope(K_D)$ if there exists $N \in \mc{N}(K_D)$ such that
 $N.\mc{L}$ is regular at $w$ \footnote{Other authors define a removable singularity a
 a regular singularity whose monodromy,
 in the adjoint representation, is trivial. However, in order to remove such a singularity
 one needs to consider meromorphic Gauge transformations which take values in the full adjoint group, see
 e.g. Proposition \ref{prop:bava} below}.
\end{definition}

The theory of singular points begins with a dichotomy, the distinction between regular and irregular singular point.
In order to define it, we need to introduce the concept of algebraic behaviour.
\begin{definition}
Let $\bb{D}$ be the punctured disc of centre $w$.
We say that a, possibly multivalued, function $f: \bb{D} \to \bb{C}^n, n \geq 0 $ has algebraic behaviour at $z=w$ if,
fixed a closed sector $S$ of opening less than $2 \pi$, the following estimate holds $|f(z)|=o(|z-w|^{\alpha})$
for some $\alpha \in \bb{R}$. 
\end{definition}
\begin{definition}
 A singularity  $w \in D$ of the operator $\mc{L}=\partial_z+f+b\in \ope(K_D)$ is called regular if the
 following property holds for every finite dimensional module $V$ of $\mf{g}$: every local solution
 $y: \bb{C} \to V$ of the linear equation $\mc{L}.y=0$ has algebraic behaviour at $w$. A singular point that
 is not regular is named irregular.
\end{definition}
The above definition is clearly Gauge invariant.  It is in practice a notoriously difficult task the one
of establishing whether the singularity of a connection is regular or not, see e.g. \cite[Chapter 5]{bava83}.
However, this problem can be easily solved for the class of operators belonging to $\tope(K_D)$,
as we show in Proposition \ref{thm:fuchsian} below. To this aim we start by introducing the
concept of slope of the singular point \cite{deligne69}, \cite{frgr09}
\footnote{For computational convenience, our slope is equal to the slope defined in \cite{frgr09} $+1$.}.
\begin{definition}\label{def:slope}
Let $\mc{L}=\partial_z+f+b\in \ope(K_D)$. Let $w$ be a singularity of $b\in\mf{b}_+(K_D)$,
and decompose $b=\sum_{i=0}^{h^\vee-1} b^{i}$ according to the principal gradation of $\mf{g}$, with $b^i\in\mf{g}^i(K_D)$.
Let $\bar{b}^i (z-w)^{-\delta_i}$, with $\bar{b}^i \in \mf{g}^i$ and $\delta_i \in \bb{Z}$, be the most singular term of $b^i$ in
the Laurent expansion at $z=w$. Denote 
\beq\label{slopebbar}
\mu=\max\lbrace 1,\max_{i}\frac{\delta_i}{i+1}\rbrace,\qquad \bar{b}=\sum_{\frac{\delta_i}{i+1}=\mu}\bar{b}_i.
\eeq
We call $\mu\in\bb{Q}$ the slope of the singularity $w$. The principal coefficient of the singularity is defined as
$f-\rho^\vee+\bar{b}$ if $\mu=1$, and as $f+\bar{b}$ if $\mu>1$.
\end{definition}
\begin{definition}\label{def:fuchsian}
A pole $w \in D$ of $b\in\mf{b}_+(K_D)$ is called a Fuchsian singular point of $\mc{L}=\partial_z+f+b\in \ope(K_D)$
if it has the slope $\mu=1$. Equivalently, $w$ is Fuchsian  if $(z-w)^{i+1}b^{i}(z)$ is analytic at $w$ for all $i$.
\end{definition}
\begin{definition}\label{rem:slopeatinfinfinity}
 What happens at $\infty$? Let $\bar{b}^i\in\mf{g}^i(K_D)$, and assume $b^i(z)=O(z^{\delta_i})\bar{b}^i$ as $z\to \infty$. Define
 $\mu_{\infty}=\max_{i}\frac{\delta_i}{i+1}$. Letting $x=\frac{1}{z}$, we
 may choose as local representative of $[\mc{L}]$ at $\infty$, the differential
 operator \eqref{eq:operatinfty}.
 The slope of the latter operator at $x=0$ (i.e. the slope of $\mc{L}$ at $\infty$) is
 $\max\lbrace1,2+\mu_{\infty}\rbrace$. Hence, we say that the singularity at $\infty$ is
 Fuchsian if and only if $\mu_{\infty}\leq-1$.
\end{definition}
\begin{remark}
The authors of \cite{bedr02} use a different nomenclature:
Equivalence classes of opers with a Fuchsian singularity, with respect to the action of Gauge transformations regular at $w$,
are called $(\leq1)$-singular opers. We prefer to use the name Fuchsian, because in the case of $\mf{sl}_n$ opers, the definition
coincides with the one of Fuchsian scalar ODEs, see Corollary \ref{cor:fuchs} below.
\end{remark}

The reason for the previous definition comes form the following observation.
Let $w$ be a singularity of $\mc{L}$, with slope $\mu$. Introduce a branch of $(z-w)^{\mu}$, and let $\widehat{K}_D$
be the finite extension of $K_D$ obtained by adjoining $(z-w)^{\mu}$. The Gauge transform $(z-w)^{\mu\ad \rho^\vee}
\in \mc{H}(\widehat{K}_D)$ has
the following action on $\mc{L}$:
\begin{equation}\label{zwaction}
 (z-w)^{\mu \ad\rho^\vee}\mc{L}=\partial_z-\frac{ \mu \rho^\vee}{z-w}+ \frac{f+\bar{b}}{(z-w)^{\mu}}+o((z-w)^{-\mu}),
\end{equation}
where $\bar{b}$ is given by \eqref{slopebbar}. If the singularity $w$ is Fuchsian (namely if $\mu=1$), then $\mc{L}$ is
locally Gauge equivalent to a differential operator with a
first order pole. Its associated connection is then Fuchsian (in the sense of connections) at $w$, hence the
singularity is regular. We can also establish a partial converse of the above statement in case the function $b$
takes values in any subset of $f+\mf{b}$ whose only nilpotent is $f$; this is proved in the lemma below together with other results
that will be used in the sequel.

\begin{lemma}\label{lem:onsingularities} 
\begin{enumerate}
 \item Let $\mc{L}=\partial_x+f+b \in \ope(K_D)$ and $w$ a pole of $b$.  If the singularity $w$ is Fuchsian then it is a
regular singularity.\label{part1sing}
\item Let $\mf{m} \subset \mf{b}_+$ a vector subspace of $\mf{b}_+$ that satisfies the following property:
$f+m$ with $m \in \mf{m}$ is nilpotent if and only if $m=0$. Let
$\mc{L} \in \ope(K_D)$ be of the form $\partial_z+f+m$, with $m \in \mf{m}(K_D)$ singular at $w$.
The singularity at $w$ is regular if and only if it is Fuchsian.
\item If $\mc{L},\mc{L}' \in \ope(K_D)$ are two Gauge equivalent operators with a Fuchsian singularity at $w$, then the principal
coefficient of $\mc{L}$ at $w$ is conjugated in $\mc{N}$ to the principal coefficient of $\mc{L}'$ at $w$.
\item Let $\mc{L} \in \ope(K_D)$, $w$ a pole of $b$, and $\rho:\mf{g} \to End(V)$ be a non-trivial irreducible
representation of $\mf{g}$ such that
all local solutions of the equation $\mc{L}\psi=0$ have algebraic growth. Then $w$ is a regular singularity.
\end{enumerate}
\end{lemma}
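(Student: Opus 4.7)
The plan is to drive all four claims from the single computation of how the diagonal Gauge $(z-w)^{\mu\ad\rho^\vee}$ acts on $\mc{L}$, as already displayed in \eqref{zwaction}, combined with the classical asymptotic dichotomy for meromorphic linear systems (Fuchs's theorem for simple poles; the Levelt--Turrittin / Sibuya theorem for poles of higher order).

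For Part (1), by \eqref{zwaction} with $\mu=1$ the transformed operator $(z-w)^{\ad\rho^\vee}.\mc{L}$ has a simple pole at $w$ with residue $f-\rho^\vee+\bar{b}$. Fuchs's theorem immediately gives algebraic behaviour of local solutions in every finite dimensional module; pulling the Gauge back only multiplies solutions by $(z-w)^{\rho^\vee}$ acting through the representation, and so preserves algebraic behaviour. For Part (3), I would write $\mc{L}'=\exp(y).\mc{L}$ with $y=\sum_{j\geq 1}y^j\in \mf{n}_+(K_D)$, $y^j\in\mf{g}^j(K_D)$. Conjugating by $(z-w)^{\ad\rho^\vee}$ turns this into $\tilde{\mc{L}}'=\exp(\tilde{y}).\tilde{\mc{L}}$ with $\tilde{y}=\sum_j(z-w)^jy^j$. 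Both $\tilde{\mc{L}}$ and $\tilde{\mc{L}}'$ having simple poles forces $\tilde{y}$ to be regular at $w$, since otherwise the $d\tilde{y}/dz$ term in Dynkin's formula \eqref{dynkinformula} would produce a singularity of order at least two in $\tilde{\mc{L}}'$ that cannot be cancelled by $\exp(\ad\tilde{y})$ applied to a simple pole. Evaluating at $z=w$ then yields $\exp(\tilde{y}(w))\in\mc{N}$ intertwining the two principal coefficients.

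For Part (2), the Fuchsian-implies-regular direction is Part (1). Conversely, suppose $w$ is a regular singularity of $\mc{L}=\partial_z+f+m$, $m\in\mf{m}(K_D)$, with slope $\mu>1$. By \eqref{zwaction} the rescaled operator has dominant singular term $(f+\bar{m})/(z-w)^\mu$, with subdominant $-\mu\rho^\vee/(z-w)$. In any faithful representation of $\mf{g}$ the Levelt--Turrittin classification forces, for regular singularities with leading-order pole of order $\mu>1$, the leading matrix to be nilpotent; hence $f+\bar{m}$ is nilpotent in $\mf{g}$. The Laurent coefficients of $m$ lie in $\mf{m}$, and after decomposing by the principal grading (which will be compatible with $\mf{m}$ in all applications in this paper) one gets $\bar{m}\in\mf{m}$. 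The hypothesis on $\mf{m}$ then forces $\bar{m}=0$, contradicting the definition of $\mu$.

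For Part (4), the key structural observation is that any non-trivial irreducible representation $V$ of a simple Lie algebra is faithful, since its kernel is an ideal strictly smaller than $\mf{g}$. Thus the slope of $\mc{L}$ at $w$ coincides with the slope of the induced connection on $V$, and if that slope exceeded $1$ Sibuya's asymptotic theorem would produce a solution in $V$ with exponential growth, contradicting the algebraic growth hypothesis; hence $\mu\leq 1$, $w$ is Fuchsian, and Part (1) delivers regularity in every module. The main obstacle across the four parts is Part (2): one has to track carefully how the vector subspace $\mf{m}$ interacts with the principal grading in order to legitimately conclude $\bar{m}\in\mf{m}$, and one needs the Levelt--Turrittin theorem adapted to the Lie-algebraic rather than matrix setting of opers.
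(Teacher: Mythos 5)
Your parts (1)--(3) follow essentially the paper's route. Part (1) is the same use of \eqref{zwaction} plus Fuchs's theorem. Part (2) is the contrapositive of the paper's argument (the paper shows not-Fuchsian $\Rightarrow$ $f+\bar m$ not nilpotent $\Rightarrow$ irregular; you show regular $\Rightarrow$ nilpotent leading matrix $\Rightarrow$ $\bar m=0$), and your parenthetical caveat that $\bar m\in\mf{m}$ requires $\mf{m}$ to be compatible with the principal gradation is the right one: in all applications $\mf{m}$ is an $\ad_{\rho^\vee}$-invariant transversal space or $\mf{h}$, so the graded leading coefficients stay in $\mf{m}$. Part (3) is a compressed version of what the paper proves in Lemma \ref{lem:Nloc}: your assertion that a singular $\tilde y$ "cannot be cancelled" is exactly the point that needs the argument made there (isolate the most singular term of lowest principal degree and use that $\ad_f$ is injective on $\mf{n}_+$); you have identified the mechanism but not carried it out.

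Part (4) contains a genuine gap. You claim that if the slope of $\mc{L}$ at $w$ exceeds $1$, then Sibuya's theorem produces an exponentially growing solution in the faithful module $V$. This is false: the slope of Definition \ref{def:slope} is a property of the operator in its given gauge, not of the underlying connection, and for a general $b\in\mf{b}_+(K_D)$ the rescaled leading coefficient $f+\bar b$ can be nilpotent, in which case the asymptotic theory yields no exponential solution. Concretely, in $\mf{sl}_3$ apply $\exp(e_1/z^2)$ to the Fuchsian operator $\partial_z+f+h/z$: the resulting operator has slope $2$ at $0$ with $\bar b=\alpha_1^\vee-e_1$, so $f+\bar b=\exp(\ad e_1)f$ is nilpotent, and every solution has algebraic growth even though the slope exceeds $1$. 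So "all solutions algebraic $\Rightarrow$ slope $\leq 1$" fails, and your chain of implications breaks at its first link; this is precisely why part (2) carries the extra hypothesis on $\mf{m}$. The paper's proof of (4) avoids the slope entirely and instead invokes the Babbitt--Varadarajan reduction theorem \cite[Theorem 5.2]{bava83}: a meromorphic gauge transformation (valued in the full adjoint group, possibly after ramification) brings $\mc{L}$ either to $\partial_z+Az^{-1}+O(z^{-1+\e})$ or to $\partial_z+Bz^{-r}+O(z^{-r+\e})$ with $r>1$ and $B$ \emph{not} nilpotent; only in that normal form does algebraic growth in one faithful irreducible module force the first alternative, and hence regularity in every module. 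To repair your argument you must replace "slope $>1$" with "case (ii) of this normal form".
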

\begin{proof}
We can assume $w=0$.
\begin{enumerate}
\item
Due to \eqref{zwaction}, if $0$ is Fuchsian then $z^{\rho^\vee}\mc{L}$ has a simple pole at $0$.
Hence in every representations every solution has algebraic growth, hence
$z=0$ is regular.

\item Because of (1), we just need to prove that not-Fuchsian implies irregular.
Suppose then that $z=0$ is not Fuchsian, so that  $\mu>1$.  Due to \eqref{zwaction}, applying the gauge transform
$z^{\mu \ad \rho^\vee}$ to $\mc{L}$ then we get
\beq\label{connpf3}
\partial_z+z^{-\mu} \big( f+ \bar{m}\big)+ o(z^{-\mu}) \; .
\eeq
where $\bar{m}\in\mf{m}$ is non-zero since $\mu>1$ (cf. Definition \ref{def:slope}).
Since $\bar{m}\neq0$, by hypothesis on $\mf{m}$,  we have that the principal coefficient
$f+\bar{m}$ is not nilpotent.  It follows that, fixed the adjoint representation,
the operator \eqref{connpf3} has a singularity with Poincar\'e rank greater than $1$ and
with a not-nilpotent principal coefficient, hence the singularity is irregular. See e.g. \cite{wasowAs}.


\item The proof is deferred to Lemma \ref{lem:Nloc} (ii).

\item After \cite[Theorem 5.2]{bava83},  the operator $\mc{L}$ can be brought -- by means of a meromorphic Gauge transformation -- into one of the two following forms:
\begin{enumerate}[(i)]
 \item $\partial_z+\frac{A}{z}+O(z^{-1+\e})$, with $A\in \mf{g}$,
 \item $\partial_z+\frac{B}{z^{r}}+O(z^{-r+\e})$ with $r \in \bb{Q}$, $r>1$, and $B\in \mf{g}$ is not nilpotent.
\end{enumerate}
Let $V$ be a non-trivial $\mf{g}$-module. Assuming we are in case $(ii)$, then the matrix operator representing $\mc{L}$ in $V$ has a singularity at $0$ of order $r>1$ with a non nilpotent coefficient. It follows that in this case there exists at least one solution with non-algebraic behaviour. Then, all solutions (in any representation) are regular at $0$ if and only if $\mc{L}$ can be brought to the form $(i)$. But this implies that $0$ is a regular singularity.
\end{enumerate}
 \end{proof}

As shown in Lemma \eqref{lem:onsingularities}, the subspaces of $f+\mf{b}^+$ such that
$f$ is the only nilpotent play an important role in the study of regular singularities for operators in
$\ope(K_D)$. Clearly $f+\mf{h}$ is one example of such subspaces. Other examples are the transversal spaces $f+\mf{s}$:
\begin{proposition}[Kostant]\label{prop:konstant}
Let $f+\mf{s}$ be a transversal space. Then $f$ is the only nilpotent element in $f+\mf{s}$.
 \begin{proof}
 The proof of Kostant \cite{ko63} follows the steps:
Any transversal space is in bijection with regular orbits. The only nilpotent regular orbit is the principal
nilpotent orbit. Since $f$ is principal nilpotent, it is the only nilpotent element in the transversal space.
 \end{proof}

\end{proposition}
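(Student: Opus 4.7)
My plan is to follow the Kostant-style argument sketched by the authors, organized in three steps: show every element of $f+\mf{s}$ is regular, identify the regular nilpotent orbit with the principal one, and use the slice property to conclude that $f$ is the unique nilpotent point in $f+\mf{s}$.

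The first step, that every $f+s$ with $s\in\mf{s}$ is regular in $\mf{g}$ (i.e.\ the centralizer $\Ker\ad_{f+s}$ has dimension $n=\rank\mf{g}$), is the substantive one. I would prove it by a deformation argument using the principal gradation. For $t\in\bb{C}^\times$ the conjugation $t^{\ad\rho^\vee}$ sends $f+s$ to $t^{-1}f+\sum_i t^i s^i$, and then rescaling the result by the non-zero scalar $t$ yields $f+\sum_i t^{i+1} s^i$. Since the dimension of the centralizer is invariant both under conjugation and under non-zero rescaling (the latter because $[y,\lambda x]=\lambda[y,x]$), this one-parameter family has constant centralizer dimension equal to $\dim\Ker\ad_{f+s}$. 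As $t\to 0$ the family tends to $f$, which is principal nilpotent and hence regular, so upper semicontinuity of the map $x\mapsto\dim\Ker\ad_x$ gives $\dim\Ker\ad_{f+s}\leq n$; combined with the general lower bound $\dim\Ker\ad_x\geq n$ this yields regularity.

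For the second step I would invoke the classical theorem that the regular nilpotent elements of $\mf{g}$ form a single $\mc{G}$-orbit, namely the principal nilpotent orbit, of which $f=\sum_i f_i$ is a representative. Hence if $f+s$ is nilpotent then it is $\mc{G}$-conjugate to $f$. For the third step I would use the isomorphism $\mc{N}\times(f+\mf{s})\to f+\mf{b}_+$ already recalled in Section~\ref{sec:algebras}: as quoted in the paper, this implies that each regular $\mc{G}$-orbit intersects $f+\mf{s}$ in exactly one point. Applied to the principal nilpotent orbit, which certainly contains $f=f+0\in f+\mf{s}$, the unique intersection point must be $f$ itself; therefore any nilpotent element $f+s$ equals $f$, i.e.\ $s=0$.

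The main obstacle is the regularity step: the slice isomorphism by itself controls only stabilizers inside $\mc{N}$, whereas regularity requires bounding the centralizer in all of $\mf{g}$. The $\rho^\vee$-rescaling device bridges this gap by deforming any $f+s$ to $f$ through a family of conjugate-plus-rescaled elements, transferring the regularity of $f$ to the whole transversal slice; once this is in hand, the remaining two steps are immediate consequences of the Kostant picture already recorded in Section~\ref{sec:algebras}.
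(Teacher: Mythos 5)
Your proof is correct and follows the same Kostant-style outline that the paper sketches — regularity of every element of the slice, uniqueness of the regular nilpotent orbit, and uniqueness of the orbit–slice intersection — the paper simply citing \cite{ko63} for all three steps. Your $t^{\ad\rho^\vee}$-conjugation-plus-rescaling argument for the regularity of $f+s$, closed by upper semicontinuity of $x\mapsto\dim\Ker\ad_x$ and the lower bound $\dim\Ker\ad_x\geq\rank\mf{g}$, is exactly Kostant's argument and is carried out correctly.
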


Combining the above lemma and proposition, we deduce that if an operator is in its canonical form then 
a singularity is regular if and only if it is Fuchsian.
\begin{proposition}\label{thm:fuchsian}
Fix a transversal space $f+\mf{s}$ and let $\mc{L} \in \ope(K_D)$ be in canonical form
$\mc{L}=\partial_z+ f+ s$, $s \in \mf{s}(K_D)$. A point $w \in \bb{C}$ is a regular singular point of
$\mc{L}$ if and only if it is a not-removable Fuchsian singular point.
\begin{proof}
 Because of Proposition \ref{prop:konstant}, $\mf{s}$ satisfies the hypothesis of Lemma \ref{lem:onsingularities}(2),
 hence for an oper in canonical form a singular point is regular if and only if is Fuchsian. Moreover, since
 the singular locus of an operator in canonical form is a subset of the singular locus of any operator Gauge equivalent to it,
 then a singularity of an operator in canonical form cannot be removed.
\end{proof}

\end{proposition}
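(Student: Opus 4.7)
The plan is to deduce the proposition by combining Kostant's result (Proposition \ref{prop:konstant}) with part (2) of Lemma \ref{lem:onsingularities} and the uniqueness statement of the canonical form (Proposition \ref{lem:gaugemiura}). The two directions (regular $\Leftrightarrow$ Fuchsian, and non-removability) come from essentially independent ingredients, so I would treat them separately.

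First, I would observe that the subspace $\mf{s}$ qualifies as an admissible $\mf{m}$ in Lemma \ref{lem:onsingularities}(2). Indeed, by Kostant's Proposition \ref{prop:konstant}, the only nilpotent element in the transversal affine subspace $f+\mf{s}$ is $f$ itself, which is equivalent to the statement that $f+s$ is nilpotent for $s\in\mf{s}$ if and only if $s=0$. Applying Lemma \ref{lem:onsingularities}(2) directly to $\mc{L}=\partial_z+f+s$ with $s\in\mf{s}(K_D)$ then yields: a pole $w$ of $s$ is a regular singularity if and only if it is a Fuchsian singularity.

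Second, I need the non-removability statement: no pole of the canonical form $\partial_z+f+s$ can be removed by a Gauge transformation. This is where Proposition \ref{lem:gaugemiura} enters. That proposition asserts that the canonical form is obtained from \emph{any} Gauge-equivalent operator by a unipotent transformation, and that the singular locus of the canonical form is contained in the singular locus of the starting operator. Equivalently, if $N.\mc{L}$ were regular at $w$ for some $N\in\mc{N}(K_D)$, then running the canonical-form procedure starting from the regular operator $N.\mc{L}$ would produce a canonical form without a pole at $w$; but by uniqueness this canonical form must coincide with $\mc{L}$, contradicting the assumption that $s$ is singular at $w$. Hence no genuine singularity of $s$ can be removed, which is precisely the statement that a regular (equivalently Fuchsian) singularity of $\mc{L}$ is automatically non-removable in the sense of Definition \ref{def:removable}.

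Putting the two pieces together gives the ``only if'' direction (a regular singularity of $\mc{L}$ must be Fuchsian by the first step, and by definition it is a genuine singularity, hence non-removable by the second step) and the ``if'' direction (a non-removable Fuchsian singularity is, by the first step, a regular singularity). I do not expect any serious obstacle here: both substantive facts (Kostant's theorem and the Fuchsian/regular equivalence under the ``only nilpotent is $f$'' hypothesis) are already in hand, and the non-removability is essentially a tautology once the uniqueness clause of Proposition \ref{lem:gaugemiura} is invoked. The only mildly delicate point is ensuring that ``singularity of $\mc{L}$'' consistently means ``pole of $s$,'' which is guaranteed precisely because we are working with the canonical representative.
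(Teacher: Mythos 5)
Your proof is correct and follows essentially the same route as the paper: the equivalence of regular and Fuchsian singularities comes from Kostant's Proposition \ref{prop:konstant} feeding into Lemma \ref{lem:onsingularities}(2), and non-removability comes from the singular-locus containment and uniqueness clauses of Proposition \ref{lem:gaugemiura}. Your second step merely spells out in more detail the contradiction argument that the paper compresses into one sentence.
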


The above proposition has a two immediate corollaries. The first is a characterisation of regular singularities
for $\mf{g}$ opers.
\begin{corollary}\label{cor:normals}
 Fix a transversal space $f+\mf{s}$, let $\mc{L} \in \ope(K_D)$ and $\mc{L}_{\mf{s}}$ be its canonical form.
 All regular points of $\mc{L}$ are regular points of $\mc{L}_{\mf{s}}$, and all
regular singular points of $\mc{L}$ are either regular points or not-removable
Fuchsian singular points of $\mc{L}_{\mf{s}}$.
\begin{proof}
Let $N\in\mc{N}(K_D)$ be the Gauge transformation mapping $\mc{L}$ to its canonical form, namely $\mc{L}_{\mf{s}}=N.\mc{L}$.
From Lemma \ref{lem:gaugemiura} it follows that singular locus of $N$ coincides
 with the singular locus of $\mc{L}$. Therefore if $w$ is a regular point of $\mc{L}$,
 it is also a regular point of $N$, hence of $\mc{L}_{\mf{s}}$. If else $w$ is a regular singular point of $\mc{L}$, then $w$ is either
 a regular point of $\mc{L}_{\mf{s}}$ or a regular singular point of $\mc{L}_{\mf{s}}$; in the latter case, by virtue of 
 Proposition \ref{thm:fuchsian}, $w$ is a Fuchsian not-removable singularity.
\end{proof}
\end{corollary}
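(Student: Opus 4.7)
The plan is to combine two earlier results: Proposition \ref{lem:gaugemiura}, which produces the unique $N \in \mc{N}(K_D)$ with $\mc{L}_{\mf{s}} = N.\mc{L}$ and constrains the singular locus of the canonical form $s$ to lie inside that of $b$; and Proposition \ref{thm:fuchsian}, which for operators in canonical form identifies regular singular points with not-removable Fuchsian singular points. The bridge between the two is the Gauge-invariance of the regular/irregular dichotomy.

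First, I would dispatch the statement about regular points. If $w$ is a regular point of $\mc{L}$, then $b$ is holomorphic at $w$. By Proposition \ref{lem:gaugemiura}, the set of poles of $s$ is contained in the set of poles of $b$, so $s$ is holomorphic at $w$, and thus $w$ is a regular point of $\mc{L}_{\mf{s}}$.

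Next, let $w$ be a regular singular point of $\mc{L}$. Two cases arise according to whether $w$ is a pole of $s$ or not. If $w$ is not a pole of $s$, then $\mc{L}_{\mf{s}}$ is regular at $w$ and we are done. If $w$ is a pole of $s$, I would argue that $w$ is a regular singularity of $\mc{L}_{\mf{s}}$ and then apply Proposition \ref{thm:fuchsian} to conclude that $w$ is a not-removable Fuchsian singularity. The regularity of $w$ as a singularity of $\mc{L}_{\mf{s}}$ follows from the Gauge-invariance of the definition: the local solutions of $\mc{L}_{\mf{s}} \psi = 0$ are precisely the images under the representation of $N$ of the local solutions of $\mc{L} \phi = 0$, and since $N \in \mc{N}(K_D)$ acts as a matrix with meromorphic entries in any finite-dimensional representation, multiplication by such a matrix preserves algebraic behaviour in every closed sector of opening less than $2\pi$.

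The main point to verify carefully is exactly this Gauge-invariance: that the image of $N$ in any finite-dimensional $\mf{g}$-module is meromorphic at $w$ (hence algebraic), and likewise for $N^{-1}$ (which is also unipotent with meromorphic entries). Once this is in place, the rest is an immediate bookkeeping exercise that chains the inclusion of singular loci from Proposition \ref{lem:gaugemiura} with the canonical-form dichotomy of Proposition \ref{thm:fuchsian}.
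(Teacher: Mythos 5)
Your proposal is correct and follows essentially the same route as the paper: combine Proposition \ref{lem:gaugemiura} (containment of the singular locus of the canonical form) for the regular points, and Gauge-invariance of the regular/irregular dichotomy plus Proposition \ref{thm:fuchsian} for the regular singular points. The only difference is cosmetic: you spell out explicitly why Gauge transformations in $\mc{N}(K_D)$ preserve algebraic behaviour of solutions, a point the paper treats as immediate from the definition of regular singularity.
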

%

Another consequence is an algebraic proof of a well-known Theorem due to L. Fuchs
\begin{corollary}[Fuchs]\label{cor:fuchs}
Consider the scalar differential equation $$y^{(n)}(z)+a_2(z)y^{(n-2)}+\dots+a_{n}(z)y(z)=0 \; .$$
The singular point $w$ is a regular singular point for the scalar equation if and only if $(z-w)^{k}a_k(z)$ is analytic at $z=w$.
\begin{proof}
Let $\mf{g}=A_{n-1}$, let $V=\bb{C}^n$ by the standard representation, and choose as  transversal space $\mf{s}$ the space of companion matrices. More precisely, $\mf{s}$ is the space of traceless matrices whose coefficients are all zero outside the first row. We can then choose a basis
$\lbrace s_1,\dots,s_{n-1} \rbrace$
of $\mf{s}$ such that the scalar equation can be written in  the matrix form $\mc{L}y=0$, where
$\mc{L}=\partial_z+f+\sum_k (-1)^ka_{k+1}(z) s_k$. Suppose that $w$ is a regular singular point, namely all
solutions have algebraic growth.
Then by Lemma \ref{lem:onsingularities}(4), $w$ is a regular singular point of the operator $\mc{L}$, and due to
Proposition \ref{thm:fuchsian} it follows that $w$ is a Fuchsian singularity if and only if $(z-w)^ka_k(z)=O(1)$, $\forall k$. Suppose now that $(z-w)^ka_k(z)=O(1)$, $\forall k$. Then by Proposition \ref{thm:fuchsian} $w$ is a Fuchsian singularity
of $\mc{L}$ hence by Lemma \ref{lem:onsingularities}(1) $w$ is a regular singularity.
\end{proof}

\end{corollary}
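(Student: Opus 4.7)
The plan is to reduce the statement to Proposition \ref{thm:fuchsian} by realizing the scalar ODE as an $\mathfrak{sl}_n$-oper already in canonical form. First I would take $\mathfrak{g}=A_{n-1}$ in its standard representation $V=\mathbb{C}^n$, with $f$ the sum of negative Chevalley generators, and fix as transversal space $\mathfrak{s}$ the ``companion'' subspace of traceless matrices supported (after the usual convention) in a single outer row. A direct check gives $\mathfrak{s}\cap[f,\mathfrak{n}_+]=0$, and since both subspaces have dimension $n-1$ this yields the required decomposition $\mathfrak{b}_+=[f,\mathfrak{n}_+]\oplus\mathfrak{s}$. I would then fix a basis $s_1,\dots,s_{n-1}$ of $\mathfrak{s}$ that is homogeneous with respect to the principal gradation, with $s_k\in\mathfrak{g}^k$.

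Next I would verify that the scalar equation, under the standard reduction $\mathbf{y}=(y,y',\dots,y^{(n-1)})^T$, corresponds to the matrix operator
\[
\mathcal{L}=\partial_z+f+\sum_{k=1}^{n-1}(-1)^k a_{k+1}(z)\,s_k \in\ope(K_D),
\]
which is in canonical form. Consequently the degree-$k$ component of the potential is a scalar multiple of $a_{k+1}(z)$, and Definition \ref{def:fuchsian} applied to $\mathcal{L}$ says exactly that $w$ is Fuchsian if and only if $(z-w)^{k+1}a_{k+1}(z)$ is analytic at $w$ for every $k=1,\dots,n-1$, i.e. $(z-w)^m a_m(z)$ is analytic for $m=2,\dots,n$.

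With this dictionary in hand the two implications are immediate. If $w$ is a regular singularity of the scalar ODE then every local solution has algebraic behaviour in the standard representation $V$; Lemma \ref{lem:onsingularities}(4) then upgrades this to the statement that $w$ is a regular singularity of $\mathcal{L}$, and since $\mathcal{L}$ is in canonical form Proposition \ref{thm:fuchsian} forces $w$ to be Fuchsian, giving the analyticity of the $(z-w)^m a_m(z)$. Conversely, if those products are analytic then $\mathcal{L}$ is Fuchsian at $w$, Lemma \ref{lem:onsingularities}(1) yields regularity of $w$ as a singular point of $\mathcal{L}$, and in particular every local solution of the original scalar ODE has algebraic behaviour.

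The main obstacle is really only bookkeeping: one must choose a principal-grading-homogeneous basis of the companion transversal $\mathfrak{s}$ and keep track of the precise signs and normalisations so that the matrix reformulation of the scalar ODE lands exactly in the canonical form $\partial_z+f+\sum_k(-1)^k a_{k+1}s_k$ with $s_k\in\mathfrak{g}^k$. Once this identification is set up, the statement is a direct consequence of Proposition \ref{thm:fuchsian} together with parts (1) and (4) of Lemma \ref{lem:onsingularities}.
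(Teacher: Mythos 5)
Your proposal is correct and follows essentially the same route as the paper: realize the scalar equation as an $A_{n-1}$-oper in canonical form via the companion transversal space, then combine Lemma \ref{lem:onsingularities}(4) with Proposition \ref{thm:fuchsian} for one implication and Proposition \ref{thm:fuchsian} with Lemma \ref{lem:onsingularities}(1) for the converse. The extra bookkeeping you flag (homogeneity of the $s_k$ with respect to the principal gradation, so that the Fuchsian condition reads $(z-w)^{k+1}a_{k+1}(z)$ analytic) is exactly the identification the paper relies on implicitly.
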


\vspace{20pt}
\section{Quantum $\widehat{\mf{g}}$-KdV opers}\label{sec:kdvopers}
In this rest of the paper, we develop the following program
\begin{enumerate}
 \item Following \cite{FF11,fh16}, we introduce a class of $\mf{g}$-opers
 \footnote{More precisely, of $\widehat{\mf{g}}-$opers, where $\widehat{\mf{g}}$ is the untwisted affine Kac-Moody algebra
 associated to the simply laced Lie algebra $\mf{g}$, see below}, for $\mf{g}$ simply laced,
as the largest class of opers which can provide solutions to the Bethe Ansatz equations. We call them
Quantum $\widehat{\mf{g}}$-KdV Opers.
\item We prove that these opers actually provide solutions of the Bethe Ansatz equations.
\item We characterise these opers explicitly by means of the solution of a fully determined system of algebraic equations.
\end{enumerate}
We recall that in the $\mf{g}=\mf{sl}_2$ case, the above program was addressed and solved in \cite{BLZ04} by Bazhanov,
Lukyanov, and Zamolodchikov.
In this Section, following the proposal of Feigin and Frenkel \cite[Section 5]{FF11} (see also \cite[Section 8]{fh16}), we introduce the Quantum $\widehat{\mf{g}}$-KdV opers in the case $\mf{g}$ is simply laced\footnote{The not simply laced case is, at the time of writing, not yet fully understood.}, and we give
to these opers a first characterisation, which will be used to fully comply with the above program.

\subsection{The ground state oper.}\label{sub:ground}
The Quantum KdV opers are a suitable modification of the simplest opers proposed \cite[Section 5]{FF11}, which we studied in our previous papers \cite{marava15,marava17} in collaboration with Daniele Valeri. These opers are expected to correspond to the ground state of
the model. Explicitly, they have the form
\beq\label{eq:L0x}
\mc{L}(x,E)=\partial_x+f+\frac{\ell}{x}+(x^{Mh^\vee}-E) e_\theta,
\eeq
for arbitrary $\ell \in \mf{h}$ and $M>0$. As observed in \cite{fh16}, after the change of variable
\beq\label{eq:xzchange}
z=\varphi(x)= \left(\frac{1-\hat{k}}{h^\vee}\right)^{h^\vee} x^{\frac{h^\vee}{1-\hat{k}}},
\eeq 
the operator \eqref{eq:L0x} is Gauge equivalent to 
\begin{align}\label{eq:L0}
\mc{L}_G(z,\la)&=\de_z+f+\frac{r}{z}+z^{1-h^\vee}\big( 1 +\lambda z^{-\hat{k}}\big)e_{\theta},
\end{align}
which is a form more convenient for the present work. In the above formula
$0<\hat{k}<1$, $\la \in \bb{C}$ and $r \in \mf{h}$ are defined by the relations
\begin{align}\label{eq:M}
\ell=\frac{h^\vee}{1-\hat{k}} (r-\rho^\vee)+\rho^\vee, \qquad
M=\frac{\hat{k}}{1-\hat{k}}, \qquad
E=-\left(\frac{1-\hat{k}}{h^\vee}\right)^{(1-\hat{k})h^\vee} \lambda \;.
\end{align}

In order to avoid any ambiguity in the definition of the Quantum KdV opers,
we fix a transversal space $\mf{s}$ and consider the canonical form of the ground-state oper.
\begin{proposition}\label{prop:LGs}
The canonical form $\mc{L}_{G,\mf{s}}$ of the ground state oper \eqref{eq:L0} is
 \begin{equation}\label{eq:LGs}
\mc{L}_{G,\mf{s}}(z,\la)=\partial_z+f+\sum_{i=1}^n\frac{\bar{r}^{d_i}}{z^{d_i+1}}+z^{-\hv+1}(1+  \la z^{-\hat{k}})e_{\theta} \; ,
\end{equation}
where $\bar{r}=\sum_i\bar{r}^{d_i}$, with $\bar{r}^{d_i}\in\mf{s}^{d_i}(K_{\bb{P}^1})$, is the unique element in $\mf{s}$ such that
the Lie algebra elements $f-\rho^\vee+r$ and $f-\rho^\vee+\bar{r}$ are conjugated. 
\end{proposition}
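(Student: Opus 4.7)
The plan is to exploit the diagonal Gauge $z^{\ad \rho^\vee}\in\mc{H}(K_{\bb{P}^1})$ -- which lies in the larger group $\mc{B}$ and normalises the unipotent Gauge group $\mc{N}$ -- to reduce the identification of the canonical form of $\mc{L}_G$ to a purely Lie-algebraic conjugation problem in $\mf{g}$.

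First, I would compute both $z^{\ad \rho^\vee}.\mc{L}_G$ and $z^{\ad \rho^\vee}.\mc{L}_{G,\mf{s}}$. Using Definition \ref{def:gaugegroupH} together with the facts that $f\in\mf{g}^{-1}$, $\bar{r}^{d_i}\in\mf{g}^{d_i}$, $e_\theta\in\mf{g}^{h^\vee-1}$ and $\rho^\vee,r\in\mf{h}$, a direct computation yields
\[
 z^{\ad \rho^\vee}.\mc{L}_G=\partial_z+\frac{f-\rho^\vee+r}{z}+(1+\la z^{-\hat{k}})e_\theta=:\tilde{\mc{L}},
\]
and identically $z^{\ad \rho^\vee}.\mc{L}_{G,\mf{s}}=\partial_z+\frac{f-\rho^\vee+\bar{r}}{z}+(1+\la z^{-\hat{k}})e_\theta=:\tilde{\mc{L}}_{\mf{s}}$. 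Since $\mc{H}$ normalises $\mc{N}$, the $\mc{N}(K_{\bb{P}^1})$-equivalence of $\mc{L}_G$ and $\mc{L}_{G,\mf{s}}$ is the same as that of $\tilde{\mc{L}}$ and $\tilde{\mc{L}}_{\mf{s}}$. Because $e_\theta$ is a highest-root vector, $[\mf{n}_+,e_\theta]\subset\mf{g}^{\geq h^\vee}=0$, so the $e_\theta$-term is invariant under the adjoint action of any constant $N_0\in\mc{N}$. Hence the problem reduces to exhibiting a unique constant $N_0\in\mc{N}$ and a unique $\bar{r}\in\mf{s}$ such that $\Ad(N_0)(f-\rho^\vee+r)=f-\rho^\vee+\bar{r}$.

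Existence and uniqueness of such $\bar{r}$ would follow from Kostant's slice theorem (stated following \eqref{bfns}), according to which the adjoint action $\mc{N}\times(f+\mf{s})\to f+\mf{b}_+$ is a bijection. Applied to $f-\rho^\vee+r\in f+\mf{b}_+$, it produces a unique Kostant parameter $s_r\in\mf{s}$, and for each $\bar{r}\in\mf{s}$ a unique parameter $s(\bar{r})\in\mf{s}$ of $f-\rho^\vee+\bar{r}$. To conclude uniqueness of $\bar{r}$ via the equation $s(\bar{r})=s_r$, I would show that $\bar{r}\mapsto s(\bar{r})$ is a bijection of $\mf{s}$: inspecting the Kostant reduction grade-by-grade (i.e. running the inductive algorithm of the proof of Proposition \ref{lem:gaugemiura} applied to the constant Lie-algebra element $f-\rho^\vee+\bar{r}$), one finds that the $\mf{s}^{d_i}$-component of $s(\bar{r})$ equals $\bar{r}^{d_i}$ plus a polynomial in $\rho^\vee$ and in the lower-grade components $\bar{r}^{d_j}$, $d_j<d_i$. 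Thus the map is triangular with identity linearisation in the principal gradation and hence invertible.

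Finally, I would verify that \eqref{eq:LGs} is in canonical form: each $\bar{r}^{d_i}/z^{d_i+1}$ lies in $\mf{s}^{d_i}(K_{\bb{P}^1})$ by construction, and $z^{1-h^\vee}(1+\la z^{-\hat{k}})e_\theta$ lies in $\mf{s}^{h^\vee-1}(K_{\bb{P}^1})$ since the top-grade space $\mf{g}^{h^\vee-1}=\bb{C}e_\theta$ is one-dimensional and coincides with $\mf{s}^{h^\vee-1}$. Uniqueness of the canonical form (Proposition \ref{lem:gaugemiura}) then identifies \eqref{eq:LGs} with $\mc{L}_{G,\mf{s}}$. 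The main obstacle is the triangularity of $\bar{r}\mapsto s(\bar{r})$, which requires careful bookkeeping of the principal-grade contributions in the Kostant reduction algorithm; the remaining steps are direct gauge-theoretic manipulations together with invocations of the uniqueness statements already established.
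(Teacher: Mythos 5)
Your proof is correct, but it takes a genuinely different route from the one the paper uses for this proposition. The paper's argument is structural: after noting that the $e_{\theta}$-term is invariant under $\mc{N}(K_{\bb{P}^1})$, it determines the \emph{shape} of the canonical form of $\partial_z+f+r/z$ from singularity theory alone (the operator is regular on $\bb{C}^*$ with Fuchsian singularities at $0$ and $\infty$, so by Proposition \ref{thm:fuchsian} its canonical form must be $\partial_z+f+\sum_i\bar{r}^{d_i}z^{-d_i-1}$), and then pins down $\bar{r}$ via the conjugacy of principal coefficients, Lemma \ref{lem:onsingularities}(3). You instead construct the gauge transformation explicitly as $z^{-\ad\rho^\vee}N_0\,z^{\ad\rho^\vee}$ with $N_0\in\mc{N}$ constant; this is essentially the ``more algebraic'' proof that the paper itself gives later, in Lemma \ref{lem:Phi0}(1)--(2). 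Your route avoids the singularity machinery entirely and is more self-contained at this point of the paper; what it costs you is the bijectivity of $\bar{r}\mapsto s(\bar{r})$, which you establish by a grade-by-grade triangularity argument. That argument is sound, but you could get the unique pair $(N_0,\bar{r})$ in one stroke from Lemma \ref{lem:rho+s}: $f-\rho^\vee+\mf{s}$ is itself a transversal space, so Kostant's theorem applies to it directly and yields a unique $\bar{r}\in\mf{s}$ and $N_0\in\mc{N}$ with $N_0.(f-\rho^\vee+\bar{r})=f-\rho^\vee+r$. The remaining checks you make are the right ones and are correct: $\rho^\vee\in P^\vee$ so $z^{\ad\rho^\vee}$ is a meromorphic Gauge transformation normalising $\mc{N}(K_{\bb{P}^1})$ (hence the conjugated transformation $\exp(\sum_i z^{-i}y_0^i)$ is again in $\mc{N}(K_{\bb{P}^1})$), and $\mf{s}^{h^\vee-1}=\mf{g}^{h^\vee-1}=\bb{C}e_\theta$, so \eqref{eq:LGs} is genuinely in canonical form and the uniqueness statement of Proposition \ref{lem:gaugemiura} applies.
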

\begin{proof}
The term $z^{1-h^\vee}\big( 1 +\lambda z^{-\hat{k}}\big)e_{\theta}$ is invariant under unipotent Gauge transformation.
Hence, if
$\bar{\mc{L}}$ is the canonical form of $\partial_z+f+\frac{r}{z}$ then $\mc{L}_{G,\mf{s}}=\bar{\mc{L}}+z^{1-h^\vee}
\big( 1 +\lambda z^{-\hat{k}}\big)E_{\theta}$. The operator $\partial_z+f+\frac{r}{z}$ 
is regular in $\bb{C}^*$ and has (at most) Fuchsian singularities at $z=0,\infty$. Due to
 Proposition \ref{thm:fuchsian}, this implies that its canonical form is regular in $\bb{C}^*$ and has (at most)
Fuchsian singularities at $0,\infty$. Hence it will take the form
$\partial_z+f+\sum_{i=1}^n\frac{\bar{r}^{d_i}}{z^{d_i+1}} $ for some $\bar{r}^{d_i} \in \mf{s}^{d_i}$.
From Lemma \ref{lem:onsingularities}(2), the principal coefficients at
$0$ of an operator and of its canonical form are conjugated. Since the principal coefficient  at $0$ of 
\eqref{eq:L0}  is $f-\rho^\vee+r$ and that of \eqref{eq:LGs} is $f-\rho^\vee+\bar{r}$, we deduce the thesis. 
\end{proof}
We notice here, as it will be important in the next Section, that in any finite dimensional representation the element $f-\rho^\vee+\bar{r}$ has the same spectrum as $f-\rho^\vee+r$, which in turn has the same spectrum as $-\rho^\vee+r$.

\begin{remark} The operator \eqref{eq:L0} is not meromorphic on the Riemann sphere, because the term $\lambda z^{1-h^\vee-\hat{k}}e_{\theta}$ is multi-valued. However, the element $\lambda z^{1-h^\vee-\hat{k}}e_{\theta}$ is fixed by the action of the Gauge group $\mc{N}(K_{\bb{P}^1})$, so that it prefectly makes sense to study which properties of $\mc{L}_G(z,\la)$ are preserved under the action of the meromorphic Gauge groups. Schematically, we have:
$$
\ope(K_{\bb{P}^1})/\mc{N}(K_{\bb{P}^1})+\lambda z^{-\hat{k}}e_{\theta} \cong
\lbrace \ope(K_{\bb{P}^1})+\lambda z^{-\hat{k}}e_{\theta} \rbrace /\mc{N}(K_{\bb{P}^1}) \; .
$$
The above comment is consistent with the following fact \cite{fh16}. Recalling the affine Lie algebra $\hat{\mf{g}}$ introduced in Section
\ref{sec:algebras}, with $d=\lambda\partial_\lambda\in\hat{\mf{g}}$ be the corresponding derivation. Then the oper  $\mc{L}_G(z,\la)$ is Gauge equivalent, by means
of the affine Gauge transformation $z^{k d}$, to the affine  (i.e. $\hat{\mf{g}}$-valued) meromorphic oper
\begin{equation}
 \partial_z+\hat{f}+\frac{r+k d}{z}+ z^{-\hv+1 }e_{\theta},
\end{equation}
where $\hat{f}=\sum_{i\in\hat{I}}\hat{f}_i$ is the sum of the negative Chevalley generators of $\hat{\mf{g}}$.
In the language of \cite{lacroix18}, the term $\frac{k}{z}$ is the twist function of the quasi-canonical normal form.

\end{remark}

The construction of Bethe Ansatz solutions from $\mc{L}_G(z,\la)$ can be briefly summarised as follows. The oper
$\mc{L}_G(z,\la)$ has two singular points,
 $z=0$ and $z=\infty$. The point $z=0$ is a Fuchsian singularity with principal coefficient $f-\rho^\vee+r$.
The point $z=\infty$ is an irregular singularity,
with slope $1+\frac{1}{\hv}$ and principal term $f+e_{\theta}$. 
The \textit{connection problem} between the two
singular points is encoded in the $Q$ functions, which we prove to be solutions of
the Bethe Ansatz equation for the quantum-$\mf{g}$ KdV model:
\begin{equation}\label{eq:TBAsec4}
\prod_{j \in I} e^{-2 i \pi \beta_jC_{\ell j}} \frac{Q^{(j)}\Big(e^{i \pi C_{\ell j}}\la^*\Big)}{Q^{(j)}
\Big(e^{- i\pi C_{\ell j}}\la^*\Big)}=-1,\qquad  i\in I
\end{equation}
where $\lambda^\ast$ is a zero of $Q^{(i)}(\lambda)$. As it was recalled in the Introduction, the quantum $\widehat{\mf{g}}-$KdV model is specified by a choice of the vacuum parameter $p \in \mf{h}$ 
and by the central
charge $c$. These determine uniquely the phases $\beta_j$'s of the Bethe Ansatz equations, and the the order of growth $\mu$
(that is, the asymptotic growth)
of the solutions $Q^{(i)}$'s for large $\la$, see \cite{dorey07}. 
At the level of the oper \eqref{eq:LGs} the phases $\beta_j$'s turn out to be linear functions of the element $r \in \mf{h}$,
and the growth-order is $\frac{1}{\hat{k}\hv}$ \cite{dorey07,marava15}. Hence the residue at $0$ and the slope at $\infty$
fixes uniquely the quantum model.

The natural question is: can the oper $\mc{L}_G(z,\la)$ be modified in such a way that it still
provides solutions of the same Bethe Ansatz equations, possibly corresponding
to higher states of the same quantum model? The answer is yes, as we show in the sequel of the paper.
\subsection{Higher states. First considerations}
Without losing generality, the most general meromorphic deformation of the ground state oper can be written as
\begin{equation}\label{eq:KdVopers}
\mc{L}(z,\la)=\mc{L}_{G,\mf{s}}(z,\la)+s(z),
\end{equation}
where $s(z)$ is an, a priori, arbitrary element of $\mf{s}(K_{\bb{P}^1})$. We make four assumptions, equivalent to the
ones given in \cite[Section 5]{FF11} (see also \cite[Section 8.5]{fh16}), 
on the above opers and we show that when these conditions are met solutions of the Bethe Ansatz can be obtained.
We thus say that a Quantum $\widehat{\mf{g}}$-KdV oper is an oper of the form \eqref{eq:KdVopers}, which satisfies the following assumptions:
\begin{asu}\label{asu1}
The local structure of the solutions at $0$ does not depend on $s(z)$.
\end{asu}
\begin{asu}\label{asu2}
The local structure of the solutions at $\infty$ does not depend on $s(z)$. 
\end{asu}
\begin{asu}\label{asu3}
All additional singular points are regular and the corresponding principal coefficients are conjugated
to the element $f-\rho^\vee-\theta^\vee$.
\end{asu}
\begin{asu}\label{asu4}
All additional singular points have trivial monodromy for every $\la \in \bb{C}$.
\end{asu}
\begin{remark}
 These assumptions deserve a brief explanation.
The solutions of the Bethe Ansatz equations \eqref{eq:TBAsec4} are obtained from $\mc{L}_G$ by
considering the connection problem between an irregular singularity at $\infty$ and a regular singularity at $0$.
Moreover, as recalled above,
the phases $\beta_j$'s and the order of growth
of their solutions $Q^{(i)}$'s are fixed uniquely by the residue at $0$ and by the slope at $\infty$.
It follows from this that Assumptions 1 and 2 are necessary conditions
to obtain solutions of the same Bethe Ansatz equations by the methods developed in \cite{marava15}.

Concerning Assumption \ref{asu4}, if $s \neq 0 $, then $\mc{L}(z,\la)$ has additional singularities, and  the connection
problem from $0$ to $\infty$
is only well defined if these additional singularities have trivial monodromy. In fact, in case of non-trivial monodromy,
the connection problem depends on which path in the punctured $\bb{C}^*$ one chooses to connect $0$ to $\infty$.

We finally discuss Assumption \ref{asu3}. If we assume that the additional singularity is regular, then the triviality of the monodromy
(in any representation) implies that the principal coefficient must be conjugated to $f-\rho^\vee+h$, where $h$ belongs to the co-root
lattice of $\mf{g}$, see Proposition \ref{prop:bava5}.
According to \cite{FF11}, the choice $h=-\theta^\vee$ is, for generic $(r,\hat{k}$,
a necessary condition for having trivial monodromy for any $\la \in \bb{C}$, because if
$h$ is a co-root different from $\rho^\vee$ then the trivial monodromy condition
is, generically, an overdetermined system for the local coefficients the singularity.
In \cite{BLZ04}, following \cite{duistermaat86},
the condition $h=-\theta^\vee$ was shown to be strictly necessary in the case $\mf{g}=\mf{sl}_2$, and for
opers satisfying Assumptions \ref{asu1} and \ref{asu2}.

We remark that, for the sake of the ODE/IM correspondence, the existence of non-generic opers (i.e. with $h\neq -\theta^\vee$)
may be actually immaterial. In fact, if the ODE/IM correspondence holds true, such non-generic
opers, and Bethe Ansatz solutions attached to them, will not presumably correspond to a state of the generalised quantum KdV model.
The same remakr is valid also for the case of an additional monodromy-free irregular singularity; moreover,
we are not aware of any result in the literature about this case and we will not pursue this possibility here.
\end{remark}

We organize our analysis of Quantum KdV opers as follows.
In the remaining part of the present section we classify the canonical form of opers of type \eqref{eq:KdVopers}
satisfying Assumption \ref{asu1}, \ref{asu2} and \ref{asu3}. In Section \ref{sec:BA} we construct solutions of
the Bethe Ansatz equations when the fourth postulate is met. In Section \ref{sec:miura} we prove that the canonical form of
the quantum KdV opers is Gauge equivalent to a form where all regular singularities are first order pole.
The remaining sections of the paper are devoted to the analysis of Assumption \ref{asu4}, and thus to the complete
classification of the Quantum-KdV opers.

\subsection{The first three assumptions}
We provide a more rigorous description of Assumption \ref{asu1} and \ref{asu2} by means of the following definition.
\begin{definition} 
Let $\mc{L}$ be given by \eqref{eq:KdVopers}, for some $s\in \mf{s}(K_{\bb{P}^1})$.
 We say that $s$ is subdominant with respect to $\mc{L}_{G,\mf{s}}$ at $0$ (resp. at $\infty$) if the slope and the principal
coefficient of the singularities at $0$ (resp. at $\infty$) of $\mc{L}$ does not depend on $s$. 
\end{definition}

\begin{lemma}\label{lemma:conditions12} Let $s\in \mf{s}(K_{\bb{P}^1})$, and write it as $s=\sum_{i=1}^ns^{d_i}$, with $s^{d_i}\in  \mf{s}^{d_i}(K_{\bb{P}^1})$. Then $s$ is subdominant with respect to $\mc{L}_{G,\mf{s}}$ at $0$  if and only if
\beq\label{eq:sat0}
s^{d_i}(z)=O(z^{-d_i}),\qquad z\mapsto 0,  
\eeq
and it is subdominant with respect to $\mc{L}_{G,\mf{s}}$ at $\infty$ if and only if
\beq\label{eq:satinfty}
s^{d_i}(z)=O(z^{-d_i-1}),\qquad  z\mapsto \infty.
\eeq
\end{lemma}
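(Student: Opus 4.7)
The plan is a direct bookkeeping application of the slope/principal-coefficient definitions (Definition \ref{def:slope} and Definition \ref{rem:slopeatinfinfinity}) to $\mc{L}=\mc{L}_{G,\mf{s}}+s$ at the two candidate singular points $z=0$ and $z=\infty$, identifying which Laurent (resp.\ asymptotic-at-$\infty$) coefficients of the $s^{d_i}$'s can perturb either of them. First I decompose $\mc{L}=\partial_z+f+b$ along the principal gradation: only the grading-$d_j$ components ($j=1,\dots,n$) are nontrivial, with
\[
b^{d_j}(z)=\frac{\bar r^{d_j}}{z^{d_j+1}}+s^{d_j}(z)\qquad(d_j<\hv-1),
\]
and with the additional summand $z^{-\hv+1}(1+\la z^{-\hat{k}})e_\theta$ in the top grading $d_n=\hv-1$. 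The non-meromorphic piece $\la z^{-\hv+1-\hat{k}}e_\theta$ produces, at both $z=0$ and $z=\infty$, a value of $\delta_{\hv-1}/\hv$ strictly smaller (by $\hat{k}/\hv$) than the surrounding integer contributions, hence it plays no role in either the slope or the principal coefficient and can safely be dropped from the bookkeeping.

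For $z=0$ the pieces $\bar{r}^{d_j}/z^{d_j+1}$ each yield $\delta_{d_j}/(d_j+1)=1$, so by Definition \ref{def:slope} the slope of $\mc{L}_{G,\mf{s}}$ at $0$ is $\mu=1$ and its principal coefficient is $f-\rho^\vee+\bar r$, in agreement with Proposition \ref{prop:LGs}. Adding $s^{d_j}$ preserves both the slope and the principal coefficient if and only if the Laurent expansion of $s^{d_j}$ at $0$ carries no term $z^k$ with $k\le-(d_j+1)$, which is exactly $s^{d_j}(z)=O(z^{-d_j})$ as $z\to 0$, i.e.\ \eqref{eq:sat0}.

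For $z=\infty$ I would transport to the local chart $x=1/z$ through \eqref{eq:operatinfty}, so that the grading-$d_j$ component becomes $(-1)^{d_j+1}x^{-2(d_j+1)}b^{d_j}(1/x)$. In the ground state the slope is driven entirely by $e_\theta$: its transformed leading term $(-1)^{\hv}x^{-(\hv+1)}e_\theta$ gives $\delta_{\hv-1}/\hv=(\hv+1)/\hv$, while each $\bar{r}^{d_j}/z^{d_j+1}$ only contributes $\delta_{d_j}/(d_j+1)=1$. Hence the slope at $\infty$ equals $1+1/\hv$ and the principal coefficient $f+(-1)^{\hv}e_\theta$ lives entirely in grading $\hv-1$. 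If $s^{d_j}(z)\sim z^{M_j}$ at $\infty$, the induced contribution has $\delta_{d_j}/(d_j+1)=2+M_j/(d_j+1)$; for $d_j<\hv-1$ the slope-invariance inequality $2+M_j/(d_j+1)\le(\hv+1)/\hv$ sharpens in integers to $M_j\le-(d_j+1)$, and for the single top grading $d_n=\hv-1$ the slope analysis alone gives only $M_j\le-(\hv-1)$, but the requirement that the principal coefficient be unchanged forces the coefficient of $z^{-(\hv-1)}$ in $s^{\hv-1}$ to vanish, tightening the bound to $M_j\le-\hv=-(d_n+1)$. Both cases merge into the uniform condition $s^{d_j}(z)=O(z^{-d_j-1})$ as $z\to\infty$, i.e.\ \eqref{eq:satinfty}. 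The only two points demanding care—the treatment of the fractional exponent $\hat{k}$, resolved by its being strictly subleading in every relevant grading, and the need to invoke the principal-coefficient condition on top of the slope condition in the single top grading at $\infty$—have been addressed above; the rest of the argument is purely mechanical.
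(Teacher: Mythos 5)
Your proof is correct and follows essentially the same route as the paper's: compute the slope and principal coefficient of $\mc{L}_{G,\mf{s}}$ at $0$ and $\infty$, and use the integrality of the orders of the rational functions $s^{d_i}$ to turn the fractional threshold $1-\tfrac{1}{\hv}$ at infinity into the uniform bound $s^{d_i}(z)=O(z^{-d_i-1})$. The only quibble is your parenthetical claim that the term $\la z^{-\hv+1-\hat{k}}e_\theta$ contributes a ratio \emph{smaller} by $\hat{k}/\hv$ at $z=0$ as well — there it is in fact \emph{larger} by $\hat{k}/\hv$ than that of $z^{-\hv+1}e_\theta$, though still strictly below the threshold $1$, so your conclusion (and the rest of the argument) stands.
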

\begin{proof}
The slope at $0$ of $\mc{L}_{G,\mf{s}}$ is $1$, thus $s$ is subdominant at $0$ if and only if
$ \lim_{z \to 0}z^{d_i+1} s^{d_i}(z)=0$ for all $i=1,\dots,n$. 
The slope at $\infty$ of $\mc{L}_G$ is $1+\frac{1}{\hv}$. Let $s^{d_i}(z)=O(z^{-c_i})$, as $z \to \infty$. Then (cf. Definition \ref{rem:slopeatinfinfinity}), 
$s(z)$ is subdominant at $\infty$
if and only if $\frac{c_i}{d_i+1}>1-\frac{1}{\hv}$, $\forall i$.
In other words $c_i>i+1-\frac{d_i+1}{\hv}$. Since $c_i \in \bb{N}$ and $0\leq d_i\leq \hv-1$, the latter inequality
is satisfied if and only if $c_i\geq d_i+1$.
\end{proof}

The rational functions $s^{d_i}(z)$ satisfying the conditions of the above lemma can be written using a partial fraction
decomposition.
\begin{lemma}\label{lem:partialfraction}
 Let $f$ be a rational function such that
 \begin{enumerate}[(i)]
  \item $z^i f(z)$ is regular at $z=0$\label{lem:partialfractioni}
  \item $z^{i+1}f(z)$ is regular at $z=\infty$\label{lem:partialfractionii}
  \item $f$ is regular in $\bb{C}\setminus\lbrace 0,\infty\rbrace$ except for a finite (possibly empty) set of points  $\{w_j, j \in J\}$, where $f$ has a pole of order $m(j)\geq1$.\label{lem:partialfractioniii}
 \end{enumerate}
Then there exist $x_l(j)\in\bb{C}$, with $j\in J$ and $0\leq l \leq m(j)-1$, such that
\begin{equation*}
 f(z) = z^{-i}\sum_{j \in J}\sum_{l=0}^{m(j)-1}\frac{x_l(j)}{(z-w_j)^{m(j)-l}} \; .
\end{equation*}
\begin{proof}
Let $g(z)=z^i f(z)$. By hypotheses \eqref{lem:partialfractioni},\eqref{lem:partialfractionii} $g$ has poles only at $w_j, j \in I$. Since $g(\infty)=0$,
we can represent $g$ as a simple partial fraction without polynomial terms: 
there exist $x_l(j)\in\bb{C}$, with $j \in J$ and  $0\leq l \leq m(j)-1$,  such that 
$g(z)=\sum_{j \in J}\sum_{l=0}^{m(j)-1}\frac{x_l(j)}{(z-w_j)^{m(j)-l}}$.
\end{proof}
\end{lemma}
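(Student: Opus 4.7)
The plan is to clear the potential zero/pole at $0$ by passing to $g(z) := z^i f(z)$, so that $g$ becomes a rational function on $\bb{P}^1$ whose only finite poles are the $w_j$, $j \in J$, and which in addition vanishes at infinity. A standard partial fraction decomposition for $g$ then produces the claimed formula after multiplying back by $z^{-i}$.

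First I would verify that $g$ has the advertised properties. Hypothesis \eqref{lem:partialfractioni} is precisely the statement that $g$ is regular at $z=0$. For $z\in\bb{C}\setminus\lbrace 0,w_j : j\in J\rbrace$ both $f$ and $z^i$ are regular by \eqref{lem:partialfractioniii}, so $g$ is too; at each $w_j$ (which is necessarily non-zero) multiplication by $z^i$ is holomorphic and non-vanishing, hence $g$ retains a pole of order exactly $m(j)$ there. Finally, hypothesis \eqref{lem:partialfractionii} asserts that $z^{i+1}f(z)$ is regular at infinity, which rewrites as $g(z)=z^{-1}\bigl(z^{i+1}f(z)\bigr)=O(z^{-1})$ as $z\to\infty$; in particular $g$ extends to the Riemann sphere with $g(\infty)=0$.

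At this point $g$ is a rational function on $\bb{P}^1$ whose pole divisor is supported in $\lbrace w_j : j\in J \rbrace$ with orders at most $m(j)$, and which vanishes at infinity. The classical partial fraction theorem therefore yields an expansion
\begin{equation*}
g(z)=\sum_{j\in J}\sum_{l=0}^{m(j)-1}\frac{x_l(j)}{(z-w_j)^{m(j)-l}}
\end{equation*}
with $x_l(j)\in\bb{C}$; the condition $g(\infty)=0$ is precisely what rules out any additional polynomial part. Dividing by $z^i$ gives the desired expression for $f$. The argument is essentially mechanical and no serious obstacle arises; the only point requiring any care is checking that the weights of $z$ prescribed near $0$ and near $\infty$ by \eqref{lem:partialfractioni}--\eqref{lem:partialfractionii} are matched exactly so that $g$ has neither a pole at the origin nor a polynomial part at infinity, which is what makes the substitution $g=z^i f$ land in the target class.
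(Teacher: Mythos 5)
Your proof is correct and follows exactly the same route as the paper's: pass to $g(z)=z^if(z)$, observe that the hypotheses force $g$ to have poles only at the $w_j$ and to vanish at infinity, invoke the classical partial fraction decomposition without polynomial part, and divide back by $z^i$. The only difference is that you spell out the verification of the properties of $g$ in more detail than the paper does.
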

As a corollary,
we can write explicitly the canonical form of an operator satisfying Assumptions \ref{asu1}, \ref{asu2} and \ref{asu3}.
\begin{proposition}\label{pro:quasinormal}
An operator $\mc{L}(z,\la)$ of the form \eqref{eq:KdVopers} satisfies Assumptions \ref{asu1},\ref{asu2},\ref{asu3}  if and only if  there exists a (possibly empty) arbitrary finite
collection of non-zero mutually distinct complex numbers $\lbrace w_j \rbrace_{j \in J}\subset \bb{C}^*$ and a collection  $ s_l^{d_i}(j)$ of arbitrary elements of $\mf{s}^{d_i}$, with $0 \leq l \leq d_i$ and  $j \in J$, such that
 \begin{align}\nonumber
 \mc{L}(z,\la)= & \partial_z+f+\sum_{i=1}^n\frac{\bar{r}^{d_i}}{z^{d_i+1}}+z^{-\hv+1}
 (1+  \la z^{-\hat{k}})e_{\theta} + \\ \label{eq:prenormalform}
& +\sum_{j \in J}\sum_{i=1}^{n} z^{-d_i}\sum_{l=0}^{d_i}\frac{s^{d_i}_l(j)}{(z-w_j)^{d_i+1-l}} ,
 \end{align}
 where
 \begin{itemize}
  \item $\bar{r}=\sum_i\bar{r}^{d_i}$ is the unique element in $\mf{s}$ such that
the Lie algebra elements $f-\rho^\vee+r$ and $f-\rho^\vee+\bar{r}$ are conjugated.
\item The element $\bar{s}=\sum_i s_0^{d_i}(j)$ is independent of $j\in J$, and it is the unique element in $\mf{s}$ such that $f-\rho^\vee-\theta^\vee$ and $f-\rho^\vee+\bar{s}$ are conjugated.
 \end{itemize}

\begin{proof} Part of formula \eqref{eq:prenormalform} was already obtained in Proposition \ref{prop:LGs}, when considering the canonical form of the ground state oper $\mc{L}_G(z,\lambda)$. Due to Lemma \ref{lem:partialfraction}, Assumptions \ref{asu1},\ref{asu2} are satisfied if and only if the function $s^{d_i}(z)$ is of the form 
\beq\label{sdiproof}
s^{d_i}(z)=z^{-d_i}\sum_{j \in J}\sum_{l=0}^{m_i(j)-1}\frac{s^{d_i}_l(j)}{(z-w_j)^{m_i(j)-l}},
\eeq
for some $m(j)\in\bb{N}$, and $s^{d_i}_l(j)\in \mf{s}^{d_i}$. For $j\in J$, the principal coefficient of
$w_j$ is given by $f-\rho^\vee+\bar{s}$, where $\bar{s}=\sum_i s_0^{d_i}(j)$. Assumption \ref{asu3} states that for every $j\in J$ the additional singularity $w_j$ has to be regular, and its principal coefficient $f-\rho^\vee+\bar{s}$ is conjugated to the element $f-\rho^\vee -\theta^\vee$. In particular, $\bar{s}$ in independent of $j\in J$. Due to Proposition \ref{thm:fuchsian},  a singular point $w$ for an oper in canonical form is regular if and only if it is Fuchsian, form which it follows that in \eqref{sdiproof} we have $m_i(j)=d_i+1$ for every $i,j$, proving the proposition. 
\end{proof}

\end{proposition}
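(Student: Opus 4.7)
The plan is to start from the canonical form of the ground state oper and then analyze, condition by condition, how the remaining perturbation $s(z) \in \mf{s}(K_{\bb{P}^1})$ must look. Since Proposition \ref{prop:LGs} already pins down the contribution of $\mc{L}_{G,\mf{s}}(z,\la)$, the whole problem reduces to classifying those rational $s(z)$ for which the three assumptions hold. I decompose $s = \sum_{i=1}^n s^{d_i}$ along the principal gradation, so that $s^{d_i} \in \mf{s}^{d_i}(K_{\bb{P}^1})$, and treat each $s^{d_i}$ separately.

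First I would translate Assumptions \ref{asu1} and \ref{asu2} into explicit analytic conditions on $s^{d_i}$. Lemma \ref{lemma:conditions12} does precisely this: subdominance at $0$ is equivalent to $s^{d_i}(z) = O(z^{-d_i})$ as $z \to 0$, and subdominance at $\infty$ to $s^{d_i}(z) = O(z^{-d_i-1})$ as $z \to \infty$. Then, since $s^{d_i}$ is rational on $\bb{P}^1$, one may apply Lemma \ref{lem:partialfraction} (with $i$ replaced by $d_i$) to obtain a partial fraction expansion
\[
s^{d_i}(z) = z^{-d_i} \sum_{j \in J} \sum_{l=0}^{m_i(j)-1} \frac{s^{d_i}_l(j)}{(z-w_j)^{m_i(j)-l}}
\]
with $s^{d_i}_l(j) \in \mf{s}^{d_i}$, where $\{w_j\}_{j \in J} \subset \bb{C}^*$ is the finite set of poles of $s^{d_i}$ away from $0$ and $\infty$ and $m_i(j) \geq 1$ is the order of the pole. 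A small bookkeeping step is to take $J$ to be the union of the pole sets of the various $s^{d_i}$, allowing some $s^{d_i}_l(j)$ to vanish, so that the $w_j$'s form a common index set.

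Next I would impose Assumption \ref{asu3}. Since $\mc{L}(z,\la)$ is in canonical form, Proposition \ref{thm:fuchsian} tells me that every additional regular singular point $w_j$ must actually be a non-removable Fuchsian singularity. By Definition \ref{def:fuchsian}, the Fuchsian condition at $w_j$ in the principal gradation means that the principal-degree-$i$ component of the potential has at most a pole of order $d_i + 1$ at $w_j$. Inspecting the expansion above, this forces $m_i(j) = d_i + 1$ for all $i$ and all $j \in J$, giving exactly the shape of the summation over $j$ and $l$ in \eqref{eq:prenormalform}.

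Finally, I would extract the principal coefficient at each $w_j$ and invoke the conjugacy requirement of Assumption \ref{asu3}. By Definition \ref{def:slope}, this principal coefficient is $f - \rho^\vee + \bar{s}(j)$ with $\bar{s}(j) = \sum_i s^{d_i}_0(j) \in \mf{s}$. Assumption \ref{asu3} asks that $f - \rho^\vee + \bar{s}(j)$ be conjugate to $f - \rho^\vee - \theta^\vee$; Proposition \ref{prop:konstant} applied to the transversal space $f + \mf{s}$ provides uniqueness of a point of $f + \mf{s}$ in each regular orbit, so $\bar{s}(j)$ is the unique element of $\mf{s}$ with this property and is in particular the same $\bar{s}$ for every $j \in J$. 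Conversely, any expression of the form \eqref{eq:prenormalform} with these constraints satisfies \ref{asu1}, \ref{asu2} (by reversing the subdominance analysis) and \ref{asu3} (since the resulting singularities at $w_j$ are Fuchsian with the prescribed principal coefficient, hence regular by Lemma \ref{lem:onsingularities}(1)). I expect the only genuinely delicate step to be the uniqueness argument for $\bar{s}$: one must invoke Kostant's Proposition \ref{prop:konstant} together with the description of regular orbits intersecting $f + \mf{s}$ in a single point to rule out that $\bar{s}(j)$ could depend nontrivially on $j$, and this is where the choice of the transversal subspace is essentially used.
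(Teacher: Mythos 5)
Your proposal is correct and follows essentially the same route as the paper's proof: Proposition \ref{prop:LGs} for the ground-state part, Lemma \ref{lemma:conditions12} together with Lemma \ref{lem:partialfraction} to convert Assumptions \ref{asu1}--\ref{asu2} into the partial-fraction form, Proposition \ref{thm:fuchsian} to force $m_i(j)=d_i+1$, and the conjugacy condition of Assumption \ref{asu3} plus the transversal-space uniqueness to identify $\bar{s}$ independently of $j$. The only differences are cosmetic (you make the appeal to Lemma \ref{lemma:conditions12} and to Kostant's uniqueness explicit, and you spell out the converse direction, both of which the paper leaves implicit).
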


\vspace{20pt}
\section{Constructing solutions to the Bethe Ansatz}\label{sec:BA}
In this section, adapting the techniques of \cite{marava15}, we construct solutions of the Bethe Ansatz equations as coefficients
of the central connection problem for opers $\mc{L}$ of type \eqref{eq:KdVopers} and satisfying Assumption
\ref{asu1}, \ref{asu2}, \ref{asu3} and \ref{asu4}. According to Proposition \ref{pro:quasinormal},
we restrict our analysis to the subset of operators of the form \eqref{eq:prenormalform} such that all
additional singularities $\lbrace w_j \rbrace_{j \in J}$ have trivial monodromy. The latter condition implies
that all solutions $\psi$ of the differential equation $\mc{L}\psi=0$ are meromorphic functions on the universal
cover of $\bb{C}^*$, whose (possible) singularities are pole singularities located at the lift of the points $w_j, j \in J$.

\begin{definition}
Let $\widehat{\bb{C}}$ be the universal cover of $\bb{C}^*$, minus the lift of the points $w_j,j \in J$.
If $V$ is a $\mf{g}$-module, and fixed $\la \in \bb{C}$, we consider solutions of
$\mc{L}(z,\lambda)\psi(z,\lambda)=0$ as analytic functions $\psi(\cdot,\lambda):\widehat{\bb{C}}\to V$.
\end{definition}
\begin{remark}
For sake of notation simplicity, we assume a branch cut on the negative real semi-axis and
use the coordinate $z$ of the base space for the first sheet
of the covering.
Whenever we write $f(e^{2\pi i}z)$ we mean that we evaluate the function $f$ on the second sheet. This corresponds
to the counter-clockwise analytic continuation of the function $f(z)$ along a simple Jordan curve encircling $z=0$.
\end{remark}
\begin{definition}
Let $O_\la$ denote the ring of entire functions of the variable $\la$.
If $V$ is a $\mf{g}$-module, we denote by $V(\la)$ the set of solutions of
$\mc{L}(z,\la)\psi(z,\la)=0$ which are entire functions of $\la$, i.e.
they are analytic functions $\psi:\widehat{\bb{C}} \times \bb{C}\to V$.
\end{definition}
\begin{lemma}
 $V(\la)$ is a free module over the ring $O_\la$, and its rank is the dimension of
$V$. That is $V(\la)=V \otimes_{\bb{C}}O_\la$.
\end{lemma}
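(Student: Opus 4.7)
The plan is to prove the lemma by exhibiting an explicit isomorphism $\Phi: V(\la) \to V \otimes_{\bb{C}} O_\la$ via evaluation at a fixed base point. First I would choose a base point $z_0 \in \widehat{\bb{C}}$, and define
\[
\Phi(\psi) = \psi(z_0, \,\cdot\,) \in V \otimes_{\bb{C}} O_\la ,
\]
which is well-defined and $O_\la$-linear since every $\psi \in V(\la)$ is by hypothesis analytic in $\la$. The two things to check are then injectivity and surjectivity.

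Injectivity is immediate from the uniqueness part of the Cauchy problem for linear ODEs: if $\Phi(\psi)=0$, then for each fixed $\la$ the solution $\psi(\,\cdot\,,\la)$ vanishes at a regular point $z_0$ of $\mc{L}(z,\la)$, hence vanishes identically on the connected component of $\widehat{\bb{C}}$ containing $z_0$, and therefore on all of $\widehat{\bb{C}}$ by analytic continuation.

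Surjectivity is the substantive step. Given $v(\la)\in V\otimes_{\bb{C}} O_\la$, I would construct $\psi$ by analytic continuation starting at $z_0$. Locally near $z_0$, since $z_0$ is a regular point of $\mc{L}(z,\la)$ and the coefficients of $\mc{L}$ are meromorphic in $z$ and polynomial (in particular entire) in $\la$, the standard theorem on holomorphic dependence of ODE solutions on parameters produces a unique solution $\psi(z,\la)$ defined on a polydisc $U\times \bb{C}\subset \widehat{\bb{C}}\times\bb{C}$, jointly analytic in $(z,\la)$, with $\psi(z_0,\la)=v(\la)$. For any path $\gamma$ in $\widehat{\bb{C}}$ starting at $z_0$, this solution extends by analytic continuation, and the same theorem applied along a finite covering of $\gamma$ by such polydiscs shows that the continuation remains jointly analytic in $(z,\la)$, in particular entire in $\la$ at each point of $\gamma$.

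What remains is path-independence of the continuation. Two paths in $\widehat{\bb{C}}$ with the same endpoints differ by a composition of loops around the lifts of the additional singularities $\{w_j\}_{j\in J}$ (the universal cover of $\bb{C}^*$ is simply connected, so the only obstruction comes from the punctures). Assumption \ref{asu4} guarantees that the monodromy at each $w_j$ is trivial for every $\la\in\bb{C}$, so such loops act as the identity on the space of solutions; this makes the continuation well-defined globally on $\widehat{\bb{C}}$ and produces $\psi \in V(\la)$ with $\Phi(\psi)=v$. The only delicate point I foresee is verifying that the joint analyticity in $(z,\la)$ is preserved under the monodromy identification — but this is automatic because triviality of the monodromy for every single $\la$ means the multivalued analytic continuation and its single-valued representative coincide pointwise in $\la$, hence as elements of $V(\la)$.
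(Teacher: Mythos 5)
Your proposal is correct and follows essentially the same route as the paper: both arguments fix a regular base point $z_0$, solve the Cauchy problem there, and invoke analytic dependence of solutions on the parameter $\la$ (the paper does this for a constant basis of initial data in $V$, you do it for arbitrary $O_\la$-valued initial data, which amounts to the same isomorphism). Your additional remarks on path-independence via the trivial-monodromy assumption make explicit what the paper's definition of $\widehat{\bb{C}}$ and its surrounding discussion already take for granted.
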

\begin{proof}
In order to find an $O_\la$-basis of solutions it is sufficient to find a set  $\lbrace\psi_i(z,\la),i=1,\dots,\dim V\rbrace$
of elements in $V(\la)$ which is a $\bb{C}$-basis of solutions of $\mc{L}(z,\lambda)\psi(z,\lambda)=0$ for every fixed $\la$.
 Let then $\lbrace\psi_i,i=1,\dots,\dim V\rbrace$ be a basis of $V$. Fix a regular point $z_0$ and let $\psi_i(z,\la)$ be the solution
 of $\mc{L}\psi=0$ satisfying the Cauchy problem $\psi_i(z_0,\la)=\psi_i$ for all $\la \in \bb{C}$. The solutions
 $\psi_i(z,\la) \in V(\la)$ because the differential equation depends analytically on $\la$, and are -- by construction -- a basis of $V$ for each
 fixed $\la$.
\end{proof}
For $t\in\bb{R}$, we define the following twisted operator and twisted solution:
\begin{align}\label{eq:twistedoper}
 & \mc{L}^t(z,\la)=\mc{L}(e^{2i \pi t}z,e^{2i \pi t \hat{k}}\la) \\ \label{eq:twistedsolution}
 & \psi_{t}(z,\la)=e^{2 i \pi t \rho^\vee}\psi(e^{2\pi i t}z, e^{2 \pi i t \hat{k}}\la)
\end{align}
Applying the change of variable formula \eqref{eq:opercoordchange} to \eqref{eq:prenormalform}, we have that
\begin{equation}\label{eq:twistedoperformula}
 \mc{L}^t(z,\la)=\partial_z+f+\sum_i\frac{\bar{r}^{d_i}}{z^{d_i+1}}+e^{- 2\pi i t } z^{1-\hv}\big(1+\la z^{-\hat{k}} \big)e_{\theta}+
 \sum_{m=1}^n e^{2\pi i t (d_m+1)} s^{d_m}(e^{2\pi i t}z) \;,
\end{equation}
where 
\beq\label{eq:sdm}
s^{d_m}(z)= z^{-d_m}\sum_{j \in J}\sum_{l=0}^{d_m}\frac{s^{d_m}_l(j)}{(z-w_j)^{d_m+1-l}}.
\eeq
By the same formula, it is straightforward to see that
the twisted function $\psi_t(z,\la)$ is a solution of the twisted operator: $\mc{L}^t(z,\lambda)\psi_t(z,\lambda)=0$. Note that when $t=1$ then $\mc{L}^{1}(z,\la)=\mc{L}(z,\la)$, while in general  $\psi_1(z,\la)$ is not equal to $\psi(z,\la)$. We define on $V(\la)$ the following $O_\la$-linear operator, the \emph{monodromy operator}:
\begin{equation}\label{eq:monodromy}
M: V(\la) \to V(\la), \qquad  M(\psi)(z,\la)= e^{2i\pi\rho^\vee}\psi(e^{2\pi i}z,e^{2\pi i \hat{k} }\la).
\end{equation}

\begin{remark}
Since $\mc{L}(z,\la)$ is a multivalued function of $z$, the monodromy operator cannot be defined on solutions $\mc{L}(\psi(z))=0$
for fixed $\la$. 
\end{remark}

\begin{remark}
In the case of the ground state, we have
\begin{align*}
 \mc{L}_{G,\mf{s}}^t(z,\la)=\partial_z+f+\sum_i\frac{\bar{r}^{d_i}}{z^{d_i+1}}+
 e^{- 2\pi i t } z^{1-\hv}\big(1+\la z^{-\hat{k}} \big)e_\theta.
\end{align*}
Hence the the twist by $t$ is tantamount to a change
$e_{\theta}\to e^{- 2\pi i t } e_{\theta}$, which in turn can be interpreted as an automorphism of a Kac Moody algebra
with a different loop variable than $\la$. This is the point of view that we
used in our previous paper. We drop this interpretation in the present work, for it cannot be simply extended to the more general operators we are considering.
\end{remark}

\subsection{The $n$ fundamental modules}
Let $\omega_i$ be the $i-$th fundamental weight of the Lie algebra $\mf{g}$, and denote by $L(\omega_i)$ the $i-$th fundamental representation of $\mf{g}$. Let the function $p: I \to \bb{Z}/{2\bb{Z}}$ be defined inductively as follows:
$p(1)=0$, and $p(j)=p(i)+1$ whenever $C_{ij}<0$ (i.e. $p$ alternates on the Dynkin diagram). Note that the principal coefficient at $\infty$ of \eqref{eq:prenormalform} is given by $f+e_{\theta}$, where $f=\sum_{i\in I}f_i$ and  $f_i$, $i\in I$, are fixed (negative) Chevalley generators of $\mf{g}$. The highest root vector $e_{\theta}$ is defined up to a scalar multiple, and the spectrum of the principal coefficient $f+e_\theta$ in $L(\omega_i)$ -- hence the asymptotic behaviour of solutions $\mc{L}\psi=0$ -- depends on such a choice. We choose it according to the following Proposition.
\begin{proposition}\label{prop:maximaleigenvalue}\cite[Proposition 4.4]{marava15}
One can choose the element $e_{\theta}$ in such a way that for every $i \in I$ the linear operator representing $f+(-1)^{p(i)}e_{\theta}$ in $L(\omega_i)$ has a
unique eigenvalue $\lambda^{(i)}$ with maximal real part, which is furthermore real, positive, and simple. In fact, the array of eigenvalues $(\lambda^{(1)},\dots,\lambda^{(n)})$ can be characterised
as the Perron-Frobenius eigenvector of the incidence matrix $B=2 \mbbm{1}_n-C$ of the Dynkin diagram of $\mf{g}$:
\begin{equation}
 \sum_j B_{ij} \lambda^{(j)}=2 \cos(\frac{\pi}{\hv}) \lambda^{(i)},\quad  \lambda^{(1)}=1, \quad B_{ij}=2 \delta_{i,j}-C_{ij}.
\end{equation}
\end{proposition}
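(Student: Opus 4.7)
The plan is to establish the proposition by a positivity argument, reducing each spectral claim to the Perron--Frobenius theorem applied inside each fundamental representation, and then extracting the recursion from representation-theoretic relations between different fundamentals.

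First I would fix $e_\theta$ via the Chevalley-style basis of Section~\ref{gbasis}: the highest root vector is, up to scalar, obtained by iterated commutators of the $e_i$'s weighted by the structure constants $\varepsilon_{\alpha,\beta}$, so its sign is pinned down unambiguously. The map $p:I\to\mathbb{Z}/2\mathbb{Z}$ is well-defined because the Dynkin diagram of a simply-laced simple Lie algebra is a tree, hence bipartite. I would then choose a weight basis in each $L(\omega_i)$ (for instance Lusztig's canonical basis, or a monomial basis obtained by acting on a highest-weight vector with the $f_j$'s) and verify that the matrix entries of the $f_j$'s and of $e_\theta$ have predictable signs. The twist by $(-1)^{p(i)}$ precisely cancels the alternation of signs as one crosses the two color classes of the bipartition, so that the matrix of $f+(-1)^{p(i)}e_\theta$ becomes entrywise non-negative. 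Irreducibility of $L(\omega_i)$ as a $\mathfrak{g}$-module translates into Perron--Frobenius irreducibility of this matrix, and the theorem then delivers a unique eigenvalue $\lambda^{(i)}$ of maximal modulus, which is real, positive, and simple. Undoing the sign twist gives the unique eigenvalue of $f+e_\theta$ on $L(\omega_i)$ with maximal real part.

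The second part --- the recursion $\sum_j B_{ij}\lambda^{(j)}=2\cos(\pi/h^\vee)\lambda^{(i)}$ with normalisation $\lambda^{(1)}=1$ --- is more subtle. My approach would exploit the affine Lie algebra picture: $f+e_\theta$ is the evaluation at the loop parameter equal to $1$ of the principal element $\hat f \in \hat{\mathfrak g}$, and classical results of Kostant and Kac compute its spectrum in the adjoint representation to be $\{2\cos(\pi m/h^\vee):m\text{ an exponent of }\mathfrak{g}\}$, with top eigenvalue $2\cos(\pi/h^\vee)$. The coupling between different fundamental representations should then follow from examining the action of $f+e_\theta$ on suitable composite representations and extracting linear relations between the Perron--Frobenius eigenvectors via tensor-product fusion rules governed by the Cartan matrix.

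The main obstacle is precisely this last step: establishing that the tensor-product multiplicities (or, equivalently, the appropriate $\mathfrak{g}$-intertwining data) reproduce exactly the incidence matrix $B=2\,\mathbbm{1}_n-C$. A clean uniform proof most likely proceeds through the $q$-character machinery of Frenkel--Reshetikhin, which encodes Perron--Frobenius-type recursions of this kind as classical limits of $T$-system relations; alternatively, a case-by-case check on the Dynkin diagrams $A_n$, $D_n$, $E_6$, $E_7$, $E_8$ is always available as a fallback, though unilluminating.
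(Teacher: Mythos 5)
The paper gives no proof of this proposition: it is imported verbatim from \cite[Proposition 4.4]{marava15}, where it is proved via Kostant's theory of the cyclic element. There one shows that $f+e_\theta$ is regular semisimple and conjugate into a Cartan subalgebra on which its eigenvalues in $L(\omega_i)$ are the weights of $L(\omega_i)$ evaluated at an explicit element attached to a Coxeter element; the covariance of the spectrum under multiplication by $e^{2\pi i/h^\vee}$ (obtained by conjugating $f+z e_\theta$ with $t^{\ad\rho^\vee}$) accounts both for the sign $(-1)^{p(i)}$ and, combined with $\sum_j B_{ij}\omega_j=2\omega_i-\alpha_i$, for the recursion. Your route is genuinely different, and as written it has two gaps.

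First, the positivity step. A single global sign $(-1)^{p(i)}$ on $e_\theta$ cannot control the signs of the individual matrix entries of $e_\theta$ in a weight basis of $L(\omega_i)$: those signs are governed by the structure constants $\varepsilon_{\alpha,\beta}$ accumulated along the weight diagram, so what you actually need is a diagonal signature conjugation making $f+(-1)^{p(i)}e_\theta$ entrywise non-negative. The existence of such a signature is a non-trivial combinatorial claim (already for $\bigwedge^k\bb{C}^{n+1}$ in type $A$ the wedge basis must be re-signed), and you offer no argument for it. Moreover, Perron--Frobenius requires the digraph of the non-zero entries of this one matrix to be strongly connected; irreducibility of $L(\omega_i)$ as a $\mf{g}$-module says nothing about the digraph of a single element, and weights of multiplicity greater than one are exactly where this becomes delicate. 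Second, and decisively, you concede that you cannot derive the recursion $\sum_j B_{ij}\lambda^{(j)}=2\cos(\pi/h^\vee)\lambda^{(i)}$ and fall back on unproved fusion heuristics or a case-by-case check. Since the uniqueness, positivity and simplicity of $\lambda^{(i)}$ together with this recursion are precisely the content of the proposition, the proposal does not yet constitute a proof; the missing ingredient in both halves is the structure theory of the cyclic element supplied by the cited reference.
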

Note that due to the definition of $p(i)$,  from \eqref{eq:twistedoperformula} it follows that for $i\in I$ we have
\begin{align}\label{eq:lpi}
 \mc{L}^{\frac{p(i)}{2}}(z,\la)=&\partial_z+f+\sum_{m=1}^n\frac{\bar{r}^{d_m}}{z^{d_m+1}}+(-1)^{ p(i) } z^{1-\hv}\big(1+\la z^{-\hat{k}} \big)e_{\theta}\notag\\
 &+\sum_{m=1}^nz^{-d_m}\sum_{j \in J}\sum_{l=0}^{d_m}\frac{s^{d_m}_l(j)}{(z+(-1)^{p(i)+1}w_j)^{d_m+1-l}}.
\end{align}
\begin{definition}\label{Vlai}
Fixed $e_{\theta}$ as in Proposition \ref{prop:maximaleigenvalue} above, for each $i\in I$
we set $V^i(\la)$ to be the $O_\la$-module of solutions of the  differential equation
\begin{equation}
\mc{L}^{\frac{p(i)}{2}}(z,\la)\psi(z,\la)=0, \qquad \psi(z,\la):
\widehat{\bb{C}}\times \bb{C} \to L(\omega_i),
\end{equation}
where $\mc{L}^{\frac{p(i)}{2}}$ is given by \eqref{eq:lpi}.
\end{definition}

\subsection{The singularity at $0$}
In order to study the monodromy of solutions about $z=0$, we address the local behaviour
of solution in a neighbourhood of the singular point $z=0$.  Applying  formula \eqref{eq:prenormalform} to \eqref{eq:lpi}, we have that
\begin{equation}\label{eq:L0fuchsian}
 z^{\ad \rho^{\vee}} \mc{L}^{\frac{p(i)}{2}}=
 \partial_z+ \frac{f-\rho^{\vee} + \bar{r}}{z} +(-1)^{p(i)}e_{\theta}\la z^{-\hat{k}}+O(1),
\end{equation}
where $\bar{r}=\sum_i \bar{r}^{d_i}$. Hence, $z=0$ is a Fuchsian singular point, and the principal
coefficient $f-\rho^{\vee} + \bar{r}$ is independent on the sign $p(i)$.

We remark again that the singularity at $0$ is also a ramification point of the potential.
If $\hat{k}$ is rational, then the operator $\mc{L}$ can be made single-valued by a change of variable,
and the standard Frobenius method applies. This is the case we considered in \cite{marava15}.
If $\hat{k}$ is irrational then the operator cannot be made single-valued by a change of variables.
Therefore the standard theorems on Frobenius series
do not apply. We develop below an appropriate modification of the Frobenius method in the case
$\hat{k}$ is irrational and satisfies a genericity assumption that implies that no logarithms are
present in the local expansion
at $0$. Filling this gap, we also complete our previous works.

The local behaviour of the solutions at $0$ depends on the spectrum of
$ f-\rho^{\vee} +\bar{r}$ in the representation we are considering.
As we remarked earlier this element is conjugated to $f-\rho^\vee+r$, which has the
same spectrum as $r-\rho^\vee \in \mf{h}$. Recall the weight lattice $P$ introduced in \eqref{weightcoweight}.
If $P_{\omega_i}\subset P$ denotes the set of weights of the representation $L(\omega_i)$,
then the spectrum of $ f-\rho^{\vee} + \bar{r} $ in this representation is the set $\lbrace \omega(r-\rho^\vee),
\omega \in P_{\omega_i} \rbrace$. In order to proceed, we need to consider separately the case when $\hat{k}\in (0,1)$ is 
rational and the case when it is irrational. First, we have
\begin{definition} Let $i\in I$, and let $P_{\omega_i}\subset P$ be the weights of the fundamental
representation $L(\omega_i)$. Moreover, put $T=\{a=n+m\hat{k}\,:\, n,m\in\bb{Z}, (n,m)\neq (0,0)\}\subset \bb{C}$.

If $\hat{k}$ is  irrational then the pair  $(r,\hat{k})\in \mf{h}\times (0,1)$ is said to be generic if for every $i \in I$ and for every $\omega\in P_{\omega_i}$
the spectrum of the matrix representing $r-\rho^{\vee}-\omega(r-\rho^{\vee})$ is contained in $\bb{C}\setminus T$.

If $\hat{k}$ then $\hat{k}=\frac{p}{q}$, with $q>p \in \bb{N}$.
We say that the pair $(r,\hat{k})\in \mf{h}\times (0,1)$ is
generic if for any $\omega \in P_{\omega_i}$, the spectrum of the matrix $q (r-\rho^\vee)-q\omega(r-\rho^\vee)$ does not belong to $\bb{N}^*$
(the set of positive natural numbers).
\end{definition}

We have the following result.
%
%
\begin{proposition}\label{thm:frobeniusz}
Let $(r,\hat{k})\in \mf{h} \times(0,1)$ be a generic pair.
Let $\big(\omega(r-\rho^\vee),\chi_{\omega}\big) $ be
an eigenpair composed of an eigenvalue and a corresponding eigenvector of $f-\rho^\vee +\bar{r}$ in $L(\omega_i)$.
A unique solution $\chi_{\omega}(z,\la)$ in $V^i(\la)$ is determined by the expansion
\begin{equation}\label{eq:frobeniusz}
\chi_{\omega}(z,\la)=z^{- \rho^\vee}z^{-\omega(r-\rho^\vee)} F(z,\lambda z^{-\hat{k}} ),
\end{equation}
where $F(z,w)$ is an $L(\omega_i)-$valued analytic function in a neighbourhood of $(0,0)$ such that $\lim_{z \to 0}
F(z,\lambda z^{-\hat{k}})=\chi_{\omega}.$
\begin{proof}
 For the case of $\hat{k}$ irrational the proof is in the Appendix.
 
 The case of a rational $\hat{k}$ was proven in \cite[Section 5]{marava15}.
 We skecth here the proof. Applying the transformation \eqref{eq:varphioperator} with $z=\varphi(x)=x^{q}$
 to the operator $z^{\rho^\vee}\mc{L}^{\frac{p(i)}{2}}$ \eqref{eq:L0fuchsian}, one obtains
\begin{align*}
\widetilde{\mc{L}}=&\partial_x+q\frac{f- r-\rho^\vee)}{x}+b(x) \; .
\end{align*}
where $b \in \mf{b}_+(K_{\bb{P}^1})$ is regular at $0$. The latter differential operator admits
the convergent Frobenius solution
$$
\widetilde{\chi_{\omega}}(x,\la)=x^{-q \omega(r-\rho^\vee)}\big( \chi_{\omega} +\sum_{j \geq 1} a_j x^j\big)
$$
for any eigenpair $(\omega(r-\rho^\vee),\chi_{\omega})$
of $f+r-\rho^\vee)$ provided
the genericity condition is satisfied.
Hence the equation $\mc{L}^{\frac{p(i)}{2}}\psi=0$ admits the solution
$$
\chi_{\omega}(z,\la)= z^{-\rho^\vee}\widetilde{\chi_\omega}(z^{\frac{1}{q}},\la)=z^{-\rho^\vee}z^{-\omega(r-\rho^\vee)}
\big( \chi_{\omega} +\sum_{j \geq 1} a_j z^{\frac{j}q}\big) \; .
$$
Moreover, a closer inspection of the series $\chi_{\omega}+\sum_{j \geq 1} a_j z^{\frac{j}q}$
shows that it is of the required form; see \cite{marava15} for details.
\end{proof}

\end{proposition}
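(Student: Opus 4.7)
The strategy is a Frobenius-type construction applied to a Gauge-transformed operator. Setting $\tilde\psi = z^{\rho^\vee}\psi$ in the representation $L(\omega_i)$, the equation $\mc{L}^{p(i)/2}\psi = 0$ becomes
\[
\bigl[\partial_z + z^{-1}(f - \rho^\vee + \bar r) + R(z) + (-1)^{p(i)}\lambda z^{-\hat k}e_\theta\bigr]\tilde\psi = 0,
\]
where $R(z)$ is analytic at $z = 0$. Writing $\mu = \omega(r-\rho^\vee)$ and treating $y := \lambda z^{-\hat k}$ as an independent variable, the further ansatz $\tilde\psi = z^{-\mu}F(z, y)$ reduces the problem to finding $F$ satisfying the auxiliary PDE
\[
\bigl[z\partial_z - \hat k\, y\partial_y + (f-\rho^\vee+\bar r) - \mu + zR(z) + (-1)^{p(i)} y z\, e_\theta\bigr] F(z, y) = 0, \qquad F(0,0) = \chi_\omega.
\]

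The rational case $\hat k = p/q$ I would handle by substituting $z = x^q$, as in \cite{marava15}: the resulting operator in $x$ is Fuchsian at $x = 0$ with residue $q(f - \rho^\vee + \bar r)$, and the $\lambda$-dependent term becomes $\lambda x^{q-1-p}e_\theta$, which is regular since $p < q$. The genericity hypothesis rules out integer resonances $q(\mu - \mu') \in \bb{N}^\ast$, so the classical Frobenius theorem yields a unique convergent series $\tilde\chi_\omega(x, \lambda) = x^{-q\mu}\bigl(\chi_\omega + \sum_{j \geq 1} a_j(\lambda)\, x^j\bigr)$ with $a_j(\lambda)$ polynomial (hence entire) in $\lambda$; reordering the sum so as to display the joint contributions of the $\lambda\, e_\theta$ insertions then exhibits $\tilde\chi_\omega$ in the required form $z^{-\mu}F(z, \lambda z^{-\hat k})$.

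For irrational $\hat k$ I would solve the PDE by a formal power series $F(z, y) = \sum_{n, m \geq 0} a_{n, m}\, z^n y^m$ with $a_{0, 0} = \chi_\omega$. Equating coefficients yields the recursion
\[
\bigl[(n - \hat k m) I + (f - \rho^\vee + \bar r) - \mu\bigr] a_{n, m} = -\sum_{l = 1}^{n} R_{l-1}\, a_{n-l, m} \,-\, (-1)^{p(i)} e_\theta\, a_{n-1, m-1},
\]
with $R(z) = \sum_{k \geq 0} R_k z^k$ and the convention that $a_{n-1, m-1} = 0$ whenever $n = 0$ or $m = 0$. Since $f - \rho^\vee + \bar r$ is $\mc{N}$-conjugate to the Cartan element $r - \rho^\vee$, it is diagonalizable on $L(\omega_i)$ with spectrum $\{\omega'(r-\rho^\vee) : \omega' \in P_{\omega_i}\}$, and on each eigenspace the left-hand operator acts as multiplication by the scalar $(n - \hat k m) + (\omega' - \omega)(r - \rho^\vee)$. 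The genericity hypothesis together with the irrationality of $\hat k$ ensures this is non-zero for every $(n, m) \neq (0, 0)$, so $a_{n, m}$ is uniquely determined. A straightforward induction on $m$ establishes the structural vanishing $a_{n, m} = 0$ whenever $n < m$: the right-hand side vanishes at $(0, m)$ for $m \geq 1$, and for $0 < n < m$ all source terms $a_{n-l, m}$ and $a_{n-1, m-1}$ satisfy the inductive hypothesis.

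The main technical obstacle is to show that the formal series $F(z, y)$ converges to an analytic function near $(0, 0)$ and yields a solution entire in $\lambda$. The vanishing $a_{n, m} = 0$ for $n < m$ is the crucial observation: for non-trivial $a_{n, m}$ one has $n \geq m$, hence $n - \hat k m \geq (1 - \hat k)m$, so the inverse of the recursion operator admits the uniform bound $\|[(n - \hat k m) I + (f - \rho^\vee + \bar r) - \mu]^{-1}\| \leq C/\bigl(\varepsilon + (1 - \hat k)m\bigr)$ for $m$ large, with $\varepsilon > 0$ fixed by the genericity condition. Combining this $O(1/m)$ gain with the bound $\|a_{n, 0}\| \leq \alpha^n$ from the classical Frobenius analysis of the $\lambda = 0$ restriction, induction on $m$ yields a majorant $\|a_{n, m}\| \leq \alpha^n \beta^m / m!$ for suitable constants. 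This estimate proves that $F(z, y)$ is analytic for $|z| < 1/\alpha$ and entire in $y$, so after substituting $y = \lambda z^{-\hat k}$ the function $F(z, \lambda z^{-\hat k})$ is entire in $\lambda$ for each small $z$, while the limit as $z \to 0$ reduces to the $(0, 0)$ term and yields $\chi_\omega$. Uniqueness of the constructed solution follows from the uniqueness of the formal series together with local uniqueness for Fuchsian systems.
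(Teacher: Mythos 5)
Your proposal is correct and follows essentially the same route as the paper: the rational case via the substitution $z=x^q$ and the classical Frobenius theorem, and the irrational case via the double series in $(z,\lambda z^{-\hat k})$, the structural vanishing of coefficients with $n<m$, and a majorant argument exploiting the $O(1/m)$ decay of the inverse of the recursion operator. The only difference is cosmetic: the paper's explicit recursion lemma yields the sharper majorant $K^m/\bigl((1+q)^{n}(m-n)!\,n!\bigr)$ (hence entirety in both variables for the ground state), whereas your bound $\alpha^n\beta^m/m!$ gives a finite radius in $z$, which still suffices for the statement as formulated.
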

A direct computation shows that the solution $\chi_{\omega}(z,\la)$ of Proposition \ref{thm:frobeniusz}
is an eigenfunction of the monodromy operator \eqref{eq:monodromy}. It follows from this that if
$r+\rho^\vee -f $ is semi-simple (i.e. diagonalizable) then solutions of the form  \eqref{eq:frobeniusz}
are an eigenbasis of the monodromy operator in the module $V^i(\la)$, for $i\in I$. More precisely, we have: 
\begin{corollary}\label{cor:psiomega}
Let $(r,\hat{k})$ be a generic pair, and 
$\chi_{\omega}(\la,z)$ be the solution constructed in Proposition \ref{thm:frobeniusz}. Then
\begin{equation}\label{eq:chiinvariance}
M(\chi_{\omega})(z,\la)=  e^{2\pi i\rho^\vee}
\chi_{\omega}\big(e^{2 \pi i}z,e^{2 \pi i\hat{k}}\la\big)=e^{2i\pi \omega(\rho^\vee-r) } \chi_{\omega}\big(z,\la\big),
\end{equation}
which means that $\chi_{\omega}(z,\la)$ is an eigenfunction of the monodromy operator.

Let moreover $f-\rho^\vee+\bar{r}$ be semisimple, and 
$\lbrace \chi_{\omega}\rbrace_{\omega \in P_{\omega_i}}$ -- taking into consideration weight multiplicities --
be a basis of $L(\omega_i)$ made of eigenvectors of $f-\rho^\vee+\bar{r}$. Then
$\lbrace \chi_{\omega}(z,\la)\rbrace_{\omega \in P_{\omega_i}}$ is a $O_\la$-basis of $V^i(\la)$.
\end{corollary}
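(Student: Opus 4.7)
The plan is to treat the two claims separately, starting with the monodromy eigenvalue computation. Given the explicit form provided by Proposition~\ref{thm:frobeniusz}, namely $\chi_\omega(z,\la) = z^{-\rho^\vee} z^{-\omega(r-\rho^\vee)} F(z,\la z^{-\hat{k}})$ with $F$ single-valued and analytic near the origin, all multivaluedness lives in three explicit prefactors. I would substitute $z \mapsto e^{2\pi i}z$ and $\la \mapsto e^{2\pi i \hat{k}}\la$ into $\chi_\omega$ and multiply by $e^{2\pi i \rho^\vee}$, as prescribed by the monodromy operator \eqref{eq:monodromy}. Three observations then close the argument: the combination $\la z^{-\hat{k}}$ is invariant under the transformation, so $F$ is unchanged; the operator-valued factor $z^{-\rho^\vee}$ picks up $e^{-2\pi i \rho^\vee}$, which exactly cancels the prefactor $e^{2\pi i \rho^\vee}$; and the scalar factor $z^{-\omega(r-\rho^\vee)}$ picks up $e^{-2\pi i \omega(r-\rho^\vee)}$. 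The remaining scalar $e^{2\pi i \omega(\rho^\vee-r)}$ is precisely the claimed monodromy eigenvalue.

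For the basis claim I would argue by rank matching combined with injectivity. By the lemma preceding this corollary, $V^i(\la)$ is a free $O_\la$-module of rank $\dim L(\omega_i)$; by the semisimplicity hypothesis on $f-\rho^\vee+\bar{r}$, the collection $\{\chi_\omega\}_{\omega \in P_{\omega_i}}$, counted with weight multiplicities $\dim L(\omega_i)_\omega$, has cardinality $\dim L(\omega_i)$. The uniqueness clause of Proposition~\ref{thm:frobeniusz} makes the assignment $\chi_\omega \mapsto \chi_\omega(z,\la)$ $\bb{C}$-linear, and hence extends to an $O_\la$-linear map $L(\omega_i) \otimes_{\bb{C}} O_\la \to V^i(\la)$ between free $O_\la$-modules of equal rank. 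It therefore suffices to establish injectivity.

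To prove injectivity, suppose $\sum_\omega c_\omega \chi_\omega(z,\la) = 0$ in $V^i(\la)$. Decompose the sum according to the distinct values taken by the Frobenius exponents $\omega(r-\rho^\vee)$. By the first part of the corollary, solutions attached to distinct eigenvalues of $f-\rho^\vee+\bar{r}$ have distinct monodromy eigenvalues, provided no two of the quantities $\omega(r-\rho^\vee)$ associated to weights of $L(\omega_i)$ differ by a nonzero integer; this is guaranteed by the genericity assumption on $(r,\hat{k})$, which places the differences $(\omega'-\omega)(r-\rho^\vee)$ in $\bb{C}\setminus T$ and in particular outside $\bb{Z}^\ast$. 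Consequently each partial sum supported on a single eigenvalue of $f-\rho^\vee+\bar{r}$ must vanish on its own. Within one such partial sum the leading behaviour as $z \to 0$ is $z^{-\rho^\vee-\omega(r-\rho^\vee)}\bigl(\sum c_\omega \chi_\omega\bigr)$, which forces $\sum c_\omega \chi_\omega = 0$ in $L(\omega_i)$, and then $c_\omega = 0$ for all $\omega$ by linear independence of the chosen eigenbasis.

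The main delicate point is the handling of weight multiplicities: when several basis vectors $\chi_\omega$ share a single eigenvalue of $f-\rho^\vee+\bar{r}$, the monodromy operator cannot distinguish them, and one must fall back on the leading-asymptotic argument combined with the linearity of the Frobenius construction. Once this step is in place, the conclusion reduces to the linear independence of the prescribed eigenbasis in $L(\omega_i)$, and no further input beyond Proposition~\ref{thm:frobeniusz} and the first part of the present corollary is required.
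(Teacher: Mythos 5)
Your computation of the monodromy eigenvalue is correct and is exactly the ``direct computation'' the paper alludes to for the first part: the invariance of $\la z^{-\hat{k}}$, the cancellation of $e^{2\pi i\rho^\vee}$ against the factor produced by $z^{-\rho^\vee}$, and the residual scalar $e^{2\pi i\omega(\rho^\vee-r)}$ are all as in the paper. The problem is in the second part. You reduce the basis claim to injectivity of an $O_\la$-linear map between free $O_\la$-modules of equal rank, asserting that injectivity then suffices. This is false over the ring $O_\la$ of entire functions: multiplication by $\la$ on $O_\la$ is injective but not surjective, so an injective map between free modules of rank $\dim L(\omega_i)$ need not be an isomorphism. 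Concretely, injectivity only tells you that the ``Wronskian-type'' determinant expressing the $\chi_\omega(z,\la)$ in a fixed $O_\la$-basis of $V^i(\la)$ is not identically zero; to generate $V^i(\la)$ over $O_\la$ you need that determinant to be a \emph{unit}, i.e.\ nowhere vanishing in $\la$.

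The correct route --- and the one the paper takes --- is pointwise in $\la$: show that for \emph{every fixed} $\la$ the functions $\chi_\omega(z,\la)$ are $\bb{C}$-linearly independent solutions; since there are $\dim L(\omega_i)$ of them they form a $\bb{C}$-basis of the solution space for each fixed $\la$, and the criterion established in the proof of the Lemma preceding Corollary \ref{cor:psiomega} (a family in $V^i(\la)$ that is a $\bb{C}$-basis of solutions for every fixed $\la$ is an $O_\la$-basis) then yields the conclusion. Your own argument --- separating the sum by monodromy eigenvalues using genericity, then extracting the leading coefficient $\sum c_\omega\chi_\omega$ as $z\to 0$ within each block --- is precisely a proof of this pointwise linear independence when run with constant coefficients $c_\omega\in\bb{C}$, so the analytic content of your proof is sound; it is only the module-theoretic packaging (``equal rank plus injective implies isomorphism'') that must be replaced by the pointwise-basis criterion.
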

\begin{proof}
The first part of the Lemma is a direct consequence of the Proposition. If $f-\rho^\vee+\bar{r}$ is semisimple,
then it admits a basis of
eigenvectors $\chi_{\omega},\omega \in P_{\omega_i}$.
It follows that $\lbrace \chi_{\omega}(z,\la)\rbrace_{\omega \in P_{\omega_i}}$ is a $\bb{C}$ basis of solutions for each fixed $\la$,
and hence a $O_\la$-basis of $V^i(\la)$.
\end{proof}

We finally consider the transformation of solutions of type \eqref{eq:frobeniusz}
under Gauge transformations.
These results will be useul later,
to show that the $Q$-functions (which satisfy the Bethe Ansatz) are 
Gauge invariant.
In other words the $Q$-functions are properties of the opers, and not just of the single differential operators.

Let $y\in\mf{n}_+(K_{\bb{P}^1})$, so that $\exp(y)\in\mc{N}(K_{\bb{P}^1})$, and denote
$\bar{\mc{L}}^{\frac{p(i)}{2}}=\exp(\ad y).\mc{L}^{\frac{p(i)}{2}}$, and $\bar{\chi}_{\omega}=\exp(y).\chi_{\omega}$.
By construction we have $\bar{\mc{L}}^{\frac{p(i)}{2}}\bar{\chi}_\omega=0$, and since $y$ is meromorphic, applying
the monodromy operator \eqref{eq:monodromy} on $\bar{\chi}_{\omega}$ and using \eqref{eq:chiinvariance} we get 
$M(\bar{\chi}_{\omega})=e^{-2i\pi \omega(r-\rho^\vee) }\bar{\chi}_{\omega}$. Thus, solutions of type \eqref{eq:frobeniusz}
satisfy the relation
$$M(\exp(y).\chi_\omega)=\exp(y).M(\chi_\omega), \qquad y \in\mf{n}_+(K_{\bb{P}^1}).$$
In addition, if $\lbrace \chi_{\omega}(z,\la)\rbrace_{\omega \in P_{\omega_i}}$ is a $O_\la$-basis of solutions
for the equation $\mc{L}^{\frac{p(i)}{2}}\psi=0$ (namely, a basis for the $O_\la$-module $V(\lambda)^i$ introduced in
Definition \ref{Vlai}), then 
$\lbrace \exp(y).\chi_{\omega}(z,\la)\rbrace_{\omega \in P_{\omega_i}}$ is a $O_\la$-basis
of solutions for the equation $\bar{\mc{L}}^{\frac{p(i)}{2}}\psi=0$.

\subsection{The singularity at $\infty$}
Now we move to the analysis of the irregular singularity at $\infty$.
In order to compute the asymptotic behaviour of solutions around $\infty$, we define
the function $q(z,\la)$ as the truncated Puiseaux series that coincides with
$\big(z^{1-h^\vee}(1+\lambda z^{-\hat{k}})\big)^{\frac{1}{h^\vee}}$ up to a remainder $o(z^{-1})$:
\begin{equation}\label{eq:Qzla}
 q(z,\la)=z^{\frac{1}{h^\vee}-1} \big ( 1 + \sum_{l=1}^{\lfloor  \frac{1}{\hat{k}h^\vee} \rfloor}
 c_l \la^l z^{-l \hat{k}}\big) \;.
\end{equation}
Here $c_l$ are the coefficients of the McLaurin expansion of $(1-w)^{\frac{1}{h^\vee}}$
\footnote{We remark that in the physical literature \cite{BLZ04} the condition
$\frac{1}{\hat{k} h^\vee}<1$, corresponding to $q(z,\la)=z^{-1+\frac{1}{h^\vee}}$
is called the semiclassical region of parameters.}. If we apply the Gauge transformation $q(z,\la)^{-\ad \rho^{\vee}}$ to the operator \eqref{eq:lpi} we obtain
\begin{equation}\label{eq:L0AtInfinity}
q(z,\la)^{-\ad \rho^{\vee}}.\mc{L}^{\frac{p(i)}{2}}(z,\la)=\partial_z+ q(z,\la) \Lambda^i + O(z^{-1-\e})  \; ,
\end{equation}
where $\Lambda^i=f+(-1)^{p(i)}e_{\theta}$ is (the image in the evaluation representation of) the cyclic element of the Kac-Moody algebra $\hat{\mf{g}}$, and $\e$ a
positive real number.

The transformed operator \eqref{eq:L0AtInfinity} has Poincar\'e rank $\frac{1}{\hv}$ 
with semi-simple principal coefficient  $\Lambda^i$, and a pertubration which is integrable at $\infty$.
It follows that the dominant part of the
asymptotic expansion of the solutions near $\infty$ is fully characterised by the spectrum of the principal
coefficient \cite{wasowAs}. In particular, the subdominant behaviour as $z \to + \infty$ is dictated
by the eigenvector with maximal real part. Indeed, we
have the following proposition which is adapted from \cite[Theorem 3.4]{marava15}.

\begin{proposition}\label{thm:fundamentalsolution}
Let $\psi^{(i)} \in L(\omega_i)$ be an eigenvector of $f+(-1)^{p(i)}e_{\theta}$ with
the maximal eigenvalue $\lambda^{(i)}$, as defined in Proposition \ref{prop:maximaleigenvalue}.
\begin{enumerate}
\item For every $\la \in \bb{C}$ there exists a unique solution $\Psi^{(i)}(z,\la) $ such that
\begin{equation}\label{eq:fundamentalsolution}
 \Psi^{(i)}(z,\la)=e^{-\la^{(i)} S(z,\la)}q(z,\la)^{\rho^\vee} \left( \psi^{(i)} +o(1) \right),
\end{equation}
as $z \to +\infty$, where $ S(z,\la)=\int^z q(w,\la) dw$.
\item For any $\la \in \bb{C}$,
if $\psi(\cdot,\la): \bb{C} \to L(\omega_i)$ is a solution of $\mc{L}^{\frac{p(i)}{2}}\psi=0 $
then
 $\Psi^{(i)}(z,\la)=o\big(\psi(z,\la)\big)$ as $z\to \infty$ unless $\psi(z,\la)= C \Psi^{(i)}(z,\la)$ for a $C \in \bb{C}$.
 \item $\Psi^{(i)}(z,\la)\in V^i(\la)$, i.e. it depends analytically on $\la$.
\end{enumerate}
\end{proposition}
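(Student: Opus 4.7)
The plan is to reduce the statement to a standard result on subdominant solutions of meromorphic linear systems with a semisimple irregular leading coefficient, and then upgrade it to include analytic dependence on $\la$.

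First, I would exploit the gauge transformation already written in \eqref{eq:L0AtInfinity}: the operator $q(z,\la)^{-\ad\rho^\vee}.\mc{L}^{p(i)/2}$ is of the form $\partial_z+q(z,\la)\Lambda^i+R(z,\la)$ with $R=O(z^{-1-\e})$ and $\Lambda^i=f+(-1)^{p(i)}e_\theta$. By Proposition \ref{prop:maximaleigenvalue}, $\Lambda^i$ is semisimple in $L(\omega_i)$ with the unique eigenvalue $\lambda^{(i)}$ of maximal real part, which is moreover simple with eigenvector $\psi^{(i)}$. Diagonalizing $\Lambda^i$ and performing the classical ``shearing'' transformation $\phi=e^{-\lambda^{(i)}S(z,\la)}\widetilde{\phi}$, where $S(z,\la)=\int^z q(w,\la)\,dw$, converts the equation $(\partial_z+q\Lambda^i+R)\phi=0$ into a system for which the eigenvector $\psi^{(i)}$ lies in the \emph{subdominant} direction (because the real part of every other eigenvalue of $\Lambda^i$ is strictly smaller than $\lambda^{(i)}$). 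Applying the Hukuhara--Levinson theorem (or equivalently the version of Wasow's asymptotic integration used in \cite[Theorem 3.4]{marava15}), there exists a unique solution with the prescribed subdominant behaviour $\psi^{(i)}+o(1)$ as $z\to+\infty$. Undoing the gauge transformation yields the unique $\Psi^{(i)}(z,\la)$ satisfying \eqref{eq:fundamentalsolution}, which proves both (1) and (2): indeed any solution in $L(\omega_i)$ that is not proportional to $\Psi^{(i)}$ must contain a nonzero component along an eigenvector with smaller real part of the eigenvalue, and such a component necessarily grows strictly faster than $\Psi^{(i)}$ after multiplication by $e^{\lambda^{(i)}S(z,\la)}$.

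For part (3), I would produce $\Psi^{(i)}$ via an explicit Volterra integral equation rather than merely invoking the existence theorem. After diagonalization and the shearing $\phi=e^{-\lambda^{(i)}S(z,\la)}\widetilde{\phi}$, the solution corresponding to the direction $\psi^{(i)}$ satisfies a fixed point equation
\begin{equation*}
\widetilde{\phi}(z,\la)=\psi^{(i)}+\int_{+\infty}^{z} K(z,w,\la)\widetilde{\phi}(w,\la)\,dw,
\end{equation*}
where the kernel $K(z,w,\la)$ is built from the perturbation $R$ and from the exponentials $e^{(\lambda^{(i)}-\mu)(S(z,\la)-S(w,\la))}$ indexed by the other eigenvalues $\mu$ of $\Lambda^i$. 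Because $\Re(\lambda^{(i)}-\mu)>0$ and the contour can be pushed into the sector where these exponentials decay, together with the decay $R=O(z^{-1-\e})$, the Neumann series converges uniformly on sets of the form $\{z:\Re z\geq z_0\}\times\{|\la|\leq \Lambda\}$, for $z_0$ sufficiently large depending on $\Lambda$. Since each term of the series is analytic in $\la$ (the kernel $K$ depends polynomially in the coefficients $c_l$ of \eqref{eq:Qzla} and in $\la$ through $S(z,\la)$ and $q(z,\la)$), the limit is analytic. Once $\Psi^{(i)}$ is known to be analytic in $\la$ on the sector, the differential equation itself propagates analyticity to all of $\widehat{\bb{C}}\times\bb{C}$, giving $\Psi^{(i)}\in V^i(\la)$.

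The main obstacle I expect is the analyticity assertion in (3), and specifically the uniform control of the Neumann iteration when $\la$ varies over unbounded sets, since the truncation in \eqref{eq:Qzla} is chosen precisely so that $q(z,\la)$ has a remainder $o(z^{-1})$ only in an asymptotic regime that depends on $|\la|$. To handle this one should fix a compact set $K\subset\bb{C}$ for $\la$, choose $z_0$ large depending on $K$ so that the bounds on $R$ and the exponential factors are uniform, and then conclude analyticity on $K$ by Morera/Weierstrass. Letting $K$ exhaust $\bb{C}$ yields analyticity on all of $\bb{C}$. The remaining parts are standard asymptotic analysis modulo the semisimplicity and simplicity of the maximal eigenvalue, which are already provided by Proposition \ref{prop:maximaleigenvalue}.
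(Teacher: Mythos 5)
Your proposal is correct and follows essentially the same route as the paper, which simply invokes the gauge-transformed form \eqref{eq:L0AtInfinity}, the spectral properties of $\Lambda^i$ from Proposition \ref{prop:maximaleigenvalue}, and defers the details to \cite[Theorem 3.4]{marava15} — where the argument is precisely the Levinson/Volterra integral-equation scheme you describe, including the uniform-on-compacts treatment of the $\la$-dependence needed for part (3). You have merely written out in full what the paper outsources to the reference.
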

\begin{proof}
 It follows from \eqref{eq:L0AtInfinity} and Proposition \ref{prop:maximaleigenvalue}. The detailed proof
 can be found in \cite[Theorem 3.4]{marava15}.
\end{proof}
By means of the characterization given in Theorem
\ref{thm:fundamentalsolution}(2) above, we can define the subdominant solution $\Psi^{(i)}$ for any choice operator
Gauge equivalent to $\mc{L}^{\frac{p(i)}{2}}$.
Let $\exp(y)\in\mc{N}(K_{\bb{P}^1})$, and denote
$\bar{\mc{L}}^{\frac{p(i)}{2}}=\exp(\ad y).\mc{L}^{\frac{p(i)}{2}}$.
Then there is a unique (up to a scalar mutliple) solution $\overline{\Psi}^{(i)}(z,\la)$
of $\bar{\mc{L}}^{\frac{p(i)}{2}}\psi=0$, belonging to $V^{i}(\la)$,
and satisfying Theorem
\ref{thm:fundamentalsolution}(2). This is indeed $\exp( y) \Psi^{i}(z,\la)$.

\subsection{The $\Psi$-system}
The next, and main algebraic step, towards constructing solutions of the Bethe Ansatz equations is
the $\Psi$-system, derived in \cite{marava15}, which the reader should consult for all details.

Fix $i\in I$, and recall the definition of the incidence matrix $B=2\mbbm{1}_{n}-C$. Then
consider the $\mf{g}-$modules $\bigwedge^2 L(\omega_i)$ and
$\bigotimes_{j\in I}L(\omega_j)^{\otimes B_{ij}} $. These are, in general, not isomorphic.
However, they have the same highest weight $\eta_i=\sum_j B_{ij} \omega_j$ and if we denote by
$\psi_{\omega}$ a vector of $L(\omega_i)$ with weight $\omega \in P_{\omega_i}$, then
the highest weight vector of $\bigwedge^2 L(\omega_i)$ is
 $\psi_{\omega_i}\wedge \psi_{\omega_i-\alpha_i}$, while the highest weight vector of
 $\bigotimes_{j\in I}L(\omega_j)^{\otimes B_{ij}}$ is $\bigotimes_{j\in I}\psi_{\omega_j}^{\otimes B_{ij}}$.
Hence, for every $i\in I$, we have
a well-defined homomorphisms of representations
\begin{equation}\label{eq:m_i}
m_i: \bigwedge^2 L(\omega_i) \to
\bigotimes_{j\in I}L(\omega_j)^{\otimes B_{ij}} \; , \quad m_i(\psi_{\omega_i}\wedge \psi_{\omega_i-\alpha_i})=
\bigotimes_{j\in I}\psi_{\omega_j}^{\otimes B_{ij}} \; ,
\end{equation}
uniquely defined by requiring that it annihilates the (possibly trivial) submodule
$U_i \subset \bigwedge^2 L(\omega_i)$ such that
$\bigwedge^2L(\omega_i)=L(\eta_i)\oplus U_i$. 
\begin{proposition}\label{pro:Psisystem}
Let $\Psi^{(i)}(z,\la)$ be the sub-dominant solution defined in Proposition \ref{thm:fundamentalsolution},
and $\Psi^{(i)}_{\frac{1}{2}}(z,\la)$ the same solution, twisted according to formula \eqref{eq:twistedsolution}. We can choose a normalisation of the solutions $\Psi^{(i)}(z,\la)$'s in such a way that
the following set of identities -- known as $\Psi-$system -- holds true:
\begin{equation}\label{eq:Psisystem}
m_i\big(  \Psi_{-\frac{1}{2}}^{(i)}(z,\la) \wedge
\Psi_{\frac{1}{2}}^{(i)}(z,\la) \big) =\otimes_{j\in I} \Psi^{(j)}(z,\la)^{\otimes B_{ij}}\,,
\quad i\in I,
\end{equation}
where $m_i$ is the morphism of $\mf{g}$ modules defined in \eqref{eq:m_i}.
\end{proposition}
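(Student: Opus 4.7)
My plan is to follow closely the strategy of \cite{marava15}, showing that both sides of \eqref{eq:Psisystem} are subdominant solutions, at $z\to+\infty$, of the same linear differential operator taking values in (the image inside the tensor product of) the irreducible submodule $L(\eta_i)\subset\bigwedge^2 L(\omega_i)$; proportionality then forces equality once the overall normalisation of each $\Psi^{(i)}$ is fixed. The main work is therefore divided into (a) an ``ODE step'' identifying the common equation, and (b) an ``asymptotic step'' matching the leading behaviour.

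For (a), I would first check that the twist \eqref{eq:twistedoperformula} is $1$-periodic in $t$, so that $\mathcal{L}^{(p(i)-1)/2}=\mathcal{L}^{(p(i)+1)/2}$; consequently $\Psi^{(i)}_{-1/2}$ and $\Psi^{(i)}_{1/2}$ are both solutions of the \emph{same} operator $\mathcal{L}^{(p(i)+1)/2}$ in $L(\omega_i)$. Since every $j$ with $B_{ij}\neq 0$ is adjacent to $i$ in the Dynkin diagram, we have $p(j)\equiv p(i)+1\pmod 2$, so each $\Psi^{(j)}$ appearing on the right-hand side also satisfies $\mathcal{L}^{(p(i)+1)/2}$. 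Both the wedge $\Psi^{(i)}_{-1/2}\wedge\Psi^{(i)}_{1/2}$ and the tensor product $\bigotimes_j(\Psi^{(j)})^{\otimes B_{ij}}$ are therefore annihilated by the operator induced by $\mathcal{L}^{(p(i)+1)/2}$ on the respective $\mathfrak g$-module; and because $m_i$ is a morphism of $\mathfrak g$-modules, it intertwines these induced actions. So $m_i\big(\Psi^{(i)}_{-1/2}\wedge\Psi^{(i)}_{1/2}\big)$ lives in the $L(\eta_i)$-summand of $\bigotimes_j L(\omega_j)^{\otimes B_{ij}}$ and solves the same equation there.

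For (b), I would plug the asymptotic formula \eqref{eq:fundamentalsolution} into both sides. The exponential factor in $\Psi^{(i)}_{\pm 1/2}$ is $\exp\!\big(-\lambda^{(i)} S(e^{\pm i\pi}z,e^{\pm i\pi\hat k}\lambda)\big)$, whose leading term is $\exp\!\big(-\lambda^{(i)}e^{\pm i\pi/h^\vee}h^\vee z^{1/h^\vee}\big)$; their product gives the exponent $-2\cos(\pi/h^\vee)\lambda^{(i)}h^\vee z^{1/h^\vee}$. On the other side one gets the exponent $-\sum_j B_{ij}\lambda^{(j)}h^\vee z^{1/h^\vee}$. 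These coincide precisely because of the Perron--Frobenius identity $\sum_j B_{ij}\lambda^{(j)}=2\cos(\pi/h^\vee)\lambda^{(i)}$ of Proposition \ref{prop:maximaleigenvalue}. The additional singularities $w_j\in\mathbb{C}^*$ do not affect this asymptotic analysis because they lie at finite distance from $\infty$ and the induced perturbation at $\infty$ is integrable, so Proposition \ref{thm:fundamentalsolution} applies verbatim.

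Having matched both the ODE and its leading asymptotic behaviour, uniqueness of the subdominant solution in $L(\eta_i)$ forces proportionality; then one chooses the overall scale of each $\Psi^{(i)}$ so that all the proportionality constants become $1$. This is a triangular system of $n$ scalar conditions on $n$ scales, hence solvable. The main obstacle will be verifying that the leading polarisation vector $\psi^{(i)}_{-1/2}\wedge\psi^{(i)}_{1/2}$, after applying $m_i$, really equals the leading vector $\bigotimes_j(\psi^{(j)})^{\otimes B_{ij}}$ up to a non-zero scalar (equivalently, that the wedge does not lie in the kernel $U_i$ of $m_i$); this is a purely representation-theoretic computation carried out in \cite{marava15} for the ground-state oper, and since the asymptotic data at $\infty$ are identical for \eqref{eq:prenormalform} and \eqref{eq:L0}, that computation transfers to the present setting without change.
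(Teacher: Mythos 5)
Your proposal is correct and follows essentially the same route as the paper, which itself disposes of this proposition in one line by citing Proposition \ref{thm:fundamentalsolution} and deferring all details to \cite[Theorem 3.6]{marava15}; you have accurately reconstructed that argument (same ODE via $1$-periodicity of the twist and the parity of $p$, asymptotic matching via the Perron--Frobenius identity, uniqueness of the subdominant solution, and the deferred representation-theoretic check that the leading wedge is not killed by $m_i$). The one remark worth adding is that your observation that the additional singularities do not disturb the analysis at $\infty$ is precisely the content of Assumption \ref{asu2}, which is why Proposition \ref{thm:fundamentalsolution} is already stated for the general operator \eqref{eq:lpi}.
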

\begin{proof}
It follows from Proposition \ref{thm:fundamentalsolution}. See \cite[Theorem 3.6]{marava15}
for details.
\end{proof}

\subsection{The $\QQ$ system and the Bethe Ansatz}
We are now in the position of proving the $\QQ$ system,
which implies the Bethe Ansatz equations. We suppose that $r \in \mf{h}$ is generic with respect to $\hat{k}$
and $f+r-\rho^\vee$ is semisimple, so that,
after Corollary \ref{cor:psiomega}, the set $\lbrace \chi_{\omega}(z,\la)\rbrace_{\omega \in P_{\omega_i}}$ is a $O_\la$-basis
of $V^i(\la)$ (weight-multiplicity is considered). Therefore we have the following decomposition
\begin{equation}\label{eq:Psidecomposition}
 \Psi^{(i)}(z,\la)=\sum_{\omega \in P_{\omega_i}}Q_{\omega}(\la) \chi_{\omega}(z,\la), \qquad i\in I,
\end{equation}
where the coefficients $Q_{\omega}(\la)$ are entire functions of $\la$. Note that
$\omega_i$ has mutlipliicty $1$, as well as any weight of the form $\sigma(\omega_i)$
for all $\sigma\in\mc{W}$, where $\mc{W}$ is the
Weyl group of $\mf{g}$; in particular the weight $\omega_i-\alpha_i$ belongs to the 
$\mc{W}$ orbit of $\omega_i$.

We show that, as a direct consequence of the $\Psi$-system, they satisfy the $Q\widetilde{Q}$-system
and thus the Bethe Ansatz equations.

Since after a Gauge transformation $N=e^{\ad y},y \in \mf{n}^+(K_{\bb{P}^1})$, the solution $\Psi^{(i)}$
and the solutions $\chi_{\omega}$ transforms as vectors, namely $\Psi^{(i)} \to e^{y}\Psi^{(i)}$,
$\chi_{\omega}\to e^{y}\chi_{\omega}$, it follows immediately that the entire functions
$Q_\omega(\la)$, $\omega\in P_{\omega_i}, i \in I$ are invariant under Gauge transformations.
This shows that the solutions of the Bethe Ansatz equations we construct are not just a properties of the operator
$\mc{L}\in\ope(K_{\bb{P}^1})$, but of the oper $[\mc{L}]\in \Ope(\bb{P}^1)$.

\begin{theorem}\label{thm:QQ}
Suppose that  the pair $(r,\hat{k}) \in \mf{h}\times (0,1)$ is generic, and $f+r-\rho^\vee$ is semisimple, 
so that the decomposition \eqref{eq:Psidecomposition} holds. Fix an arbitrary element $\sigma$ of the
Weyl group $\mc{W}$ of $\mf{g}$, and for every $\ell\in I$ 
denote $Q_\sigma^{(\ell)}=Q_{\sigma(\omega_\ell)}$ and $\widetilde{Q}^{(\ell)}_\sigma=Q_{\sigma(\omega_\ell-\alpha_\ell)}$.

One can normalise the solutions 
$\chi_{\sigma(\omega_\ell)},\chi_{\sigma(\omega_\ell-\alpha_\ell)},\ell \in I$ so that 
the following identity -- known as $Q\widetilde{Q}$-system -- holds for every $\ell\in I$
 \begin{align}\label{eq:QQtilde}
 \prod_{j \in I}\left(Q^{(j)}_\sigma(\la)\right)^{B_{\ell j}}
& =e^{i \pi \theta_\ell}
Q^{(\ell)}_\sigma(e^{-\pi i \hat{k}}\la)\widetilde{Q}_\sigma^{(\ell)}(e^{\pi i \hat{k}}\la) \nonumber\\ 
& -e^{-i \pi \theta_\ell }
Q_\sigma^{(\ell)}(e^{\pi i \hat{k}}\la)\widetilde{Q}_\sigma^{(\ell)}(e^{-\pi i \hat{k}}\la)
\,,
\end{align}
where $\theta_\ell=\sigma(\alpha_\ell)(r-\rho^\vee)$.
\end{theorem}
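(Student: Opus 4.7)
My plan is to derive the $Q\widetilde Q$-system by expanding both sides of the $\Psi$-system \eqref{eq:Psisystem} in the monodromy eigenbasis at $0$, and then matching the coefficient of a well-chosen extremal weight vector.

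First I would express the twisted solutions $\Psi^{(\ell)}_{\pm 1/2}$ in terms of the basis $\lbrace \chi_\omega \rbrace_{\omega\in P_{\omega_\ell}}$. Using the definition \eqref{eq:twistedsolution} and substituting \eqref{eq:Psidecomposition}, one obtains
\[
 \Psi^{(\ell)}_{\pm 1/2}(z,\la)=\sum_{\omega\in P_{\omega_\ell}} e^{\mp i\pi\,\omega(r-\rho^\vee)}\,Q_\omega(e^{\pm i\pi\hat k}\la)\,\chi_{\omega,\pm 1/2}(z,\la),
\]
where $\chi_{\omega,\pm 1/2}$ denotes the half-twist of $\chi_\omega$. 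The explicit local form \eqref{eq:frobeniusz} shows that at extremal weights the leading weight-vector component of $\chi_{\omega,\pm 1/2}$ coincides, up to a fixed scalar, with that of $\chi_\omega$.

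Next I would plug these expressions into the LHS of the $\Psi$-system and expand the RHS using \eqref{eq:Psidecomposition}, and on both sides extract the coefficient of the extremal weight vector $\psi_{\sigma(\omega_\ell)}\wedge\psi_{\sigma(\omega_\ell-\alpha_\ell)}$ in $m_\ell\bigl(\bigwedge^2 L(\omega_\ell)\bigr)$. Since $\sigma(\omega_\ell)$ and $\sigma(\omega_\ell-\alpha_\ell)=\sigma\sigma_\ell(\omega_\ell)$ are extremal and have one-dimensional weight spaces, the genericity of $r$ forces the only nonvanishing wedge contributions on the LHS to come from the pair $(\sigma(\omega_\ell),\sigma(\omega_\ell-\alpha_\ell))$ and its swap. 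Antisymmetry of the wedge then produces the difference of two products with opposite phase factors which, using $\sigma(\omega_\ell)-\sigma(\omega_\ell-\alpha_\ell)=\sigma(\alpha_\ell)$, combine to $e^{\pm i\pi\theta_\ell}$. The RHS yields $\prod_j Q_\sigma^{(j)}(\la)^{B_{\ell j}}$ up to an overall scalar which can be absorbed by a consistent renormalisation of the $\chi_\omega$, compatible with the normalisation of the $\Psi^{(j)}$ fixed in Proposition \ref{pro:Psisystem}.

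The main obstacle is the bookkeeping of phases and normalisations: one has to verify, using genericity of $(r,\hat k)$, that no other pair $(\omega,\omega')$ contributes to the extracted coefficient, and that the normalisation of the $\chi_\omega$ can be chosen coherently across all modules $V^j(\la)$ so that the overall scalar multiplying the $Q\widetilde Q$-identity equals $1$. This is accomplished along the lines of \cite{marava15}; the only new ingredient here is the freedom in the choice of $\sigma\in\mc W$, which is handled by replacing the pair $(\omega_\ell,\omega_\ell-\alpha_\ell)$ everywhere by its $\sigma$-image.
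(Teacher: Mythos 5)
Your proposal is correct and follows exactly the route the paper takes: its proof of Theorem \ref{thm:QQ} is precisely to substitute the decomposition \eqref{eq:Psidecomposition} and the local expansion \eqref{eq:frobeniusz} into the $\Psi$-system \eqref{eq:Psisystem} and read off the coefficient of the extremal weight vector $\psi_{\sigma(\omega_\ell)}\wedge\psi_{\sigma(\omega_\ell-\alpha_\ell)}$, with the phases $e^{\pm i\pi\theta_\ell}$ arising from the half-twists just as you describe. The only cosmetic remark is that the selection of the single contributing pair (and its swap) is really the one-dimensionality of the $\sigma(\eta_\ell)$-weight space of $\bigwedge^2 L(\omega_\ell)$ rather than genericity of $r$, which is instead what guarantees the eigenbasis $\lbrace\chi_\omega\rbrace$ exists.
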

\begin{proof}
It is a straightforward computation: plug the decomposition \eqref{eq:Psidecomposition} and the expansion
\eqref{eq:frobeniusz}
into the $\Psi$-system \eqref{eq:Psisystem}.
\end{proof}
\begin{remark} The $Q\widetilde{Q}-$system, first obtained in
\cite{marava15,marava17}, was shown in \cite{fh16} to be a universal system of relations in the
commutative Grothendieck ring $K_0(\mc{O})$ of the category $\mc{O}$ of representations of the
Borel subalgebra of the quantum affine algebra $U_q(\widehat{\mf{g}})$.
\end{remark}

The Bethe Ansatz equation is a straightforward corollary of the $Q\widetilde{Q}$ system.
\begin{corollary}\label{cor:ba} Let $(r,\hat{k})$ be a generic pair.
 Let us assume that
the functions $Q^{(i)}_\sigma(\la)$ and $\widetilde{Q}^{(i)}_\sigma(\la)$
do not have common zeros. For any zero $\la^*_i$ of $Q^{(i)}(\la)$, the following system
of identities -- known as $\mf{g}$-Bethe Ansatz -- holds
\begin{equation}\label{eq:Qsystemthm}
\prod_{j = 1}^{n} e^{-2 i \pi \beta_jC_{\ell j}} \frac{Q_\sigma^{(j)}\Big(e^{i \pi C_{\ell j}}\la_\ell^*\Big)}{Q_\sigma^{(j)}
\Big(e^{- i\pi C_{\ell j}}\la_\ell^*\Big)}=-1
\,,
\end{equation}
with $\beta_j=\sigma(\omega_j)(r-\rho^\vee)$.
\end{corollary}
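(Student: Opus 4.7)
The plan is to specialize the $Q\widetilde{Q}$-system \eqref{eq:QQtilde} at the two points $\lambda = e^{\pm i\pi\hat{k}}\lambda^*_\ell$, observe that $Q^{(\ell)}_\sigma(\lambda^*_\ell)=0$ eliminates one summand on the right-hand side in each case, and then take the ratio to cancel the common factor $\widetilde{Q}^{(\ell)}_\sigma(\lambda^*_\ell)$, which is nonzero by the no-common-zeros hypothesis.

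Explicitly, at $\lambda=e^{i\pi\hat{k}}\lambda^*_\ell$ the factor $Q^{(\ell)}_\sigma(\lambda^*_\ell)$ appearing in the first summand on the right-hand side of \eqref{eq:QQtilde} forces that summand to vanish; analogously the second summand vanishes at $\lambda=e^{-i\pi\hat{k}}\lambda^*_\ell$. Dividing the two resulting identities removes the common $\widetilde{Q}^{(\ell)}_\sigma(\lambda^*_\ell)$ factor and yields, after a mild rearrangement,
\[
\prod_{j\in I}\left(\frac{Q^{(j)}_\sigma(e^{-i\pi\hat{k}}\lambda^*_\ell)}{Q^{(j)}_\sigma(e^{i\pi\hat{k}}\lambda^*_\ell)}\right)^{\!B_{\ell j}}\cdot\frac{Q^{(\ell)}_\sigma(e^{2i\pi\hat{k}}\lambda^*_\ell)}{Q^{(\ell)}_\sigma(e^{-2i\pi\hat{k}}\lambda^*_\ell)} \;=\; -\,e^{2i\pi\theta_\ell}.
\]

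To match \eqref{eq:Qsystemthm} I would then invoke $B_{\ell j}=2\delta_{\ell j}-C_{\ell j}$. The vanishing of $B_{\ell\ell}$ removes the $j=\ell$ factor from the product, while the isolated $Q^{(\ell)}_\sigma$-ratio plays the role of the $j=\ell$ slot with shift $e^{\pm 2i\pi\hat{k}}=e^{\pm i\pi C_{\ell\ell}\hat{k}}$. For $j\neq\ell$, the exponent $B_{\ell j}=-C_{\ell j}$ flips each ratio, so the shifts $e^{\pm i\pi\hat{k}}$ are naturally rewritten as $e^{\pm i\pi C_{\ell j}\hat{k}}$ and the left-hand side consolidates into $\prod_{j\in I}Q^{(j)}_\sigma(e^{i\pi C_{\ell j}\hat{k}}\lambda^*_\ell)/Q^{(j)}_\sigma(e^{-i\pi C_{\ell j}\hat{k}}\lambda^*_\ell)$. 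The phase $e^{2i\pi\theta_\ell}$ is then factored via
\[
\theta_\ell \;=\; \sigma(\alpha_\ell)(r-\rho^\vee) \;=\; \sum_{j\in I}C_{\ell j}\,\sigma(\omega_j)(r-\rho^\vee) \;=\; \sum_{j\in I}C_{\ell j}\beta_j,
\]
using the expansion $\alpha_\ell=\sum_j C_{j\ell}\omega_j$ together with the symmetry of the Cartan matrix for simply-laced $\mf{g}$, so that $-e^{2i\pi\theta_\ell}=-\prod_j e^{2i\pi\beta_j C_{\ell j}}$; transposing this phase to the other side produces \eqref{eq:Qsystemthm}.

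The argument is entirely algebraic once the $Q\widetilde{Q}$-system of Theorem \ref{thm:QQ} is in hand, so I do not anticipate any genuine obstacle. The only care required is the bookkeeping of signs and shift parameters when repackaging the $B_{\ell j}$ exponents into $C_{\ell j}$ exponents; beyond the explicit non-vanishing of $\widetilde{Q}^{(\ell)}_\sigma(\lambda^*_\ell)$, no further analytic input is needed.
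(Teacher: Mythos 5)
Your derivation is correct and is precisely the ``straightforward'' specialization of the $Q\widetilde{Q}$-system \eqref{eq:QQtilde} at $\la=e^{\pm i\pi\hat{k}}\la^*_\ell$ that the paper has in mind (the paper supplies no further detail, so your write-up is the intended argument made explicit). The only cosmetic mismatch is that the shifts you obtain are $e^{\pm i\pi C_{\ell j}\hat{k}}$, consistent with the shifts in \eqref{eq:QQtilde}, whereas \eqref{eq:Qsystemthm} writes $e^{\pm i\pi C_{\ell j}}$; this is the paper's notational convention rather than a gap in your reasoning.
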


\begin{remark}
 What happens when $(r,\hat{k})$ is a non-generic pair? In that case in general the monodromy operator is not diagonalizable. Hence
 we can define the $Q^{\ell}_{\sigma}$
 functions only for $\sigma$'s belonging to a proper subset of $ \mc{W}$ \cite{marava15,marava17}. The same phenomenon occurs
 also at the level of the quantum 
 KdV model: for some values of $(r,\hat{k})$ not all $Q$ functions can be defined. Hence the ODE/IM correspondence is expected to hold
 also for non-generic values of the parameters.
\end{remark}

\vspace{20pt}
\section{Extended Miura map for regular singularities} \label{sec:miura}
Due to Proposition \ref{pro:quasinormal}, a $\mf{g}$-oper $\mc{L}(z,\la)$ of type \eqref{eq:KdVopers} satisying Assumptions \ref{asu1},\ref{asu2},\ref{asu3} takes the form  \eqref{eq:prenormalform}.  We decompose $\mc{L}$ as 
$$\mc{L}(z,\lambda)=\mc{L}_\mf{s}(z,\la)+z^{-\hv+1}(1+  \la z^{-\hat{k}})e_{\theta},$$
where
\begin{align}\label{eq:sglob}
& \mc{L}_{\mf{s}}(z,\la)=\partial_z+f+\sum_{i=1}^n\frac{\bar{r}^{d_i}}{z^{d_i+1}}+\sum_{j \in J}\sum_{i=1}^{n}
 z^{-d_i}\sum_{l=0}^{d_i}\frac{s^{d_i}_l(j)}{(z-w_j)^{d_i+1-l}},
\end{align}
and the coefficients $\bar{r}^{d_i},s_l^{d_i}(j)\in \mf{s}^{d_i}$ satisfy the conditions of Proposition \ref{pro:quasinormal}. The operator \ref{eq:sglob} is regular outside the Fuchsian singularities
$0,\infty$, and $w_j$, $j\in J$.
In this section we address the question of whether the operator $\mc{L}_{\mf{s}}$ is Gauge equivalent to
an operator such that all additional singularities $w_j,j \in J$ are first order poles. In
Theorem \ref{thm:miuraKdV} below, we answer positively to this question, showing that $\mc{L}_{\mf{s}}$ is Gauge equivalent to
\begin{equation}\label{eq:1glob}
 \mc{L}_1=\de_z+f+\frac{r}{z}+\sum_{j \in J}\frac{1}{z-w_j}\sum_{i=0}^{h^\vee-1}
 \frac{X^i(j)}{z^i},
\end{equation}
where $r \in \mf{h}$ and  $X^i(j) \in \mf{g}^i$.  More precisely, we prove that $\mc{L}_{\mf{s}}$ is the canonical form of $\mc{L}_1$ and the induced map
from the space of operators \eqref{eq:1glob} to the space of operators \eqref{eq:sglob} is surjective,
and -- modulo a (dotted) Weyl group action -- injective.
Since $z^{-\hv+1}(1+  \la z^{-\hat{k}})e_{\theta}$ in invariant under $\mc{N}(K_{\bb{P}^1})$,
it follows that the Quantum KdV opers can be uniquely written as
\beq\label{eq:Lglob}
\mc{L}(z,\la)=\mc{L}_1(z,\lambda)+z^{-\hv+1}(1+  \la z^{-\hat{k}})e_{\theta}.
\eeq
The choice of the above form is motivated by the Bethe Ansatz equations. Indeed, according to Assumption \ref{asu4} in order to construct solutions of the Bethe Ansatz equations, we need to impose on
the additional singularities $w_j$ the trivial-monodromy conditions, which will result in a complete set of algebraic equations for
the coefficients of the operator. Even though the location of the poles is independent of the choice of the Gauge, all local coefficients of course do depend on this choice. For theoretical and practical reasons we have chosen to work in the Gauge where all
additional singularities are first order poles. Indeed, this Gauge does not
depend on the choice of a transversal space, and the computation of the trivial monodromy conditions turns out
to be much simpler. The computation of the monodromy at $z=w_j$ for opers of type \eqref{eq:Lglob} will be made
in Section \ref{sec:zeromonodromykdv}, where we will also show that the coefficients $X^i(j)\in\mf{g}^i$ actually take values in the (symplectic) vector space $\mf{t}\subset \mf{b}_+$, which can be described as the orthogonal complement (with respect to the Killing form) of the subspace $\Ker\ad {e_{-\theta}}$, where $e_{-\theta}$ is a lowest root vector of $\mf{g}$. 
\begin{remark}
Applying $z^{\rho^\vee}$ to $\mc{L}_1$, we obtain
 \begin{equation*}
 z^{\rho^\vee} \mc{L}_1=\de_z+\frac{f-\rho^\vee+r}{z}+\sum_{j \in J}\sum_{i=0}^{h^\vee-1}
 \frac{X^i(j)}{z-w_j} \; .
 \end{equation*}
The (connection asssociated to the) above operator is totally Fuchsian: it is meromorphic on the Riemann sphere and all its
singularities are first order poles. Hence, we can conclude that $\mc{L}_{\mf{s}}$ is Gauge equivalent
to a totally fuchsian operator. We remark that the analysis of the similar question, namely whether a
connection with only regular singularities
is Gauge equivalent to a connection with only simple poles (i.e. a Fuchsian connection), is
of primary importance in the theory of the Riemann-Hilbert problems and led to the negative solution of the
Hilbert's 21st problem, see
\cite{bolibruch}.
\end{remark}

\subsection{Local theory of a Fuchsian singularity}
The operator $\mc{L}_{\mf{s}}$ given by \eqref{eq:sglob} is fixed by the choice of the coefficients
\beq\label{eq:parsi}
s^{d_i}_l(j)\in\bb{C},\qquad 0\leq l\leq d_i,\,i \in I, 
\eeq
namely by the choice of the singular coefficients of the Laurent expansion at any $w_j$. Similarly,
the operator $\mc{L}_{\mf{1}}$ given by \eqref{eq:1glob} is fixed by the choice of 
\beq\label{eq:parxi}
X^{i}(j)\in\mf{g}^{i},\qquad i=0 \dots \hv-1.
\eeq
Since the canonical form of $\mc{L}_1$ is $\mc{L}_{\mf{s}}$, then the Gauge sending $\mc{L}_1$ to $\mc{L}_s$
induces a map from the parameters \eqref{eq:parxi} to the parameters \eqref{eq:parsi}. This map is the object of our study.
Our method of analysis is based on the reduction of the global problem to a simpler local one. This is the
problem of proving that an operator of the form
\begin{equation*}
  \mc{L}=\partial_x+f+\sum_{i=1}^{\hv-1}\sum_{l=0}^{i}\frac{X^i_l}{x^{i+1-l}},
 \end{equation*}
with given $X_l^i \in \mf{g}^{i}$, is Gauge equivalent to an operator with a first order pole
  \begin{equation*}
 \mc{L}=\partial_x+f+\sum_{i=0}^{\hv-1}\frac{X^i}{x},
 \end{equation*}
for some $X^i \in \mf{g}^i$. In order to do so, we describe the local structure of
both operators and Gauge transformations at a Fuchsian singular point. We first embed the space $\ope(K_D)$
into a Lie algebra and then proceed with the localization.
\begin{definition}
Let $D$ be a domain in $\bb{C}$.
We denote
$$\mf{g}'(K_D)=\mf{g}(K_D)\oplus \bb{C}\partial_z$$
the extension of $\mf{g}(K_D)$ by the element $\partial_z$, with the relation $[\partial_z,p(z)]=\frac{d p}{dz}$.
\end{definition}
It is clear that we have an injective map $\ope(K_D)\hookrightarrow \mf{g}'(K_D)$.

\begin{definition}\label{def:gradings}
Let $w \in \bb{C}$ and set $x=z-w$. We denote
\beq\label{def:locg}
\mf{g}'((x))=\mf{g}\otimes\bb{C}((x)) \oplus \bb{C}\partial_x
\eeq
the Lie algebra defined by the relations 
$$[g_1\otimes x^m,g_2\otimes x^p ]=[g_1,g_2]\otimes x^{m+p},\qquad [\partial_x,g \otimes x^{m}]= g \otimes mx^{m-1}$$ 
for $g_1,g_2,g\in\mf{g}$ and $m,p\in\bb{Z}$. 
\end{definition}
The interpret the Lie algebra \eqref{def:locg} as a localized version (at the point $x=z-w$) of
operators in $\mf{g}'(K_D)$, and in particular of operators in $\ope(K_D)$.
To make this statement more precise, we assign two different degrees for elements in $\mf{g}'((x))$: the principal degree, given by
$$\deg \partial_x=0,\qquad \deg g^i \otimes x^j=i, \quad g^i \in \mf{g}^i,$$
and the total degree, given by
$$\tdeg \partial_x=-1,\qquad \tdeg g^i \otimes x^j=i+j,\quad  g^i \in \mf{g}^i.$$
For every $k \in \bb{Z}$, we denote $\mf{g}^{\geq k}((x)) \subset \mf{g}'((x)) $ the
 subspace generated by elements with total degree
 greater than or equal to $k$.  We also define the localized Gauge groups as 
 \begin{align*}
\mc{N}_{loc}&=\lbrace \exp{y}:\, y \in \mf{n}_+\otimes \bb{C} ((x)) \rbrace,\\
\mc{N}_{loc}^{\geq0}&= \lbrace \exp{y}:\, y \in \mf{n}_+\otimes \bb{C} ((x)) \cap \mf{g}^{\geq0}((x)) \rbrace.
\end{align*}

For $w\in D$, the localization map
$$Loc_w: \mf{g}'(K_D) \to \mf{g}'((x)),$$
is defined by setting $Loc_w(\partial_z)=\partial_x$, and $Loc_w(g)$
to be the Laurent series at $w$ of $g\in\mf{g}(K_D)$.
Fixed $w \in D$ the above map is an injective morphism of Lie algebras. We denote $ Loc(g)$ as $g_w$, for $g\in\mf{g}'(K_D)$. By definition, if we localise $\mc{L} \in \ope(K_D)$ at a point $w$ we obtain an element $\mc{L}_w$ of $\mf{g}'((x))$,
and if we localise a Gauge transformation $Y \in \mc{N}(K_D)$ at a point $w$ we obtain an element $Y_w$ of $ \mc{N}_{loc}$.
Since the localisation map is a morphism then $(Y.\mc{L})_w=Y_w.\mc{L}_w$.
\begin{lemma}\label{lem:Nloc}
\begin{enumerate}
 \item Let $\mc{L} \in \ope(K_D)$. Then $w \in D$ is a Fuchsian singularity of $\mc{L}$ if and only if $(\mc{L})_w \in \mf{g}^{\geq-1}((x))$. 
If $w$ is Fuchsian and $\mc{L}_{\mf{s}}=Y.\mc{L}$ is the canonical form of $\mc{L}$, with $Y \in \mc{N}(K_D)$, then the localisation of $Y$ at $w$ belongs to $\mc{N}_{loc}^{\geq0}$.
\item If $\mc{L},\widehat{\mc{L}} \in \ope(K_D)$ are Gauge equivalent operators with a Fuchsian singularity at $w$, then the principal coefficients of the singularity are conjugated in $\mc{N}$.\label{deferredlemma}
\end{enumerate}
\end{lemma}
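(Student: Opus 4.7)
The plan is to use the total-degree filtration on $\mf{g}'((x))$ (compared with the principal gradation) as a bookkeeping device: part (1) will follow from the definitions and an inductive revisit of the construction in Proposition \ref{lem:gaugemiura}, and part (2) will then follow from a local conjugation by $(z-w)^{\rho^\vee}$ that reduces the analysis to residues at a simple pole.

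For the first assertion of (1), one simply unpacks definitions. Writing $\mc{L}_w=\partial_x+f+\sum_i b^i_w$ with $b^i_w\in\mf{g}^i\otimes\bb{C}((x))$, both $\partial_x$ and $f\in\mf{g}^{-1}$ have total degree $-1$, while a monomial in $\mf{g}^i\otimes\bb{C}((x))$ lies in $\mf{g}^{\geq -1}((x))$ iff it is supported on $x^k$ with $k\geq -(i+1)$, i.e., iff $(z-w)^{i+1}b^i(z)$ is regular at $w$; this is the Fuchsian criterion of Definition \ref{def:fuchsian}.

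For the second assertion of (1), I will revisit the inductive construction of $Y=N_{h^\vee-1}\cdots N_1$ with $N_j=\exp y^j$ in the proof of Proposition \ref{lem:gaugemiura} and track pole orders at $w$. Let $\nu_w(\cdot)$ denote the pole order at $w$. The inductive hypothesis at step $j$ is that $\mc{L}_{j-1}$ is Fuchsian at $w$ and $\nu_w(y^l)\leq l$ for all $l<j$. Since $y^j\in\mf{g}^j(K_D)$ is determined by a linear equation of the form $[y^j,f]=-\Pi_f(\bar{c}^{j-1})$, where $\bar{c}^{j-1}$ is the $\mf{g}^{j-1}$-component of $\mc{L}_{j-1}$ and hence $\nu_w(\bar{c}^{j-1})\leq j$ by the Fuchsian hypothesis, and since $[\cdot,f]$ restricts to a linear isomorphism $\mf{g}^j\to [f,\mf{n}_+]^{j-1}$, we conclude $\nu_w(y^j)\leq j$. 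Equivalently, $(y^j)_w\in\mf{g}^j\otimes x^{-j}\bb{C}[[x]]$ has total degree $\geq 0$, so $(N_j)_w\in\mc{N}_{loc}^{\geq 0}$. The operator $\mc{L}_j=N_j.\mc{L}_{j-1}$ then remains Fuchsian at $w$ by the first assertion together with Dynkin's formula \eqref{dynkinformula}: the adjoint action of $\mc{N}_{loc}^{\geq 0}$, and its action on $\partial_x$, preserve the subspace $\mf{g}^{\geq -1}((x))$. Multiplying the $N_{j,w}$ yields $Y_w\in\mc{N}_{loc}^{\geq 0}$.

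For part (2), I will reduce to a computation of residues via the conjugation $\phi^{\rho^\vee}$ with $\phi=z-w$. Recall from \eqref{zwaction} that $\phi^{\rho^\vee}.\mc{L}$ has a simple pole at $w$ whose residue equals the principal coefficient $f-\rho^\vee+\bar{b}$. Given any $N=\exp y\in\mc{N}(K_D)$ with $N_w\in\mc{N}_{loc}^{\geq 0}$, the identity
\begin{equation*}
\phi^{\rho^\vee}.(N.\mc{L})=\widetilde{N}.(\phi^{\rho^\vee}.\mc{L}),\qquad \widetilde{N}=\exp(\phi^{\ad\rho^\vee}y),
\end{equation*}
holds; writing $y=\sum_i y^i$ with $y^i\in\mf{g}^i(K_D)$ and $\nu_w(y^i)\leq i$ makes each $\phi^i y^i$ regular at $w$, so $\widetilde{N}$ is regular at $w$ with value there $\exp\bar{y}\in\mc{N}$, where $\bar{y}=\sum_i\lim_{z\to w}(z-w)^i y^i(z)$. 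The residue at $w$ of $\phi^{\rho^\vee}.(N.\mc{L})$ is therefore $\exp(\ad\bar{y})$ applied to the residue of $\phi^{\rho^\vee}.\mc{L}$, showing that the principal coefficient of $N.\mc{L}$ is $\mc{N}$-conjugate to that of $\mc{L}$. Applying this to the pairs $(Y,\mc{L})$ and $(\widehat{Y},\widehat{\mc{L}})$ coming from a common canonical form $\mc{L}_{\mf{s}}=Y.\mc{L}=\widehat{Y}.\widehat{\mc{L}}$, with $Y_w,\widehat{Y}_w\in\mc{N}_{loc}^{\geq 0}$ by part (1), shows that both principal coefficients are $\mc{N}$-conjugate to that of $\mc{L}_{\mf{s}}$, hence to each other. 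The main technical obstacle is the pole-order bookkeeping in (1)(ii): a direct count of poles through the iterated brackets in \eqref{eq:gaugeaction} would yield worse bounds, but the total-degree formalism makes the preservation of Fuchsian-ness automatic and is what allows part (2) to be a short residue computation.
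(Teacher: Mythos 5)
Your proof is correct. Parts (1)(i) and (2) follow essentially the paper's own route: the first is the same unpacking of Definition \ref{def:fuchsian} in terms of total degree, and the second is the same reduction to a common canonical form followed by conjugation with $(z-w)^{\ad\rho^\vee}$, observing that $x^{\ad\rho^\vee}Y_w$ is regular at $x=0$ so that only the constant term $\exp\bar{y}$ acts on the residue. Where you genuinely diverge is the second assertion of (1). The paper argues by contraposition: if $Y=\exp y\notin\mc{N}_{loc}^{\geq0}$, pick the term $y^i/x^{i+k}$ of $y$ of lowest total degree $-k<0$ and lowest principal degree $i$; then $[f,y^i/x^{i+k}]$ is the unique contribution to $Y.\overline{\mc{L}}$ of total degree $-k-1$ and principal degree $i-1$, so it cannot cancel, and $Y.\overline{\mc{L}}\notin\mf{g}^{\geq-1}((x))$. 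You instead re-run the inductive construction of $Y=N_{h^\vee-1}\cdots N_1$ from Proposition \ref{lem:gaugemiura} and track the pole order $\nu_w(y^j)\leq j$ step by step. Both work. The paper's argument is shorter, does not revisit the construction, and yields the slightly stronger statement that \emph{any} unipotent Gauge transformation between two operators that are both Fuchsian at $w$ localises to $\mc{N}_{loc}^{\geq0}$ (it does, however, invoke Corollary \ref{cor:normals} to know that $\mc{L}_{\mf{s}}$ is Fuchsian at $w$). Your induction is more constructive, gives explicit pole bounds, and establishes Fuchsian-ness of every intermediate operator along the way without appealing to Corollary \ref{cor:normals}; on the other hand it only applies to the specific $Y$ produced by Proposition \ref{lem:gaugemiura}, so you are implicitly using the freeness of the $\mc{N}(K_D)$-action to identify that $Y$ with the one in the statement --- it would be worth saying this explicitly.
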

\begin{proof}
(1) Let $\mc{L}=\partial_z+f+b\in\ope(K_D)$. By definition of Fuchsian singularity, then $z=w$ is Fuchsian if and only if
$$(\mc{L})_w=\partial_x+f+\sum_{i\geq0,m\geq0} \frac{b^i_m}{x^{i+1-m}},$$
for some $b^i_m\in\mf{g}^i$. But since each summand has total degree $\geq-1$, this is precisely the condition that $(\mc{L})_w\in\mf{g}^{\geq -1}((x))$.   Now let  $w$  be a Fuchsian singularity for $\mc{L}$, and let $\mc{L}_s=Y\mc{L}$ be the canonical form of $\mc{L}$, for some $Y\in\mc{N}(K_D)$. We want to prove that $Y_w\in\mc{N}_{loc}^{\geq0}$. Since $w$ is Fuchsian for $\mc{L}$ then $(\mc{L})_w\in \mf{g}^{\geq-1}((x))$, and due to Corollary \ref{cor:normals} $w$ is Fuchsian  also for $\mc{L}_{\mf{s}}$, so that $(\mc{L}_{\mf{s}})_w \in \mf{g}^{\geq-1}((x))$. Note that by construction we have  $Y_w\mc{L}_w=(Y\mc{L})_w=(\mc{L}_{\mf{s}})_w$, from which we infer that $Y_w\mc{L}_w\in \mf{g}^{\geq-1}((x))$.  We prove that $Y_w\in\mc{N}_{loc}^{\geq0}$ by showing that if $\overline{\mc{L}} \in \mf{g}^{\geq-1}((x))$,  and $Y\notin \mc{N}_{loc}^{\geq0} $, then  $Y \overline{\mc{L}} \notin \mf{g}^{\geq-1}$.  Indeed, let $Y=\exp{y}$ with $y\in\mf{n}_+(K_D)$. Since $Y\notin \mc{N}_{loc}^{\geq0}$, there exists a maximal $k>0$ such that the projection of $y$ into the subspace of total degree $-k$ is non zero. Let then $\frac{y^i}{x^{i+k}}$, with $0\neq y^i \in \mf{g}^i$ be the term of $y$ of total degree $-k$ and of lowest principal degree $i$.
 Then the projection of $\exp{y}.\overline{\mc{L}} $ onto the subspace of total degree $-k-1$ is non trivial, as $[f,\frac{y^i}{x^{i+k}}]\neq 0$ is the unique term in $\exp{y}.\overline{\mc{L}}$ with total degree $-k-1$ and principal degree $i-1$. Hence, $Y \overline{\mc{L}} \notin \mf{g}^{\geq-1}$.
 
 (2) It is enough to prove the statement when $\widehat{\mc{L}}=\mc{L}_{\mf{s}}$ is the canonical form of $\mc{L}$. Let $Y\in\mc{N}(K_D)$ be the Gauge transformation such that $\mc{L}_{\mf{s}}=Y\mc{L}$.  Since $w$ is Fuchsian, then due to part (1) we have that $Y_w \in \mc{N}_{loc}^{\geq0}$, and a direct calculation shows that
  $x^{\ad \rho^\vee}Y_w$ is regular at $x=0$. In other words, $x^{\ad\rho^\vee}Y_w=\exp{(\sum_{k \geq 0}y_k x^k)}$, for some $y_k \in \mf{n}^+$.
Again using the fact that $w$ is Fuchsian, we obtain (cf. equation \eqref{zwaction}) that $x^{\ad\rho^\vee}\mc{L}_w=\partial_x+\frac{a}{x}+O(1)$ and $x^{\ad\rho^\vee}(\mc{L}_{\mf{s}})_w=\partial_x+\frac{b}{x}+O(1)$, where $a,b\in\mf{g}$ are the principal coefficients of $\mc{L}$ and $\mc{L}_{\mf{s}}$ respectively. Since $(\mc{L}_{\mf{s}})_w=Y_w\mc{L}_w$ and   $x^{\ad \rho^\vee}Y_w$ is regular at $x=0$, we obtain the relation $b=\exp{y_0}.a$.
\end{proof}
We now introduce three important classes of operators in $\mf{g}^{\geq -1}((x))$.
\begin{definition}
We say that $\mc{L}$ is a $\mf{g}$-Bessel (or simply a Bessel) operator if
 \begin{equation}\label{def:gbessel}
  \mc{L}=\partial_x+f+\sum_{i=0}^{\hv-1}\sum_{l=0}^{i}\frac{X^i_l}{x^{i+1-l}}, \qquad X_l^i \in \mf{g}^{i}.
 \end{equation}
Given a transversal space $\mf{s}$, we say that $\mc{L}$ is a $\mf{s}$-Bessel operator if
 \begin{equation}\label{def:sbessel}
  \mc{L}=\partial_x+f+\sum_{i=1}^{n}\sum_{l=0}^{d_i}\frac{s^{d_i}_l}{x^{d_i+1-l}},\qquad s_l^{d_i} \in \mf{s}^{d_i}.
 \end{equation}
We denote by $V=V_{\mf{s}}$ the affine vector space of $\mf{s}-$Bessel operators. We say that $\mc{L}$ is a 1-Bessel operator if
 \begin{equation}\label{def:1bessel}
 \mc{L}=\partial_x+f+
 \frac{X}{x},\qquad X\in\mf{b}_+.
 \end{equation}
We denote by $U$ the affine vector space of $1$-Bessel operators.
\end{definition}
In the case $\mf{g}=\mf{sl}_2$, Bessel operators coincide with the operators of the Bessel differential equation. As shown below, the canonical form of every Bessel
operator is an $\mf{s}$-Bessel operator.
\begin{lemma}\label{lem:simpletokostant}
 Any Bessel operator \eqref{def:gbessel} is Gauge equivalent to an $\mf{s}$-Bessel operator \eqref{def:sbessel}. The corresponding Gauge transformation belongs to the finite dimensional subgroup $\overline{\mc{N}}_{loc}\subset \mc{N}_{loc}^{\geq0}$
 generated by elements in $\mf{n}_+((x))$ without regular terms 
 $$
 \overline{\mc{N}}_{loc}=\lbrace \exp{y}\,: \, y =\sum_{i=1}^{\hv-1}\sum_{j=1}^{i}\frac{y^i_j}{x^j}, \, y^i_j \in \mf{g}^i
 \rbrace \; .
 $$
 \end{lemma}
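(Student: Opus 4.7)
The plan is to mimic the inductive proof of Proposition \ref{lem:gaugemiura} in the localized setting of $\mf{g}'((x))$, while carefully tracking pole orders so that each factor of the resulting Gauge transformation lies in $\overline{\mc{N}}_{loc}$. Concretely, I will build the needed Gauge transformation as a finite product
\[
Y = \exp(y^{\hv-1})\cdots\exp(y^2)\exp(y^1),
\]
where each $y^{j+1}\in\mf{g}^{j+1}\otimes\bb{C}((x))$ has the restricted shape $y^{j+1}=\sum_{k=1}^{j+1}y^{j+1}_k/x^k$. The factor $\exp(y^{j+1})$ plays the same role as the corresponding factor in Proposition \ref{lem:gaugemiura}: it removes the $[f,\mf{g}^{j+1}]$-component of the principal-degree-$j$ part of the operator, replacing it by its projection onto $\mf{s}^j$.

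For the inductive step, suppose that after cleaning principal degrees $0,\dots,j-1$ the operator has the form $\partial_x+f+\sum_{i<j}s^i+\sum_{i\geq j}\bar{b}^i$, with $s^i=\sum_{l=0}^{i}s^i_l/x^{i+1-l}\in\mf{s}^i((x))$ and $\bar{b}^i=\sum_{l=0}^{i}\bar{X}^i_l/x^{i+1-l}\in\mf{g}^i((x))$ both of Bessel type. Applying Dynkin's formula \eqref{dynkinformula}, the only contributions to the principal-degree-$j$ component of $\exp(\ad y^{j+1})$ acting on this operator are $\bar{b}^j$ itself and $(\ad y^{j+1})(f)=\sum_{k=1}^{j+1}[y^{j+1}_k,f]/x^k$, because $(\ad y^{j+1})^m$ raises principal degree by $m(j+1)\geq j+1$ and the derivative contributions land in degree $j+1$ or higher. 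Using the splitting $\mf{g}^j=[f,\mf{g}^{j+1}]\oplus\mf{s}^j$ and $(\Ker\ad_f)|_{\mf{n}_+}=0$, I can uniquely solve
\[
[y^{j+1}_{j+1-l},f]=-\Pi_f(\bar{X}^j_l),\qquad l=0,\dots,j,
\]
so that the new principal-degree-$j$ component becomes $\sum_{l=0}^{j}\Pi_{\mf{s}}(\bar{X}^j_l)/x^{j+1-l}\in\mf{s}^j((x))$. Previously cleaned $s^i$ with $i<j$ are untouched, again by the degree-raising property of $\ad y^{j+1}$.

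The main technical obstacle is verifying that the Bessel form is preserved at every principal degree $>j$ after the step. A generic term produced by $(\ad y^{j+1})^m$ acting on an existing $\bar{X}^i_l/x^{i+1-l}$ has principal degree $i+m(j+1)$ and pole order at most $(i+1-l)+m(j+1)$; since $l\geq 0$ this is bounded by $(i+m(j+1))+1$, which is exactly the pole-order bound tolerated by the Bessel ansatz. A parallel estimate handles the derivative contributions $-\sum_{k\geq 0}\frac{1}{(k+1)!}(\ad y^{j+1})^k(dy^{j+1}/dx)$, whose lowest-order summand $-dy^{j+1}/dx$ contributes in principal degree $j+1$ with pole orders at most $j+2$. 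Iterating for $j=0,1,\dots,\hv-1$ produces an operator in $\mf{s}$-Bessel form (with zero components at non-exponent degrees, since $\mf{s}^i=0$ for such $i$), and the full product $Y$ belongs to the finite-dimensional nilpotent group $\overline{\mc{N}}_{loc}$ by construction.
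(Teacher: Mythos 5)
Your proof is correct. It follows the same basic strategy as the paper --- observe that the class of Bessel operators is stable under $\overline{\mc{N}}_{loc}$ and then run the Drinfeld--Sokolov-type induction of Proposition \ref{lem:gaugemiura}, using $\mf{g}^j=[f,\mf{g}^{j+1}]\oplus\mf{s}^j$ and the injectivity of $\ad_f$ on $\mf{n}_+$ --- but the factorization of the Gauge transformation is organized differently. You induct on the principal degree alone, taking each factor $\exp(y^{j+1})$ to be a full Laurent polynomial $\sum_{k=1}^{j+1}y^{j+1}_k/x^k$ and solving for all pole orders of a given principal degree simultaneously; the paper instead factors $Y=Y_{\hv-2}\cdots Y_0$ with the outer index running over the \emph{total} degree and each elementary factor a single monomial $y^i_j/x^{i-j}$. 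Your version is arguably cleaner for the statement at hand, and your explicit pole-order estimates supply the ``simple computation'' that the paper only asserts (stability of the Bessel class under the group action). What the paper's total-degree-first factorization buys is not visible in this Lemma but matters immediately afterwards: it is what makes the triangular structure of the map $\Phi=\oplus_i\Phi_i$ (each $\Phi_i$ depending only on data of total degree $\leq i-1$) transparent in the subsequent lemmas. The only blemishes in your write-up are cosmetic: the iteration should stop at $j=\hv-2$ (consistently with your product ending at $\exp(y^{\hv-1})$; at top principal degree $\mf{g}^{\hv-1}=\mf{s}^{\hv-1}$ and nothing needs cleaning), and one should note, as you implicitly do, that no regular terms are ever created since every $\ad(y^{j+1}_k/x^k)$ strictly increases the pole order.
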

 \begin{proof}
 A simple computation shows that the set of Bessel operators is invariant under the action of the group $\overline{\mc{N}}_{loc}$. We can then prove the Lemma using the same steps as in the proof of Proposition \ref{lem:gaugemiura}.  We factorize $Y\in\overline{\mc{N}}_{loc}$ as follows: $Y=Y_{\hv-2} \dots Y_1 Y_0$ where $Y_j= Y^{\hv-1}_j \dots Y^{j+1}_j$
 and $Y^i_j=\exp{\frac{y^{i}_j}{x^{i-j}}}$ for some $y^i_j \in \mf{g}^i$. The transformation
 $Y^i_j=\exp{\frac{y^{i}_j}{x^{i-j}}}$ is then defined by recursively imposing that,
  after its application, the terms of the resulting operator with  total degree $\leq j-1$ and principal degree $\leq i-1$ are in canonical form.
 \end{proof}
Due to the previous lemma, we have a well-defined map from Bessel operators
to $\mf{s}$-Bessel operators. We are interested in the restriction of this map to the class of 1-Bessel operators.  Note that once further restricted to the class of 1-Bessel operators of the form $\partial_x+f+X^0/x$, with $X^0\in\mf{g}^0=\mf{h},$ then this map should be thought as a local version, at a regular singular point, of the so-called 'Miura map'. Bessel operators will play a prominent role later in this section, to obtain a normal form for Quantum $\mf{g}-$KdV opers.

The space $U$ of 1-Bessel operators \eqref{def:1bessel} can be described by means of the graded affine space 
$$U=\bigoplus_{i=0}^{\hv-1} U_i,$$ 
where
$$U_0=\lbrace \partial_x+f+x^{-1} X^0\,|\, X^0 \in \mf{h} \rbrace,\qquad U_i=x^{-1}\mf{g}^i,\quad i>0.$$
Note that $\tdeg U_i=i-1$. In the sequel we will often identify $U_i$ with $\mf{g}^i$ and 
$\oplus_{i\geq 1}U_i$ with $\mf{n}^+$.  Similarly, the space $V$ of $\mf{s}$-Bessel operators \eqref{def:sbessel} can be written as
$$V=\bigoplus_{i=0}^{\hv-1} V_{i},$$ 
where
$$V_{0}=\lbrace \partial_x+f+\sum_{i=1}^n \frac{s^{d_i}}{x^{d_i+1}} \,|\, s^{d_i} \in \mf{s}^{d_i} \rbrace ,\qquad V_{i}=  \lbrace \sum_{d_j\geq i}\frac{s^{d_j}}{x^{d_j+1-i}}\,|\,  s^{d_j} \in \mf{s}^{d_j}\rbrace, \quad i>0.$$
Note that $\tdeg{V_{i}}=i-1$. 
\begin{lemma}
The space $U$ of 1-Bessel operators and the space $V$ of $\mf{s}$-Bessel operators have the same dimension. More precisely,  $\dim U=\dim V=\dim \mf{b}_+=(\frac{\hv}{2}+1)n$.
\end{lemma}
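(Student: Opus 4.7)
The plan is to compute the two dimensions separately and verify they both equal $(\tfrac{h^\vee}{2}+1)n$, using only the classical identities recalled in Section \ref{sec:algebras}.

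First I would handle $U$. By definition $U$ is an affine space modelled on $\mf{h}\oplus\bigoplus_{i\geq 1}\mf{g}^i = \mf{h}\oplus\mf{n}_+ = \mf{b}_+$, so $\dim U = \dim\mf{b}_+$ is immediate. Then I recall the standard formula $\dim\mf{g}= n(h^\vee+1)$ valid for any simple Lie algebra (here using $h=h^\vee$ since $\mf{g}$ is simply laced), which gives $\dim\mf{b}_+ = \tfrac{1}{2}(\dim\mf{g}+n) = \tfrac{n}{2}(h^\vee+2) = (\tfrac{h^\vee}{2}+1)n$.

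Next I would compute $\dim V$ by exploiting the grading $V=\bigoplus_{i=0}^{h^\vee-1}V_i$. By definition $\dim V_0 = \sum_{j=1}^n \dim\mf{s}^{d_j}$ and, for $i\geq 1$, $\dim V_i = \sum_{d_j\geq i}\dim\mf{s}^{d_j}$. Swapping the order of summation,
\begin{equation*}
\dim V \;=\; \sum_{i=0}^{h^\vee-1}\sum_{d_j\geq i}\dim\mf{s}^{d_j} \;=\; \sum_{j}(d_j+1)\dim\mf{s}^{d_j},
\end{equation*}
where in the second step I count, for each exponent $d_j$, the number of indices $i\in\{0,\dots,h^\vee-1\}$ with $d_j\geq i$, which is $d_j+1$.

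Finally I use the two identities $\sum_j \dim\mf{s}^{d_j} = \dim\mf{s} = n$ and $\sum_j d_j\dim\mf{s}^{d_j} = \dim\mf{n}_+ = \tfrac{n h^\vee}{2}$. The first is just $\rank\mf{g}=n$. The second follows from $\dim\mf{s}^i = \dim\mf{g}^i - \dim\mf{g}^{i+1}$ (established in Section \ref{sec:algebras}) via Abel summation: $\sum_{i\geq 0} i(\dim\mf{g}^i-\dim\mf{g}^{i+1}) = \sum_{i\geq 1}\dim\mf{g}^i = \dim\mf{n}_+$, combined with the classical fact $|\Delta_+|=\tfrac{n h^\vee}{2}$. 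Plugging these in yields $\dim V = \tfrac{n h^\vee}{2}+n = (\tfrac{h^\vee}{2}+1)n$, matching $\dim U$. No step is really an obstacle: the lemma is purely a dimension count, and the only nontrivial input is the classical sum-of-exponents identity, which is standard for simple Lie algebras.
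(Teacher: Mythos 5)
Your proof is correct, but it takes a different route from the paper's. You compute the two total dimensions separately: $\dim U=\dim\mf{b}_+$ is immediate, and $\dim V$ is obtained by swapping the order of summation and invoking the classical identity $\sum_j d_j=|\Delta_+|=\tfrac{nh^\vee}{2}$ for the sum of the exponents. The paper instead argues by induction on the grading, showing $\dim U_i=\dim V_i$ for every $i$ via the single relation $\dim V_i-\dim V_{i+1}=\dim\mf{s}^i=\dim\mf{g}^i-\dim\mf{g}^{i+1}=\dim U_i-\dim U_{i+1}$, and then sums up. Both arguments rest on the same fact about transversal spaces, but the paper's version delivers the componentwise statement $\dim U_i=\dim V_i$, which is what is actually used downstream (e.g.\ so that the linear maps $A_i^{X^0}:\mf{g}^i\to V_i$ of Lemma \ref{lem:Phidecomposition} are square and injectivity is equivalent to surjectivity in Proposition \ref{cor:surjinj}). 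Your computation contains this refinement implicitly --- your count $\dim V_i=\#\{j:d_j\geq i\}$ equals $\dim\mf{g}^i=\dim U_i$ by the same telescoping --- but you only record the totals; if you intend to use the lemma as the paper does, it is worth making the graded equality explicit. One small notational caution: since the exponents $d_1,\dots,d_n$ are listed with multiplicity, the expression $\sum_j(d_j+1)\dim\mf{s}^{d_j}$ should be read with $j$ ranging over \emph{distinct} exponents (or equivalently as $\sum_{j=1}^n(d_j+1)$ with multiplicity), otherwise repeated exponents are double counted.
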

\begin{proof} It is clear that $U\simeq \mf{b}^+$, so in particular $\dim U=\dim\mf{b}_+=(\frac{\hv}{2}+1)n$. We prove by induction on $i$ that $U_i$ and $V_i$ have the same dimension. First, $U_0\cong\mf{h}$ and $V_0\cong \mf{s}$, so $\dim U_0=\dim \mf{h}=n=\dim\mf{s}=\dim V_0$. Then, it is clear from the definition of $V^i$ that $\dim V^{i}=\dim V^{i+1}+\dim \mf{s}^i$, where $\mf{s}^i=\mf{s}\cap \mf{g}^i$, and by definition of transversal space we have $\dim\mf{s}^i=\dim \mf{g}^{i}-\dim \mf{g}^{i+1}$. Since $U^i\simeq \mf{g}^i$, then we have $\dim V^{i}-\dim V^{i+1}=\dim \mf{s}^i=\dim \mf{g}^{i}-\dim\mf{g}^{i+1}=\dim U^{i}-\dim U^{i+1}$. Hence $\dim U^{i}=\dim V^i$ implies $\dim U^{i+1}=\dim V^{i+1}$.
\end{proof}

\begin{definition}
 We denote as
$\Phi: U  \to V$ be the map that associates to any $1-$Bessel operator its canonical form. 
We define $\Phi_i$ as the projection of $\Phi$ onto $V_i$, so that the decomposition $\Phi=\oplus_i \Phi_i$ holds true.
\end{definition}
\begin{remark}
Let $\mc{L}=\partial_x+f+x^{-1}X\in U$ be a $1-$Bessel operator, with $X\in\mf{b}_+$.
Let $X=\sum_{i\geq 0}X^i$ be the decomposition of $X$ according to the principal gradation,
with $X^i\in\mf{g}^i$. By abuse of notation, we write  $\Phi(X^0,\dots,X^{h^\vee-1})$ to denote $\Phi(\mc{L})$.
\end{remark}
After Lemma \ref{lem:simpletokostant}, the Gauge transformation $N$ mapping a $1-$Bessel operator to its canonical form
belongs to $\overline{\mc{N}}_{loc}$. In particular, $Y=\exp y$ with $y\in\mf{g}^{\geq0}$, that is a linear combination of terms of non-negative
total degree.
It follows form this that, for each $i$, the map $\Phi_i$ depends only on $\oplus_{j\leq i}U_i$. More precisely, we have
\begin{lemma} Let $\mc{L}=\partial_x+f+x^{-1}X\in U$, with $X\in\mf{b}_+$, be a 1-Bessel operator, and let $X=\sum_{i\geq0}X^i$, with $X^i\in\mf{g}^i$. The map
 $\Phi:U\to V$ which associates to $\mc{L}$ its canonical form $\mc{L}_{\mf{s}}$ admits the triangular decomposition:
\begin{equation}\label{eq:Phidecomposition}
 \Phi(X^0,\dots,X^{\hv-1})=\sum_{i=0}^{\hv-1} \Phi_{i}(X^0,\dots,X^i), \qquad \Phi_i: \bigoplus_{j\leq i}U_j \to V_i.
\end{equation}
In other words, the terms of total degree $i-1$ in $\mc{L}_{\mf{s}}$ depend on the terms of total degree $\leq i-1$ of $\mc{L}$ only.
\end{lemma}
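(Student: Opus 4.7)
The plan is to exploit the filtration by total degree on $\mf{g}'((x))$ together with the explicit inductive Gauge construction from Lemma \ref{lem:simpletokostant}. Set $F_d:=\{T\in \mf{g}'((x)) : \tdeg T\geq d\}$; since $\tdeg\partial_x=-1$, the bracket of the extended Lie algebra satisfies $[F_a,F_b]\subseteq F_{a+b}$. A $1$-Bessel operator $\mc{L}=\partial_x+f+x^{-1}X$ belongs to $F_{-1}$, and its total-degree $(i-1)$ homogeneous component is exactly $x^{-1}X^i\in U_i$; similarly, by definition, every element of $V_i$ is of total degree $i-1$.

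The key observation is that if $y\in\mf{n}_+((x))$ is homogeneous of total degree $k\geq 0$, then $\exp(\ad y)\cdot \mc{L}-\mc{L}\in F_{k-1}$. Indeed, $[y,\mc{L}]\in F_{k-1}$ because $\mc{L}\in F_{-1}$, while the higher iterates $(\ad y)^n\mc{L}$ lie in $F_{nk-1}\subseteq F_{k-1}$ for $n\geq 1$ and $k\geq 0$. In particular, such a Gauge transformation acts as the identity on every homogeneous component of total degree strictly less than $k-1$.

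Recall from Lemma \ref{lem:simpletokostant} that the Gauge transformation bringing $\mc{L}$ into its canonical form factorizes as $Y=Y_{\hv-2}\cdots Y_1 Y_0$, where each $Y_j$ is a composition of exponentials $\exp(y^i_j/x^{i-j})$, $j+1\leq i\leq \hv-1$, and all generators in $Y_j$ have total degree exactly $j$. Combined with the key observation, this shows that the partial product $\mc{L}_{k+1}:=Y_k\cdots Y_0\cdot \mc{L}$ already agrees with the canonical form $\mc{L}_{\mf{s}}=\Phi(\mc{L})$ on every component of total degree $\leq k-1$, because the remaining factors $Y_{j'}$ with $j'>k$ are identity on $F_{\leq k-1}$.

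It remains to run the induction: the unknown $y^i_j$ is fixed by cancelling a specific projection (to principal degree $i-1$) of a total-degree $j-1$ contribution to $\mc{L}_j$, and by the filtration property this contribution already depends only on the total-degree $\leq j-1$ part of $\mc{L}_0=\mc{L}$, i.e.\ only on $X^0,\dots,X^j$. Applying this with $j$ ranging up to $i$, the component $\Phi_i(\mc{L})\in V_i$, of total degree $i-1$, is completely determined by $Y_0,\dots,Y_i$, hence by $X^0,\dots,X^i$, giving the triangular decomposition \eqref{eq:Phidecomposition}. The one technical point to check is that in each recursive step of Lemma \ref{lem:simpletokostant} the equation defining $y^i_j$ involves only the total-degree $\leq j-1$ part of the current operator; this follows because the projection to principal degree $i-1$ in the Dynkin expansion respects the total-degree grading, so no information of total degree $\geq j$ can feed back.
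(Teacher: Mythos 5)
Your proof is correct and follows essentially the same route as the paper: both arguments rest on the factorization $Y=Y_{\hv-2}\cdots Y_0$ from Lemma \ref{lem:simpletokostant} together with the observation that a Gauge factor generated in total degree $j$ only affects components of total degree $\geq j-1$, so that the total-degree $\leq i-1$ part of the canonical form is already fixed by $Y_i\cdots Y_0$ and hence by $X^0,\dots,X^i$. The paper phrases this as the equivalent statement that two operators agreeing up to total degree $i-1$ have canonical forms agreeing up to total degree $i-1$, but the underlying mechanism is identical.
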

\begin{proof}
We prove the following equivalent statement: if $\mc{L}=\mc{L}'$ up to terms of total order $\leq i-1$, then the canonical form of $\mc{L}$ coincides up to terms of total order $\leq i-1$ with the canonical form of $\mc{L}'$.
 
 Let $\mc{L}$ be a $1$-Bessel operator and $Y=Y_{\hv-2}...Y_0$ be the corresponding Gauge transformation -- factorized as in the proof of Lemma \ref{lem:simpletokostant} -- mapping $\mc{L}$ to its canonical form
 $\mc{L}_{\mf{s}}$.
 If $\mc{L}'$ coincides with $\mc{L}$ for terms of total degree $\leq i-1$, then the terms of total degree $\leq i-1$ of  $Y_i \dots Y_0 \mc{L}'$ and $Y_i \dots Y_0 \mc{L} $ are also the same. By construction,  $Y_i \dots Y_0 \mc{L} $ coincides with  its canonical form $\mc{L}_{\mf{s}}$ up to terms of total degree $\leq i-1$.
 It follows that $Y_i \dots Y_0 \mc{L}'$ is in canonical form except that for terms of total degree $\geq i$. Hence,
 the transformation $Y'$ mapping $\mc{L}'$ to its canonical form can be factorised as $Y'=Y'_{\hv-2}\dots Y'_{i+1}Y_i \dots Y_0$ where $Y'_j=\exp y_j$, with  $j \geq i+1$, and $\tdeg y_j=j$. It follows that
 $Y_i \dots Y_0 \mc{L}'$ coincides with the canonical form of $\mc{L}'$ up to terms of total degree $\leq i-1$.
\end{proof}

\subsection{The maps $\mathbf{\Phi_i, i\geq0}$.}
In order to study the properties (in particular the surjectivity) of the map $\Phi$,
we first study the map $\Phi_0$, and then the maps $\Phi_i, i>1$.
We begin with the following
\begin{lemma}\label{lem:rho+s}
Let $f+\mf{s}$ a transversal space and $h \in \mf{h}$ an element of the Cartan subalgebra.
Then $f+h+\mf{s}$ is a transversal space too.
\begin{proof}
Since $\mf{s}$ is transversal, every $b\in\mf{b}_+$ can be written as $b=[f,m]+s$, for some  $m\in\mf{n}_+$ and $s\in\mf{s}$.
We prove that there exists an $y \in \mf{n}^+$ such that  $b=[f,y]+h+s$. Since $\ad f_{|\mf{g}^1}: \mf{g}^1 \to \mf{h}$ is invertible,
we denote by $g\in \mf{g}^1$ the unique element such that
$[f,g]=h$. Setting $y=m-g$, then $b=[f,y]+h+s$.
\end{proof}

\end{lemma}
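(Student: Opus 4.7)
The plan is to show that the affine subspace $(f+h)+\mf{s}$ enjoys the same defining property as $f+\mf{s}$, namely that the map
\[
F: [f,\mf{n}_+] \times \mf{s} \to \mf{b}_+, \qquad (u,s) \mapsto u + h + s,
\]
is a bijection. This is the natural reformulation of the direct-sum decomposition $\mf{b}_+=[f,\mf{n}_+]\oplus\mf{s}$ when the affine piece is shifted by $h$, and once it is established all the properties of a transversal space (cross-section of regular orbits, identification $\mc{N}\times(f+h+\mf{s})\cong f+\mf{b}_+$ via the adjoint action) transfer formally from the known transversality of $\mf{s}$.

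First I would prove surjectivity of $F$. Given $b\in\mf{b}_+$, note that $b-h\in\mf{b}_+$, so by transversality of $\mf{s}$ there exist $m\in\mf{n}_+$ and $s\in\mf{s}$ with $b-h=[f,m]+s$, hence $b=[f,m]+h+s$. For a more explicit witness (matching the recipe the authors seem to have in mind), I would start instead with the decomposition $b=[f,m]+s$ and use that $\ad f$ restricts to a linear isomorphism $\mf{g}^1 \to \mf{h}$ (since $f$ is principal nilpotent and $\mf{h}=\mf{g}^0$, while $\Ker\ad_f\cap\mf{n}_+=0$). Then there is a unique $g\in\mf{g}^1$ with $[f,g]=h$; setting $y=m-g\in\mf{n}_+$ yields $[f,y]+h+s=[f,m]-h+h+s=b$.

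For injectivity, suppose $[f,y_1]+h+s_1=[f,y_2]+h+s_2$ with $y_i\in\mf{n}_+$ and $s_i\in\mf{s}$. Cancelling $h$ gives $[f,y_1-y_2]=s_2-s_1\in[f,\mf{n}_+]\cap\mf{s}=\{0\}$ by transversality of $\mf{s}$, so $s_1=s_2$, and $[f,y_1-y_2]=0$ together with $\Ker\ad_f\cap\mf{n}_+=0$ forces $y_1=y_2$. Combining the two parts shows that $F$ is a bijection, which is exactly what it means for $f+h+\mf{s}$ to be a transversal space.

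There is essentially no obstacle: the whole content is that translating an affine transversal by a Cartan element preserves transversality, because the ambient splitting $\mf{b}_+=[f,\mf{n}_+]\oplus\mf{s}$ is invariant under translation by any element of $\mf{b}_+$. The only mildly nontrivial input, used to make the construction explicit rather than via the translation trick, is the bijectivity of $\ad f\colon\mf{g}^1\to\mf{h}$, which is standard for a principal nilpotent in a simple Lie algebra.
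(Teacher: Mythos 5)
Your proof is correct and follows essentially the same route as the paper's: both produce the unique $g\in\mf{g}^1$ with $[f,g]=h$ (using that $\ad f\colon\mf{g}^1\to\mf{h}$ is an isomorphism) and shift $m$ by $g$ to get the decomposition $b=[f,m-g]+h+s$. The only difference is that you also spell out the uniqueness/directness of the decomposition and give the one-line alternative of applying transversality to $b-h$, both of which the paper leaves implicit.
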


We next prove that the map
$\Phi_0:U_0 \cong \mf{h} \to V_0\cong\mf{s}$ is surjective and
it is invariant under the 
shifted Weyl group $\mc{W}$ action on $\mf{h}$. The latter is defined as
\begin{equation}\label{eq:Weyldot}
\sigma\cdot h= \sigma(h-\rho^\vee)+\rho^\vee,
\end{equation}
for $\sigma \in \mc{W}$ and $h \in \mf{h}$ \footnote{The reader should confront the map
$\Phi_0$ with the map $res$ considered in \cite[3.8.11-3.8.13]{bedr02}, and the commutative diagram
\cite[(3.3)]{frenkel05gau}}.
\begin{lemma}\label{lem:Phi0}
Let $\mc{L}_0=\partial_x+f+ x^{-1}X^0$ be a 1-Bessel operator, with $X^0\in\mf{h}$, and fix a trasversal subspace $\mf{s}$. Let $(s,y_0)\in  \mf{s}\times \mf{n}_+$ be the unique pair of elements such that $\exp(y_0).(f-\rho^\vee+X^0)=f-\rho^\vee+s$.
\begin{enumerate}
 \item The canonical form of $\mc{L}_0$ is 
 $$\mc{L}_{\mf{s}}=\partial_x+f+\sum_{i=1}^n\frac{s^{d_i}}{x^{d_i+1}},$$
where $s^{d_i}\in\mf{s}^{d_i}$  is the restriction of $s$ to $\mf{s}^{d_i}$.
\item  Let $y_0=\sum_i y_0^i$, with  $y_0^i\in \mf{g}^i$, and let  $Y_0=\exp\sum_{i}\frac{y_0^i}{x^i}\in \overline{\mc{N}}_{loc}$. Then $\mc{L}_{\mf{s}}=Y_0.\mc{L}_0$. In particular, $\tdeg \sum_{i}\frac{y_0^i}{x^i}=0$.
\item The map 
$$\Phi_0: U_0 \cong \mf{h} \to V_0,\qquad  \Phi_0(X^0)=s$$ 
is surjective.
\item $\Phi_0(h)=\Phi_0(h') $ if and only if there exists  $\sigma \in \mc{W}$ such that $\sigma\cdot h=h'$.
\end{enumerate}
\end{lemma}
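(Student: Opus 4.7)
The strategy splits the lemma into two blocks: parts (1) and (2) follow from a direct gauge-theoretic computation, while (3) and (4) rest on a geometric argument combining Kostant's transversality (applied to $f - \rho^\vee + \mf{s}$, which is transversal by Lemma \ref{lem:rho+s}) with Chevalley's theorem and a simple rescaling trick.

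For (1) and (2), I would first apply $x^{\ad \rho^\vee}$ to $\mc{L}_0 = \partial_x + f + X^0/x$. Since $f \in \mf{g}^{-1}$ and $X^0 \in \mf{g}^0$, one obtains $\partial_x + x^{-1}(f - \rho^\vee + X^0)$. Next, applying the \emph{constant} Gauge transformation $\exp(y_0) \in \mc{N}$, which by Dynkin's formula acts trivially on $\partial_x$, produces $\partial_x + x^{-1}(f - \rho^\vee + s)$. Finally, undoing by $x^{-\ad \rho^\vee}$ rescales each $s^{d_i} \in \mf{g}^{d_i}$ to $x^{-d_i} s^{d_i}$ and yields the announced canonical form. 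The total Gauge transformation is
\[
Y_0 = x^{-\ad \rho^\vee}\, \exp(y_0)\, x^{\ad \rho^\vee} = \exp\!\Bigl(\sum_{i \geq 1} \frac{y_0^i}{x^i}\Bigr),
\]
and each summand $y_0^i/x^i$ has total degree $i - i = 0$, proving (2) and exhibiting $Y_0 \in \overline{\mc{N}}_{loc}$.

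The engine driving (3) and (4) is the identity $\chi(f + h) = \chi(h)$ for every $h \in \mf{h}$, where $\chi : \mf{g} \to \mf{h}/\mc{W}$ denotes Chevalley's adjoint quotient. This follows from the rescaling $t^{\ad \rho^\vee}(t f + h) = f + h$, valid for every $t \neq 0$, which forces the polynomial map $t \mapsto \chi(t f + h)$ to be constantly equal to $\chi(f + h)$; setting $t = 0$ then gives $\chi(h)$. The second structural ingredient is the equivalence: two elements of $f + \mf{b}_+$ are $\mc{N}$-conjugate if and only if they are $\mc{G}$-conjugate. Indeed, every element of $f + \mf{b}_+$ is regular (as an $\mc{N}$-translate of the Kostant slice $f + \mf{s}$), so its $\mc{G}$-orbit meets the transversal $f - \rho^\vee + \mf{s}$ (Lemma \ref{lem:rho+s}) in a unique point by Kostant; its $\mc{N}$-orbit, which also meets this transversal in a unique point by the same lemma, must land at the very same one.

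Assembling these tools, (3) follows by picking, given $s \in \mf{s}$, an $h \in \mf{h}$ with $\chi(h) = \chi(f - \rho^\vee + s)$ (possible since $\chi|_{\mf{h}} : \mf{h} \to \mf{h}/\mc{W}$ is surjective) and setting $X^0 = h + \rho^\vee$: then $\chi(f - \rho^\vee + X^0) = \chi(f + h) = \chi(h) = \chi(f - \rho^\vee + s)$, so $f - \rho^\vee + X^0$ and $f - \rho^\vee + s$ are $\mc{G}$-conjugate, therefore $\mc{N}$-conjugate, yielding $\Phi_0(X^0) = s$. For (4), the same chain of equivalences unwinds as: $\Phi_0(X^0) = \Phi_0((X^0)')$ iff $f - \rho^\vee + X^0$ and $f - \rho^\vee + (X^0)'$ are $\mc{N}$-conjugate iff they are $\mc{G}$-conjugate iff $\chi(X^0 - \rho^\vee) = \chi((X^0)' - \rho^\vee)$ iff $X^0 - \rho^\vee$ and $(X^0)' - \rho^\vee$ lie in a common $\mc{W}$-orbit iff $(X^0)' = \sigma \cdot X^0$ for some $\sigma \in \mc{W}$. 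The one delicate step is the $\mc{N}$-versus-$\mc{G}$ conjugacy equivalence inside $f + \mf{b}_+$, and that is precisely where the transversality of Lemma \ref{lem:rho+s} does all the work.
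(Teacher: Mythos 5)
Your proof is correct, and for parts (1) and (2) it is essentially the computation in the paper: pass to $\partial_x+x^{-1}(f-\rho^\vee+X^0)$ via $x^{\ad\rho^\vee}$, apply the constant transformation $\exp(y_0)$ supplied by Kostant's isomorphism $\mc{N}\times(f-\rho^\vee+\mf{s})\to f+\mf{b}_+$, and undo the conjugation to land on $Y_0=\exp(\sum_i y_0^i/x^i)$. The only difference worth noting is in parts (3) and (4): the paper simply cites \cite{ko63} for the two facts that every regular $\mc{G}$-orbit meets $f+\mf{h}$ and that $f+l\sim_{\mc{G}} f+l'$ iff $l\sim_{\mc{W}} l'$, whereas you rederive them from the Chevalley adjoint quotient $\chi$ together with the rescaling identity $t^{\ad\rho^\vee}(tf+h)=f+h$, which forces $\chi(f+h)=\chi(h)$. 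Your version also makes explicit a step the paper leaves implicit, namely that $\mc{N}$-conjugacy and $\mc{G}$-conjugacy coincide on $f+\mf{b}_+$ because both the $\mc{N}$-orbit and the $\mc{G}$-orbit of any such (regular) element meet the transversal $f-\rho^\vee+\mf{s}$ of Lemma \ref{lem:rho+s} in a single common point. What your route buys is self-containedness (the Kostant black box is opened, modulo the standard fact that each fiber of $\chi$ contains a unique regular orbit, which you do still use tacitly when passing from $\chi(a)=\chi(b)$ to $\mc{G}$-conjugacy); what the paper's route buys is brevity. Both arguments ultimately rest on the same Kostant-theoretic input.
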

\begin{proof}
A proof of (1) and (2) is already contained in the proof of Proposition \ref{prop:LGs}.
We give here another proof, more algebraic in nature. Fix $X^0 \in \mf{h}$ and consider the operator $\mc{L}=\partial_x+f+x^{-1}X^0$.
Since $f-\rho^\vee+\mf{s}$ is a transversal space then the map $\mc{N} \times\lbrace f-\rho^\vee+\mf{s} \rbrace 
\to f+\mf{b}^+ $, $(Y,s) \mapsto Y.(f-\rho^\vee+s) $ is an isomorphism of affine varieties
\cite{ko63}; in particular given $X^0$ there exists a unique  pair $(s,y_0) \in \mf{s} \times \mf{n}^+ $ such that $\exp{y_0}.(f-\rho^\vee+X^0)=f-\rho^\vee+s$.
Hence, 
$$\exp{y_0}.\big(\partial_x+ \frac{f-\rho^\vee+X^0}{x}\big)=\partial_x+ \frac{f-\rho^\vee+s}{x}.$$ 
From this,  it follows that the Gauge $Y_0=x^{-\ad\rho^\vee}\exp{y_0}=\exp{\sum_i \frac{y_0^i}{x^i}}$  maps $\partial_x+f+ x^{-1}X^0$ to
$\partial_x+f+\sum_{i} x^{-d_i-1}s^{d_i}$,
where $s^{d_i}$ is the projection of $s$ onto $\mf{s}^{d_i}$.

(3,4) Recall that a transversal space is in bijection with the regular $G$ orbits.
Hence (3) $\Phi_0$ is surjective if and only if every regular $G$ orbit intersects the affine space $f+\mf{h}$,
and (4)  $\Phi_0(h)=\Phi_0(h')$ if and only if $f-\rho^\vee +h$ and $f-\rho^\vee +h'$ belong to the same $G$ orbit. It is proved in \cite{ko63} that every regular $G$ intersects $f+\mf{h}$ and two elements $f+l,f+l'$, $l,l'\in \mf{h}$
belong to the same $G$ orbit if and only if $l$ and $l'$ belong to the same  $W$ orbit.
\end{proof}

We now turn our attention to the maps $\Phi_i$, $i\geq 1$. For each $i\geq1$, the map $\Phi_i$ is an affine function with respect to the variable $X^i$.
\begin{lemma}\label{lem:Phidecomposition}
 For $i\geq1$, the map $\Phi_i: \bigoplus_{j\leq i}U_i \to V_i$ has the following structure
 \beq\label{phidecomposition}
 \Phi_i(X^0,\dots,X^i)=A^{X_0}_i[X^i]+P_i(X^0,\dots,X^{i-1})
 \eeq
 where $A_i: \mf{h} \oplus \mf{g}^i \to V_i$ such that  $(X^0,X^i) \mapsto A_i^{X^0}[X^i]$ is linear in $X^i$.
 \end{lemma}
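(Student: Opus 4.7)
The plan is to analyze the Gauge transformation $Y = \exp y$, $y \in \mf{n}_+((x))$, that brings the $1$-Bessel operator $\mc{L} = \partial_x + f + X/x$, $X = \sum_{j \geq 0} X^j$ with $X^j \in \mf{g}^j$, to its canonical form $\mc{L}_{\mf{s}}$, using the total-degree grading on $\mf{g}^{\geq -1}((x))$. First I will decompose $y = \sum_{k \geq 0} y^{(k)}$ where $y^{(k)}$ has total degree $k$, and similarly decompose the potential of $\mc{L}$: its total-degree $-1$ part is $f + X^0/x$, and its total-degree $k$ part is $X^{k+1}/x$ for $k \geq 0$. Since $\ad y^{(k)}$ raises total degree by $k$ and $dy^{(k)}/dx$ has total degree $k - 1$, projecting the Gauge equation $Y.\mc{L} = \mc{L}_{\mf{s}}$ onto each total-degree component yields a cascade of equations for the $y^{(k)}$.

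The total-degree $-1$ projection determines $y^{(0)}$ and coincides with the equation treated in Lemma \ref{lem:Phi0}, so $y^{(0)}$ depends only on $X^0$. For $\ell \geq 0$, the total-degree $\ell$ projection is linear in $y^{(\ell+1)}$ — since any nested commutator involving two copies of $y^{(\ell+1)}$ has total degree at least $2(\ell+1) - 1 > \ell$ and is thus excluded — and depends polynomially on $y^{(0)}, \ldots, y^{(\ell)}$ together with the potential components parametrized by $X^0, \ldots, X^{\ell+1}$. An easy induction on $k$ then shows that $y^{(k)}$ depends only on $X^0, \ldots, X^k$: the determining equation for $y^{(k)}$ is the total-degree $k - 1$ projection, whose inputs involve only $X^0, \ldots, X^k$ by the inductive hypothesis.

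The conclusion follows by isolating the $X^i$-linear part of the total-degree $i - 1$ projection. By the preceding claim, $y^{(0)}, \ldots, y^{(i-1)}$ are $X^i$-independent, and the only $X^i$-dependence among the potential components of total degree $\leq i - 1$ lives at total degree $i - 1$, namely in $X^i/x$. Consequently, the $X^i$-linear contributions to the equation arise only from (i) the source $\exp(\ad y^{(0)}).(X^i/x)$ — since only $\ad y^{(0)}$ preserves total degree — and (ii) the $X^i$-linear part of $y^{(i)}$. The resulting inhomogeneous linear system has all coefficients depending only on $X^0$, so $\Phi_i$ decomposes as a term linear in $X^i$ with $X^0$-dependent coefficient — which we call $A_i^{X^0}[X^i]$ — plus a remainder $P_i(X^0, \ldots, X^{i-1})$. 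The main technical obstacle will be the total-degree bookkeeping: verifying that $y^{(\ell+1)}$ enters the total-degree $\ell$ projection only linearly and that only $\ad y^{(0)}$ preserves total degree, both of which reduce to straightforward weight counting against the grading of $\mf{g}^{\geq -1}((x))$.
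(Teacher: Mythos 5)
Your proposal is correct and follows essentially the same route as the paper: a recursive, total-degree-graded construction of the Gauge transformation to canonical form, with $y^{(0)}$ determined by the degree $-1$ projection (Lemma \ref{lem:Phi0}), an induction showing $y^{(k)}$ depends only on $X^0,\dots,X^k$, and the identification of the $X^i$-linear source as $\exp(\ad y^{(0)}).(X^i/x)$ with coefficients depending only on $X^0$. The only cosmetic difference is that you parametrize the Gauge group by a single exponential $\exp\bigl(\sum_k y^{(k)}\bigr)$ while the paper uses an ordered product $\cdots\exp(y_1)\exp(y_0)$; both lead to the same linear system, whose unique solvability (via $\ker\ad_f\cap\mf{n}_+=0$) you should still invoke to justify that the $y^{(k)}$ are well defined.
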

\begin{proof}
 Let $\mc{L}\in U$  be a 1-Bessel operator, of the form $\mc{L}=\partial_x+f+x^{-1}X$ with $X\in\mf{b}_+$. Let $X=\sum_{i\geq 0}X^i$, with $X^i\in\mf{g}^i$ and for $i\geq 0$ denote
\beq\label{defLi}
\mc{L}_i=\partial_x+f+\frac{1}{x}\sum_{j=0}^{i}X^j.
\eeq
By Lemma \ref{lem:simpletokostant}, in order to reconstruct $\Phi_i$ is is enough to find a
Gauge transformation $Y_i$ such that $\Pi_f\big(Y_i \mc{L}_i \big)$ -- the projection of $Y_i \mc{L}_i$ onto
the space $[f,\mf{n}_+]$ -- 
is a linear combination of terms of total degree  $\geq i$.
Indeed, this condition is satisfied if and only if the operator $Y_i\mc{L}_i$
is canonical up to total degree $i-1$, in which case the projection of $Y_i\mc{L}_i$ onto $V_i$ coincides --
by Lemma \ref{lem:simpletokostant} --  with the map $\Phi_i$.

For $i=0$, then $\mc{L}_0=\partial_x+f+x^{-1}X^0$ with $X^0\in\mf{h}$, and the Gauge $Y_0$ was obtained in Lemma \ref{lem:Phi0}. We now construct $Y_i$ recursively with respect to the total gradation as $Y_i=\exp{y_i}Y_{i-1}$, where 
\beq\label{eq:ndef}
y_i(x)=\sum_{j= i}^{\hv-1}\frac{y_i^{j}}{x^{j-i}}
\eeq
is an element of the loop algebra of total degree $i$. Notice that if we let $\mc{L}$ vary as a function of the variables $X^0,\dots X^{\hv-1}$, then  the transformation $Y_i$ is a function of $X^0,\dots X^i$ only. Since $\Phi_j(X^0,\dots,X^{j-1})\in V_i$, each map $\Phi_i$ admits the decomposition
\beq\label{pf:fii}
\Phi_j(X^0,\dots,X^{j-1})=\sum_{\ell\geq j}\frac{\Phi^\ell_j}{x^{\ell+1-j}},
\eeq
for some $\Phi^\ell_j\in\mf{s}^{\ell}$. Now assume that for $j\leq i-1$ the Gauge $Y_{j}$ and the maps $\Phi_j(X^0,\dots,X^j)$ have been obtained, so that by construction the projection $\Pi_f(Y_{i-1}\mc{L}_{i-1})$ contains elements of total degree $\geq i-1$ only.  
By construction, the operator $Y_{i-1} \mc{L}_{i-1}$ has the form
 $$
 Y_{i-1} \mc{L}_{i-1}=\partial_x+f+\sum_{j\leq i-1}\sum_{\ell\geq j}\frac{\Phi^\ell_{j}}{x^{\ell+1-j}}+
 \sum_{j\geq i-1} b_j(x) \; ,
 $$
 where  $b_j(x)=\sum_{l\geq 0} \frac{b^l_j}{x^{l-j}}$, with $b^j_l\in\mf{g}^j$, is a remainder term of
 of total degree $j\geq i-1$.  We now look for an element $y_i$ of the form \eqref{eq:ndef} such that $\Pi_f\big(\exp{y_i} Y_{i-1} \mc{L}_i \big)$ contains only elements of total degree $\geq i$, thus proving the induction step. Due to \eqref{defLi} we have
$$\exp y_i.Y_{i-1}\mc{L}_i=\exp y_i.Y_{i-1}(\mc{L}_{i-1}+x^{-1}X^i),$$
and since $y_i$ is of total degree $i$, it follows that the terms of total degree $\leq i-2$ are already in canonical form. This is equivalent to say that $\Pi_f\big(\exp{y_i} Y_{i-1} \mc{L}_i \big)$ contains elements of total degree $\geq i-1$ only. It remains to consider the terms in $\exp{y_i} Y_{i-1} \mc{L}_i$ of total degree equal to $i-1$. These are given by
$$b_{i-1}(x)+\frac{1}{x}Y_0.X^i+[y_i(x),\partial_x+f+\sum_{\ell\geq 0}\frac{\Phi_0^\ell}{x^{\ell+1}}],$$
and the required condition is obtained imposing that the above quantity belongs to $V_i$. Due to the definition of the map $\Phi$, this is equivalent to say that
$$\Phi_i(X^0,\dots,X^{i-1})=b_{i-1}(x)+\frac{1}{x}Y_0.X^i+[y_i(x),\partial_x+f+\sum_{\ell\geq 0}\frac{\Phi_0^\ell}{x^{\ell+1}}].$$
The above is a system of equations for the coefficients $\Phi_i^j\in\mf{g}^j$, related to $\Phi_i$ by \eqref{pf:fii} and $y^{j}_i\in\mf{g}^{j}$, related to $y_i(x)$ by \eqref{eq:ndef}. Applying the Gauge $x^{\ad\rho^\vee}$ on both sides, we obtain the following set of equations in $\mf{g}$:
$$\sum_{j\geq i}(\Phi^j_i+[f,y_i^{j+1}])=\sum_{j\geq 0} b^j_{i-1}+\exp y_0.X^i+\sum_{j\geq i}(j-i) y^{j}_i-[\sum_{j\geq i}y^{j}_i,\sum_{\ell\geq 0}\Phi_0^\ell],$$
where $y_0\in\mf{n}_+$ is the element obtained in Lemma \ref{lem:Phi0}. 
Decomposing according to the principal gradation and projecting onto the subspaces $[f,\mf{n}_+]$ and $\mf{s}$, we obtain the following system  for the elements $y_i^{j}$ and $\Phi^j_i$, with $j\geq i$:
\begin{subequations}\label{eq:systemPhii}
\begin{align} 
 &\Phi^{j}_i=\Pi_{\mf{s}}\big((j-i) y^{j}+b^{j}_{i-1}+\big(\exp y_0. X^i\big)^{j}+
 \sum_{m+l=j} [y^{m},\Phi^{l}_{0}]\big), \\
& [f,y^{j+1}]=\Pi_f \big((j-i) y^{j}+b^{j}_{i-1}+\big(\exp y_0.X^i\big)^{j}+
 \sum_{m+l=j} [y^{m},\Phi^{l}_0]\big), 
 \end{align}
 \end{subequations}
 where $\big(\exp y_0. X^i\big)^{j}$ denotes the projection of $\exp y_0. X^i$ onto $\mf{g}^j$.
The system has a unique solution since $(\ker \ad_f ) \cap \mf{n}^+ =0$. 
  
 We can now study how the map $\Phi_i$ depends on the variables $X_0,\dots,X_i$ when we let $\mc{L}$ vary, in order to prove the decomposition \eqref{phidecomposition}. By construction, the quantity $b^{j}_{i-1}$, depends on $X^0,\dots X^{i-1}$ only. In addition,  since $\exp y_0$ and $\Phi_0$ depend on $X^0$ only, then the quantity $\tilde{\Phi}^{j}_i:=\Phi^{j}_i -\Pi_{\mf{s}}(b^{j}_{i-1})$ depend exclusively on $X^0$ and  $X^i$. Moreover, it depends linearly on the variable $X^i$. Indeed, both $\tilde{\Phi}^{i}_i$ and $y^{i+1}_i$ are  linear in $X^i$, and at each subsequent steps  $\tilde{\Phi}^{j}_i$ and $y^{j}_i$ depend linearly on the previous $\tilde{\Phi}$'s and $y$'s. This proves the thesis.
 \end{proof}

 We now consider the behaviour of the map $\Phi(X^0,\dots,X^{h^\vee-1})$ for fixed values of the first entry $X^0\in\mf{h}$. 
 \begin{definition}\label{def:PhiX0}
  Fixed $X^0 \in \mf{h}$, we denote  
  $\Phi^{X^0}:\mf{n}^+ \to \bigoplus_{i\geq1} V_i,$
  the map
  $$(X^1,\dots X^{\hv-1})\mapsto\sum_{i\geq 1}\Phi_i(X^0,X^1,\dots X^{i}),$$
  so that the decomposition
  $$\Phi(X^0,\dots,X^{\hv-1})=\Phi_0(X^0)+\Phi^{X^0}(X^1,\dots,X^{\hv-1})$$\
  holds true.
 \end{definition}

\begin{proposition}\label{cor:surjinj}
\begin{enumerate}
 \item The map $\Phi^{X^0}$ is injective if and only if it is surjective. 
 \item If the map $\Phi^{X_0}:\mf{n}_+ \to \sum_{i\geq1}V_i$ fails to be surjective then
there exists an $i\geq2$ and a non-zero element $y \in \mf{g}^i$ such that
$[X_0,y]= y$.
 \item There exists an open and dense subset $\mc{A} \in \mf{h}$ such that the map $\Phi^{X^0}$
is surjective and injective for all $X^0 \in \mc{A}$
\end{enumerate}
\begin{proof}
 1) Due to the triangular decomposition \eqref{eq:Phidecomposition} and to Lemma \ref{lem:Phidecomposition}, the map $\Phi^{X^0}$ is surjective if and only if 
 $\Phi_i^{X_0}$ is surjective for all $i\geq1$, which is equivalent to the condition $\det A_i^{X_0} \neq0$ for all $i\geq1$,
 which is equivalent to the condition
 $\Phi_i$ is injective for all $i\geq1$, which is equivalent to the condition that $\Phi^{X^0}$ is injective.
 
 2)Let $U_{X^0}$ be the affine space $\partial_x+f+x^{-1}(X^0+Y)$ with $Y \in \mf{n^+}$.
If $\Phi^{X^0}$ is not surjective, then by part (1) it is not injective. Hence, there exist operators $\mc{L},\overline{\mc{L}} \in U_{X^0}$ and (due to  Lemma \ref{lem:simpletokostant}) Gauge transformations $M,\bar{M} \in \overline{\mc{N}}_{loc}$ such that $M\mc{L}=\bar{M}\overline{\mc{L}}$.
Thus, $Y=\bar{M}^{-1}M \in \overline{\mc{N}}_{loc}$ satisfies $Y\mc{L}=\overline{\mc{L}}$. Since $Y\in \overline{\mc{N}}_{loc}$,  we take it to be of the form $Y=\exp\sum_{i=1}^{\hv-1}\sum_{j=1}^i\frac{y_j^i}{x^j}$ with $y^i_j \in \mf{g}^i$.
Now let $I\geq1$ the minimal index $i$ such that $y^i_j\neq0$ for at least one $1\leq j\leq i$.
Then, a direct calculation shows that the only term of principal degree $I-1$ in $\mc{L}-\overline{\mc{L}}$ is given by 
\beq\label{qI}
q^I=\sum_{j}[f,\frac{y^I_j}{x^j}] \in\mf{g}^{I-1},
\eeq
Since $\mc{L},\overline{\mc{L}} \in U_{X^0}$, then they terms of total degree $<0$ coincide, from which it follows that necessarily $q^I$ is
of total degree greater than $0$. By looking at \eqref{qI} we thus obtain that $q^I=O(\frac{1}{x})$,
from which we deduce that $I\geq2$ and $y^{I}_j=0$ for all $j\geq2$.
Collecting the terms of principal degree $I$ in $\mc{L}-\overline{\mc{L}}$ according to
formula \eqref{eq:gaugeaction}, we get
$$ \frac{[y^I_1,X_0]}{x^{2}}+\frac{ y^I_1}{x^{2}}.$$
Since $Y\mc{L}=\overline{\mc{L}}$ belongs to $U_{X_0}$, the above term must vanish. Then, the non-zero element $y^I_1\in\mf{g}^I$, $I\geq 2$, satisfies $[X_0,y^I_1]=y^I_1$. 
 
 3) Due to (2), the set of $X^0$ such that $\Phi^{X^0}$ is not bijective has positive codimension.
 
\end{proof}
\end{proposition}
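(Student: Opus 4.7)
The plan is to exploit the triangular decomposition from Lemma \ref{lem:Phidecomposition}, which says that $\Phi_i(X^0,\dots,X^i) = A^{X^0}_i[X^i] + P_i(X^0,\dots,X^{i-1})$ with $A^{X^0}_i$ linear in $X^i$. This means that for fixed $X^0$, the map $\Phi^{X^0}:\mf{n}_+\to\bigoplus_{i\geq 1}V_i$ is an affine map whose linear part, expressed in coordinates adapted to the principal gradation, is block lower-triangular with diagonal blocks $A^{X^0}_i:\mf{g}^i\to V_i$. Since the preceding dimension lemma guarantees $\dim\mf{g}^i=\dim U_i=\dim V_i$, each diagonal block is a square matrix. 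Thus $\Phi^{X^0}$ is injective iff every $A^{X^0}_i$ is injective iff every $A^{X^0}_i$ is surjective iff $\Phi^{X^0}$ is surjective, which is part (1).

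For part (2), I would argue by contradiction. If $\Phi^{X^0}$ fails to be surjective, by part (1) it fails to be injective, so there exist two distinct 1-Bessel operators $\mc{L}=\partial_x+f+x^{-1}(X^0+Y)$ and $\bar{\mc{L}}=\partial_x+f+x^{-1}(X^0+\bar Y)$, with $Y,\bar Y\in\mf{n}_+$, whose canonical forms coincide. The composition of their respective gauges then yields a non-trivial transformation in $\overline{\mc{N}}_{loc}$ of the form $\exp y$ with $y=\sum_{i,j}y^i_jx^{-j}$ sending $\mc{L}$ to $\bar{\mc{L}}$. Letting $I$ be the minimal principal degree occurring in $y$, I would then examine $\exp y.\mc{L}-\mc{L}$ order by order: at principal degree $I-1$ only the bracket $[f,y^I_j/x^j]$ contributes, and since $\mc{L},\bar{\mc{L}}\in U_{X^0}$ share a $1/x$ pole structure, this forces $j=1$, hence $I\geq 2$. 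At the next principal degree $I$ the condition that $\bar{\mc{L}}-\mc{L}$ has only a $1/x$ singularity reduces to $[X^0,y^I_1]+y^I_1=0$, yielding the desired eigenvector $y=y^I_1\in\mf{g}^I$ of $\ad_{X^0}$ with eigenvalue $1$.

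For part (3), the condition $[X^0,y]=y$ for some non-zero $y\in\mf{g}^i$ with $i\geq 2$ translates, via the root decomposition of $\mf{g}^i$ under $\mf{h}$, into the requirement that $\alpha(X^0)=1$ for some positive root $\alpha$ of height $\geq 2$. This is a finite union of affine hyperplanes in $\mf{h}$, hence a proper Zariski-closed subset. Its complement $\mc{A}$ is therefore open and dense, and by parts (1)--(2), $\Phi^{X^0}$ is a bijection for every $X^0\in\mc{A}$.

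The main obstacle is part (2): the careful bookkeeping in the Dynkin formula expansion of $\exp y.\mc{L}$, in order to isolate exactly the lowest two contributions in the principal gradation and extract the relation $[X^0,y^I_1]=y^I_1$. Parts (1) and (3) are essentially formal consequences once the triangular structure of the differential and the eigenvalue obstruction have been identified.
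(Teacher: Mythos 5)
Your proposal follows essentially the same route as the paper: part (1) via the block-triangular structure with square diagonal blocks $A^{X^0}_i$, part (2) by composing the two gauge transformations and analysing the minimal principal degree $I$ in the resulting element of $\overline{\mc{N}}_{loc}$, and part (3) by observing that the obstruction is a proper closed (indeed, a finite union of affine hyperplanes) subset of $\mf{h}$. The only blemish is a sign slip in part (2): the vanishing of $[y^I_1,X^0]/x^2+y^I_1/x^2$ gives $[X^0,y^I_1]=y^I_1$, not $[X^0,y^I_1]+y^I_1=0$ as written, though the eigenvalue $1$ you then state is the correct conclusion.
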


The following result, which is of crucial importance in our construction, is a straightforward corollary of the previous
Proposition.
\begin{corollary}\label{cor:theta0}
The map $\Phi^{-\theta^\vee}:\mf{n}_+ \to \sum_{i\geq1}V_i$ is surjective.
\begin{proof}
The spectrum of $\ad_{-\theta_0^\vee}$ restricted to $\mf{n}^+$ does not contain positive integers:
indeed it is $\lbrace 0,-1,-2\rbrace$
if $\mf{g}\neq A_1$, and $\lbrace 0,-2\rbrace$ otherwise.
\end{proof}
\end{corollary}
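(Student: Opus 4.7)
The plan is to apply Proposition \ref{cor:surjinj}(2) in contrapositive form. That proposition asserts that if $\Phi^{X^0}$ fails to be surjective, then there must exist some $i\geq 2$ and a non-zero $y\in\mf{g}^i$ (principal gradation) with $[X^0,y]=y$. Hence, to establish surjectivity of $\Phi^{-\theta^\vee}$ it is enough to verify that the eigenvalue $+1$ does not occur in the spectrum of $\ad_{-\theta^\vee}$ restricted to $\mf{n}_+$.

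To compute this spectrum I would use the root space decomposition $\mf{n}_+=\bigoplus_{\alpha\in\Delta_+}\bb{C}E_\alpha$ together with $[\theta^\vee,E_\alpha]=\alpha(\theta^\vee)E_\alpha=(\theta|\alpha)E_\alpha$, the last equality holding because $\mf{g}$ is simply laced and $\theta^\vee=\nu^{-1}(\theta)$. The key claim is then $(\theta|\alpha)\in\{0,1,2\}$ for every $\alpha\in\Delta_+$: the value $2$ is attained precisely at $\alpha=\theta$ (since $(\theta|\theta)=2$), and one cannot have $(\theta|\alpha)<0$ for any positive $\alpha$, because in the simply-laced case a negative inner product would force $\theta+\alpha\in\Delta$, contradicting the maximality of $\theta$. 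Therefore $(\theta|\alpha)\in\{0,1\}$ for every $\alpha\in\Delta_+\setminus\{\theta\}$, so the spectrum of $\ad_{\theta^\vee}|_{\mf{n}_+}$ lies in $\{0,1,2\}$, and consequently the spectrum of $\ad_{-\theta^\vee}|_{\mf{n}_+}$ lies in $\{0,-1,-2\}$; in particular it does not contain $+1$.

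This verifies the hypothesis of Proposition \ref{cor:surjinj}(2) and yields surjectivity in the case $\mf{g}\neq A_1$. The excluded case $\mf{g}=A_1$ requires no separate argument: since $\hv=2$, the principal gradation of $A_1$ supports no component $\mf{g}^i$ with $i\geq 2$, so the premise of Proposition \ref{cor:surjinj}(2) is vacuously false and $\Phi^{-\theta^\vee}$ is automatically surjective. The only non-routine ingredient is the positivity $(\theta|\alpha)\geq 0$ on $\Delta_+$, which is a standard consequence of the highest-root property combined with the simply-laced root-string rules, and therefore presents no real obstacle.
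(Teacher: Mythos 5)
Your proof is correct and follows essentially the same route as the paper: both invoke Proposition \ref{cor:surjinj}(2) and reduce the claim to the fact that $\ad_{-\theta^\vee}$ has no eigenvalue $+1$ on $\mf{n}_+$, the only difference being that you spell out the spectrum computation via $(\theta\,|\,\alpha)\in\{0,1,2\}$ for $\alpha\in\Delta_+$ (which the paper establishes separately in Section \ref{sec:gradation} using Lemma \ref{lemmahumphreys}) and dispose of $\mf{g}=A_1$ by noting $\mf{g}^i=0$ for $i\geq 2$ rather than by listing the spectrum $\{0,-2\}$. Both variations are sound and the argument is complete.
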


\begin{example}
 The case $\mf{g}=A_1$. In this case $\mf{n}^+=\mf{g}^{1}$. Since
 $\mf{g}^{j}$ with $j\geq 2$ is empty, then - by Proposition \ref{cor:surjinj}(2)- the map
 $\Phi^{X^0}$ is bijective for every $X^0 \in \mf{h}$.
\end{example}

\begin{example}
The case $\mf{g}=A_2$. 
We have that 
$\mf{n}^+=\mf{g}^1\oplus \mf{g}^2 $, with $\mf{g}^2=\bb{C} e_\theta$. Given $X^0\in\mf{h}$, we show that the  map $\Phi^{X^0}: \mf{n}^+\to V_1 \oplus V_2$ fails to be surjective if and only if $[X^0,e_\theta]=e_\theta$.  We deduce that, for this particular Lie algebra, the necessary condition described in Proposition \ref{cor:surjinj} (2), for $\Phi^{X^0}$ not being injective, is also sufficient. 
Consider the Gauge $Y=\exp{\frac{e_\theta}{x}}$. If  $\mc{L}=\partial_x+f+\frac{X^0+X}{x}$, with  $X \in \mf{n}_+$ is a 1-Bessel operator, then $\overline{\mc{L}}=Y.\mc{L}=\mc{L}+\frac{ [f,e_\theta]}{x}+\frac{e_\theta+[e_\theta,X^0]}{x^2}$. So, $\overline{\mc{L}}$ is a 1-Bessel operator (namely, it belongs to the domain of $\Phi_0$) if and only if $X^0$ satisfies $[X^0,e_\theta]=e_\theta$. If this is the case, then $\overline{\mc{L}}\neq \mc{L}$ but $\Phi^{X^0}(\mc{L})=\Phi^{X^0}(\overline{\mc{L}})$. Hence
$\Phi^{X^0}$ is not injective, nor surjective.
\end{example}

\subsection{Extended Miura map for Quantum-KdV opers}
Building on the theory of Bessel operators described above, we address here the main topic of the present section. The problem is to find a Gauge transformation mapping the oper
\begin{equation}\label{eq:slocthm}
 \mc{L}_{\mf{s}}=\partial_z+f+\sum_{i=1}^n \frac{\bar{r}^{d_i}}{z^{d_i+1}}+\sum_{j \in J}\sum_{i=1}^{n}
 z^{-d_i}\sum_{l=0}^{d_i}\frac{s^{d_i}_l(j)}{(z-w_j)^{d_i+1-l}},
 \end{equation}
 with $\bar{r}^{d_i}, s^{d_i}(j) \in \mf{s}^{d_i}$,  to an operator of the form
\begin{equation}\label{eq:1locthm}
 \mc{L}_1=\de_z+f+\frac{r}{z}+\sum_{j \in J}\frac{1}{z-w_j}\sum_{i=0}^{h^\vee-1}
 \frac{X^i(j)}{z^i},
 \end{equation}
 with $r \in \mf{h}$ and $X^i(j) \in \mf{g}^i$.  As a first step, we prove that the canonical form of the operator \eqref{eq:1locthm} is of type \eqref{eq:slocthm}.
 \begin{lemma}\label{lem:tonormalform}
 For an arbitrary choice of its parameters, the canonical form of the operator $\mc{L}_1$ is an operator of the form \eqref{eq:slocthm}.
  Moreover, one has that $\sum_{i=1}^n \bar{r}^{d_i}=\Phi_0(r) $ and $\sum_i^n s^{d_i}_0(j)=\Phi_0(X^0(j)),$  for $j \in J$.
 \end{lemma}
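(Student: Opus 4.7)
The plan is to combine the global theory of canonical forms developed in Section~\ref{sec:singularities} with the local analysis of $1$-Bessel operators from the present section.

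First I would observe that $\mc{L}_1$ is meromorphic on $\bb{P}^1$ with Fuchsian singularities at $z=0$, at each $w_j$ with $j\in J$, and at $z=\infty$, and is regular elsewhere. This is immediate from~\eqref{eq:1locthm}: each summand $X^i(j)/(z^i(z-w_j))$ has a pole of order $i$ at $0$ in the $\mf{g}^i$ component, a simple pole at $w_j$, and decays like $z^{-i-1}$ at $\infty$, so the slope criterion of Definition~\ref{def:slope} yields $\mu=1$ at each singular point.

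Let $N\in\mc{N}(K_{\bb{P}^1})$ be the Gauge transformation sending $\mc{L}_1$ to its canonical form $\mc{L}_\mf{s}$. By Proposition~\ref{lem:gaugemiura} the singular locus of $\mc{L}_\mf{s}$ is contained in that of $\mc{L}_1$, and by Corollary~\ref{cor:normals} each of its singularities is either regular or Fuchsian. Decomposing $\mc{L}_\mf{s}=\partial_z+f+\sum_i s^{d_i}(z)$ with $s^{d_i}\in\mf{s}^{d_i}(K_{\bb{P}^1})$, the Fuchsian property (Definition~\ref{def:fuchsian}) bounds the pole order of each rational function $s^{d_i}$ at $0$ and at each $w_j$ by $d_i+1$, while at $\infty$ the corresponding decay $s^{d_i}(z)=O(z^{-d_i-1})$ follows from Definition~\ref{rem:slopeatinfinfinity}. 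Splitting off the principal polar part $\bar{r}^{d_i}/z^{d_i+1}$ at $z=0$ and then applying Lemma~\ref{lem:partialfraction} (with the exponent $i$ of the lemma equal to $d_i$) to the remainder realizes exactly the form~\eqref{eq:slocthm}.

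To identify the top coefficients, I would compute the principal coefficients of $\mc{L}_1$ and $\mc{L}_\mf{s}$ at $z=0$ and at $z=w_j$ in the sense of Definition~\ref{def:slope}. At $z=0$ only the term $r/z\in\mf{h}/z$ saturates the ratio $\delta_0/(0{+}1)=1$ in $\mc{L}_1$, because each $X^i(j)/(z^i(z-w_j))$ with $i\geq 1$ contributes $\delta_i=i$ and hence $\delta_i/(i{+}1)<1$; the principal coefficient of $\mc{L}_1$ at $0$ is therefore $f-\rho^\vee+r$. In $\mc{L}_\mf{s}$ the leading pole in $\mf{s}^{d_i}$ at $z=0$ is $\bar{r}^{d_i}/z^{d_i+1}$, so its principal coefficient at $0$ is $f-\rho^\vee+\bar{r}$ with $\bar{r}=\sum_i\bar{r}^{d_i}$. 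By Lemma~\ref{lem:Nloc}(\ref{deferredlemma}) these two coefficients are $\mc{N}$-conjugated, and Lemma~\ref{lem:Phi0} identifies $\bar{r}=\Phi_0(r)$. The analogous local computation at $w_j$, where the principal coefficient of $\mc{L}_1$ is $f-\rho^\vee+X^0(j)$ and the principal coefficient of $\mc{L}_\mf{s}$ is assembled from the $l=0$ Laurent coefficients $s_0^{d_i}(j)$ (these being the ones saturating $\delta_{d_i}/(d_i{+}1)=1$), gives the second identification via the same two lemmas.

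The step I expect to require the most care is the bookkeeping near $z=w_j$: one must verify that among all Laurent coefficients of the sums $z^{-d_i}\sum_l s_l^{d_i}(j)/(z-w_j)^{d_i+1-l}$ in each $\mf{s}^{d_i}$-component, only the $l=0$ summands contribute to the principal coefficient in the sense of Definition~\ref{def:slope}, and to track the scalar factors coming from evaluating $z^{-d_i}$ at $z=w_j$. Once this accounting is in place, the argument parallels the one at $z=0$ and the two identifications follow.
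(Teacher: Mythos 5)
Your argument is correct and follows the paper's own proof essentially step by step: Fuchsianity of $\mc{L}_1$ at $0$, $\infty$ and the $w_j$, transfer of this property to the canonical form via Corollary \ref{cor:normals}, the partial-fraction decomposition of Lemma \ref{lem:partialfraction}, and the identification of the leading coefficients through Lemma \ref{lem:Nloc}(2) and Lemma \ref{lem:Phi0}. The scalar factors $w_j^{-d_i}$ that you flag in the last paragraph are indeed present in the principal coefficient at $w_j$ as literally defined in Definition \ref{def:slope}, but the paper absorbs them into the normalization of the $s^{d_i}_0(j)$ without comment, so your accounting is no looser than the original.
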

  \begin{proof}
  The operator $\mc{L}_1$ satisfies the following property: it is regular outside $0,\infty$ and $w_j,j \in J$,
  and these points are at most fuchsian singularities. From Corollary \ref{cor:normals}, the canonical form of $\mc{L}_1$ satisfies the same property.  To prove the first part of the Lemma, it is then sufficient  to show that any operator in canonical form with such a property is an operator of the form \eqref{eq:slocthm}.

So let  $\mc{L}_{\mf{s}}=\partial_z+f+s$, for some $s\in \mf{s}(K_{\bb{P}^1})$.  From the definition of Fuchsian singularity it follows that if $0,\infty$ are Fuchsian then $s$ must satisfy  $s^{d_i}=O(z^{-d_i-1})$ as $z \to 0$, and $s^{d_i}=O(z^{-d_i-1})$ as $z \to \infty$. Now let 
$$\bar{r}^{d_i}=\lim_{z \to 0}z^{d_i+1}s^{d_i}(z),\qquad \hat{s}(z)=s(z)-\sum_{i=1}^n \frac{\bar{r}^{d_i}}{z^{d_i+1}}.$$
One has that $\hat{s}(z)=O(z^{-d_i})$ as $z \to 0$, and $\hat{s}(z)=O(z^{-d_i+1})$ as $z \to \infty$. Due to Lemma \ref{lem:partialfraction}, the function $\hat{s}$ admits the decomposition
 $$\hat{s}(z)=\sum_{j \in J}\sum_{i=1}^{n}z^{-d_i}\sum_{l=0}^{m_i(j)}\frac{s^{d_i}_l(j)}{(z-w_j)^{m_i(j)+1-l}},$$
 for some $m_i(j)$. Since  $w_j$ is Fuchsian for every $j \in J$, then $m_i(j)=d_i$. Hence, $\mc{L}_{\mf{s}}$ is of the form \eqref{eq:slocthm}.

   By Lemma \ref{lem:Nloc}(ii), the the principal coefficients of $\mc{L}_1$ and $\mc{L}_{\mf{s}}$
   at any singularity must be conjugated; this is equivalent -- see Lemma
   \ref{lem:Phi0} -- to the conditions $\sum_{i=1}^n \bar{r}^{d_i}=\Phi_0(r) $ and    $\sum_is^{d_i}_0(j)=\Phi_0(X^0(j))$. 
  \end{proof}

 Since the case $J$ is empty was already addressed in Proposition \ref{prop:LGs}, we suppose that $J=\lbrace 1,\dots,N\rbrace$
 for some $N \in \bb{Z}_+$. The space of $\mc{L}_1,\mc{L}_{\mf{s}}$ operators can be identified
 with the linear space of free coefficients in their defining formulas \eqref{eq:slocthm}, \eqref{eq:1locthm}.
 Because of the above Lemma, we have a (nonlinear) map between the two spaces, which satisfies the constraints  
 $\sum_{i=1}^n \bar{r}^{d_i}=\Phi_0(r) $ and $\sum_i^n s^{d_i}_0(j)=\Phi_0(X^0(j)), j=1,\dots,N$.
 
\begin{definition}
Let
$$U_N=\bigoplus_{j=1}^N\bigoplus_{i=1}^{\hv-1}\{ X^i(j) \in \mf{g}^i \},\qquad V_N=\bigoplus_{j=1}^N\bigoplus_{i=1}^{\hv-1}V_i(j),$$
where $V_i(j)=\mbox{span} \lbrace s^{d_l}_i(j) \in \mf{s}^{d_l}, i\leq d_l\leq \hv-1 \rbrace $. For every $(r,X^0(1),\dots,X^0(N)) \in \mf{h}^{\oplus N+1}$, we denote
\beq\label{eq:FrX}
F=F^{r,X^0(1),\dots,X^0(N)}:U_N \to V_N \; ,
\eeq
the map which associates to an operator \eqref{eq:slocthm}  its canonical form. 
\end{definition}
Recall from the local theory that, fixed the part of total degree $-1$ by the choice of an element $X^0 \in \mf{h}$,
there is a map $\Phi^{X^0}$ from the space of $1$-Bessel operators
to the space of the corresponding $\mf{s}$-Bessel operators. 
In the following theorem  we prove that the map \eqref{eq:FrX} is bijective
if and only if, for every $j=1,\dots,N$, $X^0(j)$  is such that the map $\Phi^{X^0(j)}$ is bijective.
Due to Proposition \ref{cor:surjinj}, the latter conditions are verified in an open and dense subset of the parameters $X^0(j) \in \mf{h}^{\oplus N}$. In particular they are verified, by Corollary \ref{cor:theta0},
when $X^0(j)=-\theta^\vee$ for all $j=1,\dots,N$, which is the case relevant for the Quantum $\mf{g}-$KdV opers.
 \begin{theorem}\label{thm:miuraKdV}
 The map $F^{r,X^0(1),\dots,X^0(N)}$ is bijective if and only if for every $j=1,\dots,N$
 the map $\Phi^{X^0(j)}$ is bijective.
\end{theorem}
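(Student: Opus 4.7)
The plan is to reduce the global statement to a family of local problems, one at each additional singularity $w_j$, and then combine them using the triangular structure coming from the principal gradation. First I would use Proposition \ref{lem:gaugemiura} together with Lemma \ref{lem:tonormalform} to observe that the Gauge transformation $Y\in\mc{N}(K_{\bb{P}^1})$ sending $\mc{L}_1$ to its canonical form $\mc{L}_{\mf{s}}$ exists and is unique, and by Lemma \ref{lem:Nloc} its localization $Y_{w_j}$ lies in $\mc{N}_{loc}^{\geq 0}$ and solves the local Fuchsian problem at $w_j$. A direct dimension count gives $\dim U_N=\dim V_N=N\dim\mf{n}_+$, so bijectivity of $F$ is equivalent to either injectivity or surjectivity alone.

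For the ``if'' direction, assuming each $\Phi^{X^0(j)}$ is bijective, I would construct the inverse of $F$ inductively on the principal degree $i$. Starting from $\mc{L}_{\mf{s}}$, Laurent-expand at each $w_j$ to read off the local data $\{s^{d_l}_i(j)\}_{i,l}$. At degree $i=1$, for each $j$ the coefficient $s^{d_l}_1(j)$ depends on $X^1(j)$ through the linear map $A_1^{X^0(j)}$ of Lemma \ref{lem:Phidecomposition}, plus terms already determined by $r$ and the $X^0(k)$'s; invertibility of $A_1^{X^0(j)}$ recovers $X^1(j)$. Proceeding inductively, at step $i$ the remaining unknowns $\{X^i(j)\}_j$ are uniquely determined via the direct sum of the local linear maps $\{A_i^{X^0(j)}\}_j$, once all lower-degree data is fixed; bijectivity of every $\Phi^{X^0(j)}$ is precisely the invertibility of every $A_i^{X^0(j)}$.

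For the ``only if'' direction, suppose some $\Phi^{X^0(k)}$ fails to be bijective. By Proposition \ref{cor:surjinj}(1) it fails to be injective, so there exist two distinct local $1$-Bessel operators at $w_k$ sharing the same canonical form $\mf{s}$-Bessel operator. This local non-uniqueness is then promoted to a global non-uniqueness: one produces two distinct $\mc{L}_1,\mc{L}_1'$ of the form \eqref{eq:1glob} that agree outside $w_k$, differ in the $\{X^i(k)\}_{i\geq 1}$, but yield the same $\mc{L}_{\mf{s}}$, contradicting injectivity of $F$.

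The main obstacle is executing the inductive step cleanly: one must verify that the ``diagonal'' contribution of $X^i(j)$ to $s^{d_l}_i(j)$ is exactly the local linear map $A_i^{X^0(j)}$, and that all cross-singularity couplings are triangularly lower (that is, they only involve $X^k(k')$ with $k<i$, or are already determined by lower-order steps). This requires careful bookkeeping of how the unique Gauge transformation $Y$, together with its poles localized at each $w_j$, redistributes data across singularities in the passage from $\mc{L}_1$ to $\mc{L}_{\mf{s}}$, and relies crucially on the fact that the local Gauge at $w_j$ belongs to $\mc{N}_{loc}^{\geq 0}$ so that the transferred data has non-negative total degree.
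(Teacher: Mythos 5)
Your proposal follows essentially the same route as the paper: localize at each $w_j$, establish the triangular decomposition $\overline{F}_i(m)=A_i^{X^0(m)}(X^i(m)/w_m^i)+\overline{P}_{i,m}$ with diagonal blocks equal to the local linear maps of Lemma \ref{lem:Phidecomposition} and a remainder depending only on lower-degree data across all singularities, and then invert block by block --- the ``main obstacle'' you single out is precisely the content of equation \eqref{eq:DecinTheorem} in the paper's proof. The only caution is that your appeal to a dimension count to equate injectivity and surjectivity of the nonlinear map $F$ is not justified as stated (equal dimensions do not give this for polynomial maps), and is in any case unnecessary: the paper obtains both directions directly from the triangular structure, in which each step is affine-linear in the new variable.
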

\begin{proof}
Fix $m \in \lbrace 1,\dots,N\rbrace$, and consider the localisation at $w_m$ of the operators $\mc{L}_1$ of type \eqref{eq:slocthm} and $\mc{L}_{\mf{s}}$ of type \eqref{eq:1locthm}. We have
   \begin{align}
   & (\mc{L}_1)_{w_m}=\partial_x+f+\sum_{l,k\geq0}\frac{u^l_k}{x^{1-k}}, \qquad u^l_k \in \mf{g}^l \label{L1loc}\\ 
   & (\mc{L}_{\mf{s}})_{w_m}=\partial_x+f+\sum_{l,k\geq0}\frac{t^{d_l}_k}{x^{d_l+1-k}},\qquad t^{d_l}_k \in \mf{s}^{d_l}.\label{Lsloc}
\end{align}
The coefficients $u^l_k$ appearing in \eqref{L1loc} can be written in terms of the original variables 
of $\mc{L}_1$ as
\begin{align}\label{eq:X-to-t}
u^i_{0}=\frac{X^i(m)}{w_m^i}, \qquad u^{l}_{l-i}=\sum_{j \in J \setminus \lbrace m\rbrace }a_{i,l,j}X^{l}(j), \quad  l=0,\dots i-1.
\end{align}
for some complex coefficients $a_{i,l,k}$. We define $\widetilde{V_i}(m)$, $i=1,\dots, \hv-2$, as the subspace of coefficients $t^{d_l}_k$ appearing in \eqref{Lsloc} which have total degree $i-1$ and principal degree at least $i$. Namely, we have 
 $$\widetilde{V_i}(m)=\lbrace t^{d_l}_i, d_l\geq i \rbrace.$$
For each pair of indices $m,i$, the Gauge transformation $\mc{L}_1\to \mc{L}_{\mf{s}}=Y.\mc{L}_1$ from $\mc{L}_1$ to its canonical form $\mc{L}_{\mf{s}}$  induces  a map $\overline{F_i}(m): U_N \to \widetilde{V_i}(m)$, obtained by first localizing the image operator $Y.\mc{L}_1$ at $z=w_m$ and then restricting to the terms in  $\widetilde{V_i}(m)$. As we prove below, the map $\overline{F_i}(m)$ admits the decomposition
    \begin{equation}\label{eq:DecinTheorem}
   \overline{F_i}(m)=A_i^{X^0(m)}(\frac{X^i(m)}{w_m^i})+\overline{P}_{i,m}\;,
  \end{equation}
where for each $X^0\in\mf{h}$ the map $A_i^{X^0}: \mf{g}^i \to \widetilde{V}_i$,  is linear  and coincides with the map $A_i$ defined in Lemma \ref{lem:Phidecomposition}, while $\bar{P}_{i,m}$ is a function of the variables $X^{l}(m)$, with $l\leq i-1$, and  $m \in \lbrace 1, \dots, N \rbrace$. 

In order to prove the decomposition \eqref{eq:DecinTheorem}, we adapt the proof of Lemma \ref{lem:Phidecomposition} to the present case. Let $(Y)_{w_m}$ be the localization at $z=w_m$ of the Gauge transform $Y$ mapping $\mc{L}_1$ to $\mc{L}_{\mf{s}}$. We obtain $(Y)_{w_m}\in \mc{N}_{loc}^{\geq0}$ as the direct limit $\varinjlim Y_i$, where  $Y_i$ maps a truncation of $(\mc{L}_1)_{w_m}$ to its canonical form, up to terms of high (enough) total and principal degrees in such a way that the functions $\overline{F}_i(m)$, with $i=1,\dots, \hv-1$, are completely determined by the action of $Y_0,\dots,Y_{\hv-2}$ only.
   This is done as follows. Let $\mc{L}_i$ be the projection of $(\mc{L}_1)_{w_m}$ onto the subspace of total degree $\leq i-1$ and principal degree $\leq i$. Then from \eqref{L1loc} we get
   $$\mc{L}_i=\sum_{l=0}^{i}\sum_{k=0}^{l+i}\frac{u^l_m}{x^{1-k}}\; .$$
   Then we look for $Y_i$ such that $\Pi_f(Y_{i} \mc{L}_{i})$ is a linear combination of terms
   of total degree $>i-1$, and moreover the terms of total degree equal to $i$ have principal degree $\leq i$.
   
For $i=0$ we choose the Gauge transformation $Y_0=x^{-\rho^\vee}Y$, with $Y \in \mc{N}$, which maps $\partial_x+f+x^{-1}u^0_0$ to
   $\partial_x+f+\sum_{l=1}^n\frac{t^{d_l}_0}{x^{d_l+1}}$. We then reconstruct $Y_i$ recursively as follows. (1) We look for $y_{i}=\sum_{k\geq1}\frac{y^{i+k}}{x^{k}}$, with  $y^{i+k}\in \mf{g}^{i+k}$ such that the projection
   $$\Pi_f(\exp{y_{i}}Y_{i-1}. \mc{L}_{i})$$ only contains terms of total degree greater than $i$. Notice that $y_i$ is non-trivial only if $i\leq \hv-2$.  (2) We obtain $Y_{i}$ as $ \exp{y'_{i}}\exp{y_{i}}Y_{i-1}$ where
  $y'_{i}=\sum_{k=0}^{l}\frac{p^{k}}{x^{k-i-1}}$, for some $p^k \in \mf{g}^k$. 
  
  We implement (1) following the proof of Lemma \ref{lem:Phidecomposition}, and we obtain 
   a linear system for $y^{i+k}$ and thus for $\overline{F}_i(m)$. This coincides
  with system \eqref{eq:systemPhii} after we rename the variables
  $X^i \to y^i_0=\frac{X^i_0(m)}{w^m}$, $\Phi_i \to \overline{F}_i(m)$. In this system, the known terms $b$'s
  are shown, recursively, to depend on the coefficients
  $u^l_k, l\leq i-1,k\leq l $ of the local expansion of $\mc{L}_1$;
  after \eqref{eq:X-to-t}), these terms depend on $X^l(j)$, for $l\leq i-1$, and $j =1,\dots,N$.
  Hence the same reasoning as in  \ref{lem:Phidecomposition} after equation \eqref{eq:systemPhii} 
  proves the decomposition \eqref{eq:DecinTheorem}.

  We now prove that the decomposition \eqref{eq:DecinTheorem} implies the thesis. First we
  compute the coefficients $t^{d_l}_i$ spanning $\widetilde{V}_i(m)$ in terms of the
  original coeffcients of the oper $\mc{L}_{\mf{s}}$.  We have
   \begin{align}
   \label{eq:changeinTheorem}
   t^{d_l}_i=\frac{s^{d_l}_i(m)}{w_m^{d_l}}+\sum_{0\leq k \leq i-1}b_{i,l,k}s^{d_l}_{k}(m) \; ,
   \end{align}
   for some complex $b_{i,l,k}$. By above formulas, we have an invertible map
   $$\bigoplus_{m=1}^N \bigoplus_{i\geq 1} V_i(m) \to \bigoplus_{m=1}^N\bigoplus_{i\geq 1}\widetilde{V_i}(m),$$
   which associates to an oper of type \eqref{eq:slocthm}  the totaliy of local coefficients belonging to the spaces
   $\widetilde{V_i}(m)$'s. It follows that $F:U_N \to V_N$ is bijective if and only if the map $$\overline{F}=\sum_{m=1}^N \sum_{i=1}^{\hv-1}
   \overline{F}_i(m):U_N \to \bigoplus_{m=1}^N \bigoplus_{i=1}^{\hv-1}
   \widetilde{V_i}(m)$$ is bijective. Due to the decomposition \eqref{eq:DecinTheorem}, one deduces recursively that $\overline{F}$ is bijective if and only if the maps
  $\overline{F}_j(m)$
  are bijective for every $i$ (and $m$), if and only if the linear maps $A_i^{X^0(m)}:\mf{g}^i \to \widetilde{V}_i(m)$
  are bijective for every $m$ and $i$. After Lemma \ref{lem:Phidecomposition}(ii), the linear maps $A_i^{X^0(m)}:\mf{g}^i \to \widetilde{V}_i(m)$
  are bijective for every $m=1,\dots,N$ and every $i=1\dots \hv-1$ if and only if the maps $\Phi^{X^0(m)}$ are bijective for all $m=1\dots N$.
\end{proof}

 The main result of this section is a direct corollary of the previous theorem: any Quantum KdV oper can be uniquely written
 in the following form $\mc{L}_1+z^{-\hv+1}(1+  \la z^{-\hat{k}})e_{\theta}$ where $\mc{L}_1$ is as
 in \eqref{eq:1locthm} with $X^0(j)=-\theta^\vee$ for any $j \in J$.
 \begin{corollary}
 Fix $(r,\hat{k}) \in \mf{h}\times (0,1)$. Any operator $\mc{L}(z,\la)$
 satisfying Assumptions \ref{asu1}, \ref{asu2}, \ref{asu3} defining the Quantum $\mf{g}$-KdV opers, with a (possibly empty)
  set $\{w_j\,,\,j\in J\}$ of additional poles, is Gauge equivalent to a unique operator of the form
 \begin{align}\label{eq:ouroperators}
 \mc{L}(z,\la)=  \de_z+f+\frac{r}{z}+\sum_{j \in J}\frac{1}{z-w_j}\left(-\theta^\vee+\sum_{i=1}^{h^\vee-1}
 \frac{X^i(j)}{z^i} \right)  +z^{-\hv+1}
 (1+  \la z^{-\hat{k}})e_{\theta} \, ,   
 \end{align}
 for some $X^i(j) \in \mf{g}^i$ .
 \begin{proof}
 The case when $J$ is empty was proved in Lemma \ref{prop:LGs}. Here we suppose $J=\lbrace 1\dots N \rbrace$ with $N \in \bb{Z}_+$.
  After Proposition \ref{pro:quasinormal}, the Quantum $\mf{g}$-KdV opers satisfying the first Assumptions \ref{asu1}, \ref{asu2}, \ref{asu3} are of the form
 $\mc{L}_{\mf{s}} +z^{-\hv+1}
 (1+  \la z^{-\hat{k}})e_{\theta}$,
 where $\mc{L}_{\mf s}$ is the oper \eqref{eq:slocthm} such that $\bar{r}=\Phi_0(r)$ and
 $\sum_i^n s^{d_i}_0(j)=\Phi_0(-\theta^\vee)$, for $ j=1,\dots,N$.
 
Due to Corollary \ref{cor:theta0}, the map $\Phi^{-\theta^\vee}$ is bijective. It then follows from Theorem \ref{thm:miuraKdV},that the operator $\mc{L}_{\mf{s}}$ is gauge equivalent to an operator of the form \ref{eq:1locthm} with $X^0(j)=-\theta^\vee$ for $j=1,\dots,N$. Since the term $z^{-\hv+1}
 (1+  \la z^{-\hat{k}})e_{\theta}$ is Gauge invariant under $\mc{N}(K_{\bb{P}^1})$, the thesis follows.
 \end{proof}

 \end{corollary}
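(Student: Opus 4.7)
The plan is to treat the corollary as a straightforward assembly of the preceding machinery, with no new ideas required. First I would dispatch the case $J=\emptyset$ by direct appeal to Proposition \ref{prop:LGs}, which identifies the canonical form of the ground state oper; indeed in that case $\mc{L}_{G,\mf{s}}$ coincides with \eqref{eq:ouroperators} with no $w_j$-terms present, so the statement is immediate.

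For $J=\{1,\dots,N\}$ with $N\geq 1$, the strategy is to produce the desired form in two Gauge steps and then invoke uniqueness. The first step is the reduction to canonical form: by Proposition \ref{pro:quasinormal}, any $\mc{L}(z,\la)$ satisfying Assumptions \ref{asu1}, \ref{asu2}, \ref{asu3} is Gauge equivalent to a unique operator of the shape $\mc{L}_{\mf{s}}(z,\la)+z^{-\hv+1}(1+\la z^{-\hat k})e_{\theta}$ with $\mc{L}_{\mf{s}}$ as in \eqref{eq:sglob}, subject to the constraints $\bar r=\Phi_0(r)$ and $\sum_i s^{d_i}_0(j)=\Phi_0(-\theta^\vee)$ for every $j\in J$. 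The second step is to realize this $\mc{L}_{\mf{s}}$ as the image, under the extended Miura map $F^{r,-\theta^\vee,\dots,-\theta^\vee}$, of an operator of the form \eqref{eq:1locthm} with $X^0(j)=-\theta^\vee$. For this, invoke Theorem \ref{thm:miuraKdV}: the map $F^{r,-\theta^\vee,\dots,-\theta^\vee}$ is bijective provided each local map $\Phi^{-\theta^\vee}$ is bijective, which is exactly the content of Corollary \ref{cor:theta0}. Bijectivity supplies both the existence of the Gauge equivalence and the uniqueness of the coefficients $X^i(j)\in\mf{g}^i$ for $i\geq 1$.

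Two small observations are needed to cement the argument. The multivalued term $z^{-\hv+1}(1+\la z^{-\hat k})e_{\theta}$ takes values in $\mf{n}_+$ along the highest root direction and is fixed by $\mc{N}(K_{\bb{P}^1})$ acting by the adjoint action, hence it is carried unchanged through both Gauge steps and can simply be added at the end. Likewise, the coefficient $-\theta^\vee$ in the $(z-w_j)^{-1}$ summand is the prescribed principal coefficient of the additional Fuchsian singularity, consistent with Assumption \ref{asu3} and forced by the normalization $X^0(j)=-\theta^\vee$.

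There is no real obstacle: all nontrivial work has been done in the earlier sections. The only point one must be careful about is that Theorem \ref{thm:miuraKdV} is formulated as a statement about the local Miura data, so one must verify that the global surjectivity/injectivity of $F^{r,-\theta^\vee,\dots,-\theta^\vee}$ really forces the global Gauge equivalence of any $\mc{L}_{\mf{s}}$ with a unique operator of the form \eqref{eq:ouroperators}; but this is precisely how $F$ was defined in \eqref{eq:FrX}, so the implication is automatic.
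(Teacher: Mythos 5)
Your proposal is correct and follows essentially the same route as the paper's own proof: dispatch $J=\emptyset$ via Proposition \ref{prop:LGs}, reduce to canonical form via Proposition \ref{pro:quasinormal}, then combine Corollary \ref{cor:theta0} with the bijectivity criterion of Theorem \ref{thm:miuraKdV} and the $\mc{N}(K_{\bb{P}^1})$-invariance of the $e_\theta$-term. No substantive differences from the paper's argument.
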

 We remark that the operator \eqref{eq:ouroperators} is Gauge equivalent to an operator where all regular singularities are simple poles. Indeed we have
 \begin{equation*}
 z^{\rho^\vee} \mc{L}_1=\de_z+\frac{f-\rho^\vee+r}{z}+\sum_{j \in J}\frac{-\theta^\vee+X(j)}{z-w_j} +
 (1+  \la z^{-\hat{k}})e_{\theta},
 \end{equation*}
 where $X(j)=\sum_{i=0}^{\hv-1}X^i(j)\in\mf{n}_+$ for $j=1,\dots,N$.

\vspace{20pt}
 \section{The gradation induced by the highest root}\label{sec:gradation}
In order to proceed with our program, we have to impose on the operators
\eqref{eq:ouroperators} the trivial monodromy conditions
at any additional singularity $w_j,j \in J$. These operators have locally, at any $w_j,j \in J$, the expansion
$$
\partial_x+\frac{-\theta^\vee+ \eta}{x}+O(1), \; \eta \in \mf{n}^+ \; ,
$$
where $\theta^\vee$ is the dual to the highest root of the Lie algebra, $\theta^\vee=\nu^{-1}(\theta)$.
As we will see in the next section, for an operator with a simple pole,
the monodromy is computed by decomposing its coefficients in the eigenspaces of the adjoint action of the \textit{residue} $-\theta^\vee+ \eta$. As a necessary preliminary tool, we therefore devote this section to the study
of the eigen-decomposition of $\mf{g}$ with respect to the adjoint action of $-\theta^\vee+\eta$ with $\eta \in \mf{n}^+$.
\subsection*{The gradation induced by $\theta$}
We need the following lemma, which can be found in \cite[Section 9]{hum72}.
\begin{lemma}\label{lemmahumphreys}
Let $\alpha,\beta$ be nonproportional roots. Then
\begin{enumerate}[(i)]
\item If $(\alpha | \beta)>0$ then $\alpha-\beta$ is a root. If $(\alpha | \beta)<0$ then $\alpha+\beta$ is a root.
\item If $\mf{g}$ is simply-laced then $(\alpha | \beta)\in\{-1,0,1\}$.
\end{enumerate}
\end{lemma}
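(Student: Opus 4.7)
The plan is to prove both parts using the standard machinery of root strings and the fact that in the simply-laced setting one may normalise all roots to have squared length $2$ (consistent with $(\alpha_i^\vee|\alpha_j^\vee)=C_{ij}$, so $(\alpha_i|\alpha_j)=C_{ij}$ and in particular $(\alpha|\alpha)=2$ for every root $\alpha$).

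For part (i), I would recall the $\alpha$-string through $\beta$: since $\alpha,\beta$ are nonproportional, there exist unique integers $p,q\geq 0$ such that $\beta+k\alpha\in\Delta$ exactly for $-p\leq k\leq q$. The fundamental string identity in any semisimple Lie algebra reads
\[
p-q \;=\; \langle \beta,\alpha^\vee\rangle \;=\; \frac{2(\beta|\alpha)}{(\alpha|\alpha)}.
\]
Since $(\alpha|\alpha)>0$, the sign of $p-q$ equals the sign of $(\alpha|\beta)$. Thus $(\alpha|\beta)>0$ forces $p\geq 1$, whence $\beta-\alpha\in\Delta$, and $(\alpha|\beta)<0$ forces $q\geq 1$, whence $\beta+\alpha\in\Delta$. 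This gives (i).

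For part (ii), I would combine the above identity with the bound on the product of Cartan integers. For any pair of nonproportional roots one has the classical inequality
\[
\langle \alpha,\beta^\vee\rangle\,\langle \beta,\alpha^\vee\rangle \;=\; \frac{4(\alpha|\beta)^2}{(\alpha|\alpha)(\beta|\beta)} \;\in\;\{0,1,2,3\},
\]
which follows from the Schwarz inequality (the value $4$ is excluded because $\alpha,\beta$ are nonproportional). In the simply-laced case $(\alpha|\alpha)=(\beta|\beta)=2$, so both Cartan integers equal $(\alpha|\beta)$, and the inequality reduces to $(\alpha|\beta)^2\in\{0,1\}$; since $(\alpha|\beta)\in\bb{Z}$ (as $\langle\beta,\alpha^\vee\rangle\in\bb{Z}$ and $(\alpha|\alpha)=2$), we conclude $(\alpha|\beta)\in\{-1,0,1\}$.

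The only real point to watch is the normalisation of the invariant form: one must check that the form $(\cdot|\cdot)$ defined by \eqref{hhC}, which was used throughout Section \ref{sec:algebras}, indeed gives $(\alpha|\alpha)=2$ for every root $\alpha$ in the simply-laced case. This is immediate from $(\alpha_i|\alpha_j)=C_{ij}$ (hence $(\alpha_i|\alpha_i)=2$) together with Weyl-invariance of the form and the fact that $\mc{W}$ acts transitively on roots of a given length in a simple Lie algebra, all roots being of a single length when $\mf{g}$ is simply-laced. With this normalisation fixed, (i) and (ii) follow as above, and no further obstacle arises.
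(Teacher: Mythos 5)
Your proof is correct. Note that the paper does not actually prove this lemma: it is stated with a bare citation to Humphreys, Section 9, so there is no in-paper argument to compare against. What you give is essentially the standard textbook proof. The one cosmetic difference from Humphreys' own treatment of part (i) is that he observes that the product $\langle\alpha,\beta^\vee\rangle\langle\beta,\alpha^\vee\rangle\in\{0,1,2,3\}$ forces one of the two Cartan integers to equal $\pm 1$ and then applies the corresponding simple reflection to produce $\alpha-\beta$ or $\alpha+\beta$, whereas you invoke the string identity $p-q=\langle\beta,\alpha^\vee\rangle$; both rest on the same $\mf{sl}_2$-theory input and are equally valid. Your care about the normalisation is well placed and correctly resolved: every root is $\mc{W}$-conjugate to a simple root, the form \eqref{hhC} is $\mc{W}$-invariant, and $(\alpha_i|\alpha_i)=C_{ii}=2$, so $(\alpha|\alpha)=2$ for all roots in the simply-laced case, which is exactly what part (ii) needs. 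No gaps.
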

Let $\theta\in\Delta$ be the highest root of $\mf{g}$, and denote $\theta^\vee=\nu^{-1}(\theta)\in\mf{h}$. We want to study the spectrum of $\ad\theta^\vee$ in the adjoint representation. It is clear that for every $\alpha\in\Delta$, if $x\in\mf{g}$ belongs to the root space of $\alpha$ then we have
$$[\theta^\vee,x]=(\theta | \alpha) x.$$
Now we can apply Lemma \ref{lemmahumphreys} (ii) to the case when one of the two roots is $\theta$, the highest root. The only roots proportional to $\theta$ are $\pm \theta$, and we have $(\theta | \pm\theta)=\pm 2$. Due to the lemma, then $(\theta | \beta)\in\{-1,0,1\}$ for every $\beta\in\Delta\setminus\{-\theta,\theta\}$. The spectrum of $\ad \theta^\vee$ in the adjoint representation is then given by
\beq\label{specA-1}
\sigma(\theta^\vee)=
\begin{cases}
\left\{-2,-1,0,1,2\right\}\qquad & \text{ if } \mf{g}\neq \mf{sl}_2\\
\left\{-2,0,2\right\}\qquad & \text{ if } \mf{g}= \mf{sl}_2,
\end{cases}
\eeq
and we obtain a $\bb{Z}-$gradation of $\mf{g}$ as 
\beq\label{highestgrad}
\mf{g}=\bigoplus_{i=-2}^2\mf{g}_i, \qquad \mf{g}_i=\left\{x\in\mf{g}\;\vert\; [\theta^\vee,x]=ix\right\},
\eeq
which we call the \emph{highest root gradation}. 
We denote 
\beq\label{projection0}
\pi_j: \;\mf{g}\to \mf{g}_j
\eeq  
the natural projection from $\mf{g}$ onto the $j-$th component of the gradation, and we set
\beq
x_j=\pi_j(x),\qquad x\in \mf{g}.
\eeq
We describe in more detail the structure of the gradation \eqref{highestgrad}.  
Note that  $\mf{h}\subset \mf{g}_0$, and that $\mf{n}_+$ uniquely decomposes as
\beq\label{n+decomp}
\mf{n}_+=(\mf{g}_0\cap\mf{n}_+)\oplus\mf{g}_1\oplus\mf{g}_2.
\eeq
Let 
$I_{\theta^\vee}=\left\{j\in I\;\vert\;\langle \theta^\vee,\alpha_j\rangle=0\right\}\subset I,$
and denote by $\mf{g}[I_{\theta^\vee}]$ the semisimple Lie algebra generated by
$\{e_{i}, f_{i}, i\in I_{\theta^\vee}\}$. Then, we have
\beq\label{g0decomposition}
\mf{g}_0=\mf{g}[I_{\theta^\vee}]\oplus\bigoplus_{i\in I\setminus I_{\theta^\vee}} \bb{C}\alpha_i^\vee. 
\eeq
\begin{remark}
The set $I_{\theta^\vee}$ is depicted in Table \ref{fig:dynkin} as
the subset of white vertices in the Dynkin diagram. The subalgebra $\mf{g}[I_{\theta^\vee}]$ is isomorphic to the semi-simple
Lie algebra whose Dynkin diagram is obtained by the Dynkin diagram of $\mf{g}$, by removing
the black vertices and all the edges to which
they are connected.  These subalgebras are explicitly computed in Table \ref{tab:gItheta}. Moreover, setting $\mf{p}=\mf{g}_0\oplus\mf{u}$, with
$\mf{u}=\bigoplus_{i> 0}\mf{g}_i,$
then $\mf{p}$ is a parabolic subalgebra of $\mf{g}$, with $\mf{g}_0$ a reductive (Levi) subalgebra and $\mf{u}$ the nilradical of $\mf{p}$. 

\end{remark}

\begin{table}[H]
\caption{Dynkin diagrams for simple Lie algebras of $ADE$ type. White vertices correspond to roots perpendicular to $\theta^\vee$.}
\label{fig:dynkin}
\begin{align*}
&
A_n
\quad
\begin{tikzpicture}[start chain]
\blackdnode{1}
\dnode{2}
\dydots
\dnode{n-1}
\blackdnode{n}
\end{tikzpicture}
&\quad\quad&
E_6
\quad
\begin{tikzpicture}
\begin{scope}[start chain]
\foreach \dyni in {1,2,3,5,6} {
\dnode{\dyni}
}
\end{scope}
\begin{scope}[start chain=br going above]
\chainin (chain-3);
\blackdnodebr{4}
\end{scope}
\end{tikzpicture}
\\
&
D_n
\quad
\begin{tikzpicture}
\begin{scope}[start chain]
\dnode{1}
\blackdnode{2}
\node[chj,draw=none] {\dots};
\dnode{n-2}
\dnode{n}
\end{scope}
\begin{scope}[start chain=br going above]
\chainin(chain-4);
\dnodebr{n-1}
\end{scope}
\end{tikzpicture}
&\quad\quad&
E_7
\quad
\begin{tikzpicture}
\begin{scope}[start chain]
\foreach \dyni in {1,2,3,4,6} {\dnode{\dyni}}
\blackdnode{7}
\end{scope}
\begin{scope}[start chain=br going above]
\chainin (chain-4);
\dnodebr{5}
\end{scope}
\end{tikzpicture}
\\
&
&\quad\quad&
\!\!\!\!\!\!\!\!\!\!\!\!\!\!\!\!\!\!\!\!\!\!\!\!\!\!\!\!\!\!\!\!
\!\!\!\!\!\!\!\!\!\!\!\!\!\!\!\!\!\!\!\!\!\!\!\!\!\!\!\!\!\!\!\!
E_8
\quad
\begin{tikzpicture}
\begin{scope}[start chain]
\blackdnode{1}
\foreach \dyni in {2,3,4,5,7,8} {
\dnode{\dyni}
}
\end{scope}
\begin{scope}[start chain=br going above]
\chainin (chain-5);
\dnodebr{6}
\end{scope}
\end{tikzpicture}
&
\end{align*}
\end{table}


\begin{table}[H]\caption{The (dual) Coxeter number $h^\vee$, the semi-simple subalgebra $\mf{g}[I_{\theta}]$, its dimension, and the set $I\setminus I_\theta$ for any simply-laced Lie algebra $\mf{g}$.}
\label{tab:gItheta}
\normalsize
\begin{center}
\begin{tabular}{c||c|c|c|c} 
$\mf{g}$ &  $h^\vee$ & $\mf{g}[I_{\theta}]$   & $\dim\mf{g}[I_\theta]$ &  $I\setminus I_\theta$ 
\\
\hline\hline
$A_n$ & $n+1$  &  $A_{n-2}$ & $n^2-n$ & $\{1,n\}$ 
\\
\hline
$D_4$ &  $6$ & $A_1\oplus A_1\oplus A_1$  & $9$ &   $\{2\}$ 
\\
\hline
$D_n$, $n\geq 5$ &  $2n-2$  & $A_1\oplus D_{n-2}$ & $2n^2-9n+13$ &  $\{2\}$ 
\\
\hline
$E_6$ &  $12$  & $A_5$ &  $35$ & $\{4\}$ 
\\
\hline
$E_7$ &  $18$  & $D_6$ & $66$ & $\{7\}$ 
\\
\hline
$E_8$ &  $30$  & $E_7$ &  $133$ & $\{1\}$ 
\\
\hline
\end{tabular}
\end{center}
\end{table}

The dimension of the graded components of \eqref{highestgrad} is computed in the following
\begin{proposition}\label{propg1} Let $\mf{g}$ be simply laced, and consider the gradation \eqref{highestgrad}. Then a) $\dim\mf{g}_i=\dim\mf{g}_{-i}$, $i=0,1,2$. b)
We have:
\begin{align*}
\dim\mf{g}_0&=n(h^\vee+1)-4h^\vee+6,\\
\dim\mf{g}_1&=2(h^\vee-2), \\
\dim\mf{g}_{2}&=1.
\end{align*}
In particular, $\mf{g}_{2}=\mf{g}^{h^\vee-1}$ coincides with the root space of the highest root $\theta$.  
\end{proposition}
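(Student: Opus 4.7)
The plan is to prove the three dimension formulas in turn, and then read off the statement for $\mf{g}_0$ by a global dimension count. Part (a), the equality $\dim\mf{g}_i=\dim\mf{g}_{-i}$, is immediate from the root-space decomposition: for $i\neq 0$ one has $\mf{g}_i=\bigoplus_{(\theta|\alpha)=i}\mf{g}_\alpha$, and the involution $\alpha\mapsto -\alpha$ is a bijection between the index sets of $\mf{g}_i$ and $\mf{g}_{-i}$ that preserves the (one-dimensional) root-space dimensions.

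For $\mf{g}_2$ I would invoke Lemma \ref{lemmahumphreys}(ii): every root $\alpha$ not proportional to $\theta$ satisfies $(\theta|\alpha)\in\{-1,0,1\}$, so the only root with $(\theta|\alpha)=2$ is $\theta$ itself. Hence $\mf{g}_2=\bb{C}E_\theta$ is one-dimensional, and since in the simply-laced case $\theta$ has principal height $\hv-1$, this incidentally shows $\mf{g}_2=\mf{g}^{\hv-1}$, i.e. the last non-trivial piece of the principal gradation.

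The central step is the count of $\dim\mf{g}_1$, which I would obtain by computing $\tr_{\mathrm{ad}}(\ad\theta^\vee)^2$ in two ways. On the invariant-form side, this trace is the Killing form $\kappa(\theta^\vee,\theta^\vee)$. Using the standard identity $\kappa=2\hv(\cdot|\cdot)$ relating the Killing form to the normalized form \eqref{normalizedkilling} (see \cite{kac90}), together with $(\theta|\theta)=2$ in the simply-laced case, one gets
\begin{equation*}
\tr_{\mathrm{ad}}(\ad\theta^\vee)^2=\kappa(\theta^\vee,\theta^\vee)=2\hv(\theta^\vee|\theta^\vee)=4\hv.
\end{equation*}
On the other hand, the grading \eqref{highestgrad} diagonalizes $\ad\theta^\vee$ with integer eigenvalues, so
\begin{equation*}
\tr_{\mathrm{ad}}(\ad\theta^\vee)^2=\sum_{i=-2}^2 i^2\dim\mf{g}_i=8+2\dim\mf{g}_1,
\end{equation*}
where I used $\dim\mf{g}_{\pm2}=1$ and part (a). Comparing the two expressions yields $\dim\mf{g}_1=2(\hv-2)$.

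Finally, $\dim\mf{g}_0$ is obtained by subtraction from the classical identity $\dim\mf{g}=n(\hv+1)$, valid for every simply-laced simple Lie algebra (equivalent to $|\Delta|=n\hv$): $\dim\mf{g}_0=n(\hv+1)-2\dim\mf{g}_1-2\dim\mf{g}_2=n(\hv+1)-4(\hv-2)-2=n(\hv+1)-4\hv+6$. I do not foresee a genuine obstacle here; the only non-trivial input is the Killing-form normalization, which is a known fact I would simply cite from \cite{kac90}, and everything else is a short root-system bookkeeping exercise.
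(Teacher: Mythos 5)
Your proof is correct, and for the key dimension count it takes a genuinely different route from the paper. The paper first computes $\dim\mf{g}_0$ and only then deduces $\dim\mf{g}_1$ by subtraction: it uses the decomposition \eqref{g0decomposition}, $\mf{g}_0=\mf{g}[I_{\theta^\vee}]\oplus\bigoplus_{i\in I\setminus I_{\theta^\vee}}\bb{C}\alpha_i^\vee$, and then verifies the formula $\dim\mf{g}_0=n(h^\vee+1)-4h^\vee+6$ case by case from Table \ref{tab:gItheta}, reading off $h^\vee$, $\dim\mf{g}[I_{\theta^\vee}]$ and $\#(I\setminus I_{\theta^\vee})$ for each simply-laced type. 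You go the other way: you compute $\dim\mf{g}_1$ directly and uniformly from the trace identity $\tr_{\mathrm{ad}}(\ad\theta^\vee)^2=\kappa(\theta^\vee,\theta^\vee)=2h^\vee(\theta^\vee|\theta^\vee)=4h^\vee$, which combined with $\sum_i i^2\dim\mf{g}_i=8+2\dim\mf{g}_1$ gives $\dim\mf{g}_1=2(h^\vee-2)$, and then obtain $\dim\mf{g}_0$ by subtraction from $\dim\mf{g}=n(h^\vee+1)$. Both arguments are complete; the computations of part (a) and of $\dim\mf{g}_2$ coincide with the paper's. What your approach buys is uniformity --- a single two-line calculation replaces the type-by-type inspection of the Levi subalgebras $\mf{g}[I_{\theta^\vee}]$ --- at the price of importing the normalization $\kappa=2h^\vee(\cdot|\cdot)$ of the Killing form from \cite{kac90}. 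What the paper's approach buys is the explicit identification of $\mf{g}_0$ as a reductive subalgebra, recorded in Table \ref{tab:gItheta}, which is of independent structural interest. One could note that your method also yields $\dim\mf{g}_1$ as the number of roots $\alpha$ with $(\theta|\alpha)=1$ without ever naming them, whereas the paper's bookkeeping implicitly identifies them.
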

\begin{proof}
Part a) is obvious. For part b) we proceed as follows. The dimension of $\mf{g}_2$ is computed by first noticing that the roots proportional to $\theta$ are $\pm\theta$, with $(\theta,\pm\theta)=\pm2$, and then using Lemma \ref{lemmahumphreys} (ii) which implies that $\mf{g}_2=\mf{g}^{h^\vee-1}$, the root space of the highest root $\theta$. In particular, $\dim\mf{g}_2=1$. (Incidentally, the same argument shows that $\mf{g}_{-2}=\mf{g}^{1-h^\vee}$, the root space of the lowest root $-\theta$, so that $\dim\mf{g}_{-2}=1$). To compute $\dim \mf{g}_0$ note that due to  \eqref{g0decomposition} we have $\dim\mf{g}_0=\dim\mf{g}[I_{\theta^\vee}]+\#(I\setminus I_{\theta})$. By looking in Table \ref{tab:gItheta} at the values of $h^\vee$, $\dim\mf{g}[I_{\theta^\vee}]$ and $\#(I\setminus I_{\theta})$, one proves (case by case) that $\dim \mf{g}_0=n(h^\vee+1)-4h^\vee+6$. Finally, using part a) we get $\dim\mf{g}=\dim\mf{g}_0+2\dim\mf{g}_1+2\dim\mf{g}_2$. 
Substituting the values of $\dim\mf{g}_0$ and $\dim\mf{g}_2$ just obtained, and recalling that dimension of the simple Lie algebra $\mf{g}$ is given by $n(h+1)$, where $n$ is the rank and $h$ the Coxeter number (with $h=h^\vee$ since $\mf{g}$ is simply laced), then the last identity becomes $n(h^\vee+1)=n(h^\vee+1)-4h^\vee+6+2\dim\mf{g}_1+2$, which gives $\dim\mf{g}_1=2(h^\vee-2)$. 
\end{proof}
%


%
\subsection{The gradation induced by $R=-\theta^\vee+\eta$} Later we will be interested in gradations of
$\mf{g}$ induced by elements of the form $R=-\theta^\vee+\eta$, with $\eta\in\mf{n}_+$.
We begin by recalling the definition of Jordan-Chevalley decomposition.
\begin{definition}
 Let $R \in \mf{g}$. There exists a unique decomposition, named Jordan-Chevalley decomposition, of the following form $R=R_s+R_n$, with $R_s$ semisimple, $R_n$ nilpotent, and $[R_s,R_n]=0$. We denote $\sigma(R)=\sigma(R_s)$ the spectrum of $R$ in the adjoint representation.
\end{definition}
The following lemma will be very useful.
\begin{lemma}\label{lemma:semisimpleR}
Let $R=-\theta^\vee+\eta$, with $\eta\in\mf{n}_+$, and write
$\eta=\eta_0+\eta_1+\eta_2$ with $\eta_i \in \mf{g}_i$.
Then
\begin{enumerate}
 \item $\sigma(R)=\sigma(\theta^\vee)$;
 \item $R$ is semisimple if and only if $\eta_0=0$;
 \item If $R$ is semisimple then
 \beq\label{AdAtheta}
R=-e^{\ad(\eta_1+\frac{1}{2}\eta_2)}\theta^\vee.
\eeq
\end{enumerate}

\end{lemma}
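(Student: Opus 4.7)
The plan is to show that every $R = -\theta^\vee + \eta$ is conjugate, via a unipotent element whose Lie algebra element lies in $\mf{g}_1 \oplus \mf{g}_2$, to the standard form $R' := -\theta^\vee + \eta_0$. Once this is established, all three claims follow cleanly from the uniqueness of the Jordan--Chevalley decomposition: indeed $R'$ has semisimple part $-\theta^\vee \in \mf{h}$ and nilpotent part $\eta_0 \in \mf{g}_0 \cap \mf{n}_+$ (nilpotent because $\eta_0 \in \mf{n}_+$), and these commute since $\eta_0 \in \mf{g}_0$.

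The key step I would establish first is the reduction lemma: there exists a unique $v = v_1 + v_2 \in \mf{g}_1 \oplus \mf{g}_2$ such that $e^{\ad v} R = R'$. Because $\mf{g}_j = 0$ for $|j| \geq 3$ in the highest root gradation \eqref{highestgrad}, the BCH series collapses after two steps to $e^{\ad v}R = R + \ad v\, R + \tfrac{1}{2}\ad^2 v\, R$. Using $[\theta^\vee, v_i] = i v_i$ and projecting the required identity $e^{\ad v}R = -\theta^\vee + \eta_0$ onto $\mf{g}_1$ and $\mf{g}_2$, one obtains the system
\begin{align*}
(1 - \ad_{\eta_0})\, v_1 &= -\eta_1, \\
(2 - \ad_{\eta_0})\, v_2 &= -\eta_2 - [v_1, \eta_1] - \tfrac{1}{2}\,[v_1, [v_1, \eta_0]].
\end{align*}
Since $\eta_0 \in \mf{n}_+$ is $\ad$-nilpotent and $\mf{g}_1, \mf{g}_2$ are $\ad_{\eta_0}$-invariant (as $\eta_0 \in \mf{g}_0$), the operators $1 - \ad_{\eta_0}$ and $2 - \ad_{\eta_0}$ are invertible on these finite-dimensional subspaces, so $v_1$ and $v_2$ are uniquely determined.

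With the reduction in hand, statement (1) follows because $\sigma(R) = \sigma(R')$ (conjugation preserves the adjoint spectrum) and the spectrum of $\ad R'$ equals that of its semisimple part $-\ad\theta^\vee$, which coincides with $\sigma(\theta^\vee)$ since the gradation \eqref{highestgrad} is symmetric. Statement (2) is the uniqueness of the Jordan--Chevalley decomposition applied to $R'$: $R$ is semisimple iff $R'$ is semisimple iff $\eta_0 = 0$. For statement (3), when $\eta_0 = 0$ the system above is solved by $v_1 = -\eta_1$ and $v_2 = -\tfrac{1}{2}\eta_2$, that is, $-v = \eta_1 + \tfrac{1}{2}\eta_2$; inverting, $R = e^{\ad(\eta_1 + \tfrac{1}{2}\eta_2)}(-\theta^\vee) = -e^{\ad(\eta_1 + \tfrac{1}{2}\eta_2)}\theta^\vee$, as required.

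The only mildly technical point is the BCH bookkeeping in the reduction lemma, but this is routine thanks to the narrow support $\{-2,-1,0,1,2\}$ of the $\theta^\vee$-gradation; no real obstacle is expected.
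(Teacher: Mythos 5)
Your proof is correct and follows essentially the same route as the paper: both arguments conjugate $R$ to $-\theta^\vee+\eta_0$ by the exponential of an element of $\mf{g}_1\oplus\mf{g}_2$ (your $v_1,v_2$ are exactly the paper's $\bar{\eta}_1,\bar{\eta}_2$, as one sees by substituting $[v_1,\eta_0]=-\eta_1-v_1$ into your second equation) and then invoke uniqueness of the Jordan--Chevalley decomposition of $-\theta^\vee+\eta_0$. Your write-up is in fact slightly more explicit than the paper's, since you spell out why $1-\ad_{\eta_0}$ and $2-\ad_{\eta_0}$ are invertible on $\mf{g}_1$ and $\mf{g}_2$.
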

\begin{proof}
Let $\bar{\eta}_1\in\mf{g}$ satisfy $(\ad \eta_0-1)\bar{\eta}_1=\eta_1$, and
$\bar{\eta}_2\in\mf{g}$ be such that  $(\ad \eta_0-2)\bar{\eta}_2=\eta_2+\frac{1}{2}[\bar{\eta}_1,\eta_1]$.
Then $\bar{\eta}_i\in\mf{g_i}\subset \mf{n}_+$, $i=1,2$, and $[\bar{\eta}_1,\bar{\eta}_2]=0$. Moreover, we have
$$e^{\ad(\bar{\eta}_1+\bar{\eta}_2)}R=-\theta^\vee+\eta_0.$$
Since $\theta^\vee$ is semisimple, $\eta_0$ is nilpotent and $[\theta^\vee,\eta_0]=0$, then we obtain that
\beq\label{RsRn}
R_s=-e^{-\ad(\bar{\eta}_1+\bar{\eta}_2)}\theta^\vee,\qquad R_n=e^{-\ad(\bar{\eta}_1+\bar{\eta}_2)}\eta_0
\eeq
are, respectively, the semisimple and nilpotent parts of the Jordan-Chevalley decomposition of $R$. From this, we obtain: (1) $\sigma(R)=\sigma(\theta^\vee)$, (2) $R$ is semisimple if and only if $\eta_0=0$, (3) if $R$ is semisimple so that  $\eta_0=0$, we have $ \bar{\eta}_1=-\eta_1$ and $\bar{\eta}_2=-\frac{\eta_2}{2}$, so that $R_s$ given by \eqref{RsRn} coincides with $R$ given by \eqref{AdAtheta}.
\end{proof}
%
Let us now consider the gradation
\beq\label{gradation-R}
\mf{g}=\bigoplus_{i\in\sigma(R)}\mf{g}_i(R), \qquad \mf{g}_i(R)=\left\{x\in\mf{g}\;\vert\; [R,x]=ix\right\},
\eeq
in case $R$ is semi-simple.
Note that due to \eqref{AdAtheta}, the gradation \eqref{gradation-R} is conjugated to the highest root gradation
\eqref{highestgrad}, namely
\beq\label{gtildeg}
\mf{g}_i(R)=\left\{e^{\ad(\eta_1+\frac{1}{2}\eta_2)}x\:\vert\:x\in\mf{g}_{-i}\right\}.
\eeq
For $j\in\sigma(R)$ we denote by $\pi^{R}_j$ the natural projection from $\mf{g}$ onto $\mf{g}_j(R)$.
Note that from \eqref{gtildeg} we have that 
$$\pi_j^R=e^{\ad(\eta_1+\frac{1}{2}\eta_2)} \pi_{-j} e^{-\ad(\eta_1+\frac{1}{2}\eta_2)}.$$
We write this formula in a very explicit form that we will need
in the following section, when dealing with trivial monodromy conditions. Let $x\in\mf{g}$ and denote by $x_i=\pi_i(x)$ the projection \eqref{projection0} of $x$ onto $\mf{g}_i$, then we have
\begin{align}\label{remk:expansion}
\pi^R_{2}(x)&=e^{\ad(\eta_1+\frac{1}{2}\eta_2)}x_{-2}\\ \nonumber
\pi^R_{1}(x)&=e^{\ad(\eta_1+\frac{1}{2}\eta_2)}(x_{-1}-\ad_{\eta_1}x_{-2})\\ \nonumber
\pi^R_{0}(x)&=e^{\ad(\eta_1+\frac{1}{2}\eta_2)}(x_0-\ad_{\eta_1} x_{-1}+
\frac{1}{2}\ad^2_{\eta_1}x_{-2}-\frac{1}{2}\ad_{\eta_2}x_{-2})\\ \nonumber
\pi^R_{-1}(x)&=e^{\ad(\eta_1+\frac{1}{2}\eta_2)}(x_1-\ad_{\eta_1}x_0+\frac{1}{2}\ad^2_{\eta_1}x_{-1}-
\frac{1}{2}\ad_{\eta_2}x_{-1}-\frac{1}{6}\ad^3_{\eta_1}x_{-2}\\ \nonumber
&\hspace{60pt}+\frac{1}{2}\ad_{\eta_1}\ad_{\eta_2}x_{-2})\\ \nonumber
\pi^R_{-2}(x)&=x_2-\ad_{\eta_1}x_1-\frac{1}{2}\ad_{\eta_2}x_{0}+
\frac{1}{2}\ad^2_{\eta_1}x_0-\frac{1}{6}\ad^3_{\eta_1}x_{-1}+\frac{1}{2}\ad_{\eta_1}\ad_{\eta_2}x_{-1}\\ \nonumber
&+\frac{1}{24}\ad^4_{\eta_1}x_{-2}+\frac{1}{8}\ad^2_{\eta_2}x_{-2}-\frac{1}{2}\ad^2_{\eta_1}\ad_{\eta_2}x_{-2}.
\end{align}
The above identities have been obtained by means of the following expansion:
\begin{align}
e^{\ad(\eta_1+\frac{1}{2}\eta_2)}=&1+\ad\eta_1+\frac{1}{2}\ad\eta_2+\frac{1}{2}\ad^2\eta_1+
\frac{1}{2}\ad\eta_1\ad\eta_2+\frac{1}{8}\ad^2\eta_2\notag\\
&+\frac{1}{6}\ad^3\eta_1+\frac{1}{4}\ad^2\eta_1\ad\eta_2+\frac{1}{24}\ad^4\eta_1.\label{gexp}
\end{align}
\begin{remark}
Note that 
$\mf{g}_1(R)\oplus\mf{g}_2(R)\subset \mf{n}_+$, while $\mf{g}_0(R)=\mf{h}^{R}\oplus (\mf{g}_0(R)\cap\mf{n}_+)$,
where $\mf{h}^{R}$ is the Cartan subalgebra conjugated to $\mf{h}$ under the automorphism $\exp(\ad(\eta_1+\frac{1}{2}\eta_2))$. 
\end{remark}

\subsection{A symplectic subspace}
Consider the vector subspace
\beq\label{symplecticspace}
\mf{t}=\bb{C}\theta^\vee\oplus\mf{g}_1\oplus\mf{g}_2\subset\mf{g},
\eeq
which due to Proposition \ref{propg1} is even dimensional,  of dimension $\dim\mf{t}=2(h^\vee-1)$. Note that we can write $\mf{t}$ as $\mf{t}=\bb{C}\theta^\vee\oplus [\theta^\vee,\mf{n}_+]$. Define on $\mf{t}$ the skew-symmetric bilinear form
\beq\label{simplg1}
\omega(x,y)=(E_{-\theta}\vert [x,y]), \qquad x,y\in\mf{t},
\eeq
where  $(\cdot\vert\cdot)$ is the normalized invariant bilinear form \eqref{normalizedkilling} on $\mf{g}$,
and $E_{-\theta}$ is the lowest weight vector of $\mf{g}$ introduced in Section \ref{sec:algebras}. The form
\eqref{simplg1} is also non-degenerate, therefore it defines a symplectic structure on $\mf{t}$.
We outline here two different approaches to prove this fact. The first approach is based on the fact
\cite{LPR18} that $\mf{t}=(\ker\ad E_{-\theta})^\perp$, the orthogonal complement (with respect to $(\cdot\vert\cdot)$ )
of the vector subspace $\ker\ad E_{-\theta}$. Therefore, $\mf{t}$ is a symplectic leaf of the  \lq frozen\rq\,
Lie-Poisson structure \cite{MaRa10}, and \eqref{simplg1} is nothing but the induced symplectic form on $\mf{t}$. 

For the second approach we present a canonical basis for \eqref{simplg1}, according to the following construction. Recall the root vectors $E_\alpha$, $\alpha\in\Delta$, of $\mf{g}$ introduced in Section \ref{sec:algebras}, satisfying the commutation relations \eqref{commrel}. From \eqref{highestgrad} we have that $E_{\alpha}\in\mf{g}_i$ if and only if $(\alpha | \theta)=i$, so that $\mf{g}_1=\langle E_\alpha, (\theta|\alpha)=1\rangle$ and $\mf{g}_2=\langle E_\theta\rangle$. In order to deal with elements in $\mf{t}$, we define the set $\Theta\subset\mf{h}^\ast$ as
\beq\label{Theta}
\Theta=\{0,\theta\}\cup\{\alpha\in\Delta\,|\, (\alpha | \theta)=1\}.
\eeq
Denoting $E_0=\theta^\vee/2$  then $\{E_\alpha \,|\, \alpha\in \Theta\}$ is a basis for $\mf{t}$. Recall the bimultiplicative function $\varepsilon_{\alpha,\beta}$ introduced in Section \ref{sec:algebras}.
\begin{lemma}\label{lemmaioio}
If $\alpha\in \Theta$, then $\theta-\alpha\in \Theta$ and $\theta-\alpha\neq\alpha$. Moreover, $\varepsilon_{\alpha,\theta}=-\varepsilon_{\theta-\alpha,\theta}$.
\end{lemma}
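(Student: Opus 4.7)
The plan is to verify both assertions by a direct case-analysis based on the definition of $\Theta$, combined with the structural lemmas already available (Lemma \ref{lemmahumphreys} and Lemma \ref{lemmaepsilon}).

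For the first assertion, I would split $\alpha \in \Theta$ into three cases according to the definition: $\alpha = 0$, $\alpha = \theta$, and $\alpha$ a root with $(\alpha|\theta) = 1$. The first two cases are trivial since $\theta - 0 = \theta$ and $\theta - \theta = 0$, and clearly $0 \neq \theta$. In the third case, since $(\alpha|\theta) = 1 \neq 2$ we have that $\alpha$ and $\theta$ are not proportional (they would be proportional only if $\alpha = \pm \theta$, which would give $(\alpha|\theta) = \pm 2$). Since $(\alpha|\theta) > 0$, Lemma \ref{lemmahumphreys}(i) yields $\alpha - \theta \in \Delta$, hence $\theta - \alpha \in \Delta$ as well; a direct computation $(\theta - \alpha|\theta) = 2 - 1 = 1$ places it in $\Theta$. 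To exclude $\theta - \alpha = \alpha$, note that this would force $\theta = 2\alpha$, which is incompatible with $(\theta|\theta) = (\alpha|\alpha) = 2$ in the simply-laced case.

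For the second assertion, I would compute $\varepsilon_{\theta - \alpha, \theta}$ by exploiting bimultiplicativity \eqref{bimultiplicative2} together with Lemma \ref{lemmaepsilon}. Writing $\theta - \alpha = \theta + (-\alpha)$ and using bimultiplicativity in the first slot gives $\varepsilon_{\theta - \alpha, \theta} = \varepsilon_{\theta, \theta}\, \varepsilon_{-\alpha, \theta}$. By Lemma \ref{lemmaepsilon}(i) we have $\varepsilon_{-\alpha,\theta} = \varepsilon_{\alpha,\theta}$, and by Lemma \ref{lemmaepsilon}(iii) we have $\varepsilon_{\theta,\theta} = (-1)^{\frac{1}{2}(\theta|\theta)} = -1$. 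Combining these gives $\varepsilon_{\theta - \alpha, \theta} = -\varepsilon_{\alpha, \theta}$, which is the claim. The cases $\alpha = 0$ and $\alpha = \theta$ should be checked separately using Lemma \ref{lemmaepsilon}(ii), but they reduce immediately to the same identity $\varepsilon_{\theta,\theta} = -1$.

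There is no real obstacle here: both parts are routine consequences of the structural lemmas already established for simply-laced algebras. The only subtle point worth stating explicitly is the use of simply-lacedness in ruling out $\theta = 2\alpha$ (via $(\theta|\theta) = 2$) and in guaranteeing the clean value $\varepsilon_{\theta,\theta} = -1$.
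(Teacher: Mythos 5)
Your proof is correct and follows essentially the same route as the paper's: the same case split on $\alpha\in\{0,\theta\}$ versus $(\alpha|\theta)=1$, the same use of Lemma \ref{lemmahumphreys}(i) for $\theta-\alpha\in\Delta$, and the same computation $\varepsilon_{\theta-\alpha,\theta}=\varepsilon_{\theta,\theta}\varepsilon_{-\alpha,\theta}=-\varepsilon_{\alpha,\theta}$ via Lemma \ref{lemmaepsilon}. If anything you are slightly more careful than the paper in checking non-proportionality before invoking Lemma \ref{lemmahumphreys} and in justifying why $\theta=2\alpha$ is impossible.
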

\begin{proof}
If $\alpha$ is equal to $\theta$ or $0$ the only nontrivial assertion is the last, where due to Lemma \ref{lemmaepsilon} we have $\varepsilon_{0,\theta}=1$ and $\varepsilon_{\theta,\theta}=-1$. Now let $\alpha\in\Theta$ with $(\alpha | \theta)=1$. Then, due to Lemma \ref{lemmahumphreys}(i) we have $\theta-\alpha\in\Delta$, and $(\theta-\alpha | \theta)=1$,  so that $\theta-\alpha\in\Theta$. Moreover, $\theta-\alpha=\alpha$ implies $\theta=2\alpha$ for $\alpha\in\Delta$, which is impossible. Thus, $\theta-\alpha\neq\alpha$. Finally, due to Lemma \ref{lemmaepsilon}, we have $\varepsilon_{\theta-\alpha,\theta}=\varepsilon_{\theta,\theta}\varepsilon_{-\alpha,\theta}=-\varepsilon_{\alpha,\theta}$.
\end{proof}
The function $\varepsilon$ takes values $\pm 1$. We introduce the subset
\beq\label{Thetatilde}
\widetilde{\Theta}=\{\alpha\in\Theta\,|\,\varepsilon_{\theta,\alpha}=1\}\subset \Theta,
\eeq
and due to Lemma \ref{lemmaioio} we have $\Theta=\{\alpha,\theta-\alpha\, |\, \alpha\in\widetilde{\Theta}\}$. 
\begin{proposition}
For $x,y\in\mf{t}$, with $x=\sum_{\alpha\in \Theta}x^\alpha E_\alpha$ and $y=\sum_{\alpha\in \Theta}y^\alpha E_\alpha$ then \eqref{simplg1} takes the form
$$\omega(x,y)=-\sum_{\alpha\in\widetilde{\Theta}}\left(x^{\alpha}y^{\theta-\alpha}-x^{\theta-\alpha} y^{\alpha}\right),$$
where $\widetilde{\Theta}$ is given by \eqref{Thetatilde}. Thus, $\{E_\alpha \,|\,\alpha\in \Theta\}$ is a canonical basis for $\mf{t}$.
\end{proposition}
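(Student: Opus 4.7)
The plan is a direct computation: expand $[x,y]$ in the basis $\{E_\alpha : \alpha\in\Theta\}$, project onto $\mathbb{C}E_\theta$ via $(E_{-\theta}|\,\cdot\,)$, and group the resulting monomials using the involution $\alpha\mapsto\theta-\alpha$ on $\Theta$ provided by Lemma \ref{lemmaioio}.

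First I would write
\[
\omega(x,y) \;=\; \sum_{\alpha,\beta\in\Theta} x^\alpha y^\beta\,(E_{-\theta}\,|\,[E_\alpha,E_\beta]),
\]
and note that by \eqref{Enormalized} the pairing $(E_{-\theta}\,|\,\cdot\,)$ annihilates $\mathfrak{h}$ and every $\mathfrak{g}_\gamma$ with $\gamma\neq\theta$, while $(E_{-\theta}|E_\theta)=-1$. Hence the only pairs $(\alpha,\beta)\in\Theta\times\Theta$ that contribute are those for which $[E_\alpha,E_\beta]\in\mathbb{C}E_\theta$, namely (i) $\{\alpha,\beta\}=\{0,\theta\}$ and (ii) $\alpha,\beta\in\Delta$ with $\alpha+\beta=\theta$. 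Using $E_0=\tfrac{1}{2}\theta^\vee$ and \eqref{commrel}, a direct computation gives $[E_0,E_\theta]=E_\theta$, $[E_\theta,E_0]=-E_\theta$, and $[E_\alpha,E_{\theta-\alpha}]=\varepsilon_{\alpha,\theta-\alpha}\,E_\theta$ for $\alpha\in\Delta$ with $(\alpha|\theta)=1$. Hence
\[
\omega(x,y) \;=\; -(x^0 y^\theta - x^\theta y^0) \;-\; \sum_{\substack{\alpha\in\Delta\\(\alpha|\theta)=1}} \varepsilon_{\alpha,\theta-\alpha}\,x^\alpha y^{\theta-\alpha}.
\]

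Next I would simplify the sign $\varepsilon_{\alpha,\theta-\alpha}$. By bimultiplicativity and Lemma \ref{lemmaepsilon}(i,iii) we have
\[
\varepsilon_{\alpha,\theta-\alpha} \;=\; \varepsilon_{\alpha,\theta}\,\varepsilon_{\alpha,-\alpha} \;=\; \varepsilon_{\alpha,\theta}\,(-1)^{\tfrac{1}{2}(\alpha|\alpha)} \;=\; -\varepsilon_{\alpha,\theta},
\]
since $(\alpha|\alpha)=2$ in the simply-laced case. Moreover, Lemma \ref{lemmaepsilon}(iv) and $(\alpha|\theta)=1$ yield $\varepsilon_{\alpha,\theta}=-\varepsilon_{\theta,\alpha}$. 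Thus the sum over roots becomes $\sum_{\alpha\in\Delta,(\alpha|\theta)=1} \varepsilon_{\theta,\alpha}\, x^\alpha y^{\theta-\alpha}$ with a sign reversal.

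Finally, using Lemma \ref{lemmaioio} I would partition the set of roots $\alpha\in\Delta$ with $(\alpha|\theta)=1$ into pairs $\{\alpha,\theta-\alpha\}$ with $\alpha\in\widetilde{\Theta}\setminus\{0\}$, using that the involution $\alpha\mapsto\theta-\alpha$ flips the sign of $\varepsilon_{\theta,\alpha}$. Since $\varepsilon_{\theta,\alpha}=+1$ for $\alpha\in\widetilde{\Theta}$ by definition \eqref{Thetatilde}, pairing the two contributions produces
\[
\sum_{\alpha\in\widetilde{\Theta}\setminus\{0\}} \bigl(x^\alpha y^{\theta-\alpha} - x^{\theta-\alpha} y^\alpha\bigr).
\]
Combined with the $0$-$\theta$ contribution $-(x^0 y^\theta - x^\theta y^0)$ (and observing that $0\in\widetilde{\Theta}$ since $\varepsilon_{\theta,0}=1$, while $\theta\notin\widetilde{\Theta}$ since $\varepsilon_{\theta,\theta}=-1$), this yields the claimed formula
\[
\omega(x,y) \;=\; -\sum_{\alpha\in\widetilde{\Theta}} \bigl(x^\alpha y^{\theta-\alpha} - x^{\theta-\alpha} y^\alpha\bigr),
\]
and the canonicity of the basis follows immediately.

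The computation is essentially a bookkeeping of signs, so there is no serious obstacle; the only delicate point is keeping consistent track of the two sign reversals produced by $\varepsilon_{\alpha,-\alpha}=-1$ and $\varepsilon_{\alpha,\theta}=-\varepsilon_{\theta,\alpha}$ and matching them against the definition of $\widetilde{\Theta}$ through $\varepsilon_{\theta,\alpha}$ rather than $\varepsilon_{\alpha,\theta}$.
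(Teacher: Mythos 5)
Your computation is correct and follows essentially the same route as the paper's proof: identify $\omega(x,y)=-c$ with $c$ the coefficient of $E_\theta$ in $[x,y]$, reduce the relevant sign to $\varepsilon_{\theta,\alpha}$ via Lemma \ref{lemmaepsilon}, and pair $\alpha$ with $\theta-\alpha$ using Lemma \ref{lemmaioio}; you merely spell out the "explicit computation" that the paper compresses into one line. The sign bookkeeping (including $[E_0,E_\theta]=E_\theta$ for $E_0=\theta^\vee/2$ and the membership of $0$ but not $\theta$ in $\widetilde{\Theta}$) all checks out.
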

\begin{proof}
Let $c$ be the coefficient of $E_\theta$ in the commutator $[X,Y]$. Due to \eqref{simplg1} and \eqref{Enormalized} we have $\omega(X,Y)=-c$. An explicit computation gives 
$$c=\sum_{\alpha\in\Theta}\epsilon_{\theta,\alpha}x^\alpha y^{\theta-\alpha}=\sum_{\alpha\in\widetilde{\Theta}}(x^\alpha y^{\theta-\alpha}-x^{\theta-\alpha}y^\alpha),$$
where in the last step we used Lemma \ref{lemmaioio}.
\end{proof}
The symplectic space $\mf{t}$  will play an important role in Section \ref{sec:zeromonodromykdv}, when computing the trivial monodromy conditions for quantum-KdV opers. 
\begin{remark} In order to obtain a canonical basis for the form \eqref{simplg1} we chose in the construction above  $E_0=\theta^\vee/2$. In the computations we will perform in Sections \ref{sec:zeromonodromykdv} and \ref{sec:computations} it will be slightly more convenient to choose $E_0=\theta^\vee$. We will point out our choice whenever required.
\end{remark}

\vspace{20pt}
\section{Trivial monodromy at a regular singular point}\label{sec:singularpoint}
In this Section -- following \cite{bava83} -- we consider an arbitrary
linear operator with a first order pole, and we derive necessary and sufficient conditions (on its Laurent series) to have trivial monodromy. We then specialize to the case of the localisation of a Quantum $\mf{g}-$KdV oper \eqref{eq:ouroperators} at an additional singularity.

Let $x=0$ be the singular point, and assume that the operator has the expansion
\beq\label{genoper}
\mc{L}=\partial_x+\frac{R}{x}+\sum_{k\geq0} a^k x^k,
\eeq
with $R,a^k \in \mf{g}$.

\begin{definition}
We say that the operator \eqref{genoper} has trivial monodromy at $x=0$ if,
for any finite dimensional $\mf{g}$-module $V$,
the differential equation $\mc{L}\psi=0$, with $\psi: \bb{C} \to V$, has trivial monodromy
\end{definition}
It is well known that the eigenvalues of the monodromy matrix
are of the form $\exp{2 i \pi \la}$, where $\la$ an eigenvalue of $R$.
Since we look for conditions on $\mc{L}$ such that the monodromy matrix is the identity in every representation,
we must restrict to the case where $R$ has integer eigenvalues in any finite-dimensional representation of $\mf{g}$. For this reason, we assume that the semi-simple part $R_s$ of $R$ in the Jordan-Chevalley decomposition is conjugated to an element of the co-root lattice $Q^\vee$ of $\mf{g}$. In fact, see Proposition \ref{prop:bava}(5), the operator \eqref{genoper} has trivial monodromy if and only if it has trivial monodromy in the adjoint representation and $R_s$ is conjugated to an element of the co-root lattice.

Because of our assumption on $R_s$, we have the $\bb{Z}$-gradation
\beq\label{gradationA-12}
\mf{g}=\bigoplus_{i\in\sigma(R)}\mf{g}_i(R), \qquad \mf{g}_i(R)=\left\{x\in\mf{g}\;\vert\; [R_s,x]=ix\right\}.
\eeq
In order to compute the monodromy of $\mc{L}$ at $x=0$ we transform it into its
\emph{aligned} form. The following definition is adapted from \cite{bava83}.
\begin{definition}[\cite{bava83}]\label{def:aligned}
The operator \eqref{genoper} is said to be \emph{aligned} (at $x=0$) if $a^i\in\mf{g}_{-i-1}(R)$ for $i\geq 0$.
\end{definition}
\begin{proposition}[\cite{bava83}]\label{prop:bava}
Let $\mc{L}$ be a connection with local expansion \eqref{genoper}, such that $R_s$ is conjugated
to an element of the co-root lattice $Q^\vee$, and let $m=\max \sigma(R)$. Let moreover $G[[x]]_1$
be the sub-group of Gauge transformations of the form $G[[x]]_1=\left\{1+\sum_{i\geq 1} M_i x^i\;\vert\;
M_i\in\mf{g}\right\}$ and positive radius of convergence. 
\begin{enumerate}
 \item $\mc{L}$ is equivalent to an aligned connection by a transformation in $G[[x]]_1$.
 \item The monodromy of $\mc{L}$ coincides with the monodromy of the aligned connection equivalent to it.
 \item If $a^i \in \mf{g}_{-i-1}(R)$ for $i=0,\dots,m-1$, then $\mc{L}$ is conjugated in $G[[x]]_1$ to the aligned connection
 $\partial_x+x^{-1}R+\sum_{i=0}^{m-1} a^i x^i$.
 \item  The monodromy of the connection $\mc{L}$ depends only on the first $m+1$ coefficients of the expansion
 of $\mc{L}$ at $x=0$, namely $R, a^0,\dots,a^{m-1}$.
 \item Let $\mc{L}$ be aligned.  The monodromy is trivial
 if and only if  $R_n=0$, and $a^k=0$ for  $k=0,\dots,m-1$.\label{prop:bava5}
\end{enumerate}

\begin{proof}
 The proof can be found in \cite[Section 3]{bava83}. Here we just comment on part $(5)$. Let $\mc{L}$ be aligned, namely of the form
 \beq\label{L1}
\mc{L}=\de_x+\frac{R}{x}+\sum_{k=0}^{m-1} a^k x^k, \qquad a^k\in\mf{g}_{-k-1}(R), 
\eeq
with $m\leq\max\sigma(R)$. Since $R_s$ belongs to the co-root lattice, the Gauge transformation $x^{\ad R_s}$ is single valued in any finite dimensional $\mf{g}-$module. We have
\beq\label{euler}
x^{\ad R_s} \mc{L}=\de_x+\frac{R_n+\sum_{k=0}^{m-1} a^k}{x} \; ,
\eeq
and the monodromy of the above operator coincides with the monodromy of $\mc{L}$. We now prove that the element $R_n+\sum_{k=0}^m a^k$ is nilpotent. Indeed, by definition $R_n\in\mf{g}_0(R)$, so that $\ad_{R_n}:\mf{g}_j(R)\to \mf{g}_j(R)$, while by hypotesis $a_k\in \mf{g}_{-k-1}(R)$, so that $\ad_{a_k}:\mf{g}_j(R)\to \mf{g}_{j-k-1}(R)$, with $k\geq 0$.
Suppose that $\la \neq 0$
is a non trivial eigenvalue of $R_n+\sum_{j=0}^m a^j$ with eigenvector $0\neq y=\sum_k y_k$ and $y_k \in \mf{g}_k(R)$.
If $K$ is the minimum integer such that $y_k\neq0$, then $y_K$ is an eigenvector of $R_n$ with non-trivial eigenvalue $\la$,
which is a contradiction, because $R_n$ is nilpotent.

Now let $V$ be a non-trivial finite dimensional $\mf{g}-$module, let $\{\psi_i,\,i=1,\dots,\dim V\}$  be a basis of  $V$. Then
the functions $\Psi_i=x^{-\big(R_n+\sum_{j=0}^{m-1} a^j\big)}\psi_i$, with $i=1 \dots \dim V$, satisfy $(x^{\ad R_s}\mc{L})\Psi_i=0$ and form a basis of solutions. The monodromy matrix is therefore given by $$\mc{M}=\exp\big(-2 i \pi  (R_n+\sum_{j=0}^{m-1} a^j)\big) \;.$$
Since $R_n+\sum_{j=0}^{m-1} a^j$ is nilpotent then the monodromy is trivial
if and only if $R_n+\sum_{j=0}^{m-1} a^j=0$,
from which the thesis follows.

\end{proof}

\end{proposition}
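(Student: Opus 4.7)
The plan is to treat parts (1)--(4) together by a recursive alignment procedure on the Laurent coefficients, and to reduce part (5) to a direct monodromy computation via a single Gauge transformation that is single-valued thanks to the coroot-lattice hypothesis.

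For part (1), I would construct $T \in G[[x]]_1$ order by order. Decomposing each coefficient as $a^k = \sum_{j \in \sigma(R)} a^k_j$ with $a^k_j \in \mf{g}_j(R)$, a direct application of the Dynkin formula shows that conjugation by $\exp(x^{k+1} M)$ with $M \in \mf{g}_j(R)$ modifies the $x^k$ coefficient by $-(j+k+1) M$ while leaving all lower-order terms untouched. For every $j \neq -k-1$ the component $a^k_j$ can therefore be eliminated by an appropriate choice of $M$; iterating over $k \geq 0$ and over all such $j$ produces a formal transformation that brings $\mc{L}$ to aligned form. Convergence of this formal series is the step that requires genuine work: I would establish it by a standard majorant argument as in \cite{bava83}, using that $\sigma(R) \subset \bb{Z}$ and that the denominators $|j+k+1|$ actually inverted are uniformly bounded below by $1$. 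Part (2) is then immediate, as $T$ is holomorphic at the origin and hence single-valued. For part (3), the spectrum $\sigma(R)$ is symmetric about the origin (the roots come in pairs $\pm\alpha$), so for $k \geq m$ the value $-k-1 < -m$ lies outside $\sigma(R)$ and every component of $a^k$ is unaligned; all such terms can be killed without perturbing the coefficients of order $<k$, leaving exactly $\partial_x + R/x + \sum_{i=0}^{m-1} a^i x^i$. Part (4) is a direct corollary, since the whole alignment-and-truncation procedure depends only on $R, a^0, \dots, a^{m-1}$.

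Part (5) would then be handled under the aligned assumption by applying $x^{\ad R_s}$, which is single-valued in every finite-dimensional $\mf{g}$-module because $R_s$ is conjugate to an element of $Q^\vee$. Using the action formulas of $\mc{H}(K_D)$ one computes $x^{\ad R_s}.\mc{L} = \partial_x + x^{-1} N$ with $N := R_n + \sum_{k=0}^{m-1} a^k$, since each $a^k \in \mf{g}_{-k-1}(R)$ contributes $x^{-1} a^k$ and the derivative contributes $-R_s/x$, cancelling the semisimple part of the residue. The element $N$ is nilpotent: $\ad R_n$ preserves the $R$-grading while each $\ad a^k$ strictly lowers it, so the component of any alleged eigenvector of $N$ in its lowest-degree graded piece would be an eigenvector of the nilpotent $R_n$ with the same eigenvalue, forcing that eigenvalue to vanish. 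The fundamental matrix is therefore $x^{-N}$ and the monodromy is $\exp(-2\pi i N)$, which is the identity iff $N = 0$; since $R_n \in \mf{g}_0(R)$ and the $a^k \in \mf{g}_{-k-1}(R)$ occupy pairwise distinct graded subspaces, $N = 0$ forces $R_n = 0$ and every $a^k = 0$. The main obstacle is the convergence step in part (1), and the coroot-lattice hypothesis is exactly what rules out small-denominator pathologies; the rest of the argument is linear algebra in the graded decomposition of $\mf{g}$ together with the standard monodromy computation for an operator with first-order pole and nilpotent residue.
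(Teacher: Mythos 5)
Your argument for part (5) is essentially identical to the paper's: apply $x^{\ad R_s}$ (single-valued by the coroot-lattice hypothesis), observe that the residue becomes $N=R_n+\sum_k a^k$, prove $N$ nilpotent by the lowest-graded-component argument, and conclude from $\mc{M}=\exp(-2\pi i N)$ that triviality forces $N=0$ and hence each graded component to vanish. Parts (1)--(4), which the paper simply cites from Babbitt--Varadarajan, you sketch correctly, and your observations that $\sigma(R)$ is symmetric about the origin (so $-k-1\notin\sigma(R)$ for $k\geq m$) and that the elimination leaves lower-order coefficients untouched are exactly the right points. Two small remarks: the effect of conjugating by $\exp(x^{k+1}M)$ with $M\in\mf{g}_j(R)$ on the $x^k$ coefficient is $-(j+k+1)M-[R_n,M]$ rather than $-(j+k+1)M$; this does not harm your conclusion, since $(j+k+1)+\ad_{R_n}$ is still invertible on $\mf{g}_j(R)$ whenever $j+k+1\neq 0$ ($\ad_{R_n}$ being nilpotent there), but it is worth stating the operator correctly. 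And the convergence of the formal alignment series is genuinely deferred to the majorant argument of the reference, so your proof is complete only modulo that citation --- which is the same status as the paper's.
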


Now let us consider what happens in the particular case of a quantum $\mf{g}-$KdV oper \eqref{eq:ouroperators}. Localising at $w_j$, we obtain a connection of the form \eqref{genoper} with  $R=-\theta^\vee+\eta,$ where $\eta=\sum_{i \geq 1} w_j^{-i} X^i(j) \in \mf{n}^+$.
Due to \eqref{n+decomp} we can write 
$$R=-\theta^\vee+\eta_0+\eta_1+\eta_2,$$
with $\eta_0\in\mf{g}_0\cap\mf{n}_+$, $\eta_1\in\mf{g}_1$ and $\eta_2\in\mf{g}_2$. We can then apply Lemma \ref{lemma:semisimpleR} from which we obtain that $\sigma(R)=\sigma(\theta^\vee)$. In particular, $\max \sigma(R)=2$, so that according to Proposition \ref{prop:bava}(4), the monodromy of the quantum KdV opers at $z=w_j$ depends only on the first three terms of the Laurent expansion. In the next theorem we derive necessary and sufficient conditions for the trivial monodromy of an operator of the form 
\beq\label{genoper1}
\mc{L}=\partial_x+\frac{R}{x}+a+bx+O(x^2),
\eeq
with $ R=-\theta^\vee+\eta_0+\eta_1+\eta_2$ and $a,b\in\mf{g}$.
\begin{theorem}\label{thm:zeromonodromy}
The operator  \eqref{genoper1} has trivial monodromy at $x=0$ if and only if 
\begin{subequations}\label{zeromonodromy}
\begin{align}
&\eta_0=0,\label{zeromonodromya}\\
&\pi^{R}_{-1}(a)=0,\label{zeromonodromyb}\\
&\pi^{R}_{-2}(b)=[\pi^{R}_{-2}(a),\pi^{R}_{0}(a)],\label{zeromonodromyc}
\end{align}
\end{subequations}
where $\pi^{R}_{i}$ denotes the projection of $\mf{g}$ onto $\mf{g}_i(R)=\lbrace x \in \mf{g}, [-\theta^\vee+\eta_1+\eta_2,x]= ix \rbrace$.
\end{theorem}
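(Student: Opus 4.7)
The plan is to deduce Theorem \ref{thm:zeromonodromy} from Proposition \ref{prop:bava} combined with Lemma \ref{lemma:semisimpleR} and a low-order computation of the aligning Gauge transformation. First, by Lemma \ref{lemma:semisimpleR}(1) and the description of $\sigma(\theta^\vee)$ in \eqref{specA-1}, one has $\sigma(R)=\sigma(\theta^\vee)$ and in particular $m:=\max\sigma(R)=2$. Proposition \ref{prop:bava}(4) then tells us that the monodromy depends only on $R,a,b$, and Proposition \ref{prop:bava}(5) says that, once in aligned form, trivial monodromy is equivalent to $R_n=0$ together with the vanishing of the aligned coefficients at $x^0$ and $x^1$. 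By Lemma \ref{lemma:semisimpleR}(2), $R_n=0\iff\eta_0=0$, which is precisely \eqref{zeromonodromya}. From here on I work under this assumption, so $R=R_s$ is semisimple and the gradation $\mf{g}=\bigoplus_j\mf{g}_j(R)$ with $L_R:=\ad R=j\cdot\mathrm{id}$ on $\mf{g}_j(R)$ is available.

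Next I would explicitly align $\mc{L}$ by successive Gauge transformations $\exp(\ad(M_k x^k))\in G[[x]]_1$ chosen to cancel the non-aligned components at order $x^{k-1}$. Using the Dynkin formula \eqref{dynkinformula}, the first Gauge $y_1=M_1 x$ modifies the constant term as
\begin{equation*}
a\longmapsto a-(L_R+1)M_1,
\end{equation*}
and since $L_R+1$ is invertible on $\bigoplus_{j\neq -1}\mf{g}_j(R)$, the choice $M_1=\sum_{j\neq -1}(j+1)^{-1}\pi^R_j(a)$ is uniquely determined and leaves the aligned constant term equal to $\pi^R_{-1}(a)$. Its vanishing gives condition \eqref{zeromonodromyb}.

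The main step is the second-order alignment. Tracking $\exp(\ad(M_1 x))$ to order $x^1$ yields the new coefficient
\begin{equation*}
\tilde b=b+[M_1,a]+\tfrac{1}{2}[M_1,[M_1,R]];
\end{equation*}
a subsequent $\exp(\ad(M_2 x^2))$ shifts $\tilde b$ by $-(L_R+2)M_2$, so that $M_2$ can be chosen to leave only $\pi^R_{-2}(\tilde b)$. Assuming now \eqref{zeromonodromyb}, I decompose $a=a_{-2}+a_0+a_1+a_2$ with $a_j=\pi^R_j(a)$, which forces $M_1=-a_{-2}+a_0+\tfrac{1}{2}a_1+\tfrac{1}{3}a_2$. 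Degree counting on the graded brackets shows that the only pairs $(i,j)$ with $i+j=-2$ contributing to $\pi^R_{-2}([M_1,a])$ and $\pi^R_{-2}([M_1,[M_1,R]])$ are $(-2,0)$ and $(0,-2)$, giving
\begin{equation*}
\pi^R_{-2}([M_1,a])=-2[a_{-2},a_0],\qquad \pi^R_{-2}([M_1,[M_1,R]])=2[a_{-2},a_0],
\end{equation*}
so that $\pi^R_{-2}(\tilde b)=\pi^R_{-2}(b)-[\pi^R_{-2}(a),\pi^R_0(a)]$. Setting this projection to zero is precisely \eqref{zeromonodromyc}, and by Proposition \ref{prop:bava}(5) this is equivalent to trivial monodromy.

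The main bookkeeping obstacle will be the combinatorial coefficients appearing in the second-order computation (the interplay between the $(j+1)^{-1}$ normalization of $M_1$ and the Dynkin-formula factors $1/(k+1)!$), and checking that once \eqref{zeromonodromyb} is imposed no stray contribution of the form $[a_{-1},\cdot]$ survives. Everything else is formal: finiteness of $\sigma(R)$ ensures the procedure terminates at order $x^1$, and Proposition \ref{prop:bava} takes care of the extraction of the monodromy from the aligned form.
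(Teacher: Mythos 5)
Your proposal is correct and follows essentially the same route as the paper: reduce to the aligned form via Proposition \ref{prop:bava}, identify $\eta_0=0$ with $R_n=0$ via Lemma \ref{lemma:semisimpleR}, then align order by order with $\exp(\ad(M_1x))$ and $\exp(\ad(M_2x^2))$, which is exactly the paper's $e^{xT}$, $e^{x^2T'}$ with the same choice $\pi^R_j(T)=\pi^R_j(a)/(j+1)$. The only cosmetic difference is that the paper simplifies $\tfrac12[T,[T,R]]$ using the identity $[T,R]=T-a$ coming from the vanishing of the aligned constant term, whereas you compute $\pi^R_{-2}([M_1,a])$ and $\pi^R_{-2}([M_1,[M_1,R]])$ separately by degree counting; both yield $C^1=\pi^R_{-2}(b)-[\pi^R_{-2}(a),\pi^R_0(a)]$.
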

\begin{proof} 
Let $\mc{L}$ be an operator of the form \eqref{genoper1}. Due to Proposition \ref{prop:bava} the monodoromy of $\mc{L}$ coincides with the monodromy of its aligned form, and the residue $R$ of $\mc{L}$ coincides with the residue of the aligned operator. Moreover, again by Proposition \ref{prop:bava}, if an aligned oper has trivial monodromy then $R_n=0$. It follows that if the monodromy of $\mc{L}$ is trivial then $R_n=0$. By Lemma \ref{lemma:semisimpleR}, the element $R=-\theta^\vee+\eta_0+\eta_1+\eta_2$ is semisimple if and only if $\eta_0=0$. This proves condition \eqref{zeromonodromya}.

By Proposition \ref{prop:bava}, the operator \eqref{genoper}
with $R=-\theta^\vee+\eta_1+\eta_2$ is equivalent to an operator of the form
\beq\label{proofC(z)}
\mc{L}_1=\de_x+\frac{-\theta^\vee+\eta_1+\eta_2}{x}+C^0+C^1x + O(x^2), \qquad C^i\in \mf{g}_{-i-1}(R), \, i=0,1,
\eeq
by a transformation in $G[[x]]_1$, and the monodromy is trivial if and only if
$C^0=C^1=0$. 
The thesis is proved once we show that $C^0=\pi^{R}_{-1}(a)$ and $C^1=\pi^{R}_{-2}(b)-[\pi^{R}_{-2}(a),\pi^{R}_{0}(a)]$. In order to obtain $C^0,C^1$ we look for a Gauge transformation of the form $e^{x^2 T'}e^{x T}\in G[[x]]_1$, such that
$e^{xT}\mc{L}=\partial_x+\frac{R}{x}+C^0+O(x)$, and $e^{x^2 T'}e^{x T}.\mc{L}=\partial_x+\frac{R}{x}+C^0+C^1 x+O(x^2)$. We have
\beq\label{proof:TL1}
e^{x T}\mc{L}=\de_x+\frac{R}{x}+D^0+D^1 x+\dots,
\eeq
where
\begin{align}
D^0&=a-T+[T,R]\label{aligncond1}\\
D^1&=b+[T,a]+\frac{1}{2}[T,[T,R]].\label{aligncond2}
\end{align}
 Now we look for a $T$ such that $D^0$ is aligned, namely $D^0=\pi_{-1}^R(D^0)$.
Writing $T=\sum_i \pi^R_i(T)$, with $\pi^R_i(T)\in\mf{g}_i(R)$, and choosing
\begin{equation}\label{eq:Tinproof}
 \pi_{-1}^R(T)=0, \qquad \pi_j^R(T)=\frac{\pi^R_j(a)}{j+1},\quad  j\neq-1,
\end{equation}
we get $D^0=\pi^R_{-1}(a)$, which is aligned. Hence
\begin{equation}\label{eq:C0inproof}
 C^0=D^0=\pi^R_{-1}(a) \; .
\end{equation}
This proves the second condition \eqref{zeromonodromyb}. Inserting now \eqref{aligncond1}, \eqref{eq:C0inproof}, and \eqref{zeromonodromyb}, into \eqref{aligncond2} we obtain
\begin{equation}
 D^1=b+\frac{1}{2}[T,a] \; .
\end{equation}
Now we look for a $T' \in \mf{g}$ such that $\exp{x^2T'}\exp{x T}.\mc{L}=\partial_x+\frac{R}{x}+C^0+C^1 x+ O(x^2)$.
By repeating the same steps as above, one shows that
\beq\label{pf:C2}
C^1=\pi^R_{-2}(D^1)=\pi^R_{-2}(b)+\frac{1}{2}\pi^{R}_{-2}([T,a]) \; .
\eeq
Since $\pi_{-2}^R[T,a]=\sum_{j\geq0}[\pi_{-j}^R(T),\pi_{-2+j}(a)] $, and using \eqref{eq:Tinproof} together with
\eqref{zeromonodromyb}, we obtain that 
$$C^1=\pi^R_{-2}(b)-[\pi^R_{-2}(a),\pi^R_0(a)],$$
from which the third condition \eqref{zeromonodromyc} follows.
\end{proof}
\begin{remark}
The Quantum KdV opers \eqref{eq:ouroperators} depend on two set of unknowns, the location of the poles $w_j,j\in J$
and the local coefficients $X^i(j) \in \mf{g}^i, j \in J$. If $J=\lbrace 1\dots N\rbrace $,
these are $ (1+\dim\mf{n}_+) N=(1+\frac{n \hv }{2} )N$ variables. However, due to the previous theorem,
the condition $\pi_0(X^i(j))=0$, holds for every  $i=1,\dots,h^\vee-1$ and $j=1,\dots,N$, implying that
$$X^i(j)\in \mf{g}_1 \oplus \mf{g}_2,\qquad i=1,\dots,h^\vee-1,\quad j=1,\dots,N.$$
The space $ \mf{g}_1 \oplus \mf{g}_2$ is a codimension $1$ vector subspace of the symplectic space $\mf{t}$ introduced in the previous section. As a consequence, the number of non-trivial unknowns reduces to $N\dim \mf{t}=2 N (\hv-1)$.  This fact represents a major advantage of working with the Gauge where all singularities are first order poles.
Working for instance in a canonical Gauge $\mf{s}$, then necessarily the singularities are higher order poles, and the total number of non-trivial unknowns we need to consider is again $(1+\frac{n \hv }{2} )N$ -- a number which grows quadratically with $n$ -- rather than $2 N(\hv-1)$, which grows linearly.
\end{remark}

The trivial monodromy conditions \eqref{zeromonodromy} for the operator \eqref{genoper1} are written in terms of the gradation
\eqref{gradation-R}. This gradation depends on the coefficients $\eta_1,\eta_2$, which are unknowns
of our problem. In order to be able to derive an explicit system of equations for these unknowns, we
write conditions \eqref{zeromonodromy} with respect to the fixed gradation
\eqref{highestgrad}, namely the gradation induced
by the highest root. From here below we restrict to operators with local expansion \eqref{genoper1}
such that $a \in f+\mf{b}_+$, $b \in \mf{b}_+$, for this is the case of the Quantum $\mf{g}-$KdV oper \eqref{eq:ouroperators}.
From now on $\mf{g}\neq \mf{sl}_2$, and the $\mf{sl}_2$ case in treated in a separate section of the paper.

\begin{lemma}\label{lemmacoeffs}
Let $\mc{L}\in\ope(K_{\bb{P}^1})$ be given by \eqref{genoper1}, with
$a \in f+\mf{b}^+$, $b \in \mf{b}^+$. Moreover, for $j\in \bb{Z}$ set
$a_j=\pi_j(a)$, $b_j=\pi_j(b)$, $j\in\bb{Z}$, where $\pi_j$ is the projection defined in
\eqref{projection0}.
If $\mf{g}$ is not of type $\mf{sl}_2$, then $a_{-2}=b_{-2}=b_{-1}=0$.
\end{lemma}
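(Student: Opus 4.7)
The plan is to read off the statement directly from the location of $\mf{b}_+$ and of $f$ inside the highest-root gradation $\mf{g}=\bigoplus_{i=-2}^{2}\mf{g}_i$, exploiting the structural results already collected in Section~\ref{sec:gradation}.

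First I would handle the two statements about $b$. Since $b\in\mf{b}_+=\mf{h}\oplus\mf{n}_+$, I use that $\mf{h}\subset\mf{g}_0$ together with the decomposition \eqref{n+decomp}, namely $\mf{n}_+=(\mf{g}_0\cap\mf{n}_+)\oplus\mf{g}_1\oplus\mf{g}_2$. Combining these gives the inclusion $\mf{b}_+\subset\mf{g}_0\oplus\mf{g}_1\oplus\mf{g}_2$, which immediately implies $b_{-1}=\pi_{-1}(b)=0$ and $b_{-2}=\pi_{-2}(b)=0$, independently of the type of $\mf{g}$.

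Next I would treat $a$. Writing $a=f+b'$ with $b'\in\mf{b}_+$, the previous step yields $\pi_{-2}(b')=0$, so it remains to prove $\pi_{-2}(f)=0$. Here the hypothesis $\mf{g}\neq\mf{sl}_2$ enters. Write $f=\sum_{i}f_i$ with $f_i\in\mf{g}_{-\alpha_i}$; then $[\theta^\vee,f_i]=-(\theta|\alpha_i)f_i$, so $f_i\in\mf{g}_{-(\theta|\alpha_i)}$. Because $\theta$ is the highest root we have $(\theta|\alpha_i)\geq 0$, and when $\mf{g}\neq\mf{sl}_2$ no simple root is proportional to $\theta$, so Lemma~\ref{lemmahumphreys}(ii) forces $(\theta|\alpha_i)\in\{0,1\}$. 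Consequently each $f_i$ lies in $\mf{g}_0\oplus\mf{g}_{-1}$, giving $\pi_{-2}(f)=0$ and hence $a_{-2}=\pi_{-2}(f)+\pi_{-2}(b')=0$.

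The argument is entirely structural, and I do not anticipate any serious obstacle; the only point to be careful about is the role of the hypothesis $\mf{g}\neq\mf{sl}_2$, which is used in exactly one place (the exclusion of the value $(\theta|\alpha_i)=2$). This is sharp: in the $\mf{sl}_2$ case one has $\theta=\alpha_1$, so $f_1\in\mf{g}_{-2}$ and $\pi_{-2}(f)\neq 0$, which explains why that case is treated separately in Section~\ref{sec:A1}.
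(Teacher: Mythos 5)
Your proof is correct and follows the same route as the paper: the paper's proof simply asserts the inclusions $\mf{b}_+\subset\mf{g}_0\oplus\mf{g}_1\oplus\mf{g}_2$ and $f+\mf{b}_+\subset\mf{g}_{-1}\oplus\cdots\oplus\mf{g}_2$, while you supply the (correct) justification via \eqref{n+decomp} and the fact that $(\theta\,|\,\alpha_i)\in\{0,1\}$ for $\mf{g}\neq\mf{sl}_2$, consistent with \eqref{fprojections}. Your closing remark on why $\mf{sl}_2$ is excluded is also accurate.
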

\begin{proof} 
We have $b\in \mf{b}\subset \mf{g}_0\oplus\mf{g}_1\oplus\mf{g}_2$, which implies that $b_{-1}=b_{-2}=0$,
while $a\in f+\mf{b}\subset \mf{g}_{-1}\oplus\mf{g}_0\oplus\mf{g}_1\oplus\mf{g}_2$, implying $a_{-2}=0$.
\end{proof}
We can now write the trivial monodromy conditions  \eqref{zeromonodromy} in terms of the gradation \eqref{highestgrad}.
\begin{proposition}\label{prop:zeroetaab}
Let $\mc{L}\in \ope(K_{\bb{P}^1})$ be given by \eqref{genoper1}, with
$a \in f+\mf{b}_+$, $b \in \mf{b}_+$. 
If $\mf{g}$ is not of type $\mf{sl}_2$, then the trivial monodromy conditions
\eqref{zeromonodromy} are equivalent to the following system of equations for $\eta\in\mf{n}_+$, $a\in f+\mf{b}_+$, $b\in\mf{b}_+$:
\begin{subequations}\label{gimonodromy}
\begin{align}
&\eta_0=0\label{gimonodromy0}\\
&2a_1-2[\eta_1, a_0]+[\eta_1,[\eta_1,a_{-1}]]-[\eta_2,a_{-1}]=0,\label{gimonodromy1}\\
&2b_2+[\eta_1,[\eta_1,b_0]]-2[\eta_1,b_{1}]-[\eta_2,b_0]\notag\\
&\,\,\, -\left[2a_2-[\eta_2,a_0]+\frac{1}{3}[\eta_1,[\eta_1,a_0]+2[\eta_2,a_{-1}]-4a_1],a_0-[\eta_1,a_{-1}]\right]=0,\label{gimonodromy2}
\end{align}
\end{subequations}
where $\eta_i=\pi_i(\eta),\,a_i=\pi_i(a),\,b_i=\pi_i(b)\in\mf{g}_i$.
\end{proposition}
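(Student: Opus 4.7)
The proof is a translation of the gradation-dependent conditions \eqref{zeromonodromy} into conditions with respect to the fixed highest-root gradation \eqref{highestgrad}, using the explicit formulas \eqref{remk:expansion} for the projections $\pi^R_j$ and the simplifications provided by Lemma \ref{lemmacoeffs}.

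First, condition \eqref{zeromonodromya} of Theorem \ref{thm:zeromonodromy} is already expressed intrinsically in terms of the $\theta^\vee$-gradation, so it coincides verbatim with \eqref{gimonodromy0}. This also ensures that $R=-\theta^\vee+\eta_1+\eta_2$ is semisimple, so that Lemma \ref{lemma:semisimpleR} applies and the $\mc{G}$-automorphism $e^{\ad(\eta_1+\frac{1}{2}\eta_2)}$ intertwines the two gradations; in particular, all the formulas \eqref{remk:expansion} are valid.

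Next I would apply formula \eqref{remk:expansion} for $\pi^R_{-1}(a)$ with the simplification $a_{-2}=0$ coming from Lemma \ref{lemmacoeffs}, obtaining
\[
\pi^R_{-1}(a)=e^{\ad(\eta_1+\frac{1}{2}\eta_2)}\!\Bigl(a_1-[\eta_1,a_0]+\tfrac12[\eta_1,[\eta_1,a_{-1}]]-\tfrac12[\eta_2,a_{-1}]\Bigr).
\]
Since $e^{\ad(\eta_1+\frac{1}{2}\eta_2)}$ is invertible, condition \eqref{zeromonodromyb} is equivalent to the vanishing of the argument, which after multiplication by $2$ is exactly \eqref{gimonodromy1}.

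For condition \eqref{zeromonodromyc}, the key observation is that $\mf{g}_2=\bb{C}E_\theta$ (Proposition \ref{propg1}) is annihilated by $\ad\eta_1$ and $\ad\eta_2$ since $E_\theta$ is the highest-root vector; hence $\mf{g}_{-2}(R)=\mf{g}_2$ and the exponential $e^{\ad(\eta_1+\frac{1}{2}\eta_2)}$ acts as the identity on this line. Using this and the formula for $\pi^R_0(a)$ (again with $a_{-2}=0$), which simplifies to $e^{\ad(\eta_1+\frac{1}{2}\eta_2)}(a_0-[\eta_1,a_{-1}])$, the bracket $[\pi^R_{-2}(a),\pi^R_0(a)]$ lies in $\mf{g}_2$ and can be rewritten as $[\pi^R_{-2}(a),a_0-[\eta_1,a_{-1}]]$ by pulling out $e^{\ad(\eta_1+\frac{1}{2}\eta_2)}$. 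Plugging in the explicit expressions for $2\pi^R_{-2}(b)$ and $2\pi^R_{-2}(a)$ from \eqref{remk:expansion} (with $b_{-1}=b_{-2}=0$ and $a_{-2}=0$) one gets \eqref{gimonodromy2} up to the discrepancy
\[
2\pi^R_{-2}(a)-\Bigl(2a_2-[\eta_2,a_0]+\tfrac{1}{3}[\eta_1,[\eta_1,a_0]+2[\eta_2,a_{-1}]-4a_1]\Bigr)=-\tfrac{1}{3}\bigl[\eta_1,\,2a_1-2[\eta_1,a_0]+[\eta_1,[\eta_1,a_{-1}]]-[\eta_2,a_{-1}]\bigr],
\]
whose right-hand side vanishes exactly on account of \eqref{gimonodromy1}. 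Since \eqref{gimonodromy1} is already part of the system, this substitution is legitimate and yields \eqref{gimonodromy2} as the translation of \eqref{zeromonodromyc}.

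The main obstacle is purely algebraic bookkeeping in the last step: matching my expansion of $2\pi^R_{-2}(a)$ with the coefficient structure prescribed in \eqref{gimonodromy2}, and recognising that the residual discrepancy is a bracket of $\eta_1$ with the left-hand side of \eqref{gimonodromy1}. The fact that $\mf{g}\neq \mf{sl}_2$ is used only to ensure that the highest-root gradation has the full five-level structure $\mf{g}_{-2}\oplus\cdots\oplus\mf{g}_2$ needed for Lemma \ref{lemmacoeffs} and the formulas \eqref{remk:expansion}.
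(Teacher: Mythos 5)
Your proof is correct and follows the same route as the paper: substitute the explicit projection formulas \eqref{remk:expansion}, simplified via Lemma \ref{lemmacoeffs}, into the conditions \eqref{zeromonodromy}. The paper compresses the last step into ``plugging these formulae in, one gets \eqref{gimonodromy}''; your identification of the residual discrepancy as $-\tfrac{1}{3}\bigl[\eta_1,\,\text{LHS of \eqref{gimonodromy1}}\bigr]$ is exactly the computation hidden there, and it checks out.
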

\begin{proof}
Due to Lemma
\ref{lemmacoeffs} we have $a_{-2}=0$ and plugging $x=a$ in the
relations \eqref{remk:expansion} we get
\begin{align*}
&\pi^R_{2}(a)=0\\
&\pi^R_{1}(a)=e^{\ad(\eta_1+\frac{1}{2}\eta_2)}a_{-1}\notag\\
&\pi^R_{0}(a)=e^{\ad(\eta_1+\frac{1}{2}\eta_2)}(a_0-\ad_{\eta_1}a_{-1})\notag\\
&\pi^R_{-1}(a)=e^{\ad(\eta_1+\frac{1}{2}\eta_2)}(a_1+\frac{1}{2}\ad_{\eta_1}^2 a_{-1}-\frac{1}{2}\ad_{\eta_2} a_{-1}-\ad_{\eta_1} a_0)\\
&\pi^R_{-2}(a)=a_2-\frac{1}{6}\ad_{\eta_1}^3 a_{-1}+\frac{1}{2}\ad_{\eta_1}^2 a_0-\frac{1}{2}\ad_{\eta_2} a_0+\frac{1}{2}\ad_{\eta_1}\ad_{\eta_2}a_{-1}-\ad_{\eta_1} a_1.
\end{align*}
On the other hand, after Lemma \ref{lemmacoeffs} we have $b_{-1}=b_{-2}=0$, so that for
$x=b$ the relations \eqref{remk:expansion} become $\pi^R_{2}(b)=\pi^R_{1}(b)=0$, and
\begin{align*}
&\pi^R_{0}(b)=e^{\ad(\eta_1+\frac{1}{2}\eta_2)}b_0\notag\\
&\pi^R_{-1}(b)=e^{\ad(\eta_1+\frac{1}{2}\eta_2)}(b_1-\ad_{\eta_1} b_0)\\
&\pi^R_{-2}(b)=b_2+\frac{1}{2}\ad_{\eta_1}^2 b_0-\frac{1}{2}\ad_{\eta_2} b_0-\ad_{\eta_1} b_1.
\end{align*}
Plugging these formulae into \eqref{zeromonodromy}, one gets \eqref{gimonodromy}.
\end{proof}

\vspace{20pt}
\section{Trivial monodromy for quantum-KdV opers}\label{sec:zeromonodromykdv}
In this section we address the final assumption, namely Assumption \ref{asu4}, on quantum $\mf{g}-$KdV opers:
 for any value of the loop algebra parameter $\la$, the monodromy at any regular non-zero singular point must be trivial. As a result, we completely characterise quantum $\mf{g}$-KdV opers by means of a system of rational equations, see Proposition \ref{systemonc}.

We thus consider an oper of the form
\eqref{eq:ouroperators} such that the set $J$ of additional singularities is
non-empty, namely $J=\lbrace 1,\dots, N\rbrace$,
for some $N\in\bb{Z}_+$. Hence \eqref{eq:ouroperators} reads
\beq\label{norformguess}
\mc{L}=\de_z+f+\frac{r}{z}+(z^{1-h^\vee}+\lambda z^k)E_{\theta}+\sum_{j=1}^{N}\frac{1}{z-w_j}
\left(-\theta^\vee +\sum_{i=1}^{h^\vee\!-1}\frac{X^i{(j)}}{z^i}\right), 
\eeq
where we put, and use from now on for convenience, 
\beq\label{eq:k}
k=1-h^\vee-\hat{k}\in (-h^\vee,1-h^\vee)
\eeq
in place of $\hat{k}\in (0,1)$. In formula \eqref{norformguess}, the quantities  $r\in\mf{h}$ and $k\in (-h^\vee,1-h^\vee)$  are given, $\lambda\in\bb{C}$ is arbitrary,
while the non-zero pairwise distinct complex numbers $w_j$,
and the Lie algebra elements
$X^i(j)\in\mf{g}^i$ are to be determined by the trivial monodromy conditions.

For any fixed $\ell=1,\dots,N$, the localization of $\mc{L}$ at $z=w_\ell$ yields an expansion of the form \eqref{genoper1}. Indeed, using  $x=z-w_\ell$ as local coordinate, we get
\beq\label{ellexpansion}
\partial_x+ \frac{R(\ell)}{x}+a(\ell)+b(\ell)x+O(x^2), 
\eeq
where the coefficients $R(\ell)=-\theta^\vee+\eta(\ell)$ with $\eta(\ell)\in\mf{n}_+$, $a(\ell)\in f+\mf{b}_+$ and $b(\ell)\in\mf{b}_+$  can be obtained from
\eqref{norformguess}. Since $\mc{L}$ is of
type \eqref{genoper1}, the trivial monodromy conditions at $z=w_\ell$ are provided by Proposition \ref{prop:zeroetaab}.
Imposing that those trivial monodromy conditions are fulfilled for any value of $\la$, and
using the expression of $\eta(\ell), a(\ell), b(\ell)$ in terms of the coefficients of \eqref{norformguess}, we obtain below a complete set of equations for the unknowns $w_j,X^i(j),j=1,\dots,N$.

\subsection{Vector notation}\label{subsec:vectornotation}
In order to deal with all singularities $\{w_j,j=1,\dots,N\}$ at once, it will be useful to consider the following construction. For every pair of vectors ${\bf v}=(v_1,\dots,v_N)$ and ${\bf v}'=(v'_1,\dots,v'_N)$ in $\bb{C}^N$,  denote by ${\bf v}\circ {\bf v}'$ their Hadamard product:
$${\bf v}\circ {\bf v}'=(v_1v'_1,\dots,v_Nv'_N)\in\bb{C}^N,$$
and extend the Lie algebra structure form $\mf{g}$ to the tensor product $\bb{C}^N\!\otimes\mf{g}$ by letting $[{\bf v}\otimes x,{\bf v}'\otimes y]=({\bf v}\circ {\bf v}')\otimes [x,y]$, for ${\bf v},{\bf v}'\in \bb{C}$ and $x,y\in\mf{g}$. Setting ${\bf 1}=(1,\dots,1)\in\bb{C}^N$, then we have an injective homomorphism of Lie algebras $\mf{g}\hookrightarrow \bb{C}^N\!\otimes\mf{g}$ given by $x\mapsto {\bf 1}\otimes x$, $x\in\mf{g}$. By abuse of notation we denote in the same way elements of  $\mf{g}$ and their images in $\bb{C}^N\!\otimes\mf{g}$ under this homomorphism.
The elements $X^i(j)$ appearing in \eqref{norformguess} can now be written in the more compact form
\beq\label{Xi0906}
X^i=(X^i(1),\dots,X^i(N))\in \bb{C}^N\otimes \mf{g}^i, \quad i=0,\dots,h^\vee-1.
\eeq
Moreover, for the additional poles $w_j$, $j=1,\dots, N$ denote
\beq\label{W}
{\bf w}=(w_1,\dots,w_n)\in\bb{C}^N,
\eeq
and for $s\in\bb{R}$ put ${\bf w}^s=(w^s_1,\dots,w^s_n)\in\bb{C}^N$. Let  $\bb{C}({\bf w})=\bb{C}(w_1,\dots,w_N)$
be the field of fractions of the polynomial ring
$\bb{C}[{\bf w}]=\bb{C}[w_1,\dots,w_N]$.
For $i\in\bb{Z}$, introduce the $N\times N$ matrices -- with values in $\bb{C}({\bf w})$ -- given by
\beq\label{operatorAB}
(A_i)_{\ell j}=
\begin{cases}
\frac{w_\ell}{w_\ell-w_j},& \ell\neq j\\
-i& \ell=j
\end{cases},
\qquad 
(B_i)_{\ell j}=
\begin{cases}
(i+\frac{w_\ell}{w_\ell-w_j})\frac{w_\ell}{w_\ell-w_j},& \ell\neq j\\
-\frac{i(i+1)}{2}& \ell=j
\end{cases},
\eeq
and define $A,B\in \End_{\bb{C}({\bf w})}(\bb{C}^N\otimes \mf{g})$ as
\begin{align}
A({\bf v}\otimes x)&=A_i({\bf v})\otimes x,\qquad x\in \mf{g}^i,\\
B({\bf v}\otimes x)&=B_i({\bf v})\otimes x,\qquad x\in \mf{g}^i
\end{align}
In addition, introduce $M,S,Y\in \End_{\bb{C}({\bf w})} (\bb{C}^N\otimes \mf{g})$ as follows. For $X\in\bb{C}^N\otimes \mf{g}$ let
\begin{subequations}\label{MSY}
\begin{align}
&M(X)=A(X)-A_0({\bf 1})\circ X+[r,X],\\
&S(X)=2B(X)-B_0({\bf 1})\circ X+\frac{4}{3}k A(X)-\frac{k}{3}A_0({\bf 1})\circ X+(1+\frac{k}{3})[r,X],\\
&Y(X)=\frac{2}{3}k^2 X+\frac{2}{3}k A(X)+2B(X)+(1+\frac{k}{3})[r,X].
\end{align}
\end{subequations}
We can now express the trivial monodromy conditions for the operator \eqref{norformguess}.
Recall that 
\beq\label{fprojections}
\pi_0(f)=-\sum_{i\in I_{\theta}}E_{-\alpha_i},\quad \pi_{-1}(f)=-\sum_{i\in I\setminus I_{\theta}}E_{-\alpha_i}
\eeq
are the projections \eqref{projection0} of the principal nilpotent element $f$
with respect to the highest root gradation \eqref{highestgrad}. The trivial monodromy conditions for the general case are expressed as follows:
\begin{proposition}\label{prop:monodromycg}
Let $\rank{\mf{g}}>1$. The operator \eqref{norformguess} has trivial monodromy at $z=w_\ell$ for every $\ell=1,\dots,N$ and
every $\lambda\in\bb{C}$ if and only if the following conditions are satisfied:
\begin{enumerate}
\item For $i=0,\dots,h^\vee-1$, the variables $X^i$ belong to $\bb{C}^N\otimes \mf{t}^i$, where $\mf{t}=\bb{C}\theta^\vee\oplus\mf{g}_1
\oplus \mf{g}_2\subset\mf{b}^+$ is
the symplectic vector space defined in \eqref{symplecticspace}, and $\mf{t}^i=\mf{t}\cap \mf{g}^i$.\label{xint}
\item the set of variables $\{{\bf w}\}\cup\{X^i\,|\, i=1,\dots,h^\vee-1\}$ satisfies the following system of equations
\end{enumerate}
\begin{subequations}\label{zmnf1}
\begin{align}
&[[X^1,\pi_{-1}(f)],E_\theta]=(k{\bf 1}+\theta(r){\bf 1}-2A_0({\bf 1}))\otimes E_{\theta},\label{zmnf2}\\
&[X^{i+1},\pi_0(f)]=M(X^i)+\frac{1}{2}\sum_{s=0}^i[X^s,[X^{i+1-s},\pi_{-1}(f)]],\qquad i=1,\dots,h^\vee-2,\label{zmnf3}\\
&2(1-h^\vee-k){\bf w}\otimes E_\theta+Y(X^{h^\vee-1})+\sum_{i=1}^{h^\vee-1}[X^{h^\vee-1-i},S(X^i)]\notag\\
&\hspace{70pt}=\frac{k}{3}\sum_{i=2}^{h^\vee-2}[X^i,[X^{h^\vee-i},\pi_0(f)]].\label{zmnf4}
\end{align}
\end{subequations}
Here, $X^0=-{\bf 1}\otimes\theta^\vee$, $E_\theta\in\mf{g}$ is the highest root vector defined in $\S$\ref{gbasis},
the operator $A_0\in\End_{\bb{C}({\bf w})}(\bb{C}^N)$ is given in \eqref{operatorAB} and
$M,S,Y\in\End_{\bb{C}({\bf w})}(\bb{C}^N\otimes \mf{g})$ are given in \eqref{MSY}.
\end{proposition}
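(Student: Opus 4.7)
The overall strategy is to translate the hypothesis into the local conditions of Proposition~\ref{prop:zeroetaab} at each puncture $w_\ell$, and then to separate those conditions by (i) powers of $\lambda$ and (ii) the principal gradation of $\mf{g}$. The key preliminary observation is that $\lambda$ enters \eqref{norformguess} only through $\lambda z^{k}E_\theta$, so the $\lambda$-dependence of every local coefficient is confined to its $\mf{g}_2$ component; requiring trivial monodromy for \emph{all} $\lambda \in \bb{C}$ therefore forces each local condition of Proposition~\ref{prop:zeroetaab} to hold as a polynomial identity in $\lambda$, and one may equate its coefficients of $\lambda^0$ and $\lambda^1$ separately.

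First I would explicitly compute the Laurent expansion \eqref{ellexpansion} of $\mc{L}$ at $z=w_\ell$. Setting $x=z-w_\ell$ and expanding $(w_\ell+x)^{-i}$ and $(w_\ell-w_j+x)^{-1}$ as geometric series yields
\[
R(\ell)=-\theta^\vee+\eta(\ell),\qquad \eta(\ell)=\sum_{i=1}^{h^\vee-1}w_\ell^{-i}\,X^i(\ell),\qquad a_{-1}(\ell)=\pi_{-1}(f),
\]
together with closed (if lengthy) formulas for $a(\ell)$ and $b(\ell)$ in terms of $r$, $w_j$, $X^i(j)$ and $\lambda$. Comparing these with \eqref{operatorAB}--\eqref{MSY}, one sees that, for every principal degree $m$, the "own-pole" contribution at $w_\ell$ and the "cross-pole" contributions from $w_j$ with $j\neq\ell$ reassemble, after multiplication by appropriate powers of $w_\ell$, into precisely the action of the operators $M$, $S$, $Y$ on $X^m$, with the diagonal corrections $A_0(\mathbf 1)\circ X^m$ and $B_0(\mathbf 1)\circ X^m$ arising as the own-pole terms. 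This dictionary is what makes the $N$ punctures' local monodromy conditions package together into the matrix identities \eqref{zmnf2}--\eqref{zmnf4}.

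With this dictionary, I would verify the three parts of Proposition~\ref{prop:zeroetaab} in turn. Equation \eqref{gimonodromy0} states $\pi_0(\eta(\ell))=0$ for every $\ell$; decomposed under the principal gradation this amounts to $\pi_0(X^i(\ell))=0$ for all $i\geq 1$ and all $\ell$, which---since $X^i(\ell)\in\mf{g}^i\subset\mf{n}_+$ already has vanishing $\mf{g}_{-1},\mf{g}_{-2}$ parts---is exactly assertion~(1). Equation \eqref{gimonodromy1} is $\lambda$-free, lives in $\mf{g}_1$, and, decomposed level by level in the principal gradation and cleared of denominators by the factor $w_\ell^{i+1}$, produces precisely the family \eqref{zmnf3} indexed by $i=1,\dots,h^\vee-2$. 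Equation \eqref{gimonodromy2} lies in the one-dimensional $\mf{g}_2=\bb{C}E_\theta$ and is affine in $\lambda$: its coefficient of $\lambda^1$, after using $a_{-1}=\pi_{-1}(f)$ and the fact that only $X^1$ gives a nonzero contribution at the correct principal bidegree, reduces to \eqref{zmnf2}; its $\lambda^0$ coefficient reduces to \eqref{zmnf4}.

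The main difficulty is purely computational, and is concentrated in the derivation of \eqref{zmnf4}: one must carefully reorganize the several double and triple commutators of \eqref{gimonodromy2} so as to recognize the compact operators $S$ and $Y$ together with the diagonal counterterms. Two observations streamline the bookkeeping. First, the cubic term $\frac{1}{3}[\eta_1,[\eta_1,a_0]]$ contributes to \eqref{zmnf4} only through pairs $(X^i,X^{h^\vee-i})$ with $2\leq i\leq h^\vee-2$, because $[X^1,\pi_0(f)]\in\mf{g}^0\cap\mf{g}_1=0$ under condition~(1) and, dually, $[E_\theta,\pi_0(f)]=0$ since $\alpha_i\perp\theta$ for $i\in I_\theta$ forces $\theta-\alpha_i\notin\Delta$; this explains the restricted range on the right of \eqref{zmnf4}. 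Second, because the entire identity takes values in $\bb{C}E_\theta$, every verification ultimately reduces to a single scalar identity which can be checked using Lemma~\ref{lemmahumphreys} and the structure constants of $\S$\ref{gbasis}. Both directions of the claimed equivalence follow at once, since each step in the translation is reversible.
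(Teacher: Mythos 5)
Your proposal is correct and follows essentially the same route as the paper: localize at the punctures, invoke Proposition \ref{prop:zeroetaab}, and separate the resulting conditions by powers of $\lambda$ and by the principal gradation, with the operators $A_0$, $M$, $S$, $Y$ absorbing the own-pole and cross-pole contributions exactly as you describe. The only step you leave implicit is that the $\lambda^0$ coefficient of \eqref{gimonodromy2} is brought to the form \eqref{zmnf4} by first using the $\lambda^1$ identity, in the form $[\mathbf{Q},\mathbf{a}_0-[\bm{\eta}_1,\mathbf{a}_{-1}]]=k\mathbf{w}^{-1}\circ\mathbf{Q}$ for $\mathbf{Q}\in\bb{C}^N\otimes\mf{g}_2$, to collapse the large bracket; this is precisely what the paper does.
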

\begin{proof}
Recall that the monodromy about $z=w_\ell$ of an operator with expansion \eqref{ellexpansion} is encoded in the the quantities 
$R(\ell),a(\ell),b(\ell)$. These in turn can be expressed in terms of the variables $X^i,i=1\dots\hv-1$.
Defining ${\bf R}=(R(1),\dots,R(N) $, ${\bf a}=(a(1),\dots a(N))$, ${\bf b}=(b(1),\dots,b(N))$ we have  
\begin{align}\label{vectoreta}
{\bf R}&=-{\bf 1}\otimes \theta^\vee+{\bm \eta},\qquad
{\bm\eta}=\sum_{i=1}^{h^\vee-1}{\bf w}^{-i}\circ X^i\in\bb{C}^N\otimes\mf{g}.\\ \nonumber
{\bf a}&={\bf 1}\otimes f+{\bf w}^{-1}\otimes r+({\bf w}^{1-h^\vee}+
\lambda {\bf w}^k)\otimes E_{\theta}+\sum_{i=0}^{h^{\vee}-1}{\bf w}^{-i-1}\circ A(X^i)\\ \nonumber
{\bf b}&=-{\bf w}^{-2}\otimes r+((1-h^\vee){\bf w}^{-h^\vee}+k\lambda {\bf w}^{k-1})
\otimes E_{\theta}-\sum_{i=0}^{h^\vee-1}{\bf w}^{-i-2}\circ B(X^i).
\end{align}
Note that the projections \eqref{projection0} onto the eigenspaces of $\ad_{\theta^\vee}$
can be extended uniquely to $\bb{C}^N\otimes \mf{g}$ by the rule $\pi_i({\bf v}\otimes x)={\bf v}\otimes \pi_i(x)$,
with ${\bf v}\in\bb{C}^N$ and $x\in\mf{g}$. From \eqref{gimonodromy}, the trivial monodromy conditions
at all points $w_j$, $j=1,\dots,N$  can thus be written in the following compact form 
\begin{subequations}\label{vectorzeromon}
\begin{align}
&{\bm \eta}_0=0, \label{vectorzeromon0}\\
&2{\bf a}_1-2[{\bm \eta}_1,{\bf a}_0]+[{\bm\eta}_1,[{\bm\eta}_1,{\bf a}_{-1}]]-[{\bm\eta}_2,{\bf a}_{-1}]=0,\label{vectorzeromon1}\\
&2{\bf b}_2+[{\bm\eta}_1,[{\bm\eta}_1,{\bf b}_0]]-2[{\bm\eta}_1,{\bf b}_{1}]-[{\bm\eta}_2,{\bf b}_0]\notag\\
&\,\,\, =\left[2{\bf a}_2-[{\bm\eta}_2,{\bf a}_0]+\frac{1}{3}[{\bm\eta}_1,[{\bm\eta}_1,{\bf a}_0]+2[{\bm\eta}_2,{\bf a}_{-1}]-4{\bf a}_1],{\bf a}_0-[{\bm\eta}_1,{\bf a}_{-1}]\right].\label{vectorzeromon2}
\end{align}
\end{subequations}
Recall that by definition $X^i\in\bb{C}^N\otimes \mf{g}^i$, and that $X^0=-{\bf 1}\otimes\theta^\vee\in\bb{C}^N\otimes \mf{t}^0$. Due to  \eqref{vectorzeromon0} then from \eqref{vectoreta} we obtain $\pi_0(X^i)=0$ for $i=1,\dots, h^\vee-1$, and --  by the definition of $\mf{t}$ -- this implies $X^i\in\bb{C}^N\otimes \mf{t}^i$, proving part (1). In particular, we obtain:
$${\bm \eta}_1=\sum_{i=1}^{h^\vee-2}{\bf w}^{-i}\circ X^i,\qquad {\bm \eta}_2={\bf w}^{1-h^\vee}\circ X^{h^\vee-1}.$$ 
To prove part (2), we consider \eqref{vectorzeromon1} and \eqref{vectorzeromon2}. First, note that we have ${\bf a}_{-2}=0$, while
\begin{align*}
&{\bf a}_{-1}={\bf 1}\otimes\pi_{-1}(f),\\
&{\bf a}_0={\bf 1}\otimes\pi_{0}(f)+{\bf w}^{-1}\circ ({\bf 1}\otimes r-A({\bf 1}\otimes\theta^\vee)),\\
&{\bf a}_1=\sum_{i=1}^{h^\vee-2}{\bf w}^{-i-1}\circ A(X^i),\\
&{\bf a}_2=({\bf w}^{1-h^\vee}+\lambda {\bf w}^k)\otimes E_{\theta}+{\bf w}^{-h^\vee}\circ A(X^{h^\vee-1}).
\end{align*}
On the other hand, ${\bf b}_{-2}=0$, ${\bf b}_{-1}=0$, and
\begin{align*}
{\bf b}_0&={\bf w}^{-2}\circ (-{\bf 1}\otimes r+B({\bf 1}\otimes\theta^\vee)),\\
{\bf b}_1&=-\sum_{i=1}^{h^\vee-2}{\bf w}^{-i-2}\circ B(X^i),\\
{\bf b}_2&=((1-h^\vee){\bf w}^{-h^\vee}+k\lambda {\bf w}^{k-1})\otimes E_{\theta}-{\bf w}^{-h^\vee-1}\circ B(X^{h^\vee-1}).
\end{align*}
Plugging the above quantities into \eqref{vectorzeromon1} we obtain
$$\sum_{i=1}^{h^\vee-2}{\bf w}^{-i-1}\circ \left(-[X^{i+1},\pi_{0}(f)]+M(X^i)+\frac{1}{2}\sum_{s=0}^i[X^s,[X^{i+1-s},\pi_{-1}(f)]]\right)=0.$$
Since for each $i=1,\dots,h^\vee-2$ the element in the sum above which is multiplied by ${\bf w}^{-i-1}$ belongs to $\bb{C}^N\otimes \mf{g}^i$, and since each component of  ${\bf w}\in\bb{C}^N$ is  different from zero, we get \eqref{zmnf3}. Now we consider \eqref{vectorzeromon2}, which is linear with respect to $\lambda$. The vanishing of the coefficient of order $1$ in $\lambda$ reads
\beq\label{prooflambda1}
k{\bf w}^{-1}\otimes E_\theta=[{\bf 1 }\otimes E_\theta,{\bf a}_0-[{\bm\eta}_1,{\bf a}_{-1}]],
\eeq
which is equivalent to \eqref{zmnf2}. Note that from \eqref{prooflambda1} it follows that for every
${\bf Q}\in \bb{C}^N\otimes\mf{g}_2=\bb{C}^N\otimes\mf{g}^{h^\vee-1}$ one has
\beq\label{prooflambda12}
[{\bf Q},{\bf a}_0-[{\bm\eta}_1,{\bf a}_{-1}]]=k{\bf w }^{-1}\circ {\bf Q}.
\eeq
We now consider the vanishing of the coefficient of order zero in $\lambda$ in \eqref{vectorzeromon2}. Since the term $2{\bf a}_2-[{\bm\eta}_2,{\bf a}_0]+\frac{1}{3}[{\bm\eta}_1,[{\bm\eta}_1,{\bf a}_0]+2[{\bm\eta}_2,{\bf a}_{-1}]-4{\bf a}_1]$ belongs to $\bb{C}^N\otimes\mf{g}_2$ then using \eqref{prooflambda12} we get that \eqref{vectorzeromon2} can be written as
$2{\bf b}_2+[{\bm\eta}_1,[{\bm\eta}_1,{\bf b}_0]]-2[{\bm\eta}_1,{\bf b}_{1}]-[{\bm\eta}_2,{\bf b}_0]=k{\bf w }^{-1}\circ(2{\bf a}_2-[{\bm\eta}_2,{\bf a}_0]+\frac{1}{3}[{\bm\eta}_1,[{\bm\eta}_1,{\bf a}_0]+2[{\bm\eta}_2,{\bf a}_{-1}]-4{\bf a}_1])$. By a direct computation, the latter identity is shown to be equivalent to \eqref{zmnf4}. Part (2) of the proposition is proved.
\end{proof}

\subsection{Trivial monodromy: system in $\bb{C}^{2N(\hv-1)}$} The trivial monodromy conditions \eqref{zmnf1} for
the operator \eqref{norformguess} are a system of equations in $\bb{C}^N\otimes \mf{n}^+$.
Due to Proposition
\ref{prop:monodromycg}\eqref{xint},
the variables $X^i,i\geq 1$ defined in \eqref{Xi0906} belong to $\bb{C}^N\otimes \mf{t}^i$, where $\mf{t}$ is
the symplectic vector space defined in \eqref{symplecticspace} and $\mf{t}^i=\mf{t}\cap\mf{g}^i$.
As it was already remarked, this implies the the total number of non-trivial variables $\lbrace \mathbf{w},X^i \rbrace$ is $2N(\hv-1)$.
By choosing an explicit basis of $\mf{t}$, we now write the system \eqref{zmnf1} as an equivalent system in $\bb{C}^{2N(\hv-1)}$.

Recall the set $\Theta\subset \mf{h}^\ast$ defined in \eqref{Theta}, and define 
\beq
\Theta^i=\{\alpha\in \Theta\,|\,\Ht(\alpha)=i\}, \quad i=0,\dots,h^\vee-1.
\eeq
Recall the root vectors $\{E_\alpha, \alpha\in\Delta\}$ of $\mf{g}$ defined in $\S$\ref{gbasis}. For $\alpha\in \Theta$, introduce the variables ${\bf x}^\alpha\in\bb{C}^N$, and write $X^i$ as
\beq\label{Xitheta}
X^i=\sum_{\alpha \in \Theta^i}{\bf x}^\alpha\otimes E_{\alpha},\qquad i=0,\dots, h^\vee-1,
\eeq
with ${\bf x}^0=-{\bf 1}$ and $E_0=\theta^\vee$. Note that we always have $X^{h^\vee-1}={\bf x}^\theta\otimes E_{\theta}$.  For $\alpha\in \Theta$ define  $M_\alpha,S_\alpha, Y_\alpha\in \End_{\bb{C}({\bf w})}(\bb{C}^N)$
as the unique linear operators satisfying the relations: 
\begin{subequations}\label{MSYalpha}
\begin{align}
M_\alpha({\bf v})\otimes E_\alpha&=M({\bf v}\otimes E_\alpha),\label{Malpha}\\
S_\alpha({\bf v})\otimes E_\alpha&=S({\bf v}\otimes E_\alpha),\label{Salpha}\\
Y_\alpha({\bf v})\otimes E_\alpha&=Y({\bf v}\otimes E_\alpha),\label{Yalpha}
\end{align}
\end{subequations}
where $M,S,Y\in\End_{\bb{C}({\bf w})}(\bb{C}^N\otimes \mf{g})$ were introduced in  \eqref{MSY}.
Explicitly,  for ${\bf v}\in\bb{C}^N$ we have
\begin{align}
M_\alpha({\bf v})=&A_{\Ht(\alpha)}({\bf v})-A_0({\bf 1})\circ {\bf v}+\alpha(r){\bf v},\label{Malphav}\\
S_\alpha({\bf v})=&2B_{\Ht(\alpha)}({\bf v})+\frac{4}{3}kA_{\Ht(\alpha)}({\bf v})-B_0({\bf 1})\circ {\bf v}\notag\\
& -\frac{k}{3}A_0({\bf 1})\circ{\bf v} +(1+\frac{k}{3})\alpha(r){\bf  v},\\
Y_\alpha({\bf v})=&\frac{2}{3}k^2{\bf v}+\frac{2}{3}k A_{\Ht(\alpha)}({\bf v})+
2B_{\Ht(\alpha)}({\bf v})+(1+\frac{k}{3})\alpha(r){\bf v}.
\end{align}
with $\alpha\in \Theta$.
Recall the bimultiplicative function $\epsilon_{\alpha,\beta}$ ($\alpha,\beta\in Q$) defined in \eqref{bimultiplicative1}, \eqref{bimultiplicative2}.
\begin{proposition}\label{prop:20181128}
Let $\rank\mf{g}>1$, and let $A_0$ be given by \eqref{operatorAB} and $M_\alpha, S_\alpha, Y_\alpha$ by \eqref{MSYalpha}. The trivial monodromy conditions \eqref{zmnf1} are equivalent to the following system of $2h^\vee-2$ ($\bb{C}^N$-valued) equations in the $2h^\vee-2$  ($\bb{C}^N$-valued) variables $\{{\bf w},{\bf x}^\alpha \,|\,\alpha\in \Theta\}$:
\begin{subequations}\label{systemonc}
\beq\label{systemonc1}
\sum_{\alpha \in \Theta^1}{\bf x}^\alpha=(k+\theta(r)){\bf 1}-2A_0({\bf 1}),
\eeq
for every $\alpha\in \Theta^i$, $i=1,\dots,h^\vee-3$:
\begin{align}\label{multirooteq}
\sum_{\substack{j\in I_\theta:\\ \alpha+\alpha_j\in\Theta^{i+1}}}\!\!\!\epsilon_{\alpha,\alpha_j}\,
{\bf x}^{\alpha+\alpha_j}&=M_\alpha({\bf x}^\alpha)-\frac{1}{2}\sum_{\beta \in \Theta^1}(\alpha\vert \beta)\,
{\bf x}^\alpha\circ {\bf x}^{\beta}\notag\\
+&\frac{1}{2}\sum_{s=1}^{i-1}\sum_{\beta\in\Theta^1}\sum_{\substack{ \gamma\in \Theta^s:
\\ \alpha-\gamma\in\Delta\\ \alpha-\gamma+\beta\in \Delta}}\epsilon_{\alpha,\gamma}
\epsilon_{\alpha,\beta}\epsilon_{\gamma,\beta}{\bf x}^\gamma\circ {\bf x}^{\alpha-\gamma+\beta},
\end{align}
for every $\alpha\in \Theta^{h^\vee-2}$:
\begin{align}
\varepsilon_{\theta,\alpha} {\bf x}^\theta&=2M_\alpha({\bf x}^\alpha)-\sum_{\beta \in \Theta^1}(\alpha\vert \beta)
{\bf x}^\alpha\circ {\bf x}^{\beta}\notag\\
&+\sum_{s=1}^{h^\vee-3}\sum_{\beta\in\Theta^1}\sum_{\substack{ \gamma\in \Theta^s\\
\alpha-\gamma\in\Delta\\ \alpha-\gamma+\beta\in\Delta}}\epsilon_{\alpha,\gamma}\epsilon_{\alpha,\beta}
\epsilon_{\gamma,\beta}{\bf x}^\gamma\circ {\bf x}^{\alpha-\gamma+\beta},
\end{align}
and finally
\begin{align}
2(k+h^\vee-1){\bf w}&=Y_{\theta}({\bf x}^\theta)-\sum_{i=1}^{h^\vee-1}\sum_{\alpha\in\Theta^i}(\theta | \alpha)
\epsilon_{\theta,\alpha}{\bf x}^{\theta-\alpha}\circ S_{\alpha}({\bf x}^\alpha)\notag\\
&-\frac{k}{3}\sum_{i=2}^{h^\vee-2}\sum_{j\in I_\theta}\sum_{\substack{\gamma\in \Theta^{i}\\
\theta-\gamma+\alpha_j\in \Delta}}\varepsilon_{\theta,\gamma}\varepsilon_{\theta,\alpha_j}
\varepsilon_{\gamma,\alpha_j}{\bf x}^{\theta-\gamma+\alpha_j}\circ {\bf x}^{\gamma},
\end{align}
\end{subequations}
where ${\bf x}^0=-{\bf 1}$.
\end{proposition}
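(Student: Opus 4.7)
The plan is to rewrite each of the three Lie-algebra-valued identities \eqref{zmnf2}--\eqref{zmnf4} as a system of $\bb{C}^N$-valued scalar identities by expanding every $X^i$ according to \eqref{Xitheta} in the basis $\{E_\alpha\,|\,\alpha\in\Theta^i\}$ and extracting the coefficient of each basis vector. The workhorse is the set of structure constants \eqref{commrel}, which reduce every bracket in the root basis either to a single term $\varepsilon_{\alpha,\beta}E_{\alpha+\beta}$ (when $\alpha+\beta\in\Delta$) or to a Cartan contribution $[E_\alpha,E_{-\alpha}]=-\nu^{-1}(\alpha)$, together with the combinatorial identities for $\varepsilon$ collected in Lemma \ref{lemmaepsilon}. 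Throughout I would use the explicit form of $\pi_0(f)$ and $\pi_{-1}(f)$ given in \eqref{fprojections}, and the fact that $\alpha_j\in\Theta^1$ precisely when $j\in I\setminus I_\theta$.

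I would first dispatch \eqref{zmnf2}, which lives in the one-dimensional space $\bb{C}E_\theta$. The inner bracket $[X^1,\pi_{-1}(f)]$ decomposes into diagonal Cartan contributions (from the pairs $\alpha=\alpha_j\in\Theta^1$) and off-diagonal root-vector contributions of $\theta$-weight zero; only the Cartan part survives the subsequent bracket with $E_\theta$, since $\theta+\beta$ is never a root once $\beta$ is non-zero of $\theta$-weight zero. Comparing with the right-hand side and using $[\theta^\vee,E_\theta]=2E_\theta$ together with the definition of $A_0$ then yields \eqref{systemonc1}.

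Next I would expand \eqref{zmnf3} for fixed $i\in\{1,\ldots,h^\vee-2\}$. After using $\varepsilon_{\alpha_j,\alpha_j}=-1$ and Lemma \ref{lemmaepsilon}(i) to clear signs, the left-hand side $[X^{i+1},\pi_0(f)]$ contributes $\sum_{j\in I_\theta,\,\alpha+\alpha_j\in\Theta^{i+1}}\varepsilon_{\alpha,\alpha_j}{\bf x}^{\alpha+\alpha_j}$ in front of each $E_\alpha$, $\alpha\in\Theta^i$, and the term $M(X^i)$ yields $M_\alpha({\bf x}^\alpha)$ directly from \eqref{Malpha}. In the quadratic sum $\frac{1}{2}\sum_{s=0}^i[X^s,[X^{i+1-s},\pi_{-1}(f)]]$ I would isolate the boundary term $s=0$: since $X^0=-{\bf 1}\otimes\theta^\vee$, the identities $[E_\alpha,E_{-\alpha}]=-\alpha^\vee$ and $[\alpha^\vee,E_\alpha]=(\alpha|\alpha)E_\alpha$ produce exactly the diagonal contribution $-\frac{1}{2}\sum_{\beta\in\Theta^1}(\alpha|\beta){\bf x}^\alpha\circ {\bf x}^\beta$. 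For $1\le s\le i-1$, collapsing the triple of brackets and applying Lemma \ref{lemmaepsilon}(ii)--(iv) combines the $\varepsilon$ factors into the triple product $\varepsilon_{\alpha,\gamma}\varepsilon_{\alpha,\beta}\varepsilon_{\gamma,\beta}$ appearing in \eqref{multirooteq}. The case $i=h^\vee-2$ is special: its left-hand side vanishes because $\theta-\alpha_j$ is never a root for $j\in I_\theta$ (Lemma \ref{lemmahumphreys}(i) together with $(\alpha_j|\theta)=0$), and the only term of the right-hand side involving $X^{h^\vee-1}={\bf x}^\theta\otimes E_\theta$ delivers a coefficient $\varepsilon_{\theta,\alpha}{\bf x}^\theta$ in front of each $E_\alpha$, $\alpha\in\Theta^{h^\vee-2}$; this reorganisation simultaneously accounts for the sign $\varepsilon_{\theta,\alpha}$ and for the factor $2$ in front of $M_\alpha$ in the penultimate equation of \eqref{systemonc}.

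Finally, equation \eqref{zmnf4} projects onto $\bb{C}E_\theta$ and the calculation mirrors that of \eqref{zmnf2}: the term $Y(X^{h^\vee-1})$ reads off as $Y_\theta({\bf x}^\theta)$, each bracket $[X^{h^\vee-1-i},S(X^i)]$ gives an $E_\theta$-coefficient of the form $\sum_{\alpha\in\Theta^i}\varepsilon_{\theta-\alpha,\alpha}{\bf x}^{\theta-\alpha}\circ S_\alpha({\bf x}^\alpha)$, and a combination of Lemma \ref{lemmaepsilon} with Lemma \ref{lemmaioio} rewrites this sign as $-(\theta|\alpha)\varepsilon_{\theta,\alpha}$, producing the desired coefficient. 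The iterated bracket $[X^i,[X^{h^\vee-i},\pi_0(f)]]$ produces the triple sum indexed by $\gamma\in\Theta^i$ and $j\in I_\theta$ with $\theta-\gamma+\alpha_j\in\Delta$, weighted by $\varepsilon_{\theta,\gamma}\varepsilon_{\theta,\alpha_j}\varepsilon_{\gamma,\alpha_j}$. The main obstacle throughout is the combinatorial bookkeeping of the $\varepsilon$ signs and of which sums of roots remain in $\Theta$; the only genuine conceptual subtlety is the reorganisation required when $i=h^\vee-2$, where the vanishing of the left-hand side forces the ${\bf x}^\theta$-term to migrate from the right to the left.
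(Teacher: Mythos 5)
Your strategy is exactly the paper's (the paper's proof is a one-line instruction to substitute \eqref{Xitheta} into \eqref{zmnf1} and use \eqref{commrel}), and the substance of your computation is correct: the treatment of \eqref{zmnf2}, the matching of the left-hand side of \eqref{zmnf3} with $\sum_j\varepsilon_{\alpha,\alpha_j}{\bf x}^{\alpha+\alpha_j}$ via $\varepsilon_{\alpha+\alpha_j,\alpha_j}=-\varepsilon_{\alpha,\alpha_j}$, the sign bookkeeping in the triple sum, the special reorganisation at $i=h^\vee-2$, and the projection of \eqref{zmnf4} onto $\bb{C}E_\theta$ all check out. The one inaccuracy is your attribution of the diagonal term $-\tfrac12\sum_{\beta\in\Theta^1}(\alpha|\beta)\,{\bf x}^\alpha\circ{\bf x}^\beta$ to the boundary index $s=0$: the Cartan contributions $[E_{\alpha_j},E_{-\alpha_j}]=-\alpha_j^\vee$ can only arise from the inner bracket $[X^1,\pi_{-1}(f)]$, i.e.\ from $s=i$, whereas for $s=0$ the inner bracket $[X^{i+1},\pi_{-1}(f)]$ lands in $\mf{g}_0$ (when $i<h^\vee-2$) and is therefore annihilated by $\ad\theta^\vee$, so the $s=0$ term vanishes except in the case $i=h^\vee-2$, where it is precisely the source of the $\tfrac12\varepsilon_{\theta,\alpha}{\bf x}^\theta$ contribution you correctly describe. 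You have accounted for both boundary contributions, just with the labels $s=0$ and $s=i$ interchanged; carrying out the computation would expose and fix this immediately, so it is a slip rather than a gap.
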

\begin{proof} System \eqref{systemonc} is obtained substituting \eqref{Xitheta} into \eqref{zmnf1} and using
the commutation relations \eqref{commrel}.
\end{proof}

\begin{remark}\label{rem:N=1}
 In the case $N=1$ the system \eqref{systemonc} greatly simplifies. The variables $\mathbf{x}^{\alpha}$ are scalars
 and the operators $M_\alpha, S_\alpha,Y_\alpha:\bb{C}\to\bb{C}$ are just multiplication operators independent on the variable $w$:
 \begin{align}
M_\alpha({\bf v})=&\left(\alpha(r)-\Ht(\alpha)\right){\bf v},\label{MalphavN=1}\\ \nonumber
S_\alpha({\bf v})=&\left((1+\frac{k}{3})\alpha(r)-(\Ht(\alpha)+1+\frac{4}{3}k) \Ht(\alpha)\right){\bf v}\\ \nonumber
Y_\alpha({\bf v})=&\left(\frac{2}{3}k^2+(1+\frac{k}{3})\alpha(r)-(\Ht(\alpha)+1+\frac{2}{3}k) \Ht(\alpha)\right){\bf v}.
\end{align}
It follows that the system decouples: the first three equations are a subsystem for the $x^{\alpha}$'s alone,
and last equation yields the location of the pole $w$ as an explicit function of the $x^{\alpha}$'s.
\end{remark}

 Let $P_n(N)$ be the number of $n$-coloured partitions of $N$.
 The Fock space of the quantum $\widehat{\mf{g}}$-KdV model 
 is generatd by the action of $n=\rank \mf{g}$ free fields \cite{kojima08,hj12}. Hence the number 
of the states of a given level $N$ of the quantum theory is less 
or equal \footnote{It may be less only if the Fock representation is not irreducible.} than $P_n(N)$
for arbitrary values of the parameters $(r,\hat{k})\in \mf{h}\times(0,1)$ of the model,
and it actually coincides with $P_n(N)$ for generic values of the parameters. According to Conjecture \ref{conj:odeim},
the solutions of \eqref{systemonc} are in bijections with the states of level $N$ of  quantum $\mf{g}-$KdV.
Hence, the ODE/IM conjecture impies the following conjecture on the number of solutions of \eqref{systemonc}.
\begin{conjecture}\label{conj:numbersolutions}
The number of solutions of \eqref{systemonc} is less or equal than $N!P_n(N)$. The set of parameters $(r,\hat{k})$
for which the number of solutions is $N!P_n(N)$ is a generic subset of $\mf{h}\times(0,1)$.
\end{conjecture}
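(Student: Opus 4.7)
The plan is to treat the two halves of the conjecture separately, after first exploiting the evident permutation symmetry. The symmetric group $S_N$ acts on the solution set of \eqref{systemonc} by simultaneous permutation of the vector components of $\mathbf{w}$ and of each $\mathbf{x}^\alpha$, and this action is free because the $w_j$ are required to be pairwise distinct. Hence solutions come in orbits of size exactly $N!$, and it suffices to prove that the number of $S_N$-orbits is at most $P_n(N)$, with equality on a generic open subset of $\mf{h}\times(0,1)$. The base case $N=1$ is already settled in Sections \ref{sec:computations} and \ref{sec:A1}, which exhibit exactly $n=P_n(1)$ orbits for generic parameters (one per simple root of $\mf g$), matching the level-$1$ subspace of the $n$-coloured-partition Fock space and providing a foothold for any inductive or limiting argument.

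For the upper bound at general $N$, the fact that \eqref{systemonc} is a \emph{square} system of $2N(h^\vee-1)$ scalar equations in $2N(h^\vee-1)$ unknowns makes intersection theory the natural tool. The strategy has three steps: (i) construct a projective compactification on which the system becomes a complete intersection with controlled behaviour at the boundary divisors $\{w_i=w_j\}$ and $\{w_j\in\{0,\infty\}\}$; (ii) compute the intersection degree by a Bezout-style count, organising the contributions according to the principal grading \eqref{highestgrad} and the root data in $\Theta$, with \eqref{systemonc1} and \eqref{multirooteq} providing the algebraic relations between the vectorial unknowns, and aiming to recover the multinomial $N!\,P_n(N)$; (iii) invoke upper semi-continuity of fibre length on the resulting family over $\mf{h}\times(0,1)$ to propagate the generic bound to a global one.

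For the generic equality the cleanest route is a degeneration in $\hat k$, for instance the semiclassical limit $\hat k\to 0$ mentioned after \eqref{eq:Qzla} (or an analogous limit $\hat k\to 1$), in which the matrices $A_i,B_i$ of \eqref{operatorAB} and the operators $M_\alpha,S_\alpha,Y_\alpha$ of \eqref{MSYalpha} simplify and \eqref{systemonc} decouples into $N$ approximately independent copies of the $N=1$ problem glued by combinatorics. In that limit solutions should be transparently labelled by $n$-coloured partitions of $N$, and combined with flatness of the solution scheme this would yield the count $P_n(N)$ at generic $\hat k$.

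The main obstacle, and the reason the statement is a conjecture rather than a theorem, lies in the upper-bound half: one must exclude the a priori possibility of non-generic $(r,\hat k)$ at which extra solutions appear, or at which the solution variety acquires positive-dimensional components. Morally this is forbidden by the ODE/IM correspondence of Conjecture \ref{conj:odeim}, because each oper would produce a level-$N$ state and the Fock space has bosonic dimension exactly $P_n(N)$; however the present paper constructs only one arrow in the triangle described in the Remark following that conjecture, namely from opers to $Q$-functions. The reverse arrow, from a solution of \eqref{systemonc} directly to a state of the quantum $\widehat{\mf g}$-KdV model, is the missing ingredient, and without it the sharp upper bound must be attacked by the extrinsic algebro-geometric route sketched above — which is the technically demanding part of the programme.
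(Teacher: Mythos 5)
You should first be aware that the paper does not prove this statement: it is stated as Conjecture \ref{conj:numbersolutions} and is justified only heuristically, by combining the (itself conjectural) ODE/IM correspondence of Conjecture \ref{conj:odeim} with the count of level-$N$ states in the Fock space generated by $n$ free fields, which is $P_n(N)$ for generic parameters. So there is no proof in the paper to compare against, and your diagnosis of \emph{why} the statement is open --- that only the arrow from opers to $Q$-functions is constructed, while the arrow from solutions of \eqref{systemonc} back to states is missing --- is exactly the paper's own position. Your preliminary reduction is also a genuine (and correct) addition: the simultaneous $S_N$-action on $(\mathbf{w},\{\mathbf{x}^\alpha\})$ preserves \eqref{systemonc} because the matrices \eqref{operatorAB} and the Hadamard products are equivariant under relabelling of the components, and the action is free since the $w_j$ are pairwise distinct; this cleanly explains the factor $N!$ and reduces the conjecture to counting unordered configurations, i.e.\ opers. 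The $N=1$ verification against Sections \ref{sec:computations} and \ref{sec:A1} is likewise correct.

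As a proof, however, the proposal has a genuine gap concentrated in step (ii) of your intersection-theoretic programme. The naive Bezout number of the square system \eqref{systemonc} is far larger than $N!\,P_n(N)$: the equations \eqref{multirooteq} are quadratic in the $\mathbf{x}$'s and the entries of $M_\alpha,S_\alpha,Y_\alpha$ are rational in $\mathbf{w}$ with poles along all the diagonals $w_i=w_j$, so after clearing denominators the degrees grow with $h^\vee$ and $N$. The entire content of the conjecture is therefore hidden in the uncomputed excess intersection along the boundary divisors $\{w_i=w_j\}$ and $\{w_j\in\{0,\infty\}\}$, and nothing in the proposal indicates how to control it; indeed even for $N=1$ the reduction to a degree-$n$ polynomial in Sections \ref{sec:computations} required case-by-case eliminations specific to each root system, not a general degree count. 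Similarly, the degeneration $\hat k\to 0$ (or $\hat k\to 1$) leaves the interval $(0,1)$ on which the conjecture is stated, so to transport a count from the limit you would need flatness and properness of the solution scheme over the closure --- precisely the properties that could fail if solutions escape to the boundary or collide, which you acknowledge but do not exclude. In short: the reductions and the identification of the obstruction are sound and consistent with the paper, but no step of the upper bound or of the generic equality is actually established, which is the same state of affairs as in the paper itself.
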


\vspace{20pt}
\section{Explicit computations}\label{sec:computations}
System \eqref{systemonc}, providing trivial monodromy conditions for the operator \eqref{norformguess},
is a system of $2N (\hv-1)$ equations in $2N (\hv-1)$ unknowns, which
depends on the root structure of the algebra. In this section we provide an explicit
presentation of this system in case $\mf{g}=A_n, n \geq 2,D_n, n\geq 4$, and $E_6$.
We omit to show our computations in the case $E_7,E_8$ due to their excessive length\footnote{We can furnish them to the interested
reader upon request.}.
In each case, we are able to reduce \eqref{systemonc} to a system of $2N$ equations in $2N$ unknowns. Moreover,
if $N=1$, we further reduce it to a single degree $n$ polynomial equation in one variable. This is consistent with the ODE/IM
hypothesis -- see Conjecture \ref{conj:numbersolutions} -- since the dimensions of the level $1$ subspace of the quantum $\mf{g}$-KdV model is equal to $\rank{g}=n$,
for generic values of the central charge and of the vacuum parameters.

\subsection{The case $A_n$, $n\geq 2$}
For $\mf{g}$ of type $A_n$ we have $h^\vee=n+1$,  $\theta=\sum_{i\in I}\alpha_i$, and $\dim\mf{t}=2n$.
Since $I\setminus I_\theta=\{1,n\}$ then from \eqref{fprojections} we get
$$\pi_0(f)=-\sum_{i=2}^{n-1}E_{-\alpha_i},\quad \pi_{-1}(f)=-E_{-\alpha_1}-E_{-\alpha_n}.$$ 
Defining
\beq
\beta_j=\sum_{i=n+1-j}^n\alpha_i,\quad \gamma_j=\sum_{i=1}^j\alpha_i, \qquad j=1,\dots,n-1
\eeq
then we get $\Theta=\{0,\theta\}\cup\{\beta_i,\gamma_i,i=1,\dots,n-1\}$ and $\{E_{\alpha},\alpha\in \Theta\}$ is a basis of  $\mf{t}$, with $E_0=\theta^\vee$.  Thus, \eqref{Xitheta} reads
\beq\label{XiAn}
X^i=
\begin{cases}
-{\bf 1}\otimes \theta^\vee & i=0,\\
{\bf x}^{\beta_i}\otimes E_{\beta_i}+{\bf x}^{\gamma_i}\otimes E_{\gamma_i} & i=1,\dots,n-1,\\
{\bf x}^\theta\otimes E_\theta & i=n,
\end{cases}
\eeq
and system \eqref{systemonc} takes the simpler form
\begin{align*}
&{\bf x}^{\beta_1}+{\bf x}^{\gamma_1}=(k+\theta(r)){\bf 1}-2A_0({\bf 1}),\\
&{\bf x}^{\beta_{i+1}}=M_{\beta_i}({\bf x}^{\beta_i})-{\bf x}^{\beta_1}\circ {\bf x}^{\beta_{i}},\hspace{106pt} i=1,\dots,n-2,\\
&{\bf x}^{\gamma_{i+1}}=-M_{\gamma_i}({\bf x}^{\gamma_i})+{\bf x}^{\gamma_1}\circ {\bf x}^{\gamma_{i}},\hspace{100pt} i=1,\dots,n-2,\\
&{\bf x}^\theta=2M_{\beta_{n-1}}({\bf x}^{\beta_{n-1}})+\sum_{s=1}^{n-1}{\bf x}^{\beta_s}\circ {\bf x}^{\gamma_{n-s}}-2{\bf x}^{\beta_1}\circ {\bf x}^{\beta_{n-1}}\\
&-{\bf x}^\theta=2M_{\gamma_{n-1}}({\bf x}^{\beta_{n-1}})+\sum_{s=1}^{n-1}{\bf x}^{\gamma_s}\circ {\bf x}^{\beta_{n-s}}-2{\bf x}^{\gamma_1}\circ {\bf x}^{\gamma_{n-1}}\\
&2(n+k){\bf w}=Y_\theta({\bf x}^\theta)+\sum_{i=1}^{n-1}\left({\bf x}^{\beta_{n-i}}\circ S_{\gamma_i}({\bf x}^{\gamma_i})-{\bf x}^{\gamma_{n-i}}\circ S_{\beta_i}({\bf x}^{\beta_i})\right)\\
&\hspace{50pt}-2S_\theta({\bf x}^\theta)+\frac{2}{3}k\sum_{i=1}^{n-2}{\bf x}^{\beta_{n-i}}\circ {\bf x}^{\gamma_{i+1}}.
\end{align*}
The above system can be further simplified as follows. Introduce the polynomial functions $P_i,\widetilde{P}_i:\bb{C}^N\to\bb{C}^N$, depending on the parameters ${\bf w}\in\bb{C}^N$ and defined recursively by the relations
\begin{subequations}\label{recursionAn}
\begin{align}
P_{i+1}({\bf x})&=M_{\beta_i}(P_i({\bf x}))-{\bf x}\circ P_i({\bf x}),\\
\widetilde{P}_{i+1}({\bf x})&=-M_{\gamma_i}(\widetilde{P}_i({\bf x}))+{\bf x}\circ \widetilde{P}_i({\bf x}),
\end{align}
\end{subequations}
for every ${\bf x}\in\bb{C}^N$, and with $P_0({\bf x})=-{\bf 1},$ $\widetilde{P}_0({\bf x})={\bf 1},$ and $\beta_0=\gamma_0=0$. 
\begin{proposition}\label{prop:An}
Let $\mf{g}$ be of type $A_n$, $n\geq 2$. The operator \eqref{norformguess} with $X^i$ as in \eqref{XiAn} has trivial monodromy at
all $w_{\ell}$, $\ell=1,\dots,N$ for all values of $\lambda$ if and only if:

\begin{enumerate}
 \item  The variables ${\bf x}^{\beta_1},{\bf x}^{\gamma_1},{\bf w} \in\bb{C}^N$ satisfy the system
\begin{subequations}\label{Ansys}
\begin{align}
&{\bf x}^{\beta_1}+{\bf x}^{\gamma_1}=(k+\theta(r)){\bf 1}-2A_0({\bf 1}),&\\
&\sum_{s=0}^n P_s({\bf x}^{\beta_1})\circ \widetilde{P}_{n-s}({\bf x}^{\gamma_1})=0 \\
\label{Anpoles}
2(n+k){\bf w}&=Y_\theta(P_n({\bf x}^{\beta_1})+\widetilde{P}_n({\bf x}^{\gamma_1}))+
\frac{2}{3}k\sum_{i=1}^{n-2}P_{n-i}({\bf x}^{\beta_{1}})\circ \widetilde{P}_{i+1}({\bf x}^{\gamma_{1}})\notag\\
&+\sum_{i=1}^{n}\left(P_{n-i}({\bf x}^{\beta_{1}})\circ S_{\gamma_i}(\widetilde{P}_i({\bf x}^{\gamma_1}))
-\widetilde{P}_{n-i}({\bf x}^{\gamma_{1}})\circ S_{\beta_i}(P_i({\bf x}^{\beta_1}))\right),
\end{align}
\end{subequations}
where $\gamma_n=\beta_n=\theta$.
\item The variables ${\bf x}^\alpha\in\bb{C}^N$, $\alpha\in \Theta$, are given in terms of ${\bf x}^{\beta_1},{\bf x}^{\gamma_1}$ as
\begin{align*}
& {\bf x}^{\beta_i}=P_i({\bf x}^{\beta_1})& i=1,\dots,n-1,\\
& {\bf x}^{\gamma_i}=\widetilde{P}_i({\bf x}^{\gamma_1})& i=1,\dots,n-1,\\
& {\bf x}^\theta=P_n({\bf x}^{\beta_1})+\widetilde{P}_n({\bf x}^{\gamma_1}),
\end{align*}
\end{enumerate}

\end{proposition}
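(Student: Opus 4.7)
The plan is to start from the simplified system displayed just before the proposition and argue that, for $A_n$, the set $\Theta$ splits into two disjoint chains $\beta_1,\dots,\beta_{n-1}$ and $\gamma_1,\dots,\gamma_{n-1}$, meeting only at the top root $\theta$. With this splitting the recursion equations ${\bf x}^{\beta_{i+1}}=M_{\beta_i}({\bf x}^{\beta_i})-{\bf x}^{\beta_1}\circ{\bf x}^{\beta_i}$ and ${\bf x}^{\gamma_{i+1}}=-M_{\gamma_i}({\bf x}^{\gamma_i})+{\bf x}^{\gamma_1}\circ{\bf x}^{\gamma_i}$ become exactly the defining relations \eqref{recursionAn} of the polynomial maps $P_{i+1}$ and $\widetilde{P}_{i+1}$. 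A straightforward induction on $i$ starting from $P_1({\bf x})={\bf x}={\bf x}^{\beta_1}$ and $\widetilde{P}_1({\bf x})={\bf x}={\bf x}^{\gamma_1}$ then yields ${\bf x}^{\beta_i}=P_i({\bf x}^{\beta_1})$ and ${\bf x}^{\gamma_i}=\widetilde{P}_i({\bf x}^{\gamma_1})$ for $i=1,\dots,n-1$, proving part (2) of the proposition for these variables.

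Next I would extend the recursion to $i=n$, defining $P_n$ and $\widetilde{P}_n$ by the same formulas \eqref{recursionAn}, and substitute the expressions above into the two boundary equations of the intermediate system, namely the ones producing $\pm{\bf x}^\theta$. Noting the identities $2M_{\beta_{n-1}}(P_{n-1})-2{\bf x}^{\beta_1}\circ P_{n-1}=2P_n$ and $2M_{\gamma_{n-1}}(\widetilde{P}_{n-1})-2{\bf x}^{\gamma_1}\circ\widetilde{P}_{n-1}=-2\widetilde{P}_n$, the two equations become
\begin{equation*}
{\bf x}^\theta=2P_n+\Sigma,\qquad {\bf x}^\theta=2\widetilde{P}_n-\Sigma,
\end{equation*}
where $\Sigma=\sum_{s=1}^{n-1}P_s({\bf x}^{\beta_1})\circ\widetilde{P}_{n-s}({\bf x}^{\gamma_1})$ (the two cross-sums coincide after reindexing $s\mapsto n-s$). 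Averaging these two equations gives ${\bf x}^\theta=P_n({\bf x}^{\beta_1})+\widetilde{P}_n({\bf x}^{\gamma_1})$, which is the last line of part (2). Subtracting them yields $P_n-\widetilde{P}_n+\Sigma=0$, and using $P_0=-{\bf 1}$, $\widetilde{P}_0={\bf 1}$ this rewrites as the convolution identity $\sum_{s=0}^{n}P_s({\bf x}^{\beta_1})\circ\widetilde{P}_{n-s}({\bf x}^{\gamma_1})=0$, which is the second equation of \eqref{Ansys}.

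The remaining two equations of \eqref{Ansys} are now just bookkeeping. The first equation of \eqref{Ansys} is nothing but \eqref{systemonc1} (which reads simply ${\bf x}^{\beta_1}+{\bf x}^{\gamma_1}=(k+\theta(r)){\bf 1}-2A_0({\bf 1})$ in the $A_n$ case because $\Theta^1=\{\beta_1,\gamma_1\}$), while equation \eqref{Anpoles} is obtained by plugging ${\bf x}^{\beta_i}=P_i({\bf x}^{\beta_1})$, ${\bf x}^{\gamma_i}=\widetilde{P}_i({\bf x}^{\gamma_1})$ and ${\bf x}^\theta=P_n+\widetilde{P}_n$ into the last equation of the intermediate system, and simplifying the sign factors $\epsilon_{\theta,\alpha}$ for $\alpha\in\Theta$, all of which evaluate to convenient constants in the $A_n$ case.

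The reverse implication, that any solution of \eqref{Ansys} together with the formulas in part (2) gives trivial monodromy, is immediate: one simply checks that the defining recursions of $P_i$, $\widetilde{P}_i$ guarantee the middle equations of the intermediate system, and that the convolution identity combined with ${\bf x}^\theta=P_n+\widetilde{P}_n$ is equivalent to the two boundary equations, by reversing the addition/subtraction above. The only real book-keeping obstacle is the equation for ${\bf w}$, where the $\frac{k}{3}$-correction term contains a sum over $i=2,\dots,n-2$ that needs to be reorganized into the form $P_{n-i}\circ \widetilde{P}_{i+1}$ displayed in \eqref{Anpoles}; this is where I expect the computation to be most delicate, but it is a purely combinatorial rearrangement once the explicit expressions for ${\bf x}^{\beta_i}$ and ${\bf x}^{\gamma_i}$ are in hand.
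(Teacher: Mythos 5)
Your reconstruction matches the paper's (implicit) argument exactly: the paper states the proposition as an immediate consequence of the displayed $A_n$ system, and your steps — identifying the $\beta$- and $\gamma$-recursions with the defining relations of $P_i,\widetilde P_i$, then taking the sum and difference of the two equations for $\pm{\bf x}^\theta$ to obtain ${\bf x}^\theta=P_n+\widetilde P_n$ and the convolution identity (whose $s=0,n$ terms are supplied by $P_0=-{\bf 1}$, $\widetilde P_0={\bf 1}$) — are precisely the intended derivation, and they check out. The only caution is in the pole equation, which you rightly flag as the delicate bookkeeping step: when you absorb the $-2S_\theta({\bf x}^\theta)$ term of the intermediate system into the $i=n$ summand of \eqref{Anpoles} you should verify the resulting coefficient carefully (the $i=n$ term as written contributes $-S_\theta(P_n+\widetilde P_n)$, so track the factor of $2$ coming from $(\theta\,|\,\theta)=2$ in the general system).
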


\begin{corollary} Let $N=1$. The system \eqref{Ansys} admits, for generic values of $r \in \mf{h},k \in \bb{R}$,
$n$ solutions.
\end{corollary}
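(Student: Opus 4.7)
The plan is to reduce to the case $N=1$, where by Remark \ref{rem:N=1} the system \eqref{systemonc}, and hence \eqref{Ansys}, decouples: all variables ${\bf x}^\alpha$ become scalars, the operators $M_\alpha, S_\alpha, Y_\alpha$ become multiplication by the constants given in \eqref{MalphavN=1} (independent of $w$), and $A_0({\bf 1})=0$ since $A_0$ is the $1\times 1$ zero matrix. Thus \eqref{Ansys} amounts to three scalar equations in the scalar unknowns $x := {\bf x}^{\beta_1}$, $y := {\bf x}^{\gamma_1}$, and $w$.

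Next, I would use the first (linear) equation of \eqref{Ansys} to eliminate $y = (k+\theta(r))-x$, reducing the second equation to a single polynomial equation $\Phi(x)=0$. The recursion \eqref{recursionAn} specialises to $P_s(x)=-\prod_{j=0}^{s-1}(\beta_j(r)-j-x)$ and $\widetilde{P}_s(y)=\prod_{j=0}^{s-1}(y+j-\gamma_j(r))$, so $P_s$ has leading coefficient $(-1)^{s+1}$ in $x^s$ while $\widetilde{P}_{n-s}(c-x)$, with $c=k+\theta(r)$, has leading coefficient $(-1)^{n-s}$ in $x^{n-s}$. Each of the $n+1$ summands $P_s(x)\,\widetilde{P}_{n-s}(c-x)$ therefore contributes $(-1)^{n+1}$ to the coefficient of $x^n$, producing a total leading coefficient $(n+1)(-1)^{n+1}$ which is non-zero for $n\geq 2$. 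Hence $\Phi$ has degree exactly $n$ and thus at most $n$ roots.

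To upgrade this to exactly $n$ distinct roots for generic $(r,k)$, I would argue that the discriminant of $\Phi$, viewed as a polynomial in the parameters $r\in\mf{h}$ and $k\in\bb{R}$, is not identically zero, so it vanishes on a proper closed subset whose complement is the desired generic set. Non-vanishing at a single point suffices: for instance, in the base case $A_2$ a direct computation yields $\Phi(x)=3x^2-(3c+a+b)x+c^2+ac$ with $a=1-\gamma_1(r)$ and $b=\beta_1(r)-1$, whose discriminant is non-zero at, say, $a=b=0,\ c=1$. For general $n$ the cleanest route is an asymptotic analysis: letting a suitable component of $r$ tend to infinity one shows that the $n$ roots of $\Phi$ separate and approach distinct limits dictated by the leading behaviour of the $\beta_j(r)$ and $\gamma_j(r)$. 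This generic-distinctness step is the main technical obstacle.

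Finally, each simple root of $\Phi$ yields a unique value of $x$, hence a unique $y$ and, by Proposition \ref{prop:An}(2), a unique value of every remaining ${\bf x}^\alpha$. The third equation of \eqref{Ansys} then determines $w$ uniquely from these data, since the coefficient $2(n+k)$ of $w$ on its left-hand side is non-zero for generic $k$. This produces exactly $n$ distinct solutions of \eqref{Ansys}, concluding the proof.
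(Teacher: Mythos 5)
Your proposal is correct and follows essentially the same route as the paper: solve the recursion \eqref{recursionAn} explicitly as products of linear factors, use the linear equation to eliminate ${\bf x}^{\gamma_1}$, obtain a degree-$n$ polynomial in ${\bf x}^{\beta_1}$, and recover $w$ from the last (linear, decoupled) equation. You actually supply more detail than the paper on the leading-coefficient computation (the paper does not verify that the degree is exactly $n$), and the generic-distinctness step you flag as the remaining obstacle is simply asserted in the paper as well, so you are not missing anything the paper provides.
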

\begin{proof}
For $N=1$, the recursion relations \eqref{recursionAn} can be explicitely solved.
Indeed, in this case we have ${\bf x}^\alpha=x^\alpha\in\bb{C}$ and the operator \eqref{Malphav}
reduces to the scalar operator $M_\alpha(x)=(\alpha(r)-\Ht(\alpha))x$, with $x\in\bb{C}$ and $\alpha\in\Theta$.
Noting that $\Ht(\beta_i)=\Ht(\gamma_i)=i$, then the polynomials
\begin{subequations}\label{PPAnN1}
\begin{align}
P_i(x)&=(-1)^{i+1}\prod_{j=1}^{i}(x-\beta_{j-1}(r)+j-1),\\
\widetilde{P}_i(x)&=\prod_{j=1}^{i}(x-\gamma_{j-1}(r)+j-1)
\end{align}
\end{subequations}
satisfy \eqref{recursionAn}  with $P_0(x)=-1$ and $\widetilde{P}_0(x)=1$ (and $\beta_0=\gamma_0=0$).
Since the polynomials $P,\widetilde{P}$ do not depend on the pole $w_1$, the system \eqref{Ansys}
splits into a subsystem for $x^{\beta_1},x^{\gamma_1}$ and a linear equation for ${\bf w}=w_1$, the additional pole,.
Explicitly we have:
\begin{align*}
&x^{\beta_1}+x^{\gamma_1}=k+\theta(r),&\\
&\sum_{s=0}^n (-1)^{s+1}\prod_{j=1}^{s}(x^{\beta_1}-\beta_{j-1}(r)+j-1) \prod_{j=1}^{n-s}(x^{\gamma_1}-\gamma_{j-1}(r)+j-1) =0.
\end{align*}
Substituting the first equation in the second, one obtain a polynomial equation for the variable $x^{\beta_1}$,
which has -- for generic values of $r$ and $k$ -- $n$ distinct solutions.
Once a solution of the above system is chosen, the additional pole is given by
\begin{align*}
2&(n+k)w_1=\frac{2}{3}k(k+n)\left((-1)^{n+1}\prod_{j=1}^{n}(x^{\beta_1}-\beta_{j-1}(r)+j-1)+\prod_{j=1}^{n}(x^{\gamma_1}-\gamma_{j-1}(r)+j-1)\right)\\
&+\frac{2}{3}k\sum_{i=1}^{n-2}\left((-1)^{n-i+1}\prod_{j=1}^{n-i}(x^{\beta_1}-\beta_{j-1}(r)+j-1)\prod_{j=1}^{i+1}(x^{\gamma_1}-\gamma_{j-1}(r)+j-1)\right)\\
&+\sum_{i=1}^{n-1}m_{n,i,r,k}\left((-1)^{n-i+1}\prod_{j=1}^{n-i}(x^{\beta_1}-\beta_{j-1}(r)+j-1)\prod_{j=1}^{i}(x^{\gamma_1}-\gamma_{j-1}(r)+j-1)\right),
\end{align*}
with $m_{n,i,r,k}=(1+\frac{k}{3})(\gamma_i(r)-\beta_{n-i}(r))+(n+1+\frac{4}{3}k)(n-2i)$, 
and
\begin{align*}
& x^{\beta_i}=(-1)^{i+1}\prod_{j=1}^{i}(x^{\beta_1}-\beta_{j-1}(r)+j-1),\qquad  i=1,\dots,n-1,\\
& x^{\gamma_i}=\prod_{j=1}^{i}(x^{\gamma_1}-\gamma_{j-1}(r)+j-1),\qquad i=1,\dots,n-1,\\
& x^\theta=(-1)^{n+1}\prod_{j=1}^{n}(x^{\beta_1}-\beta_{j-1}(r)+j-1)+\prod_{j=1}^{n}(x^{\gamma_1}-\gamma_{j-1}(r)+j-1).
\end{align*}
\end{proof}
\subsection{The case $D_n$}
For $\mf{g}$ of type $D_n$ we have $h^\vee=2n-2$,  $\theta=\alpha_1+2\sum_{i=2}^{n-2}\alpha_i+\alpha_{n-1}+\alpha_n$, and $\dim\mf{t}=4n-6$. Since $I\setminus I_\theta=\{2\}$ then from \eqref{fprojections} we get
$$\pi_0(f)=-E_{-\alpha_1}-\sum_{i=3}^{n}E_{-\alpha_i},\quad \pi_{-1}(f)=-E_{-\alpha_2}.$$
Denoting the roots
\begin{align*}
&\beta_j=
\begin{cases}
\sum_{i=1}^{j+1}\alpha_i,& j=1,\dots,n-3,\\
\sum_{i=1}^{2n-j-3}\alpha_i+2\sum_{i=2n-j-2}^{n-2}\alpha_i+\alpha_{n-1}+\alpha_n,& j=n-1,\dots,2n-5,\\
\end{cases}
\\
&\beta_{n-2}^+=\sum_{i=2}^{n-1}\alpha_i,\qquad \beta_{n-2}^-=\sum_{i=2}^{n-2}\alpha_i+\alpha_n,
\end{align*}
and 
\begin{align*}
&\gamma_j=
\begin{cases}
\sum_{i=2}^{j+1}\alpha_i,& j=1,\dots,n-3,\\
\sum_{i=2}^{2n-j-3}\alpha_i+2\sum_{i=2n-j-2}^{n-2}\alpha_i+\alpha_{n-1}+\alpha_n,& j=n-1,\dots,2n-5,\\
\end{cases}
\\
&\gamma_{n-2}^+=\sum_{i=1}^{n-2}\alpha_i+\alpha_n,\qquad \gamma_{n-2}^-=\sum_{i=1}^{n-1}\alpha_i,
\end{align*}
then we have $\Theta=\{0,\beta^\pm_{n-2},\gamma^\pm_{n-2},\theta\}\cup\{\beta_j,\gamma_j,j=1,\dots,\widehat{n-2},\dots,2n-5\}$. 
Thus, \eqref{Xitheta} reads
\beq\label{XiDn}
X^i=
\begin{cases}
-{\bf 1}\otimes \theta^\vee & i=0,\\
{\bf x}^{\gamma_1}\otimes E_{\gamma_1} & i=1,\\
{\bf x}^{\beta_{i-1}}\otimes E_{\beta_{i-1}}+{\bf x}^{\gamma_i}\otimes E_{\gamma_i} & i=2,\dots,n-3,\\
{\bf x}^{\beta_{n-3}}\otimes E_{\beta_{n-3}}+{\bf x}^{\beta^+_{n-2}}\otimes E_{\beta^+_{n-2}}+{\bf x}^{\beta^-_{n-2}}\otimes E_{\beta^-_{n-2}}& i=n-2\\
{\bf x}^{\gamma_{n-1}}\otimes E_{\gamma_{n-1}}+{\bf x}^{\gamma^+_{n-2}}\otimes E_{\gamma^+_{n-2}}+{\bf x}^{\gamma^-_{n-2}}\otimes E_{\gamma^-_{n-2}}& i=n-1\\
{\bf x}^{\beta_{i-1}}\otimes E_{\beta_{i-1}}+{\bf x}^{\gamma_i}\otimes E_{\gamma_i} & i=n,\dots,2n-5,\\
{\bf x}^{\beta_{2n-5}}\otimes E_{\beta_{2n-5}} & i=2n-4,\\
{\bf x}^\theta\otimes E_\theta & i=2n-3.
\end{cases}
\eeq
and system \eqref{systemonc} takes the form
\begin{align*}
&{\bf x}^{\gamma_1}=(k+\theta(r)){\bf 1}-2A_0({\bf 1}),&\\
&{\bf x}^{\beta_{i}}-{\bf x}^{\gamma_{i+1}}=M_{\gamma_i}({\bf x}^{\gamma_i})-{\bf x}^{\gamma_1}\circ {\bf x}^{\gamma_{i}},&\hspace{-250pt} i=1,\dots,n-4,\\
&-{\bf x}^{\beta_{i}}=M_{\beta_{i-1}}({\bf x}^{\beta_{i-1}})-{\bf x}^{\beta_1}\circ {\bf x}^{\gamma_{i-1}},&\hspace{-250pt} i=2,\dots,n-3,\\
&{\bf x}^{\beta_{n-3}}-{\bf x}^{\beta^+_{n-2}}-{\bf x}^{\beta^-_{n-2}}=M_{\gamma_{n-3}}({\bf x}^{\gamma_{n-3}})-{\bf x}^{\gamma_1}\circ {\bf x}^{\gamma_{n-3}}\\
& {\bf x}^{\gamma^+_{n-2}}-{\bf x}^{\gamma_{n-1}}=M_{\beta^-_{n-2}}({\bf x}^{\beta^-_{n-2}})-{\bf x}^{\gamma_1}\circ {\bf x}^{\beta^-_{n-2}}\\
&  {\bf x}^{\gamma^-_{n-2}}-{\bf x}^{\gamma_{n-1}}=M_{\beta^+_{n-2}}({\bf x}^{\beta^+_{n-2}})-{\bf x}^{\gamma_1}\circ {\bf x}^{\beta^+_{n-2}}\\
&-{\bf x}^{\gamma^+_{n-2}}-{\bf x}^{\gamma^-_{n-2}}=M_{\beta_{n-3}}({\bf x}^{\beta_{n-3}})-{\bf x}^{\beta_1}\circ {\bf x}^{\gamma_{n-3}}\\
& -{\bf x}^{\beta_{n-1}}=M_{\gamma^+_{n-2}}({\bf x}^{\gamma^+_{n-2}})-{\bf x}^{\beta_1}\circ {\bf x}^{\beta^-_{n-2}}\\
&  -{\bf x}^{\beta_{n-1}}=M_{\gamma^-_{n-2}}({\bf x}^{\gamma^-_{n-2}})-{\bf x}^{\beta_1}\circ {\bf x}^{\beta^+_{n-2}}\\
&{\bf x}^{\beta_{i}}+{\bf x}^{\gamma_{i+1}}=M_{\gamma_i}({\bf x}^{\gamma_i})-{\bf x}^{\gamma_1}\circ {\bf x}^{\gamma_{i}},&\hspace{-250pt} i=n-1,\dots,2n-6,\\
&{\bf x}^{\beta_{i}}=M_{\beta_{i-1}}({\bf x}^{\beta_{i-1}})-{\bf x}^{\beta_1}\circ {\bf x}^{\gamma_{i-1}},&\hspace{-250pt} i=n,\dots,2n-5,\\
&{\bf x}^{\beta_{2n-5}}=M_{\gamma_{2n-5}}({\bf x}^{\gamma_{2n-5}})-{\bf x}^{\beta^+_{n-2}}\circ {\bf x}^{\beta^-_{n-2}}+\sum_{s=1}^{n-4}{\bf x}^{\gamma_{s+1}}\circ {\bf x}^{\gamma_{2n-5-s}},&\\
&{\bf x}^\theta=2M_{\beta_{2n-5}}({\bf x}^{\beta_{2n-5}})+\sum_{s=2}^{n-3}{\bf x}^{\beta_s}\circ {\bf x}^{\gamma_{2n-4-s}}+\sum_{s=n-1}^{2n-5}{\bf x}^{\beta_s}\circ {\bf x}^{\gamma_{2n-4-s}}\\
&\hspace{20pt}-{\bf x}^{\beta_1}\circ {\bf x}^{\gamma_{2n-5}}-{\bf x}^{\beta^+_{n-2}}\circ {\bf x}^{\gamma^+_{n-2}}-{\bf x}^{\beta^-_{n-2}}\circ {\bf x}^{\gamma^-_{n-2}},&\\
&2(k+2n-3){\bf w}=Y_\theta({\bf x}^\theta)+{\bf x}^{\beta_{n-2}^+}\circ S_{\gamma_{n-2}^+}({\bf x}^{\gamma_{n-2}^+})-{\bf x}^{\gamma_{n-2}^+}\circ S_{\beta_{n-2}^+}({\bf x}^{\beta_{n-2}^+})&\\
&\hspace{20pt}+{\bf x}^{\beta_{n-2}^-}\circ S_{\gamma_{n-2}^-}({\bf x}^{\gamma_{n-2}^-})-{\bf x}^{\gamma_{n-2}^-}\circ S_{\beta_{n-2}^-}({\bf x}^{\beta_{n-2}^-})+\frac{2}{3}k{\bf x}^{\gamma_{n-2}^+}\circ {\bf x}^{\gamma_{n-2}^-}&\\
&\hspace{20pt}+\sum_{\substack{i=1,\dots,2n-5\\i\neq n-2}}\left({\bf x}^{\beta_{2n-4-i}}\circ \left(S_{\gamma_i}({\bf x}^{\gamma_i})-\frac{k}{3} {\bf x}^{\beta_i}\right)-{\bf x}^{\gamma_{2n-4-i}}\circ S_{\beta_i}({\bf x}^{\beta_i})\right)&\\
&\hspace{20pt}+\frac{2}{3}k({\bf x}^{\beta_{n-2}^+}+{\bf x}^{\beta_{n-2}^-})\circ {\bf x}^{\beta_{n-1}}-\frac{2}{3}k({\bf x}^{\gamma_{n-2}^+}+{\bf x}^{\gamma_{n-2}^-})\circ {\bf x}^{\gamma_{n-1}}&\\
&\hspace{20pt} -\frac{2}{3}k\sum_{i=2}^{n-3}({\bf x}^{\gamma_{2n-3-i}}\circ {\bf x}^{\beta_{i}}-{\bf x}^{\beta_{2n-3-i}}\circ {\bf x}^{\gamma_{i}}).&
\end{align*}
The above system can be simplified as follows. For $\alpha\in \Theta$ and ${\bf v}\in \bb{C}^N$, denote by  $\widetilde{M}_\alpha$  the operators
\beq
\widetilde{M}_\alpha({\bf v})=M_\alpha({\bf v})-(k+\theta(r)){\bf v}+2A_0({\bf 1})\circ {\bf v},
\eeq
and introduce the polynomials $P_i,\widetilde{P}_i:\bb{C}^N\to\bb{C}^N$ defined by the recursion relations
\beq
\begin{cases}\label{DnrecursionP}
\widetilde{P}_{i+1}({\bf x})=P_i({\bf x})-\widetilde{M}_{\gamma_i}(\widetilde{P}_i({\bf x}))\\
P_{i+1}({\bf x})={\bf x}\circ \widetilde{P}_i({\bf x})-M_{\beta_i}(P_i({\bf x})),
\end{cases}
\qquad i\geq 0,
\eeq
with $P_0({\bf x})=0$, $\widetilde{P}_0({\bf x})={\bf 1}$ and $\gamma_0=\beta_0=0$. (In particular, from \eqref{DnrecursionP} we obtain $\widetilde{P}_1({\bf x})=(k+\theta(r)){\bf 1}-2A_0(\bf{1})$ and $P_1({\bf x})={\bf x}$.) Set 
$$\widetilde{M}_\pm=M_{\beta^\pm_{n-2}},\qquad M_\pm=M_{\gamma^\pm_{n-2}},$$ 
and consider the rational functions $R_\pm:\bb{C}^N\to\bb{C}^N$, depending parametrically on ${\bf w}\in\bb{C}^N$ and given by:
\begin{align*}
R_\pm({\bf x})&=[(M_++M_-)(\widetilde{M}_++\widetilde{M}_-)-4\,{\bf x}\,\circ]^{-1}[(M_++M_-)\widetilde{M}_\mp(\widetilde{P}_{n-2}({\bf x}))&\\
&\hspace{130pt}-2{\bf x}\circ \widetilde{P}_{n-2}({\bf x})\pm (M_++M_-)P_{n-2}({\bf x})].&\\
\widetilde{R}_{\pm}({\bf x})&=\frac{1}{2}P_{n-2}({\bf x})+\frac{1}{2}\widetilde{M}_\mp(R_{\mp}({\bf x}))-\frac{1}{2}M_\pm(R_{\pm}({\bf x}))&,
\end{align*}
with ${\bf x}\in\bb{C}^N$. Finally, introduce recursively the polynomials $J,\widetilde{J}:\bb{C}^N\to\bb{C}^N$ as
\beq
\begin{cases}\label{DnrecursionJ}
\widetilde{J}_{i+1}({\bf x})=-J_i({\bf x})+\widetilde{M}_{\gamma_i}(\widetilde{J}_i({\bf x}))\\
J_{i+1}({\bf x})=M_{\beta_i}(J_i({\bf x}))-{\bf x}\circ \widetilde{J}_i({\bf x}),
\end{cases}
\qquad i\geq n-1,
\eeq
with 
\begin{align*}
\widetilde{J}_{n-1}({\bf x})&=\frac{1}{2}P_{n-2}({\bf x})-\frac{1}{2}\widetilde{M}_+(R_+({\bf x}))-\frac{1}{2}\widetilde{M}_-(R_-({\bf x}))\\
J_{n-1}({\bf x})&=\frac{1}{2}{\bf x}\circ (R_+({\bf x})+R_-({\bf x}))-\frac{1}{2}M_+(\widetilde{R}_+({\bf x}))-\frac{1}{2}M_-(
\widetilde{R}_-({\bf x})),
\end{align*}
and define $K:\bb{C}^N\to \bb{C}^N$ as:
\begin{align*}
K({\bf x})=&2M_{\beta_{2n-5}}(J_{2n-5}({\bf x}))-R_+({\bf x})\circ \widetilde{R}_+({\bf x})-R_-({\bf x})\circ \widetilde{R}_-({\bf x})\\
&\hspace{20pt}+\sum_{s=1}^{n-3}(J_{2n-4-s}({\bf x})\circ \widetilde{P}_s({\bf x})+P_s({\bf x})\circ \widetilde{J}_{2n-4-s}({\bf x})).
\end{align*}

\begin{proposition}
Let $\mf{g}$ be of type $D_n$, $n\geq 4$. The operator \eqref{norformguess} with $X^i$ as in \eqref{XiDn} has trivial monodromy at
all $w_{\ell}$, $\ell=1,\dots,N$ for all values of $\lambda$ if and only if:

\begin{enumerate}
 \item 
The variables ${\bf x}^{\beta_1},\bf{w} \in\bb{C}^N$ satisfy the system
\begin{subequations} \label{Dnsys}
 \begin{align}\label{eq:Dnsys1}
&\sum_{s=0}^{n-3}\widetilde{P}_s({\bf x}^{\beta_1})\circ \widetilde{J}_{2n-4-s}({\bf x}^{\beta_1})=
R_+({\bf x}^{\beta_1})\circ R_-({\bf x}^{\beta_1}) & \\ \nonumber
&2(k+2n-3){\bf w}=Y_\theta(K({\bf x}^{\beta_1}))+R_+({\bf x}^{\beta_1})
\circ S_{\gamma_{n-2}^+}(\widetilde{R}_+({\bf x}^{\beta_1}))&\\\nonumber
&\hspace{0pt}-\widetilde{R}_+({\bf x}^{\beta_1})\circ S_{\beta_{n-2}^+}(R_+({\bf x}^{\beta_1}))+
R_-({\bf x}^{\beta_1})\circ S_{\gamma_{n-2}^-}(\widetilde{R}_-({\bf x}^{\beta_1}))&\\ \nonumber
&\hspace{0pt}-\widetilde{R}_-({\bf x}^{\beta_1})\circ S_{\beta_{n-2}^-}(R_-({\bf x}^{\beta_1}))
+\frac{2}{3}k\widetilde{R}_+({\bf x}^{\beta_1})\circ \widetilde{R}_-({\bf x}^{\beta_1})&\\ \nonumber
&\hspace{0pt}+\sum_{i=1}^{n-3}\left(J_{2n-4-i}({\bf x}^{\beta_1})\circ 
\left(S_{\gamma_i}(\widetilde{P}_{i}({\bf x}^{\beta_1}))-\frac{k}{3} P_{i}({\bf x}^{\beta_1})\right)
-\widetilde{J}_{2n-4-i}({\bf x}^{\beta_1})\circ S_{\beta_i}(P_{i}({\bf x}^{\beta_1}))\right)&\\ \nonumber
&\hspace{0pt}+\sum_{i=n-1}^{2n-5}\left(P_{2n-4-i}({\bf x}^{\beta_1})\circ 
\left(S_{\gamma_i}(\widetilde{J}_{i}({\bf x}^{\beta_1}))-\frac{k}{3} J_{i}({\bf x}^{\beta_1})\right)-
\widetilde{P}_{2n-4-i}({\bf x}^{\beta_1})\circ S_{\beta_i}(J_{i}({\bf x}^{\beta_1})\right)&\\ \nonumber
&\hspace{0pt}+\frac{2}{3}k(R_+({\bf x}^{\beta_1})+R_-({\bf x}^{\beta_1}))\circ J_{n-1}({\bf x}^{\beta_1})-
\frac{2}{3}k(\widetilde{R}_+({\bf x}^{\beta_1})+\widetilde{R}_-({\bf x}^{\beta_1}))\circ \widetilde{J}_{n-1}({\bf x}^{\beta_1})&\\
&\hspace{0pt} -\frac{2}{3}k\sum_{i=2}^{n-3}\left(\widetilde{J}_{2n-3-i}({\bf x}^{\beta_1})
\circ P_i({\bf x}^{\beta_1})-J_{2n-3-i}({\bf x}^{\beta_1})\circ \widetilde{P}_i({\bf x}^{\beta_1})\right) \; ,&
 \end{align}
\end{subequations}
 \item The variables ${\bf x}^\alpha\in\bb{C}^N$, $\alpha\in \Theta$, are given in terms of ${\bf x}^{\beta_1}$ as
\begin{align*}
& {\bf x}^{\gamma_i}=\widetilde{P}_i({\bf x}^{\beta_1})\hspace{100pt}i=1,\dots,n-3,\\
& {\bf x}^{\beta_i}=P_i({\bf x}^{\beta_1}) \hspace{100pt} i=1,\dots,n-3,\\
&{\bf x}^{\beta^\pm_{n-2}}=R_\pm({\bf x}^{\beta_1}),\\
&{\bf x}^{\gamma^\pm_{n-2}}=\widetilde{R}_\pm({\bf x}^{\beta_1}),\\
& {\bf x}^{\gamma_i}=\widetilde{J}_i({\bf x}^{\beta_1})\hspace{100pt} i=n-1,\dots,2n-5,\\
& {\bf x}^{\beta_i}=J_i({\bf x}^{\beta_1})\hspace{100pt} i=n-1,\dots,2n-5,\\
& {\bf x}^\theta=K({\bf x}^{\beta_1}).
\end{align*}
\end{enumerate}
\end{proposition}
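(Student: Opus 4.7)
\medskip

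The plan is to follow the same strategy used for Proposition \ref{prop:An} in the $A_n$ case: specialize the general trivial monodromy system \eqref{systemonc} to the Lie algebra $D_n$, exploit the recursive height-by-height structure of the equations to express every variable ${\bf x}^{\alpha}$ as a (rational) function of the single parameter ${\bf x}^{\beta_1}$, and collect the constraints that remain after this elimination. First, I would record the explicit root data: $\Theta$, $I_{\theta}=I\setminus\{2\}$, the projections $\pi_0(f), \pi_{-1}(f)$ of \eqref{fprojections}, and the subsets $\Theta^i$ for each height $i$. A direct check using the formulas for $\beta_j,\gamma_j,\beta_{n-2}^{\pm},\gamma_{n-2}^{\pm}$ and $\theta$ yields
\[
\Theta^i=\{\beta_{i-1},\gamma_i\}\ \ (2\le i\le n-3,\ n\le i\le 2n-5), \quad
\Theta^{n-2}=\{\beta_{n-3},\beta_{n-2}^{+},\beta_{n-2}^{-}\},
\]
\[
\Theta^{n-1}=\{\gamma_{n-2}^{+},\gamma_{n-2}^{-},\gamma_{n-1}\}, \quad
\Theta^{1}=\{\gamma_{1}\},\ \ \Theta^{2n-4}=\{\beta_{2n-5}\},\ \ \Theta^{2n-3}=\{\theta\}.
\]
Equation \eqref{systemonc1} then immediately gives ${\bf x}^{\gamma_1}=(k+\theta(r)){\bf 1}-2A_0({\bf 1})$, which is exactly $\widetilde P_1({\bf x}^{\beta_1})$ produced by \eqref{DnrecursionP}.

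\medskip

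Next, for the lower block of heights $i=2,\dots,n-3$, applying \eqref{multirooteq} at the root $\beta_{i-1}$ only produces a term proportional to ${\bf x}^{\beta_i}$ (the sole simple root in $I_\theta$ one can add to $\beta_{i-1}$ without leaving $\Theta$ is $\alpha_{i+1}$), so that equation determines ${\bf x}^{\beta_i}$ in terms of ${\bf x}^{\beta_{i-1}}$ and ${\bf x}^{\gamma_{i-1}}$. Applying \eqref{multirooteq} at $\gamma_i$ gives a linear combination of ${\bf x}^{\beta_i}$ and ${\bf x}^{\gamma_{i+1}}$; since ${\bf x}^{\beta_i}$ is already known, this determines ${\bf x}^{\gamma_{i+1}}$. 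Tracking the $\varepsilon$-signs via Lemma \ref{lemmaepsilon} and reorganising this pair of equations gives precisely the polynomial recursion \eqref{DnrecursionP}, so by induction ${\bf x}^{\gamma_i}=\widetilde P_i({\bf x}^{\beta_1})$ and ${\bf x}^{\beta_i}=P_i({\bf x}^{\beta_1})$ for $i\le n-3$.

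\medskip

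The real work is the branching at heights $n-2$ and $n-1$. The equations at $\gamma_{n-3}$ and $\beta_{n-4}$ determine ${\bf x}^{\beta_{n-3}}$ (which equals $P_{n-2}({\bf x}^{\beta_1})$ by the same recursion, once extended) and the sum of ${\bf x}^{\beta_{n-2}^{\pm}}$, while the equation at $\beta_{n-3}$ determines the sum of ${\bf x}^{\gamma_{n-2}^{\pm}}$. The four variables ${\bf x}^{\beta_{n-2}^{\pm}},{\bf x}^{\gamma_{n-2}^{\pm}}$ are then further coupled by the pair of equations at $\beta_{n-2}^{\pm}$ (which produce ${\bf x}^{\gamma_{n-2}^{\mp}}-{\bf x}^{\gamma_{n-1}}$) and the equations at $\gamma_{n-2}^{\pm}$ (which produce $-{\bf x}^{\beta_{n-1}}$, fixing ${\bf x}^{\beta_{n-1}}$ once the $\gamma_{n-2}^{\pm}$ are known). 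Isolating the four unknowns ${\bf x}^{\beta_{n-2}^{\pm}},{\bf x}^{\gamma_{n-2}^{\pm}}$ leads to the $2\times 2$ linear system whose inverse is the rational operator $[(M_++M_-)(\widetilde M_++\widetilde M_-)-4\,{\bf x}\,\circ]^{-1}$ appearing in the definition of $R_{\pm}$, and the formulas for $\widetilde R_{\pm}$ follow by back-substitution. This is the main technical obstacle: it is the only point where the passage from polynomial to rational dependence on ${\bf x}^{\beta_1}$ occurs, and the careful identification of the $\varepsilon$-signs and of the inverse operator must be done by hand.

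\medskip

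Once past the branching, I would run an entirely parallel recursion for heights $n\le i\le 2n-5$, giving ${\bf x}^{\gamma_i}=\widetilde J_i({\bf x}^{\beta_1})$ and ${\bf x}^{\beta_i}=J_i({\bf x}^{\beta_1})$ via \eqref{DnrecursionJ}, with the initial data at $i=n-1$ obtained from the branching step. At height $2n-4$ the single equation at $\gamma_{2n-5}$ in \eqref{multirooteq} determines ${\bf x}^{\beta_{2n-5}}$, but using the two independent expressions for this quantity (one from the $\gamma_{2n-5}$ equation, one from the recursion applied at $\beta_{2n-6}$) produces the consistency condition displayed in \eqref{eq:Dnsys1}. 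Equivalently, the equation at $\theta$-level that produces $K({\bf x}^{\beta_1})={\bf x}^{\theta}$ must match the expression obtained by summing the quadratic terms involving $R_{\pm},\widetilde R_{\pm}$ and the $P,\widetilde P,J,\widetilde J$ polynomials; matching these two expressions gives \eqref{eq:Dnsys1}. Finally, the last equation of \eqref{systemonc} (the one proportional to ${\bf w}$) becomes, after substitution, the second equation of \eqref{Dnsys}, determining the location of the additional poles. The only non-trivial check remaining is that the right-hand sides of the recursions \eqref{DnrecursionP}, \eqref{DnrecursionJ} genuinely reproduce all the quadratic $\varepsilon$-coefficients from \eqref{multirooteq}, which reduces to a finite collection of identities among $\varepsilon_{\alpha,\beta}$'s verifiable via Lemma \ref{lemmaepsilon}.
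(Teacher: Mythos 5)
Your proposal is correct and follows essentially the same route as the paper, which gives no separate proof of this proposition but instead displays the specialization of system \eqref{systemonc} to the $D_n$ root data and then packages the resulting height-by-height elimination into the recursions \eqref{DnrecursionP}, \eqref{DnrecursionJ} and the rational functions $R_\pm,\widetilde R_\pm$. In particular you correctly identify the two genuinely nontrivial points: the inversion of the $2\times 2$ linear system at the trivalent node (the sole source of rational, rather than polynomial, dependence on ${\bf x}^{\beta_1}$), and the over-determination of ${\bf x}^{\beta_{2n-5}}$ that yields the consistency equation \eqref{eq:Dnsys1}.
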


\begin{corollary} Let $N=1$. The system \eqref{Dnsys} admits, for generic values of $r \in \mf{h},k \in \bb{R}$,
$n$ solutions.
\begin{proof}Skecth of the proof.
 The system \eqref{Dnsys} splits into
 an algebraic equation for the sole variable ${\bf{x}^{\beta_1}}=x^{\beta_1}$, and a linear equation for ${\bf{w}}=w_1$.  By recursively computing  the degree of the functions $\tilde{P}_s,\tilde{J},R_{\pm}$,
 one shows that \eqref{eq:Dnsys1} is an equation of the form 
 $(x-a)^{-1}\Pi_n(x^{\beta_1})=0$, where $a$ is a complex number and $\Pi_n$ a polynomial of degree $n$.
 The coefficients of $\Pi_n$, as well as the number $a$, depend on $r,k$, so that for generic values of these parameter
 the equation has exactly $n$ solutions.
\end{proof}
\end{corollary}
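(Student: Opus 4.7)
The strategy mirrors the $A_n$ case and follows the sketch. When $N=1$, by Remark~\ref{rem:N=1} all the operators $M_\alpha$, $S_\alpha$, $Y_\alpha$, $\widetilde{M}_\alpha$ reduce to multiplication by scalars depending only on $r$ and $k$, with no dependence on the pole location $w_1$. Consequently the recursions \eqref{DnrecursionP} for $P_i(x),\widetilde{P}_i(x)$ and \eqref{DnrecursionJ} for $J_i(x),\widetilde{J}_i(x)$ produce genuine polynomials in the single variable $x=x^{\beta_1}$, whose degrees are straightforward to compute by induction: one checks that $\deg P_i=\deg\widetilde{P}_i=\lfloor i/2\rfloor$ on the first branch $i\leq n-3$, and that the degrees of $J_i,\widetilde{J}_i$ continue to grow linearly on the second branch $n-1\leq i\leq 2n-5$, controlled in each step by the multiplication-by-$x$ term appearing in \eqref{DnrecursionP}--\eqref{DnrecursionJ}.

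Next I would handle the rational functions $R_\pm$, $\widetilde{R}_\pm$, which are the only non-polynomial ingredients. When $N=1$ the operator $(M_++M_-)(\widetilde{M}_++\widetilde{M}_-)-4\,x\,\circ$ becomes $c_1-4x$ for an explicit scalar $c_1=c_1(r,k)$, so
\begin{equation*}
R_\pm(x)=\frac{(M_++M_-)\widetilde{M}_\mp(\widetilde{P}_{n-2}(x))-2x\,\widetilde{P}_{n-2}(x)\pm(M_++M_-)P_{n-2}(x)}{c_1-4x},
\end{equation*}
and similarly $\widetilde{R}_\pm$ is a linear combination over $\mathbb{C}(r,k)$ of $P_{n-2},\widetilde{R}_\mp$ and $M_\pm R_\pm$, hence also a rational function with the single denominator $c_1-4x$ (after simplification; a short induction shows no higher power of this denominator appears in $\widetilde{J}_{n-1},J_{n-1},\dots,K$). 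Thus every quantity appearing in \eqref{eq:Dnsys1} is a rational function in $x$ with poles only at $x=c_1/4$, so multiplying \eqref{eq:Dnsys1} by a suitable power of $(c_1-4x)$ yields a polynomial equation $\Pi(x)=0$ in the single variable $x$.

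The key degree computation is then to show that $\Pi(x)$, after cancelling the common factor $(x-a)$ in the notation of the sketch, has degree exactly $n$. This is done term by term: the leading contribution on the left-hand side of \eqref{eq:Dnsys1} comes from the product $\widetilde{P}_{n-3}(x)\widetilde{J}_{n-1}(x)$ and the product $R_+(x)R_-(x)$, and a direct check using the degree formulas above shows that their highest-order terms combine to produce, generically in $(r,k)$, a polynomial of degree $n$ in $x$ after clearing the denominator $(c_1-4x)^2$. The main obstacle is precisely this step: one must verify that the leading coefficients do not accidentally cancel for generic values of $(r,k)$, which amounts to checking that a specific polynomial in the parameters $r,k$ is not identically zero. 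Since this polynomial depends only on $r,k$ (not on $w_1$), its non-vanishing on a Zariski-open subset of $\mf{h}\times\mathbb{R}$ suffices, and the resulting degree-$n$ equation has $n$ distinct roots for generic parameters.

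Finally, once $x^{\beta_1}$ is determined, the second equation in \eqref{Dnsys} is linear in ${\bf w}=w_1$ with coefficient $2(k+2n-3)\neq 0$ (generically), so it yields a unique $w_1$ for each root of $\Pi$. This produces exactly $n$ solutions of the full system \eqref{Dnsys}, matching the dimension of the level $1$ subspace of the quantum $\widehat{\mf{g}}$-KdV model for $\mf{g}$ of type $D_n$ as predicted by Conjecture~\ref{conj:numbersolutions}.
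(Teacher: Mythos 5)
Your proposal follows essentially the same route as the paper's own sketch: for $N=1$ the operators $M_\alpha,\widetilde M_\alpha,S_\alpha,Y_\alpha$ become scalars independent of $w_1$, the recursions yield polynomials in the single variable $x^{\beta_1}$, the only denominator comes from $R_\pm$ and is linear in $x^{\beta_1}$, and a degree count plus a genericity argument on the leading coefficient gives $n$ roots, after which the last equation determines $w_1$ linearly. One small slip: from $P_{i+1}(x)=x\,\widetilde P_i(x)-M_{\beta_i}(P_i(x))$ with $P_1(x)=x$ one gets $\deg P_i=\lceil i/2\rceil$, not $\lfloor i/2\rfloor$ (which holds only for $\widetilde P_i$); this does not affect the strategy but should be corrected before the final degree count.
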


\subsection{The case $E_6$} For $\mf{g}$ of type $E_6$ we have $h^\vee=12$,  $\theta=\alpha_1+2\alpha_2+3\alpha_3+2\alpha_4+2\alpha_5+\alpha_6$, and $\dim\mf{t}=22$. Since $I\setminus I_\theta=\{4\}$ then from \eqref{fprojections} we get
$$\pi_0(f)=-E_{-\alpha_1}-E_{-\alpha_2}-E_{-\alpha_3}-E_{-\alpha_5}-E_{-\alpha_6},\quad \pi_{-1}(f)=-E_{-\alpha_4}.$$
Denoting the roots:
\begin{align*}
&\gamma_1=\alpha_4&& \beta_1=\alpha_3+\alpha_4\\
&\gamma_2=\alpha_2+\alpha_3+\alpha_4&& \beta_2=\alpha_1+\alpha_2+\alpha_3+\alpha_4\\
& \gamma_3=\alpha_3+\alpha_4+\alpha_5&& \beta_3=\alpha_2+\alpha_3+\alpha_4+\alpha_5&&\\
&\gamma_4=\alpha_1+\alpha_2+\alpha_3+\alpha_4+\alpha_5&&  \beta_4=\alpha_3+\alpha_4+\alpha_5+\alpha_6\\
& \gamma_5=\alpha_2+\alpha_3+\alpha_4+\alpha_5+\alpha_6&&  \beta_5=\alpha_1+\alpha_2+\alpha_3+\alpha_4+\alpha_5+\alpha_6\\
& \gamma_6=\alpha_2+2\alpha_3+\alpha_4+\alpha_5 && \beta_6=\alpha_1+\alpha_2+2\alpha_3+\alpha_4+\alpha_5 \\
& \gamma_7=\alpha_1+2\alpha_2+2\alpha_3+\alpha_4+\alpha_5 && \beta_7=\alpha_2+2\alpha_3+\alpha_4+\alpha_5+\alpha_6\\
& \gamma_8=\alpha_1+\alpha_2+2\alpha_3+\alpha_4+\alpha_5+\alpha_6 &&  \beta_8=\alpha_1+2\alpha_2+2\alpha_3+\alpha_4+\alpha_5+\alpha_6 \\
&\gamma_9=\alpha_2+2\alpha_3+\alpha_4+2\alpha_5+\alpha_6 &&\beta_9=\alpha_1+\alpha_2+2\alpha_3+\alpha_4+2\alpha_5+\alpha_6\\
&\gamma_{10}=\alpha_1+2\alpha_2+2\alpha_3+\alpha_4+2\alpha_5+\alpha_6 &&\beta_{10}=\alpha_1+2\alpha_2+3\alpha_3+\alpha_4+2\alpha_5+\alpha_6
\end{align*}
then we have $\Theta=\{0,\theta\}\cup\{\beta_i,\gamma_i: i=1,\dots,10\}$. Thus, \eqref{Xitheta} reads
\beq\label{XiE6}
X^i=
\begin{cases}
-{\bf 1}\otimes \theta^\vee & i=0,\\
{\bf x}^{\gamma_1}\otimes E_{\gamma_1} & i=1,\\
{\bf x}^{\beta_1}\otimes E_{\beta_1} & i=2,\\
{\bf x}^{\gamma_2}\otimes E_{\gamma_2}+{\bf x}^{\gamma_3}\otimes E_{\gamma_3} & i=3,\\
{\bf x}^{\beta_2}\otimes E_{\beta_2}+{\bf x}^{\beta_3}\otimes E_{\beta_3}+{\bf x}^{\beta_4}\otimes E_{\beta_4} & i=4,\\
{\bf x}^{\gamma_4}\otimes E_{\gamma_4}+{\bf x}^{\gamma_5}\otimes E_{\gamma_5}+{\bf x}^{\gamma_6}\otimes E_{\gamma_6}& i=5,\\
{\bf x}^{\beta_5}\otimes E_{\beta_5}+{\bf x}^{\beta_6}\otimes E_{\beta_6}+{\bf x}^{\beta_7}\otimes E_{\beta_7}& i=6,\\
{\bf x}^{\gamma_7}\otimes E_{\gamma_7}+{\bf x}^{\gamma_8}\otimes E_{\gamma_8}+{\bf x}^{\gamma_9}\otimes E_{\gamma_9} & i=7,\\
{\bf x}^{\beta_8}\otimes E_{\beta_8}+{\bf x}^{\beta_9}\otimes E_{\beta_9} & i=8,\\
{\bf x}^{\gamma_{10}}\otimes E_{\gamma_{10}} & i=9,\\
{\bf x}^{\beta_{10}}\otimes E_{\beta_{10}} & i=10,\\
{\bf x}^\theta\otimes E_\theta & i=11.
\end{cases}
\eeq
and introducing the operator
$$\widetilde{M}_\alpha({\bf v})=M_\alpha({\bf v})-{\bf x}^{\gamma_1}\circ {\bf v},$$
for $\alpha\in\Theta$ and ${\bf v}\in\bb{C}^N$, then system \eqref{systemonc} takes the form
\begin{align*}
{\bf x}^{\gamma_1}&=(k+\theta(r)){\bf 1}-2A_0({\bf 1}),\\
 {\bf x}^{\beta_1}&=\widetilde{M}_{\gamma_1}({\bf x}^{\gamma_1}),\\
{\bf x}^{\gamma_2}-{\bf x}^{\gamma_3}&=\widetilde{M}_{\beta_1}({\bf x}^{\beta_1}),\\
{\bf x}^{\beta_2}-{\bf x}^{\beta_3}&=\widetilde{M}_{\gamma_2}({\bf x}^{\gamma_2}),\\
{\bf x}^{\beta_4}-{\bf x}^{\beta_3}&=\widetilde{M}_{\gamma_3}({\bf x}^{\gamma_3}),\\ 
-{\bf x}^{\gamma_4}&=\widetilde{M}_{\beta_2}({\bf x}^{\beta_2}),\\ 
{\bf x}^{\gamma_5}&=\widetilde{M}_{\beta_4}({\bf x}^{\beta_4}),\\ 
{\bf x}^{\gamma_6}&=\widetilde{M}_{\beta_2}({\bf x}^{\beta_2})+\widetilde{M}_{\beta_3}({\bf x}^{\beta_3})+\widetilde{M}_{\beta_4}({\bf x}^{\beta_4}),\\
2{\bf x}^{\beta_5}&=M_{\gamma_6}({\bf x}^{\gamma_6})-\widetilde{M}_{\gamma_4}({\bf x}^{\gamma_4})+\widetilde{M}_{\gamma_5}({\bf x}^{\gamma_5})+{\bf x}^{\gamma_2}\circ {\bf x}^{\gamma_3}-{\bf x}^{\beta_1}\circ {\bf x}^{\beta_3},\\
2{\bf x}^{\beta_6}&=M_{\gamma_6}({\bf x}^{\gamma_6})+\widetilde{M}_{\gamma_4}({\bf x}^{\gamma_4})+\widetilde{M}_{\gamma_5}({\bf x}^{\gamma_5})+{\bf x}^{\gamma_2}\circ {\bf x}^{\gamma_3}-{\bf x}^{\beta_1}\circ {\bf x}^{\beta_3},\\
2{\bf x}^{\beta_7}&=-M_{\gamma_6}({\bf x}^{\gamma_6})+\widetilde{M}_{\gamma_4}({\bf x}^{\gamma_4})+\widetilde{M}_{\gamma_5}({\bf x}^{\gamma_5})-{\bf x}^{\gamma_2}\circ {\bf x}^{\gamma_3}+{\bf x}^{\beta_1}\circ {\bf x}^{\beta_3},\\
{\bf x}^{\gamma_8}&=\widetilde{M}_{\beta_5}({\bf x}^{\beta_5}),
\end{align*}
\begin{align*}
{\bf x}^{\gamma_7}&=\widetilde{M}_{\beta_5}({\bf x}^{\beta_5})+M_{\beta_6}({\bf x}^{\beta_6})+{\bf x}^{\beta_2}\circ {\bf x}^{\gamma_3}-{\bf x}^{\beta_1}\circ {\bf x}^{\gamma_4},\\
{\bf x}^{\gamma_9}&=\widetilde{M}_{\beta_5}({\bf x}^{\beta_5})-M_{\beta_7}({\bf x}^{\beta_7})-{\bf x}^{\gamma_2}\circ {\bf x}^{\beta_4}+{\bf x}^{\gamma_1}\circ {\bf x}^{\beta_5},\\
{\bf x}^{\beta_8}&=-M_{\gamma_7}({\bf x}^{\gamma_7})+{\bf x}^{\gamma_2}\circ {\bf x}^{\gamma_4}-{\bf x}^{\beta_2}\circ {\bf x}^{\beta_3},\\
{\bf x}^{\beta_9}&=M_{\gamma_9}({\bf x}^{\gamma_9})+{\bf x}^{\beta_3}\circ {\bf x}^{\beta_4}-{\bf x}^{\gamma_3}\circ {\bf x}^{\gamma_5},\\
M_{\gamma_9}({\bf x}^{\gamma_9})&={\bf x}^{\gamma_2}\circ {\bf x}^{\gamma_4}+{\bf x}^{\gamma_3}\circ {\bf x}^{\gamma_5}-{\bf x}^{\beta_2}\circ {\bf x}^{\beta_3}-{\bf x}^{\beta_3}\circ {\bf x}^{\beta_4}-{\bf x}^{\beta_2}\circ {\bf x}^{\beta_4}\\
&\quad -{\bf x}^{\beta_1}\circ {\bf x}^{\beta_5}-M_{\gamma_7}({\bf x}^{\gamma_7})-M_{\gamma_8}({\bf x}^{\gamma_8}),\\
2{\bf x}^{\gamma_{10}}&=M_{\beta_9}({\bf x}^{\beta_9})-M_{\beta_8}({\bf x}^{\beta_8})+{\bf x}^{\beta_4}\circ {\bf x}^{\gamma_4}-{\bf x}^{\beta_5}\circ {\bf x}^{\gamma_3}-{\bf x}^{\beta_2}\circ {\bf x}^{\gamma_5}+{\bf x}^{\gamma_2}\circ {\bf x}^{\beta_5},\\
M_{\beta_8}({\bf x}^{\beta_8})&=-M_{\beta_9}({\bf x}^{\beta_9})-{\bf x}^{\beta_4}\circ {\bf x}^{\gamma_4}+{\bf x}^{\beta_5}\circ {\bf x}^{\gamma_3}-{\bf x}^{\beta_2}\circ {\bf x}^{\gamma_5}+{\bf x}^{\gamma_2}\circ {\bf x}^{\beta_5},\\
{\bf x}^{\beta_{10}}&=M_{\gamma_{10}}({\bf x}^{\gamma_{10}})+{\bf x}^{\gamma_4}\circ {\bf x}^{\gamma_5}-{\bf x}^{\beta_3}\circ {\bf x}^{\beta_5},\\
{\bf x}^{\theta}&=M_{\beta_{10}}({\bf x}^{\beta_{10}})+\sum_{i=1}^5({\bf x}^{\gamma_i}\circ {\bf x}^{\beta_{11-i}}-{\bf x}^{\gamma_{11-i}}\circ {\bf x}^{\beta_{i}}),\\
2(k+11){\bf w}&=Y_\theta({\bf x}^\theta)+\sum_{i=1}^{11}\left({\bf x}^{\beta_{11-i}}\circ S_{\gamma_i}({\bf x}^{\gamma_i})-{\bf x}^{\gamma_{11-i}}\circ S_{\beta_i}({\bf x}^{\beta_i})\right)\\
&-\frac{2}{3}k\Big({\bf x}^{\beta_1}\circ {\bf x}^{\beta_{10}}+{\bf x}^{\gamma_{10}}\circ ({\bf x}^{\gamma_{3}}-{\bf x}^{\gamma_{2}})+{\bf x}^{\beta_{9}}\circ ({\bf x}^{\beta_{2}}-{\bf x}^{\beta_{3}})\\
&\hspace{25pt}+{\bf x}^{\beta_{8}}\circ ({\bf x}^{\beta_{3}}-{\bf x}^{\beta_{4}})+{\bf x}^{\gamma_{8}}\circ ({\bf x}^{\gamma_{5}}-{\bf x}^{\gamma_{4}}-{\bf x}^{\gamma_{6}})+{\bf x}^{\gamma_{9}}\circ {\bf x}^{\gamma_{4}}\\
&\hspace{25pt}-{\bf x}^{\gamma_{7}}\circ {\bf x}^{\gamma_{7}}+{\bf x}^{\beta_{5}}\circ {\bf x}^{\beta_{6}}+{\bf x}^{\beta_{6}}\circ {\bf x}^{\beta_{7}}-{\bf x}^{\beta_{5}}\circ {\bf x}^{\beta_{7}}\Big).
\end{align*}

\vspace{20pt}
\section{The $\mf{sl}_2$ case.}\label{sec:A1}
The case when $\mf{g}$ is of type $A_1$, namely the Lie algebra $\mf{sl}_2(\bb{C})$,
requires a separate approach, essentially due to the fact that only in this case the spectrum of
$\ad\theta^\vee$ in the adjoint representation does not contain $\pm1$, and it is thus given by
\beq\label{specA-1sl2}
\sigma(\theta^\vee)=\left\{-2,0,2\right\} .
\eeq
Since this case was already considered in \cite{BLZ04} and in \cite{FF11} (see also \cite{fioravanti05}), in this
section we merely show that our approach is equivalent. To this aim we work with quantum KdV opers in
the canonical form \eqref{eq:prenormalform}, which is actually simpler in this particular case.
A simple computations shows that operator \eqref{eq:prenormalform} reads
\beq\label{eq:20180919-4}
\mc{L}_{\mf{s}}=\partial_z+
\begin{pmatrix}
0 & v(z)\\
1 &  0
\end{pmatrix},
\eeq
with 
\begin{align*}
v(z)=\frac{r_1(r_1-1)}{z^2}+\frac{1}{z}+\lambda z^k+\sum_{j=1}^N\left(\frac{2}{(z-w_j)^2}+\frac{s(j)}{z(z-w_j)}\right).
\end{align*}
Here $r_1$ is an arbitrary complex number, $-2<k=-\hat{k}-1<-1$, and $s(j),j=1,\dots N$ are free parameters to be determined.
The equation $\mc{L}\psi=0$, in the first fundamental representation, can be written in the form of a second order differential equation
\begin{equation}\label{eq:schrodsl2}
 \psi''(z)=v(z)\psi(z) \; .
\end{equation}
It is well known that the operator \eqref{eq:20180919-4}
has trivial monodromy at $w_j$ if and only if the Frobenius expansion of the dominat solution
$\psi_-=(z-w_j)^{-1}(1+O(z))$ of the latter equation does not contain logarithm terms. This condition imposes a
polynomial relation among the first few terms of the Laurent expansion of $v$ at $z_j$. Indeed, denoting
$$v(z)=\frac{2}{(z-w_j)^2}+\frac{a(j)}{z-w_j}+b(j)+c(j)(z-w_j)+O\big((z-w_j)^2\big),$$
the trivial monodromy condition reads
\begin{equation}\label{eq:trivialscalar}
-\frac{a(j)^3}{4}+a(j)b(j)-c(j)=0 \quad j=1,\dots, N \; .
\end{equation}
We notice that the coefficients $b(j),c(j)$ are affine functions of $\lambda$. Since the equation \eqref{eq:trivialscalar}
is required to
hold for any $\lambda$,
then we can separate $-\frac{a(j)^3}{4}+a(j)b(j)-c(j)$  into a constant part (in $\la$) and a linear part (in $\la$), and both parts
vanish identically.
The linear part reads
\begin{equation*}
 a(j) w_j^{k}-\frac{k}{w_j^{k-1}}=0, 
\end{equation*}
from which we deduce that $a(j)=\frac{w_j}{k}$; equivalently 
\begin{equation}\label{eq:sjsl2}
 s(j)=k .
\end{equation}
 After some algebraic manipulations, the constant part reads
\beq\label{eq:20180919-2}
\widetilde{\Delta}-(k+1)w_\ell=\sum_{\substack{j=1,\dots,N\\ j\neq \ell}}
\frac{w_\ell((k+2)^2w_\ell^2-k(2k+5)w_\ell w_j+k(k+1)w_j^2)}{(w_\ell-w_j)^3},
\eeq
with $\widetilde{\Delta}=\frac{k^3}{4}+k(k+1)-(k+2)r_1(r_1-1)$. The latter system provides the position of the poles,
and thus, together with \eqref{eq:sjsl2}, fully determines the $\mf{sl}_2$ quantum KdV opers \eqref{eq:20180919-4}.
\begin{remark}
The operator \eqref{eq:20180919-4} (with $r_1=-\ell$), subject to the relations
 \eqref{eq:sjsl2} and \eqref{eq:20180919-2}, was shown in \cite[$\S 5.5$, $\S 5.7$]{FF11} to coincide -- after
 the change of coordinates $z=\big(\frac{k+2}{2}\big)^2 x^{\frac{2}{2+k}}$ --
with the operator with `monster potential' originally proposed in \cite[equations(1,3)]{BLZ04}.
\end{remark}
\begin{remark}
According to the general theory developed in Section \ref{sec:miura}, we can write operator in the 
the form \eqref{norformguess}. This reads
\beq\label{eq:20180919-15}
\mc{L}=\partial_z+
\begin{pmatrix}
 r_1 /z & 1/z+\lambda z^{k}\\
1 &  -r_1/z
\end{pmatrix}
+\sum_{j=1}^N
\frac{1}{z-w_j}
\begin{pmatrix}
-1 & x^\theta(j)/z\\
0 &  1
\end{pmatrix},
\eeq
where $x^\theta(j)=k+2r_1-2\sum_{\ell \neq j}\frac{w_j}{w_j-w_\ell}$.
\end{remark}

\vspace{20pt}
\section*{Appendix. Frobenius solutions}
Here we prove Proposition \ref{thm:frobeniusz}. We give full details in the case when the set of additional singularities
$\lbrace w_j\rbrace_{j\in J}$
is empty (i.e. the ground state), omitting a few details 
of the general case to the reader. We recall the proposition for convenience of the reader. 
\\

\paragraph{{\bf Proposition}} \ref{thm:frobeniusz}. Let $\hat{k}$ be irrational and $(r,\hat{k})$ be a generic pair.
Let $\big(\omega(r-\rho^\vee),\chi_{\omega}\big) $ be
an eigenpair composed of an eigenvalue and a corresponding eigenvector of $f-\rho^\vee +\bar{r}$ in $L(\omega_i)$.
A unique solution in $V^i(\la)$ is determined by the expansion
\begin{equation}\label{eq:frobeniuszapp}
\chi_{\omega}(z,\la)=z^{- \rho^\vee}z^{-\omega(r-\rho^\vee)} F(z,\lambda z^{-\hat{k}} ),
\end{equation}
where $F(z,\zeta)$ is an $L(\omega_i)-$valued function which satisfies $\lim_{z \to 0}
F(z,\lambda z^{-\hat{k}})=\chi_{\omega}.$ The function $F(z,\zeta)$ is analytic in $z$ at $z=0$, and an entire
function of $\zeta$, and it admits an analytic extension to an analytic function on $\big(\bb{C}\setminus\lbrace w_j\rbrace_{j\in J}\big) \times \bb{C}$. In the case of the ground state, $F$ is an entire functions of the two variables.

%
\begin{proof}
The matrix $r+f-\rho^\vee$ has the same spectrum as $f-\rho^\vee$, and due to the genericty assumption 
the eigenvalue $\omega(r-\rho^\vee)$ is simple. In order to study the equation $\mc{L}\psi=0$, with $\mc{L}$ given by \eqref{eq:prenormalform}, we first apply the gauge transform $z^{\ad \rho^\vee}$, to get
$$
\left(\partial_z+\frac{r-\rho^\vee+f}{z}+\big(1-\lambda z^{-1+\tilde{k}} \big)e_\theta+
\sum_{j\in J}\sum_{i=1}^n\sum_{l=0}^{d_i}\frac{s^{d_i}_l(j)}{(z-w_j)^{d_i+1-l}}\right) \bar{\psi}(z)
 =0 \; ,
$$
where $\bar{\psi}(z)=z^{\rho^\vee}\psi(z)$.  To simplify our notation, we write
$$\sum_{j\in J}\sum_{i=1}^n\sum_{l=0}^{d_i}\frac{s^{d_i}_l(j)}{(z-w_j)^{d_i+1-l}}=\sum_{k\geq0}s_k z^k.$$ 
Note that the above power series has radius of convergence $\min_{j \in J}|w_j|$.
We then look for a solution of the latter equation in the form of the Frobenius-like series
\beq\label{eq:frobeniusz1}
\bar{\psi}(z)= z^{-\omega(r-\rho^\vee)} \big( \psi_{\omega} + \sum_{m,n \in \bb{N}^2\setminus (0,0)} c_{m,n} z^m (\la z^{-\hat{k}})^n\big).
\eeq
Applying the operator $\mc{L}$ to  \eqref{eq:frobeniusz1} we obtain the recursion
 \begin{equation}\label{eq:recursionfrz}
  \big( r-\rho^\vee+f-\omega(r-\rho^\vee)+m-n \hat{k} \big) c_{m,n}+
  f_0 c_{m-1,n}+\sum_{k=0}^{m-1}s_k c_{m-1-k,n}- f_0 c_{m-1,n-1}=0 \; ,
 \end{equation}
where $c_{0,0}:=\psi_{\omega}$ and $c_{-1,n}=c_{m,-1}=0$ for all $m,n$. It is readily seen that
the recursion has a unique solution and that $c_{m,n}=0$ if $n>m$. The latter identity implies that if the series converges then the function
\beq\label{Fappendix}
F(z,\zeta)= \psi_{\omega} + \sum_{m,n \in \bb{N}^2\setminus (0,0)} c_{m,n} z^m \zeta^n
\eeq
satisfies $\lim_{z\to0}F(z,z^{-\hat{k}}\la)=\psi_{\omega}$.

We now address the convergence of the series in the case when $s_k=0$ for every $k$, namely the ground state case,  and
omit the same discussion for the general case. We choose a norm $\| \cdot\|$ on $L(\omega_i)$ such that $\|\psi_{\omega}\|=1$ and denote by $\|\cdot \|$
also the corresponding operator norm.
Because of the genericity condition on $(r,\hat{k})$,
the matrix $r-\rho^\vee+f-\omega(r-\rho^\vee)+m+n (\tilde{k}-1)$ is invertible. Since moreover $m-n \hat{k}>\hat{k}$,
for all $m,n \neq (0,0), m\geq n$, there exist
$\rho,K<\infty$ such that
 $$\|\big(\ell-\omega(\ell)+m - n \hat{k} \big)^{-1}\|\leq \frac{\rho}{m-n \hat{k}}\, ,
 \; \|f_0\|\leq \frac{K}{\rho} \; . $$
Because of the above inequalities, we get
\begin{equation*}
 \| c_{m,n} \|\leq  K \frac{\|c_{m-1,n} \| + \|c_{m-1,n-1} \| }{m-n \hat{k}},
\end{equation*}
and due to Lemma \ref{lem:recursion} below, we have that
$$
\sum_{m,n \neq (0,0)}\|c_{m,n}\||z|^m |w|^n\leq  1 - e^{K|z|(1+\frac{|w|}{1-\hat{k}})}  \;.
$$
The latter estimate implies that $F(z,\zeta)$ given by \eqref{Fappendix} is an entire function of $z,\zeta$.
Substituting $\zeta=\la z^{-\hat{k}}$, we have
$$
\sum_{m,n \neq (0,0)}\|c_{m,n}\||z|^m |\la z^{-\hat{k}}|^n\leq 1 - e^{K(|z|+\frac{|\la||z|^{1-\hat{k}}}{\hat{k}})} \; ,
$$
from which it follows immediately that $\lim_{z\to 0}F(z,\la z^{-\hat{k}})=\psi_{\omega}$.
\end{proof}

\begin{lemma}\label{lem:recursion}
Let $d$ be a real function of two integer variables $m,n \in \bb{N}$ that satisfies the recursion
 \begin{equation*}
  d_{m,n}=K\frac{d_{m-1,n} + d_{m-1,n-1} }{m+n q} \, , \;  q>-1 ,  
 \end{equation*}
where, in the above formula $d_{-1,n}=0,d_{m,-1}=0$ for all $m,n$. Then
\begin{equation}\label{eq:recursion}
 d_{m,n}=d_{0,0}\frac{K^{m}}{(1+q)^{n}(m-n)!n!} \, , \; \sum_{m,n \in \bb{N}}d_{m,n}x^my^n=d_{0,0}e^{K x(1+\frac{y}{1+q})} \; .
\end{equation}
In particular $d_{m,n}=0$ if $n>m$.
\begin{proof}
We make the following change of basis in the $\bb{Z}^2$ lattice: $m'=m-n,n'=n$, to obtain the new recursion
$d'_{m',n'}=K\frac{d'_{m'-1,n'} + d'_{m',n'-1} }{m'+n' (1+q)}$.
It is easily seen that the recursion has the unique solution
$d'_{m',n'}=d'_{0,0}\frac{K^{m'+n'}}{(1+q)^{n'}m'!n'!}$ and $\sum_{m',n'}d'_{m',n'} x^{m'}y^{n'}=d'_{0,0}e^{x+\frac{y}{1+q}}$.
The thesis follows.
\end{proof}
\end{lemma}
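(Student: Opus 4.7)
The plan is straightforward: the lemma is essentially a verification, once one separates out the vanishing pattern of the sequence. I would structure the proof in three short steps.

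First I would establish by induction on $m$ that $d_{m,n}=0$ whenever $n>m$. The base case $m=0$ follows from the recursion: for $n\geq 1$ we have $d_{0,n}=K(d_{-1,n}+d_{-1,n-1})/(nq)=0$ by the boundary convention. For the induction step, if $n>m$ then both $n>m-1$ and $n-1\geq m>m-1$, so both $d_{m-1,n}$ and $d_{m-1,n-1}$ vanish by inductive hypothesis, hence so does $d_{m,n}$. This justifies restricting attention to the wedge $m\geq n$.

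Next I would verify the closed form $d_{m,n}=d_{0,0}K^m/[(1+q)^n(m-n)!\,n!]$ by direct substitution. Plugging the ansatz into the right-hand side of the recursion gives
$$K\frac{d_{m-1,n}+d_{m-1,n-1}}{m+nq}=\frac{d_{0,0}K^m}{(m-n)!\,n!\,(1+q)^n(m+nq)}\Bigl[(m-n)+n(1+q)\Bigr],$$
and the bracket equals $m+nq$, which cancels the denominator and reproduces $d_{m,n}$ exactly. Since the recursion determines each $d_{m,n}$ uniquely from values with strictly smaller $m$ (given $d_{0,0}$), this verification gives the formula. Alternatively, one may perform the substitution $m'=m-n$ indicated in the original proof sketch, obtaining a decoupled two-index recursion symmetric in $(m',n)$ whose solution is immediate; either route reaches the same formula.

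Finally, the generating function follows from elementary reindexing. Setting $k=m-n$, $\ell=n$, one has
$$\sum_{m,n\in\bb{N}}d_{m,n}x^my^n=d_{0,0}\sum_{k,\ell\geq 0}\frac{(Kx)^k}{k!}\cdot\frac{1}{\ell!}\Bigl(\frac{Kxy}{1+q}\Bigr)^\ell=d_{0,0}\,e^{Kx}\,e^{Kxy/(1+q)}=d_{0,0}\,e^{Kx(1+y/(1+q))},$$
where the sums are entire series in $(x,y)$ since $q>-1$ ensures $1+q>0$. The only mild obstacle would be a possible vanishing denominator $m+nq=0$ in the case $-1<q<0$, but such $(m,n)$ would require $n>m$, and we have already shown $d_{m,n}=0$ there, so the recursion need only be invoked in the wedge $m\geq n$ with $(m,n)\neq(0,0)$, where $m+nq\geq m(1+q)>0$. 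Hence no real difficulty arises, and the entire argument is a short explicit calculation.
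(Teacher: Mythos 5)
Your proof is correct and amounts to essentially the same verification as the paper's: you check the closed formula $d_{m,n}=d_{0,0}K^m/[(1+q)^n(m-n)!\,n!]$ directly against the recursion in the original variables $(m,n)$, after first isolating the wedge $n>m$ where everything vanishes, whereas the paper performs the identical check after the substitution $m'=m-n$, $n'=n$ (a route you also mention as an alternative). The differences are cosmetic; your explicit induction for the vanishing region and your remark on the denominators $m+nq$ are, if anything, slightly more careful than the paper's one-line argument, and your generating function correctly carries the factor $K$ in the exponent.
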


\def\cprime{$'$} \def\cprime{$'$} \def\cprime{$'$} \def\cprime{$'$}
  \def\cprime{$'$} \def\cprime{$'$} \def\cprime{$'$} \def\cprime{$'$}
  \def\cprime{$'$} \def\cprime{$'$} \def\cydot{\leavevmode\raise.4ex\hbox{.}}
  \def\cprime{$'$} \def\cprime{$'$} \def\cprime{$'$}

\end{document}